\newtheorem{theorem}{Theorem}[section]
\newtheorem*{theorem*}{Theorem}
\newtheorem{lemma}{Lemma}[section]
\newtheorem{proposition}{Proposition}[section]
\newtheorem{corollary}[theorem]{Corollary}
\theoremstyle{definition}
\newtheorem{definition}[theorem]{Definition}
\newtheorem{remark}[theorem]{Remark}
\newtheorem{example}[theorem]{Example}
\newcommand{\ML}[1]{\mathsf{ML}(#1)}
\newcommand{\MLANG}{\mathsf{MATLANG}}
\newcommand{\ones}{\mathbb{1}}
\newcommand{\tr}{\mathsf{tr}}
\newcommand{\tp}[1]{#1^{\mathsf{t}}}
\newcommand{\rowdom}{\mathbb{1}}
\DeclareMathOperator{\diag}{\mathsf{diag}}
\DeclareMathOperator{\Apply}{\mathsf{apply}}
\newcommand{\one}{\rowdom}
\newcommand{\degr}{\mathsf{deg}}
\newcommand{\C}{\mathbb{C}}
\newcommand{\R}{\mathbb{R}}
\newcommand{\CLK}[1]{\mathsf{C}^{#1}}
\let\originalcdot\cdot
\renewcommand*{\cdot}{\mskip2mu {\originalcdot}\mskip 2mu\relax}
\newcommand{\mdot}{\cdot}
\newcommand{\change}[1]{#1}
\title{On the expressive power of linear algebra on graphs}
\author{Floris Geerts}
\affil{University of Antwerp, Antwerp, Belgium}
\date{}
\begin{document}

\maketitle
\begin{abstract}
	There is a long tradition in understanding graphs by investigating their adjacency matrices by means of linear algebra. Similarly, logic-based graph query languages are commonly used to explore graph properties. In this paper, we bridge these two approaches by regarding linear algebra as a graph query language. 
	
	More specifically, we consider $\MLANG$, a matrix query language recently introduced, in which some basic linear algebra functionality is supported. We investigate the problem of characterising the equivalence of graphs, represented by their adjacency matrices, for various fragments of $\MLANG$. That is, we are interested in understanding when two graphs cannot be distinguished by posing queries in $\MLANG$ on their adjacency matrices.
	
	Surprisingly, a complete picture can be painted of the impact of each of the linear algebra operations supported in $\MLANG$ on their ability to distinguish graphs. Interestingly, these characterisations can often be phrased in terms of spectral and combinatorial properties of graphs. 
	
	Furthermore, we also establish links to logical equivalence of graphs. In particular, we show that $\MLANG$-equivalence of graphs corresponds to equivalence by means of sentences in the three-variable fragment of first-order logic with counting. Equivalence with regards to a smaller $\MLANG$ fragment is shown to correspond to equivalence by means of sentences in the two-variable fragment of this logic. 
\end{abstract}


\section{Introduction}\label{sec:intro}
Motivated by the importance of linear algebra for machine learning on big data \cite{systemml,Boehm:2018,naughton_la,ngolteanu_learning,olteanu_regression} there is a current interest in languages that combine matrix operations with relational query languages in database systems \cite{Elgohary2017,hutchison,LARA_Berlin_2016,KunftKSRM17,Jermaine/17/LAonRA}. Such hybrid languages raise many interesting questions from a database theoretical point of view. 
\change{The Lara language is one such proposal~\cite{hutchison} and its connections to classical database query languages has been recently explored~\cite{Barcelo-abs-1909-11693}.}
It seems natural, however, to first consider query languages for matrices alone. These are the focus of this paper.

More precisely, we continue the investigation of the expressive power of the matrix query language $\MLANG$, recently introduced by Brijder et al.~\cite{Brijder2018,Brijder2019}, as an analog for matrices of the relational algebra on relations. Intuitively, queries in $\MLANG$ are built up by composing several linear algebra operations commonly found in linear algebra packages. When arbitrary matrices are concerned, it is known that $\MLANG$ is subsumed by aggregate logic with only three non-numerical variables. This implies, among other things, that when evaluated on adjacency matrices of graphs, $\MLANG$ cannot compute the transitive closure of a graph and neither can it express the four-variable query asking if a graph contains a four-clique~\cite{Brijder2018,Brijder2019}. 

In fact, it is implicit in the work by Brijder et al. that when two graphs $G$ and $H$ are indistinguishable by sentences in the three-variable fragment $\CLK{3}$ of first-order logic with counting, denoted by $G\equiv_{\CLK{3}} H$, then their adjacency matrices cannot be distinguished by $\MLANG$ expressions that return scalars, henceforth referred to as \textit{sentences} in $\MLANG$. The equivalence with respect to such sentences is denoted by $G\equiv_{\MLANG} H$. A natural question is whether the converse implication also holds, i.e., does $G\equiv_{\MLANG} H$ also imply $G\equiv_{\CLK{3}} H$? We answer this question affirmatively.

The underlying proof technique relies on a close connection between $\CLK{3}$-e\-qui\-va\-le\-nce and the indistinguishability of graphs by the $2$-dimensional Weisfeiler-Lehman ($2\mathsf{WL}$) algorithm, a result dating back to the seminal paper by Cai, F\"urer and Immerman~\cite{Cai1992,Immerman1990}. Indeed, as we will see, the linear algebra operations supported in $\MLANG$ have sufficient power to simulate the $2\mathsf{WL}$ algorithm. Hence, when $G\equiv_{\MLANG} H$, then $G$ and $H$ cannot by distinguished by the $2\mathsf{WL}$ algorithm. 

This \textit{combinatorial interpretation} of $\MLANG$-equivalence immediately provides an insight in which graph properties are preserved under $\MLANG$-equivalence (see e.g., the work by F\"urer~\cite{Furer2010,Furer2017}). For example, when $G\equiv_{\MLANG} H$, then $G$ and $H$ must be co-spectral (that is, their adjacency matrices have the same multi-set of eigenvalues) and have the same number of $s$-cycles, for $s\leq 6$, but not necessarily for $s> 7$. As observed in the conference version of this paper~\cite{Geerts19}, the case of $7$-cycles easily follows from the connection with $\MLANG$. Indeed, the linear algebra expressions for counting $s$-cycles, for $s\leq 7$, given in Noga et al.~\cite{Alon1997} are expressible in $\MLANG$ and hence, $7$-cycles are preserved by $2\mathsf{WL}$-equivalence. This has recently been verified using other techniques by Arvind et al.~\cite{Arvind2019}. Although formulas exist for counting cycles of length greater than $7$~\cite{Alon1997}, they require counting the number of $k$-cliques, for $k\geq 4$, which is not possible in $\MLANG$, as observed earlier.

Apart from the logical and spectral/combinatorial characterisation of $\MLANG$-e\-qui\-va\-le\-nce, we also point out the correspondence between $\CLK{3}$-equivalence (and thus also $2\mathsf{WL}$-equivalence and $\MLANG$-equivalence) and \textit{conjugacy conditions} between adjacency matrices. \change{Roughly speaking, a conjugacy condition refers to a relationship between adjacency matrices of the form $A_G\mdot T=T\cdot A_H$ for some matrix $T$.} Here, $A_G$ and $A_H$ denote the adjacency matrices of $G$ and $H$, respectively.
As observed by Dawar et al.~\cite{Dawar2016,DAWAR2019993}, $G\equiv_{\CLK{3}} H$ if and only if there exists a unitary matrix $U$ such that $A_G\mdot U=U\mdot A_H$ and moreover, $U$ induces an algebraic isomorphism between the so-called coherent algebras of $A_G$ and $A_H$.  We recall that a unitary matrix $U$ is a complex matrix whose inverse is its complex conjugate transpose $U^*$. Coherent algebras and their isomorphisms are detailed later in the paper.

All combined, we have a logical, combinatorial and conjugation-based characterisation of $\MLANG$-equivalence. Surprisingly, similar characterisations hold also for \textit{fragments of $\MLANG$}. We define fragments of $\MLANG$ by allowing only certain linear algebra operations in our expressions. Such fragments are denoted by $\mathsf{ML}({\cal L})$, with ${\cal L}$ the list of allowed operations. The corresponding notion of equivalence of graphs $G$ and $H$ will be denoted by $G\equiv_{\ML{\cal L}} H$. That is, $G\equiv_{\ML{\cal L}} H$ if any sentence in $\ML{\cal L}$ results in the same scalar when evaluated on $A_G$ and $A_H$. We investigate equivalence for all sensible $\MLANG$ fragments. Our results are as follows:

For starters, we consider the fragment $\ML{\cdot,\tr}$ that allows for matrix multiplication ($\cdot$) and trace ($\tr$) computation (i.e., taking the sum of the diagonal elements of a matrix). Then, $G\equiv_{\ML{\cdot,\tr}} H$ if and only if $G$ and $H$ are co-spectral, or equivalently, they have the same number of closed walks of any length, or $A_G\mdot O=O\mdot A_H$ for some orthogonal matrix $O$. We recall that an orthogonal matrix $O$ is a matrix over the real numbers such that its inverse coincides with the transpose matrix $\tp{O}$ (Section~\ref{subsec:cospectral}). 

Another small fragment, $\ML{\cdot,{}^*,\ones}$, allows for matrix multiplication, conjugate transposition (${}^*$) and the use of the \change{column} vector $\ones$, consisting of all ones. Then, $G\equiv_{\ML{\cdot,{}^*,\ones}} H$ if and only if $G$ and $H$ are co-main (roughly speaking, they are co-spectral only for special ``main'' eigenvalues), or equivalently, they have the same number of (not necessarily closed) walks of any length, or $A_G\mdot Q=Q\mdot A_H$ for some doubly quasi-stochastic matrix $Q$. A doubly quasi-stochastic matrix $Q$ is a matrix over the real numbers such that every of its columns and rows sums up to one (Section~\ref{subsec:comain}). 

When allowing matrix multiplication, \change{$\tr$, ${}^*$, and $\ones$, equivalence of graphs relative to $\ML{\cdot,\tr,\allowbreak {}^*,\ones}$} coincides, not surprisingly, to the graphs being both co-spectral and co-main, or equivalently, having the same number of closed walks of any length and the same number of non-closed walks of any length, or such that $A_G\mdot O=O\mdot A_H$, for an orthogonal doubly quasi-stochastic matrix $O$ (Section~\ref{subsec:comain}). 

More interesting is the fragment $\ML{\cdot,{}^*,\ones,\diag}$, which additionally allows for the operation $\diag(\cdot)$ that turns a column vector into a diagonal matrix with that vector on its diagonal. For this fragment we can tie equivalence to indistinguishability by the $1$-dimensional Weisfeiler-Lehman ($1\mathsf{WL}$) algorithm (or colour refinement). This is known to coincide with the graphs having a common equitable partition, or the existence of a doubly stochastic matrix $S$ such that $A_G\mdot S=S\mdot A_H$ (a.k.a. as a fractional isomorphism), or $\CLK{2}$-equivalence. Here, $\CLK{2}$ denotes the two-variable fragment of first-order logic with counting. We recall that a doubly stochastic matrix is a doubly quasi-stochastic matrix whose entries are all non-negative (Section~\ref{sec:diag}). 

In the former fragment, replacing the operation $\diag(\cdot)$ with an operation ($\odot_v$) which pointwise multiplies vectors results in the same distinguishing power. By contrast, the combination of $\tr$ and the ability to pointwise multiply vectors results in a stronger notion of equivalence. That is, $G\equiv_{\ML{\cdot,\tr,\change{{}^*,}\ones,\odot_v}} H$ if and only if $G$ and $H$ are co-spectral \text{and} indistinguishable by $1\mathsf{WL}$. Also in this case, $A_G\mdot O=O\mdot A_H$ for an orthogonal matrix $O$ that, in addition, needs to preserve equitable partitions. We define this preservation condition later in the paper (Section~\ref{sec:pw}).

For the larger fragment $\ML{\cdot,\tr,{}^*,\ones,\diag}$, no elegant combinatorial characterisation is obtained. Nevertheless, for equivalent graphs $G$ and $H$, $A_G\mdot O=O\mdot A_H$ where $O$ is an orthogonal matrix that can be block-structured according to the equitable partitions. This is a stronger notion than the preservation of equitable partitions. Graphs equivalent with respect to this fragment have, for example, the same number of spanning trees. This is not necessarily true for all previous fragments (Section~\ref{sec:diag}).

Finally, as we already mentioned, equivalence relative to $\MLANG$ is shown to correspond to $\CLK{3}$-equivalence and $2\mathsf{WL}$-equivalence. We additionally refine the conjugation-based characterisation given by Dawar et al.~\cite{Dawar2016,DAWAR2019993} so that it compares more easily to the conjugacy notions used for all previous fragments. Furthermore, we show that pointwise multiplication of matrices (the Schur-Hadamard product) is crucial in this setting  (Section~\ref{sec:C3}).

Each of these fragments can be extended with addition and scalar multiplication at no increase in distinguishing power. It is also shown when fragments can be extended to accommodate for \textit{arbitrary} pointwise function applications, on scalars, vectors or matrices. We furthermore exhibit example graphs \emph{separating} all fragments.

For many of our characterisations we rely on the rich literature on spectral graph theory~\cite{Brouwer2012,Dragos1978,Dragos1997,Dragos2009,Godsil2001,Harary1979,Rowlinson2007,Vandam2007} and the study of the equivalence by the Weisfeiler-Lehman algorithms and fixed-variable fragments of first-order logic with counting~\cite{Dawar2016,DAWAR2019993,Dell2018,Grohe2014,Immerman1990,Ramana1994,Tinhofer1986,Tinhofer1991,Weisfeiler1968}. We describe the relevant results in these papers in due course. We also refer to work by F\"urer~\cite{Furer2010,Furer2017} for more examples of connections to graph invariants and to Dawar et al.~\cite{Dawar2016,DAWAR2019993} for connections between logic, combinatorial and spectral invariants. 

In some sense, we provide a unifying view of various existing results in the literature by grouping them according to the operations supported in $\MLANG$. We remark that, recently, another unifying approach has been put forward by Dell et al.~\cite{Dell2018}. In that work, one considers indistinguishability of graphs in terms of \textit{homomorphism vectors}. That is, one defines $\textsf{HOM}_{\cal F}(G):=(\textsf{Hom}(F,G))_{F\in{\cal F}}$ for some class ${\cal F}$ of graphs, where $\textsf{Hom}(F,G)$ is the number of homomorphisms from $F$ to $G$. Then $G$ and $H$ are indistinguishable for some class ${\cal F}$ of graphs when $\textsf{HOM}_{\cal F}(G)=\textsf{HOM}_{\cal F}(H)$. When ${\cal F}$ consists of all cycles, this notion of equivalence corresponds to $\ML{\cdot,\tr}$-equivalence (recall the closed walk characterisation of the latter); when ${\cal F}$ consists of all paths, we have a correspondence with $\ML{\cdot,{}^*,\ones}$-equivalence (recall the walk characterisation of the latter); when ${\cal F}$ consists of trees, $G$ and $H$ are equivalent for the $1\mathsf{WL}$-algorithm and thus also for $\CLK{2}$ and $\ML{\cdot,{}^*,\ones,\diag}$, and finally, when ${\cal F}$ consists of all graphs of tree-width at most $2$, $G$ and $H$ are equivalent for the $2\mathsf{WL}$-algorithm and thus also for $\CLK{3}$ and $\MLANG$. Our results can thus be regarded as a re-interpretation of the results in Dell et al.~\cite{Dell2018} in terms of $\MLANG$.

We also remark that $\CLK{k}$-equivalence, for $k\geq 4$, can  be characterised in terms of solutions to linear problems which resemble conjugation-based characterisations~\cite{AtseriasM13,grohe_otto_2015,MALKIN2014}. We leave it to future work to identify which additional linear algebra operations to include in $\MLANG$ such that $\CLK{k}$-equivalence can be captured, for $k\geq 4$.

Although we made links to logics such as $\CLK{2}$ and $\CLK{3}$, the connection between $\MLANG$, rank logics and fixed-point logics with counting, as studied in the context of the descriptive complexity of linear algebra~\cite{dghl_rank,Dawar2008,Dawar2017,GradelP15,GroheP17,holm_phd}, is yet to be explored. Similarly for connections to logic-based graph query languages~\cite{Angles:2017,Barcelo2013}. \change{We also
mention that $\MLANG$ can be interpreted as a relational query language on so-called $K$-relations. Such $K$-relations are standard database relations which are annotated with values from a semiring $K$~\cite{Brijder-abs-1904-03934}. This connection provides an elegant formalism of bridging  linear algebra and relational algebra. It further opens the way to explore $\MLANG$ for matrices whose elements are semiring values.}

\change{We want to emphasise that in this work we only consider \textit{indistinguishability} of graphs by means of matrix query languages. 
As such, our results do not directly imply which matrix functions
can be computed by $\MLANG$ expressions, in a uniform dimension-independent way.
This is in contrast to the expressiveness results in Brijder et al.~\cite{Brijder2018,Brijder2019}. Furthermore, we focus only on \textit{undirected graphs} in this paper. Such graphs have symmetric adjacency matrices which have many desirable linear algebra properties,  diagonalisability being the most important one.  Finally, $\MLANG$ is a language in which expressions can combine \textit{multiple input matrices}. Since our focus is on distinguishing graphs, in this work we restrict $\MLANG$ such that its expressions only take a \textit{single matrix}, i.e., the adjacency matrix, as input. Some of our results  generalise to directed graphs (with asymmetric adjacency matrices) or even arbitrary matrices. This is explored in an upcoming paper~\cite{Geerts20}. A full treatment of the general setting with  multiple inputs is left as future work.
 }
 
 This paper is an extended version of the ICDT 2019 conference paper~\cite{Geerts19}. It extends that version by including all proofs in detail. Furthermore, the overall presentation and underlying proof techniques have been simplified. In addition, a new section (Section~\ref{sec:pw}) has been added in which the difference between the $\diag(\cdot)$ and the $\odot_v$ operations is investigated. 

\section{Background} We denote the set of real numbers by $\R$ and the set of complex numbers by $\C$. The set of $m\times n$-matrices over the real (resp., complex) numbers is denoted by $\R^{m\times n}$ (resp., $\C^{m\times n}$). \change{Column} vectors are elements of $\R^{m\times 1}$ (or $\C^{m\times 1}$). \change{Row vectors are elements of $\R^{1\times m}$ (or $\C^{1\times m}$)}. The entries of an $m\times n$-matrix $A$ are denoted by $A_{ij}$, for $i=1,\ldots,m$ and $j=1,\ldots,n$. The entries of a \change{(column or row)} vector $v$ are denoted by $v_i$, for $i=1,\ldots,m$. We often identify $\R^{1\times 1}$ with $\R$, and $\C^{1\times 1}$ with $\C$ and refer to these as \textit{scalars}. \change{Moreover, the $i$th row and $j$th column of a matrix $A\in\R^{m\times n}$ (or in $\C^{m\times n}$) are denoted by $A_{i*}$ and $A_{*j}$, respectively, for $i=1,\ldots,m$ and $j=1,\ldots,n$.} 

The following classes of matrices are of interest in this paper: \textit{square} matrices (elements in $\R^{n\times n}$ or $\C^{n\times n}$), \change{\textit{invertible} matrices (square matrices $A$ for which there exists an \textit{inverse} matrix $B$ such that $A\mdot B=I=B\mdot A$, where $\cdot$ denotes matrix multiplication, and $I$ is the identity matrix in $\C^{n\times n}$)},
 \textit{symmetric} matrices (such that $A_{ij}=A_{ji}$ for all $i$ and $j$),
 \change{\textit{stochastic} matrices ($A_{ij}\in\R$, $A_{ij}\geq 0$, $\sum_{j=1}^n A_{ij}=1$ for all $i$), }
    \textit{doubly stochastic} matrices ($A_{ij}\in\R$, $A_{ij}\geq 0$, $\sum_{j=1}^n A_{ij}=1$ and $\sum_{i=1}^m A_{ij}=1$ for all $i$ and $j$),
  \change{ \emph{quasi-stochastic} matrices ($A_{ij}\in\R$, $\sum_{j=1}^n A_{ij}=1$ for all $i$),  } \emph{doubly quasi-stochastic} matrices ($A_{ij}\in\R$, $\sum_{j=1}^n A_{ij}=1$ and $\sum_{i=1}^m A_{ij}=1$ for all $i$ and $j$), and \emph{orthogonal} matrices (invertible real matrices whose inverse matrix is $\tp{O}$, where $\tp{O}$ denotes the transpose of $O$ obtained by switching rows and columns). The matrix \change{$J\in \R^{n\times n}$} denotes the matrix consisting of all ones and $Z\in\R^{n\times n}$ denotes the zero matrix. We often do not specify the dimensions of matrices and vectors, as these will be clear from the context.

We consider undirected graphs without self-loops. Let $G=(V,E)$ be such a graph with vertices $V=\{1,\ldots,n\}$ and unordered edges $E\subseteq \{\{i,j\}\mid i,j\in V\}$. The \emph{order} of $G$ is simply the number of vertices. Then, an \emph{adjacency matrix} of a graph $G$ of order $n$, denoted by $A_G$, is an $n\times n$-matrix whose entries $(A_G)_{ij}$ are set to $1$ if and only if $\{i,j\}\in E$, all other entries are set to $0$. Strictly speaking, an adjacency matrix requires an ordering on the vertices in $G$. In this paper, this ordering is irrelevant and we often speak about ``the'' adjacency matrix of a graph. For undirected graphs $G=(V,E)$, the adjacency  matrix $A_G$ is a symmetric binary matrix with zeroes on its diagonal.

An \emph{eigenvalue} of a matrix $A$ is a scalar $\lambda$ in $\C$ for which there is a non-zero vector $v$ satisfying $A\mdot v=\lambda v$. Such a vector is called an \emph{eigenvector} of $A$ for eigenvalue $\lambda$. The \emph{eigenspace} of an eigenvalue is the vector space obtained as the span of a maximal set of linear independent eigenvectors for this eigenvalue. Here, the \emph{span} of a set of vectors just refers to the set of all linear combinations of vectors in that set. A set of vectors is linear independent if no vector in that set can be written as a linear combination of other vectors. The \textit{dimension} of an eigenspace is the number of linearly independent eigenvectors spanning that space.
The \emph{spectrum} of an undirected graph can be represented as $ \mathsf{spec}(G)= 
\begin{pmatrix}
	\lambda_1 & \lambda_2 & \cdots & \lambda_p\\
	m_1 & m_2 & \cdots & m_p 
\end{pmatrix}
$, where $\lambda_1<\lambda_2<\cdots<\lambda_p$ are the distinct real eigenvalues of the adjacency matrix $A_G$ of $G$, and where $m_1, m_2,\ldots, m_p$ denote the dimensions of the corresponding eigenspaces. Two graphs are said to be \emph{co-spectral} if they have the same spectrum.

We use $\CLK{k}$ to denote the \textit{$k$-variable fragment of first-order logic with counting}. More precisely, formulas in $\CLK{k}$ are built up from a binary relation $R(x,y)$ (encoding the edge relation of a graph), disjunction, conjunction, negation and counting quantifiers $\exists^{\geq m}$ and use at most $k$ distinct variables~\cite{otto_bounded}. A \textit{sentence} in $\CLK{k}$ is a formula without free variables. 
Two graphs $G$ and $H$ are equivalent with regards to $\CLK{k}$, denoted by $G\equiv_{\CLK{k}} H$, if $G\models\varphi$ if and only if $H\models\varphi$ for every sentence $\varphi$ in $\CLK{k}$. Here, $\models$ denotes the standard notion of satisfaction of logical formulas (see e.g., \cite{Libkin2004,otto_bounded}).

\section{Matrix query languages}\label{sec:mql}

As described in Brijder et al.~\cite{Brijder2018}, matrix query languages can be formalised as compositions of linear algebra operations. Intuitively, a linear algebra operation takes a number of matrices as input and returns another matrix. Examples of operations are matrix multiplication, conjugate transposition, computing the trace, just to name a few. By closing such operations under composition ``matrix query languages'' are formed. More specifically, for linear algebra operations $\mathsf{op}_1,\ldots,\mathsf{op}_k$ the corresponding matrix query language is denoted by $\ML{\mathsf{op}_1,\ldots,\mathsf{op}_k}$ and consists of expressions formed by the following grammar:
\begin{linenomath*}
	$$ e:= X \,|\, \mathsf{op}_1\bigl(e_1,\ldots,e_{p_1}\bigr) \,|\, \cdots \, | \, \mathsf{op}_k \bigr(e_1,\ldots,e_{p_k}\bigr), $$
\end{linenomath*}
where \change{$X$ denotes a \textit{matrix variable} from an infinite set of variables,  which serves to indicate the inputs} to expressions, and $p_i$ denotes the number of inputs required by operation $\mathsf{op}_i$. 
\change{As mentioned in the introduction, we focus on the case when only a fixed single matrix variable, denoted by $X$, is allowed in expressions. We denote expressions by $e(X)$ to make this explicit. } 

The semantics of an expression $e(X)$ in $\ML{\mathsf{op}_1,\ldots,\mathsf{op}_k}$ is defined inductively, relative to an \textit{assignment} $\nu$ of $X$ to a matrix $\nu(X)\in\C^{m\times n}$, for some dimensions $m$ and $n$. We denote by $e\bigl(\nu(X)\bigr)$ the result of evaluating $e(X)$ on $\nu(X)$.  We define, as expected, $$\mathsf{op}_i(e_1(X),\ldots,e_{p_i}(X))(\nu(X)):=\mathsf{op}_i\bigl(e_1(\nu(X)),\ldots,e_{p_i}(\nu(X))\bigr)$$ for linear algebra operation $\mathsf{op}_i$.

\change{In this paper we regard $\MLANG$ as the matrix query language built-up  from the atomic operations listed in Table~\ref{tbl:mloperations} and in which only a single matrix variable $X$ is used.
}
In the table we also show the semantics of the atomic operations. We note that restrictions on the dimensions are in place to ensure that operations are well-defined. Using a simple type system one can formalise a notion of well-formed expressions which guarantees that the semantics of such expressions is well-defined. We refer to Brijder et al.~\cite{Brijder2018} for details. We only consider well-formed expressions from here on. 
\change{As one can observe from Table~\ref{tbl:mloperations}, $\MLANG$ is  parameterised by a set 
$\Omega$ of pointwise functions (see the last operation in Table~\ref{tbl:mloperations}). More specifically, $\Omega=\bigcup_{p>0} \Omega_p$, where $\Omega_p$ consists of some functions $f:\C^p\to \C$.  The choice of $\Omega$ does not impact our results. Hence, we can take $\Omega$ to consist of all possible pointwise functions.}

\begin{table}
	[t]  \adjustbox{max width= 
	\textwidth}{ 
	\begin{tabular}
		{lcc} \multicolumn{2}{l}{\textbf{conjugate transposition} ($\mathsf{op}(e)=e^*$)} & \\
		$e(\nu(X))=A\in \C^{m\times n}$ & $e(\nu(X))^*=A^*\in \C^{n\times m}$& $(A^*)_{ij}=\overline{A}_{ji}$\\\hline \multicolumn{2}{l}{\textbf{one-vector} ($\mathsf{op}(e)=\ones(e)$)} &\\
		$e(\nu(X))=A\in \C^{m\times n}$ & $\ones(e(\nu(X))=\ones\in \C^{m\times 1}$ & $\ones_i=1$\\\hline \multicolumn{2}{l}{\textbf{diagonalization of a vector} ($\mathsf{op}(e)=\diag(e)$) } &\\
		\multirow{2}{*}{$e(\nu(X))=A\in \C^{m\times 1}$} & \multirow{2}{*}{$\diag(e(\nu(X))=\diag(A)\in \C^{m\times m}$} & $\diag(A)_{ii}=A_i$,\\
		& & $\diag(A)_{ij}=0$, $i\neq j$\\\hline \multicolumn{2}{l}{\textbf{matrix multiplication} ($\mathsf{op}(e_1,e_2)=e_1\mdot e_2$)} &\\
		$e_1(\nu(X))=A\in \C^{m\times n}$ & \multirow{2}{*}{$e_1(\nu(X))\mdot e_2(\nu(X))=C\in \C^{m\times o}$} & \multirow{2}{*}{$C_{ij}=\sum_{k=1}^n A_{ik}\times B_{kj}$}\\
		$e_2(\nu(X))=B\in \C^{\change{n\times o}}$ & & \\\hline \multicolumn{2}{l}{\textbf{matrix addition} ($\mathsf{op}(e_1,e_2)=e_1+ e_2$)} &\\
		\change{$e_1(\nu(X))=A\in \C^{m\times n}$ }& \multirow{2}{*}{\change{$e_1(\nu(X))+ e_2(\nu(X))=C\in \C^{m\times n}$}} & \multirow{2}{*}{\change{$C_{ij}=A_{ij}+ B_{ij}$}}\\
		\change{$e_2(\nu(X))=B\in \C^{m\times n}$} & & \\\hline 		
		\multicolumn{2}{l}{\textbf{scalar multiplication} ($\mathsf{op}(e)=c\times {e}$, $c\in \C$)} &\\
		$e(\nu(X))=A\in \C^{m\times n}$ & $c\times e(\nu(X))=B\in \C^{m\times n}$& $B_{ij}=c\times A_{ij}$\\\hline \multicolumn{2}{l}{\textbf{trace} ($\mathsf{op}(e)=\tr(e)$)} &\\
		$e(\nu(X))=A\in \C^{m\times m}$ & $\tr(e(\nu(X))=c\in\C$& $c=\sum_{i=1}^m A_{ii}$\\\hline
\multicolumn{2}{l}{\textbf{\change{pointwise vector multiplication}} ($\mathsf{op}(e_1,e_2)=e_1\odot_v e_2$)} &\\
		\change{$e_1(\nu(X))=A\in \C^{m\times 1}$ }& \multirow{2}{*}{\change{$e_1(\nu(X))\odot_v e_2(\nu(X))=C\in \C^{m\times 1}$}} & \multirow{2}{*}{\change{$C_{i}=A_{i}\times B_{i}$}}\\
		\change{$e_2(\nu(X))=B\in \C^{m\times 1}$} & & \\\hline 		
\multicolumn{2}{l}{\textbf{\change{pointwise matrix multiplication (Schur-Hadamard)}} ($\mathsf{op}(e_1,e_2)=e_1\odot e_2$)} &\\
		\change{$e_1(\nu(X))=A\in \C^{m\times n}$ }& \multirow{2}{*}{\change{$e_1(\nu(X))\odot e_2(\nu(X))=C\in \C^{m\times n}$}} & \multirow{2}{*}{\change{$C_{ij}=A_{ij}\times B_{ij}$}}\\
		\change{$e_2(\nu(X))=B\in \C^{m\times n}$} & & \\\hline 				
		   \multicolumn{3}{l}{\textbf{pointwise function application} ($\mathsf{op}(e_1,\ldots,e_p)=\Apply[f](e_1, \ldots, e_p)$), $f:\C^p\to\C\in\Omega$}\\
\change{$e_1(\nu(X))=A^{(1)}\in \C^{m\times n}$ }& \multirow{3}{*}{\change{$\Apply[f]\bigl(e_1(\nu(X)), \ldots, e_p(\nu(X))\bigr)=B\in \C^{m\times n}$}} & \multirow{3}{*}{\change{$B_{ij}=f(A_{ij}^{(1)},\ldots,A^{(p)}_{ij})$}}\\
				\change{\qquad\quad$\vdots$} & & \\
		\change{$e_p(\nu(X))=A^{(p)}\in \C^{m\times n}$} & & \\\hline 	
	\end{tabular}
	}
	
	\caption{Linear algebra operations supported in $\MLANG$ and their semantics. In the last column, $\bar{\hspace{1ex}}$, $+$ and $\times$ 
	denote complex conjugation, addition and multiplication in $\C$, respectively.} \label{tbl:mloperations} \end{table}

\begin{remark}
	The list of operations in Table~\ref{tbl:mloperations} differs slightly from the list presented in Brijder et al.~\cite{Brijder2018}: We explicitly mention  the trace operation ($\tr$) and pointwise function applications for:	
\change{scalar multiplication ($\times$), addition ($+$), pointwise product of vectors ($\odot_v$) and pointwise product of matrices, also called the Schur-Hadamard product ($\odot$). We remark that all of these operations can be expressed in the language proposed in Brijder et al.~\cite{Brijder2018}. Conversely, any single matrix variable expression in the language of Brijder et al.~\cite{Brijder2018} can be expressed in our $\MLANG$ language. } 
\end{remark}

\begin{remark}
\change{The choice of operations included in $\MLANG$ is motivated by operations supported in linear algebra packages such as \textsf{MAPLE}, \textsf{MATLAB}, \textsf{MATHEMATICA}, \textsf{R}, and others~\cite{Brijder2018}. In $\MLANG$, we only include what we believe to be atomic operations, from which more complex operations can be derived. Many other, more complex, operations could of course be added. Hence $\MLANG$ is just a starting point. We mention that extensions of $\MLANG$ with an $\mathsf{inverse}$ operation (taking the inverse of a matrix if it exists) and an $\mathsf{eigen}$ operation (returning eigenvalues and eigenvectors) are consider Brijder et al.~\cite{Brijder2018}. The precise impact of these operations on the expressive power is yet to be understood.}
\end{remark}

\change{In the following, when $\mathcal{L}$ is a subset of the operations from 
Table~\ref{tbl:mloperations}, we denote by $\ML{\cal L}$ the \textit{fragment} of $\MLANG$ in which only the operations in ${\cal L}$ are
supported.}

\section{Expressive power of matrix query languages}\label{sec:exppow}
As mentioned in the introduction, we are interested in the expressive power of matrix query languages. In analogy with indistinguishability notions used in logic, we consider \textit{sentences} in our matrix query languages. Let ${\cal L}$ be a subset of the operations supported in $\MLANG$. We define an expression $e(X)$ in $\ML{\cal L}$ to be a \textit{sentence} if $e(\nu(X))$ returns a $1\times 1$-matrix (i.e., a scalar) for any assignment $\nu$ of the matrix variable $X$ in $e(X)$. We note that the type system of $\MLANG$ allows to easily check whether an expression in $\ML{\cal L}$ is a sentence (see Brijder et al.~\cite{Brijder2018} for more details). Having defined sentences, a notion of equivalence naturally follows. 
\begin{definition}\label{def:matrixequiv} 
	Two matrices $A$ and $B$ in $\C^{m\times n}$ are said to be \textit{$\ML{\cal L}$-equivalent}, denoted by $A\equiv_{\ML{\cal L }}\!B$, if and only if $e(A)=e(B)$ for all sentences $e(X)$ in $\ML{\cal L}$. 
\end{definition}
In other words, equivalent matrices cannot be distinguished by sentences in the matrix query language fragment under consideration. One could imagine defining equivalence with regards to arbitrary expressions, i.e., expressions in $\MLANG$ that are not necessarily sentences. Such a notion would be too strong, however. Indeed, requiring that $e(A)=e(B)$ for arbitrary expressions $e(X)$ would imply that $A=B$ (just consider $e(X):=X)$) and then the story ends. 

We aim to \textit{characterise} equivalence of matrices for various matrix query languages. We will, however, not treat this problem in full generality and instead only consider equivalence of \textit{adjacency matrices of undirected graphs}. \change{The reason for this limitation is two-fold. First, we can benefit from existing results from graph theory; second, undirected graphs have symmetric adjacency matrices and symmetric matrices have desirable linear algebra properties. For example, every symmetric matrix is diagonalisable. We rely on the properties of symmetric matrices in the proofs of our results.}

Definition~\ref{def:matrixequiv}, when applied to  adjacency matrices, naturally results in the following notion of \textit{equivalence of graphs}.

\begin{definition}
	Two graphs $G$ and $H$ of the same order are said to be \textit{$\ML{\cal L}$-equivalent}, denoted by $G\equiv_{\ML{\cal L}} H$, if and only if their adjacency matrices are $\ML{\cal L}$-e\-qui\-va\-lent. 
\end{definition}

In the following sections we consider equivalence of graphs for various fragments $\ML{\cal L}$, starting from simple fragments only supporting a couple of linear algebra operations, up to the full $\MLANG$ matrix query language. 

\section{Expressive power of the matrix query language  \texorpdfstring{$\ML{\cdot,\tr}$}{ML(.,tr)}}\label{subsec:cospectral} 

\change{We start, in Section~\ref{subsec:multtrace}, by considering the  equivalence of graphs for $\ML{\cdot,\tr}$, i.e., the matrix query language in which only matrix multiplication and the trace operation are supported. We further introduce the notion of \textit{conjugacy} of matrices, which will be used throughout the paper. In Section~\ref{subsec:multrace_ext}, we then explore which operations can be added to $\ML{\cdot,\tr}$ without increasing its distinguishing power.}

\subsection{\texorpdfstring{$\ML{\cdot,\tr}$}{ML(.,tr)}-equivalence}\label{subsec:multtrace}
 
The matrix query language $\ML{\cdot,\tr}$ is a very restrictive fragment. Indeed, the only sentences that one can formulate in $\ML{\cdot,\tr}$ are of the form (i)~$\#\mathsf{cwalk}_k(X):=\tr(X^k)$, where $X^k$ stands for the $k^{\text{th}}$ power of $X$, i.e., $X$ multiplied $k$ times with itself, and (ii)~products of such sentences. To make the connection to graphs, we recall the following notions. A \emph{walk of length} $k$ in a graph $G=(V,E)$ is a sequence $(v_0,v_1,\ldots,v_k)$ of vertices of $G$ such that consecutive vertices are adjacent in $G$, i.e., $\{v_{i-1},v_i\}\in E$ for all $i=1,\ldots,k$. Furthermore, a \emph{closed walk} is a walk that starts in and ends at the same vertex. Closed walks of length $0$ correspond, as usual, to vertices in $G$.
We note that, when evaluated on the adjacency matrix $A_G$ of $G$, $\#\mathsf{cwalk}_k(A_G)$ is equal to the number of closed walks of length $k$ in $G$. Indeed, an entry $(A_G^k)_{vw}$ of the $k^{\text{th}}$ power $A_G^k$ of adjacency matrix $A_G$ can be easily seen to correspond to the number of walks from $v$ to $w$ of length $k$ in $G$. Hence, $\#\mathsf{cwalk}_k(A_G)=\tr(A_G^k)=\sum_{v\in V} (A_G^k)_{vv}$ indeed corresponds to the number of closed walks of length $k$ in $G$.

The following (folklore) characterisations are known to hold.
\begin{proposition}\label{prop:tracesim} 
	Let $G$ and $H$ be two graphs of the same order. The following statements are equivalent: 
	\begin{enumerate}
		\item[(1)] $G$ and $H$ have the same number of closed walks of length $k$, for all $k\geq 0$; 
		\item[(2)] $\tr(A_G^k)=\tr(A_H^k)$ for all $k\geq 0$; 
		\item[(3)] $G$ and $H$ are co-spectral; and 
		\item[(4)] there exists an orthogonal matrix $O$ such that $A_G\mdot O=O\mdot A_H$. 
	\end{enumerate}
\end{proposition}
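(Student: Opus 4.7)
My plan is to prove the four equivalences by a cycle, exploiting the symmetry (hence orthogonal diagonalisability) of adjacency matrices of undirected graphs.

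First I would establish $(1) \Leftrightarrow (2)$, which is essentially just unpacking the discussion preceding the proposition. The key combinatorial identity is that $(A_G^k)_{vw}$ equals the number of walks of length $k$ from $v$ to $w$, a fact proved by a trivial induction on $k$. Summing the diagonal then gives $\tr(A_G^k) = \sum_{v \in V} (A_G^k)_{vv}$, which is exactly the total number of closed walks of length $k$. The same holds for $H$, so $(1)$ and $(2)$ are literally the same statement.

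Next I would address $(2) \Leftrightarrow (3)$. Since $A_G$ is real symmetric, it is orthogonally diagonalisable, and $\tr(A_G^k) = \sum_i \lambda_i^k$, where the $\lambda_i$ are the eigenvalues of $A_G$ with multiplicity (equivalently, listed according to the spectrum $\mathsf{spec}(G)$). Thus $(3) \Rightarrow (2)$ is immediate. For the converse, I would invoke the standard fact that the power sums $p_k := \sum_i \lambda_i^k$ for $k = 1, \ldots, n$ determine the elementary symmetric polynomials in $\lambda_1, \ldots, \lambda_n$ via Newton's identities, and hence determine the characteristic polynomial and therefore the multi-set of eigenvalues. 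Since $G$ and $H$ have the same order $n$, matching traces of all powers up to $n$ suffices, and certainly matching them for all $k$ suffices.

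Finally I would prove $(3) \Leftrightarrow (4)$ using the spectral theorem. For $(3) \Rightarrow (4)$, write $A_G = O_G D O_G^{\mathsf{t}}$ and $A_H = O_H D O_H^{\mathsf{t}}$ with the same diagonal matrix $D$ of eigenvalues (here co-spectrality allows us to align the eigenvalues identically) and with $O_G, O_H$ orthogonal. Setting $O := O_G O_H^{\mathsf{t}}$, which is orthogonal as a product of orthogonal matrices, a short calculation gives
\begin{equation*}
A_G \cdot O = O_G D O_G^{\mathsf{t}} O_G O_H^{\mathsf{t}} = O_G D O_H^{\mathsf{t}} = O_G O_H^{\mathsf{t}} O_H D O_H^{\mathsf{t}} = O \cdot A_H.
\end{equation*}
Conversely, if $A_G \cdot O = O \cdot A_H$ with $O$ orthogonal, then $A_G = O A_H O^{\mathsf{t}}$ exhibits $A_G$ and $A_H$ as similar matrices, so they share the same characteristic polynomial, hence the same spectrum.

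I do not expect a real obstacle here; the only non-trivial ingredient is the use of Newton's identities (or an equivalent observation, such as that the power sums determine the generating function $\sum_k p_k z^k$ and thereby the set of eigenvalues with multiplicities) to go from equal traces of all powers to equal spectra. Everything else is a direct consequence of orthogonal diagonalisability of symmetric matrices.
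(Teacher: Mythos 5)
Your proposal is correct. Note, however, that the paper does not actually prove Proposition~\ref{prop:tracesim}: it treats the statement as folklore and simply cites Proposition~1 of Dawar et al.\ for (1)$\Leftrightarrow$(2)$\Leftrightarrow$(3) and Perlis for (3)$\Leftrightarrow$(4). Your writeup supplies the standard self-contained argument that those references encapsulate: the walk-counting interpretation of $\tr(A_G^k)$ for (1)$\Leftrightarrow$(2); Newton's identities to pass from equality of all power sums $\sum_i\lambda_i^k$ to equality of characteristic polynomials for (2)$\Rightarrow$(3) (with the $k=0$ case absorbed by the equal-order hypothesis); and orthogonal diagonalisation, $O:=O_G\mdot \tp{O_H}$, for (3)$\Rightarrow$(4), with the converse following from similarity of $A_G$ and $O\mdot A_H\mdot \tp{O}$. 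All steps are sound; the only place requiring slight care is that co-spectrality as defined in the paper records eigenspace dimensions, but for real symmetric matrices these coincide with algebraic multiplicities, so aligning the two diagonal matrices $D$ is legitimate. Your proof buys self-containedness at essentially no cost, whereas the paper buys brevity by outsourcing to the literature.
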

\begin{proof}
	For a proof of the equivalences (1) $\Leftrightarrow$ (2) $\Leftrightarrow$ (3) we refer to Proposition 1 in~\cite{Dawar2016} (although these equivalences appeared in the literature many times before). The equivalence (3) $\Leftrightarrow$ (4) is also known (see e.g., Theorem 9-12 in~\cite{Perlis1952}). \end{proof}

\begin{example}\label{ex:cospectral} 
	The graphs $G_1$ (\!\raisebox{-0.4ex}{\mbox{ 
	\includegraphics[height=0.3cm]{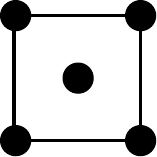}}}\hspace{.08em}) and $H_1$ (\!\raisebox{-0.4ex}{\mbox{ 
	\includegraphics[height=0.3cm]{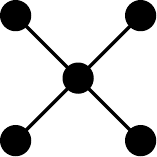}}}\hspace{.08em}) are the smallest pair (in terms of number of vertices) of non-isomorphic co-spectral graphs of the same order (see e.g., Figure 6.2 in~\cite{Dragos1971}). From the previous proposition we then know that $G_1$ and $H_1$ have the same number of closed walks of any length. We note that the isolated vertex in $G_1$ ensures that $G_1$ and $H_1$ have the same number of vertices (and thus the same number of closed walks of length $0$). ~\hfill$\qed$ 
\end{example}

As anticipated, sentences in $\ML{\cdot, \tr}$ can only extract information from adjacency matrices related to the number of closed walks in graphs. More precisely, we can add to Proposition~\ref{prop:tracesim} a fifth equivalent condition based on $\ML{\cdot,\tr}$-equivalence: 
\begin{proposition}\label{theorem:trace} 
	For two graphs $G$ and $H$ of the same order, $G\equiv_{\ML{\cdot,\, \tr}} H$ if and only if $G$ and $H$ have the same number of closed walks of any length. 
\end{proposition}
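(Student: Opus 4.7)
The plan is to combine Proposition~\ref{prop:tracesim} with a simple structural analysis of the sentences that can be formed in $\ML{\cdot,\tr}$. The forward direction is immediate: for every $k\geq 1$, the expression $\#\mathsf{cwalk}_k(X):=\tr(X^k)$ is a sentence in $\ML{\cdot,\tr}$, so $G\equiv_{\ML{\cdot,\tr}}H$ forces $\tr(A_G^k)=\tr(A_H^k)$ for all $k\geq 1$, while the case $k=0$ is handled by the assumption that $G$ and $H$ have the same order. Proposition~\ref{prop:tracesim} then delivers equal closed-walk counts at every length.

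For the converse, I would first prove a normal form for well-formed expressions: assuming $X$ has type $n\times n$ with $n>1$, every $e(X)\in\ML{\cdot,\tr}$ is either a matrix power $X^k$ (for some $k\geq 1$) or a product $\prod_{i=1}^{\ell}\tr(X^{k_i})$ of traces of matrix powers. This is established by structural induction on $e$. The crux is that matrix multiplication requires compatible dimensions, so a scalar ($1\times 1$) output of $\tr$ can never be multiplied against a genuine $n\times n$ matrix when $n>1$. Consequently, matrix-valued subexpressions collapse to powers $X^k$, and scalar-valued subexpressions can only be combined with other scalars via $\cdot$, producing products of trace-of-power atoms. The degenerate case $n=1$ is trivial since an undirected graph of order one has $A_G=A_H=(0)$.

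Given this normal form, any sentence $e(X)$ evaluates as $e(A_G)=\prod_i \tr(A_G^{k_i})$ and analogously on $A_H$. By the assumption on closed-walk counts together with the equivalence $(1)\Leftrightarrow(2)$ in Proposition~\ref{prop:tracesim}, each factor satisfies $\tr(A_G^{k_i})=\tr(A_H^{k_i})$, whence $e(A_G)=e(A_H)$ and $G\equiv_{\ML{\cdot,\tr}}H$ follows. The only technical step is the normal form claim, which is essentially a bookkeeping exercise: one tracks the type (matrix vs.\ scalar) of every subexpression and uses the dimension constraint of $\cdot$ to rule out ``interleaved'' expressions such as $X^j\cdot\tr(X^k)$ when $n>1$. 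Since neither scalar multiplication nor addition is available in this fragment, the induction is straightforward.
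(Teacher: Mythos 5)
Your proposal is correct, but it proves the converse by a genuinely different route than the paper. The paper deduces from Proposition~\ref{prop:tracesim}(4) the existence of an orthogonal matrix $O$ with $A_G\mdot O=O\mdot A_H$ and then shows, via the conjugation-preservation Lemmas~\ref{lem:multp-sim} and~\ref{lem:trace-sim} and an induction on the structure of expressions, that $O$-conjugation (hence equality of scalars) propagates through every sentence. You instead bypass the orthogonal matrix entirely: you establish the syntactic normal form that every well-formed sentence of $\ML{\cdot,\tr}$ over an $n\times n$ input with $n>1$ evaluates to a product $\prod_i\tr(X^{k_i})$, and then conclude directly from the equivalence $(1)\Leftrightarrow(2)$ of Proposition~\ref{prop:tracesim}. (The paper actually states this normal form informally at the start of Section~\ref{subsec:multtrace} but does not use it in the proof.) Your argument is the more elementary and self-contained one for this particular fragment --- it needs neither item (4) of Proposition~\ref{prop:tracesim} nor any conjugation machinery --- and your typing argument ruling out expressions such as $X^j\mdot\tr(X^k)$ for $n>1$ is sound, with the only small bookkeeping point being that $\tr$ applied to a scalar-valued subexpression is the identity and so collapses into the same normal form. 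What the paper's heavier approach buys is reuse: Lemmas~\ref{lem:multp-sim} and~\ref{lem:trace-sim} are invoked for every larger fragment studied afterwards, where no comparably simple normal form exists, so proving them here amortises the work across the whole paper; your normal-form argument does not extend once $\ones(\cdot)$, $\diag(\cdot)$ or pointwise operations enter the language.
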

\begin{proof}
	By definition, if $G\equiv_{\ML{\cdot,\,\tr}} H$, then $e(A_G)=e(A_H)$ for any sentence $e(X)$ in $\ML{\cdot,\tr}$. This holds in particular for the sentences $\#\mathsf{cwalk} _k(X):=\tr(X^k)$ in $\ML{\cdot,\tr}$, for $k\geq 1$. Hence, $G$ and $H$ have indeed the same number of closed walks of length $k$, for $k\geq 1$. Furthermore, since $G$ and $H$ are of the same order and $A_G^0=A_H^0=I$ (by convention), $G$ and $H$ have also the same number of closed walks of length $0$. 
	
	For the converse, if $G$ and $H$ have the same number of closed walks of any length, then the previous proposition tells that $A_G\mdot O=O\mdot A_H$ for some orthogonal matrix $O$. We next claim that when $A_G\mdot O=O\mdot A_H$ holds for some orthogonal matrix $O$, then $e(A_G)=e(A_H)$ for any sentence $e(X)$ in $\ML{\cdot,\tr}$. In fact, this claim will follow from the more general Lemmas~\ref{lem:multp-sim} and~\ref{lem:trace-sim} below. We separate these \change{lemmas} from the current proof since we also need them later in the paper. 
\end{proof}
We note that yet another interpretation of $G\equiv_{\ML{\cdot, \tr}} H$ can be given in terms of the homomorphism vectors mentioned in the introduction. That is,  $G\equiv_{\ML{\cdot, \tr}} H$  if and only if $\textsf{HOM}_{\cal F}(G)=\textsf{HOM}_{\cal F}(H)$ where ${\cal F}$ is the set of all cycles~\cite{Dell2018}. 

As mentioned in the proof of Proposition~\ref{theorem:trace}, we still need to show that if $A_G\mdot O=O\mdot A_H$ holds for some orthogonal matrix $O$, then $e(A_G)=e(A_H)$ for any sentence $e(X)$ in $\ML{\cdot,\tr}$. 

In more generality, when $A_G\mdot T=T\mdot A_H$ holds for a (\change{not necessarily invertible}) matrix $T$, we say that $A_G$ and $A_H$ are \change{\textit{$T$-conjugate}. We remark that conjugation  is not necessarily a symmetric relation, i.e., $A_G$ and $A_H$ can be $T$-conjugate whereas $A_H$ and $A_G$ may not be $T$-conjugate for the same matrix $T$.}
We also define the notion of $T$-conjugation for vectors and scalars, as is shown next.
\begin{definition}\label{def:sim} \change{Let $n>1$.}
	Let $T$ be a matrix in $\C^{n\times n}$.
	Two matrices $A$ and $B$ in $\C^{n\times n}$ are called \textit{$T$-conjugate} if $A\mdot T=T\mdot B$. Two \change{column} vectors $A$ and $B$ in $\C^{n\times 1}$ are \textit{$T$-conjugate} if $A=T\mdot B$. Similarly, two \change{row} vectors $A$ and $B$ in $\C^{1\times n}$ are \textit{$T$-conjugate} if $A\mdot T=B$. Finally, if $A$ and $B$ are scalars in $\C$ (or elements in $\C^{1\times 1}$), then $A$ and $B$ are $T$-conjugate if $A=B$ (i.e., $T$-conjugation of scalars is simply equality). 
\end{definition}

In $\ML{\cdot,\tr}$ we allow for matrix multiplication and the trace operation. We first show that $T$-conjugation is preserved by matrix multiplication. 
\begin{lemma}\label{lem:multp-sim} Let $\ML{\cal L}$ be any matrix query language fragment and let $G$ and $H$ be two graphs of the same order.
	Consider expressions  $e_1(X)$ and $e_2(X)$ in $\ML{\cal L}$. If $e_i(A_G)$ and $e_i(A_H)$ are $T$-conjugate, for $i=1,2$, for some matrix $T$, then $e_1(A_G)\mdot e_2(A_G)$ is also $T$-conjugate to $e_1(A_H)\mdot e_2(A_H)$ (provided, of course, that the multiplication is well-defined). 
\end{lemma}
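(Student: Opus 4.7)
The plan is to argue by case analysis on the shapes of $e_1(A_G)$ and $e_2(A_G)$, since the definition of $T$-conjugation (Definition~\ref{def:sim}) takes different forms for square matrices, column vectors, row vectors, and scalars. Because $G$ and $H$ share the same order $n$ and every $\MLANG$ expression in a single $n\times n$ variable produces an output of one of the shapes $n\times n$, $n\times 1$, $1\times n$, or $1\times 1$, only finitely many well-typed combinations of $e_1(X)\mdot e_2(X)$ can arise, and in each of them $T$ is a fixed $n\times n$ matrix consistent with both of the hypothesised $T$-conjugacies.

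For each combination the verification is a short two-step ``slide $T$ across the product'' calculation using associativity of matrix multiplication. The template case is matrix$\,\times\,$matrix: from $e_1(A_G)\mdot T = T\mdot e_1(A_H)$ and $e_2(A_G)\mdot T = T\mdot e_2(A_H)$ one computes
\begin{equation*}
e_1(A_G)\mdot e_2(A_G)\mdot T \;=\; e_1(A_G)\mdot T\mdot e_2(A_H) \;=\; T\mdot e_1(A_H)\mdot e_2(A_H),
\end{equation*}
which is exactly the $T$-conjugacy relation for the product. The remaining shape combinations $n\times n \cdot n\times 1$, $1\times n \cdot n\times n$, $1\times n \cdot n\times 1$, and $n\times 1 \cdot 1\times n$ (plus the degenerate cases involving $1\times 1$ factors) follow the identical pattern: invoke the appropriate form of $T$-conjugation on the inner factor ($A=T\mdot B$ for column vectors, $A\mdot T = B$ for row vectors, plain equality for scalars) to expose a copy of $T$ in the middle of the product, and then absorb it using the outer factor's relation.

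I do not foresee any substantive obstacle beyond the bookkeeping: one must track which variant of $T$-conjugation applies to the product on each side and verify each admissible shape combination individually. Well-formedness of $e_1(X)\mdot e_2(X)$ forces compatible dimensions in every case, and no identity deeper than associativity of matrix multiplication is required, so the proof will amount to writing out the handful of cases and pointing to the same one-line manipulation above.
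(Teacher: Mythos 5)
Your proposal is correct and follows essentially the same route as the paper's own proof: an exhaustive case analysis over the admissible shape combinations of $e_1(A_G)$ and $e_2(A_G)$, where each case is a one- or two-line computation that uses the appropriate form of $T$-conjugation (square, column, row, or scalar) to move $T$ across the product via associativity. The template calculation you display for the matrix--matrix case is exactly the one the paper uses, and the remaining cases indeed reduce to the same manipulation.
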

\begin{proof}[\change{sketch}]
	The proof consists of a simple case analysis depending on the dimensions of $e_1(A_G)$ and $e_2(A_G)$ (or equivalently, the dimensions of $e_1(A_H)$ and $e_2(A_H)$) and by using the definition of $T$-conjugation. We refer for the full proof to the appendix. 
\end{proof}

When considering the trace operation, we observe that $T$-conjugation is preserved by the trace operation, provide that  $T$ is an \textit{invertible} matrix.
 
\begin{lemma}\label{lem:trace-sim} Let $\ML{\cal L}$ be any matrix query language fragment and let $G$ and $H$ be two  graphs of the same order. Let $e_1(X)$ be an expression in $\ML{\cal L}$. If $e_1(A_G)$ and $e_1(A_H)$ are $T$-conjugate for an \textit{invertible} matrix $T$, 
	 then $\tr(e_1(A_G))$ and $\tr(e_1(A_H))$ are also $T$-conjugate. 
\end{lemma}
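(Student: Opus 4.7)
The plan is to reduce the statement to the standard fact that similar matrices have the same trace, using the invertibility of $T$ in an essential way. Since trace is only applied to square matrices, we know that both $e_1(A_G)$ and $e_1(A_H)$ are square matrices of the same size as $T$, say $n\times n$. By Definition~\ref{def:sim}, the $T$-conjugation of square matrices means exactly that
\begin{equation*}
e_1(A_G)\mdot T = T\mdot e_1(A_H).
\end{equation*}

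Next, I would exploit the invertibility of $T$: multiply both sides on the right by $T^{-1}$ to obtain
\begin{equation*}
e_1(A_G) = T\mdot e_1(A_H)\mdot T^{-1},
\end{equation*}
that is, $e_1(A_G)$ and $e_1(A_H)$ are similar matrices. The cyclic property of the trace then gives
\begin{equation*}
\tr(e_1(A_G)) = \tr\bigl(T\mdot e_1(A_H)\mdot T^{-1}\bigr) = \tr\bigl(T^{-1}\mdot T\mdot e_1(A_H)\bigr) = \tr(e_1(A_H)).
\end{equation*}

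Finally, I invoke the convention in Definition~\ref{def:sim} that $T$-conjugation of scalars is just equality, so the identity $\tr(e_1(A_G))=\tr(e_1(A_H))$ is precisely what it means for the two scalars to be $T$-conjugate. No step here is really an obstacle; the only delicate point is that the argument genuinely requires $T$ to be invertible, which is why the lemma hypothesis singles out this case (in contrast to Lemma~\ref{lem:multp-sim}, where no invertibility is needed). This explains why later fragments of $\MLANG$ that include $\tr$ will only be controllable by conjugacy conditions with invertible (e.g.\ orthogonal or unitary) matrices $T$.
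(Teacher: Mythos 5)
Your proof is correct and follows essentially the same route as the paper: rewrite the conjugation relation as a similarity $e_1(A_G)=T\mdot e_1(A_H)\mdot T^{-1}$ and apply the invariance of the trace under conjugation by an invertible matrix, then note that $T$-conjugation of scalars is equality. The only (trivial) point the paper treats explicitly that you gloss over is the degenerate case where $e_1(A_G)$ is a $1\times 1$-matrix, i.e.\ a sentence: there $T$-conjugation is already defined as equality (Definition~\ref{def:sim} reserves the relation $A\mdot T=T\mdot B$ for $n\times n$-matrices with $n>1$), so $\tr(e_1(A_G))=\tr(e_1(A_H))$ follows immediately without invoking $T$ at all.
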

\begin{proof}
	Let $e(X):=\tr(e_1(X))$. By assumption, $e_1(A_G)\mdot T=T\mdot e_1(A_H)$ for an invertible matrix $T$ in case that $e_1(A_G)$ is an $n\times n$-matrix \change{with $n>1$}, and $e_1(A_G)=e_1(A_H)$ in case that $e_1(A_G)$ is a sentence. In the latter case, clearly also $e(A_G)=\tr(e_1(A_G))=\tr(e_1(A_H))=e(A_H)$. In the former case, we use the property that $\tr(T^{-1}\mdot A\mdot T)=\tr(A)$ for any matrix $A$ and invertible matrix $T$ (see e.g., Chapter 10 in \cite{ladoneright} for a proof of this property). Hence, we have that $ e(A_G)=\tr(e_1(A_G))=\tr(T^{-1} \mdot e_1(A_G)\mdot T)=\tr(T^{-1}\mdot T \mdot e_1(A_H))\allowbreak=\allowbreak\tr(I\mdot e_1(A_H))=\allowbreak \tr(e_1(A_H))=e(A_H)$ holds, as desired. 
\end{proof}
We remark that Lemmas~\ref{lem:multp-sim} and~\ref{lem:trace-sim} hold for \textit{any} fragment $\ML{\cal L}$.

The claim at the end of the proof of Proposition~\ref{theorem:trace}, i.e., that $O$-conjugation of $A_G$ and $A_H$ indeed implies that $e(A_G)=e(A_H)$ for any sentence $e(X)\in\ML{\cdot,\tr}$, now easily follows by induction on the structure of expressions,  Indeed, since orthogonal matrices are invertible, Lemmas~\ref{lem:multp-sim} and~\ref{lem:trace-sim} imply that when $e_1(A_G)$ and $e_1(A_H)$, and $e_2(A_G)$ and $e_2(A_H)$ are $O$-conjugate for an orthogonal matrix $O$, then also $e_1(A_G)\mdot e_2(A_G)$ and $e_1(A_H)\mdot e_2(A_H)$ are $O$-conjugate, and $\tr(e_1(A_G))$ and $\tr(e_1(A_H))$ are $O$-conjugate (i.e., equal). Hence, when $A_G$ and $A_H$ are $O$-conjugate, $e(A_G)$ and $e(A_H)$ are $O$-conjugate for any sentence $e(X)\in\ML{\cdot,\tr}$. That is, $e(A_G)=e(A_H)$ for any sentence in $\ML{\cdot,\tr}$. 

\subsection{Adding operations to \texorpdfstring{$\ML{\cdot,\tr}$}{ML(.,tr)} without increasing its distinguishing power}\label{subsec:multrace_ext} We next investigate how much more $\ML{\cdot,\tr}$ can be extended whilst preserving the characterisation given in Proposition~\ref{theorem:trace}. Some more general observations will be made in this context, which will be used for other fragments later in the paper as well.

First, we consider the extension with scalar multiplication ($\times$) and addition ($+$). 
\begin{lemma}\label{cor:ridoflinear} 
	Let $\ML{\cal L}$ be any matrix query language fragment. Let $e_1(X)$ and $e_2(X)$ be two expressions in $\ML{\cal L}$ and consider two graphs $G$ and $H$ of the same order. Then, if $e_1(A_G)$ and $e_1(A_H)$, and $e_2(A_G)$ and $e_2(A_H)$ are $T$-conjugate for some matrix $T$, then also $e_1(A_G)+e_2(A_G)$ and $e_1(A_H)+e_2(A_H)$ are $T$-conjugate, and $a\times e_1(A_G)$ and $a\times e_1(A_H)$ are $T$-conjugate for any scalar \change{$a\in \C$}. 
\end{lemma}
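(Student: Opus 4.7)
The plan is to prove Lemma~\ref{cor:ridoflinear} by a direct case analysis on the dimensions (and hence the form of $T$-conjugation) of $e_1(A_G)$ and $e_2(A_G)$, exactly mirroring the proof strategy used for Lemma~\ref{lem:multp-sim}. Since matrix addition requires both operands to have the same dimensions, and since $e_i(A_G)$ and $e_i(A_H)$ have the same dimensions (the evaluation of a fixed expression on inputs of the same order yields outputs of the same shape), all four shape cases of Definition~\ref{def:sim} can be handled uniformly by exploiting distributivity of matrix multiplication over matrix addition.

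First I would set up the four cases: (i) $e_1(A_G),e_2(A_G)\in\C^{n\times n}$, (ii) both are column vectors in $\C^{n\times 1}$, (iii) both are row vectors in $\C^{1\times n}$, (iv) both are scalars. For case (i), by assumption $e_1(A_G)\mdot T=T\mdot e_1(A_H)$ and $e_2(A_G)\mdot T=T\mdot e_2(A_H)$; adding the two identities and using distributivity of $\mdot$ over $+$ yields $\bigl(e_1(A_G)+e_2(A_G)\bigr)\mdot T=T\mdot\bigl(e_1(A_H)+e_2(A_H)\bigr)$, which is exactly $T$-conjugation of the sum. For (ii) and (iii) the argument is identical with the appropriate one-sided form of $T$-conjugation ($A=T\mdot B$ respectively $A\mdot T=B$). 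For (iv), $T$-conjugation of scalars is plain equality, so closure under $+$ is immediate.

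For the scalar multiplication part, I would argue similarly: in each of the four cases, multiply both sides of the defining identity of $T$-conjugation for $e_1$ by $a\in\C$. For instance, in case (i) we get $(a\times e_1(A_G))\mdot T=a\times(e_1(A_G)\mdot T)=a\times(T\mdot e_1(A_H))=T\mdot(a\times e_1(A_H))$, using that scalar multiplication commutes with matrix multiplication. The other cases are analogous.

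There is no real obstacle here: the whole content of the lemma is that the definitions of $T$-conjugation in Definition~\ref{def:sim} are all linear conditions in the pair of matrices/vectors/scalars being compared, so they are automatically closed under $\C$-linear combinations. The only mild bookkeeping is making sure that the row/column-vector and scalar cases are not forgotten, since $T$-conjugation is defined shape-by-shape rather than uniformly.
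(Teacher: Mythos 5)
Your proposal is correct and follows essentially the same route as the paper: the paper compresses the whole argument into one sentence appealing to the bilinearity of matrix multiplication, while you unfold it into the four shape cases of Definition~\ref{def:sim}, but in both cases the content is just that every form of $T$-conjugation is a linear condition and is therefore closed under sums and scalar multiples. Nothing is missing.
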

\begin{proof}
	This is an immediate consequence of the definition of $T$-conjugation and that matrix multiplication is a bilinear operation, i.e., $(a\times A+b\times B)\mdot (c\times C+d\times D)=(a\times c)\times (A\mdot C)+ (a\times d)\times (A\mdot D)+ (b\times c)\times (B\mdot C)+(b\times d)\times (B\mdot D)$, for scalars $a$, $b$, $c$, $d\in\C$ and matrices or vectors $A,B,C$ and $D$. 
\end{proof}

We next consider complex conjugate transposition (${}^*$). 
\begin{lemma}\label{lem:complextranspose-elim1} 
	Let $\ML{\cal L}$ be any matrix query language fragment. Let $e(X)$ be an expression in $\ML{\cal L}$ and consider two graphs $G$ and $H$ of the same order. Then, if $e(A_G)$ and $e(A_H)$ are $T$-conjugate, and $e(A_H)$ and $e(A_G)$ are $T^*$-conjugate for some matrix $T$, then also $(e(A_G))^*$ and $(e(A_H))^*$ are $T$-conjugate, and $(e(A_H))^*$ and $(e(A_G))^*$ are $T^*$-conjugate. 
\end{lemma}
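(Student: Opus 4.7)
The plan is to take the conjugate transpose of each of the two hypothesised conjugacy identities and to observe that, using $(A\mdot B)^* = B^* \mdot A^*$ and $(A^*)^* = A$, the resulting identities match the two required conclusions term-by-term. A short case analysis on the shape of $e(A_G)$ (equivalently, of $e(A_H)$) is needed because Definition~\ref{def:sim} packages $T$-conjugacy differently for $n\times n$-matrices with $n>1$, column vectors, row vectors, and scalars, and because the operation ${}^*$ interchanges column and row shapes.

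In the matrix case, the hypotheses read $e(A_G)\mdot T = T\mdot e(A_H)$ and $e(A_H)\mdot T^* = T^*\mdot e(A_G)$. Conjugate-transposing the second and invoking $(T^*)^* = T$ yields $(e(A_G))^* \mdot T = T\mdot (e(A_H))^*$, which is exactly the $T$-conjugacy of $(e(A_G))^*$ and $(e(A_H))^*$; conjugate-transposing the first yields $(e(A_H))^* \mdot T^* = T^* \mdot (e(A_G))^*$, the desired $T^*$-conjugacy in the opposite direction. For column vectors, the hypotheses become $e(A_G) = T\mdot e(A_H)$ and $e(A_H) = T^* \mdot e(A_G)$; the conjugate transposes are row vectors, and applying ${}^*$ to the second gives $(e(A_G))^* \mdot T = (e(A_H))^*$, which is the row-vector form of $T$-conjugacy, while applying ${}^*$ to the first gives $(e(A_H))^* \mdot T^* = (e(A_G))^*$. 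The row-vector case is dual to the column-vector case, and the scalar case is immediate since $T$-conjugacy of scalars is equality in Definition~\ref{def:sim}, and equality is preserved by complex conjugation.

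The only real obstacle is bookkeeping: one must carefully track how ${}^*$ swaps column and row shapes, so that the conclusion is read against the correct clause of Definition~\ref{def:sim}. It is also worth noting why both hypotheses are genuinely needed: conjugate-transposing a single $T$-conjugacy only yields a $T^*$-conjugacy (in the reversed direction $H \to G$), so the companion $T^*$-conjugacy hypothesis is precisely what is required to recover the $T$-conjugacy of the transposed expressions, and vice versa. This symmetry between the two hypotheses and the two conclusions is what makes the lemma's statement self-dual under ${}^*$.
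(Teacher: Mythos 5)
Your proof is correct and follows essentially the same route as the paper: a case analysis on the shape of $e(A_G)$, conjugate-transposing each hypothesised identity and using $(A\mdot B)^*=B^*\mdot A^*$ together with $(T^*)^*=T$ to land in the appropriate clause of Definition~\ref{def:sim}. Your closing remark on why both hypotheses are needed (transposing one conjugacy only yields the other, reversed) is exactly the mechanism the paper's computation exploits.
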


\begin{proof}
	We distinguish between a number of cases, depending on the dimensions of $e(A_G)$ (and hence also of $e(A_H)$). Suppose that $e(A_G)$ returns an $n\times n$-matrix \change{for $n>1$}. Then, by assumption $e(A_G)\mdot T=T\mdot e(A_H)$ and $e(A_H)\mdot T^*=T^*\mdot e(A_G)$. It then follows, using that the operation ${}^*$ is an involution ($(A^*)^*=A$) and $(A\mdot B)^*=B^*\mdot A^*$, that 
	\begin{linenomath}
		\[(e(A_G))^*\mdot T=(T^*\mdot e(A_G))^*=(e(A_H)\mdot T^*)^*=T\mdot (e(A_H))^*, \]
	\end{linenomath}
	and similarly, 
	\begin{linenomath}
		\[ (e(A_H))^*\mdot T^*=(T\mdot e(A_H))^*=(e(A_G)\mdot T)^*=T^*\mdot (e(A_G))^*. \]
	\end{linenomath}
	Furthermore, when $e(A_G)$ is an $n\times 1$-vector \change{for $n>1$}, we have by assumption that $e(A_G)=T\mdot e(A_H)$ and $e(A_H)=T^*\mdot e(A_G)$. Hence, 
	$(e(A_G))^*\mdot T=(T^*\!\mdot e(A_G))^*=(e(A_H))^*$ and $(e(A_H))^*\mdot T^*=(T\mdot e(A_H))^*=(e(A_G))^*$. 
	Similarly, when $e(A_G)$ is a $1\times n$-vector \change{for $n>1$}, one can verify that $((e(A_G))^*=T\mdot (e(A_H))^*$ and $(e(A_H))^*=T^*\mdot (e(A_G))^*$. Finally, if $e(A_G)$ is a sentence then clearly $(e(A_G))^*=(e(A_H))^*$. 
\end{proof}

We next consider pointwise function applications. Later in the paper we show that pointwise function applications on vectors or matrices do add expressive power. \change{This is particularly true for
pointwise multiplication of vectors ($\odot_v$) and of matrices ($\odot$).}
 By contrast, when function applications are \textit{only allowed on scalars} they do not add any expressive power. \change{More specifically, let $\Omega$ be an arbitrary set of pointwise functions and} let $f:\C^k \to \C$ be a function in $\Omega$. We denote by $\Apply_{\mathsf{s}}[f](e_1,\ldots,\allowbreak e_k)$ the application of $f$ on $e_1(X),\ldots, e_k(X)$ when each $e_i(X)$ is a sentence. 
\begin{lemma}\label{lem:ridofscalarpointw} 
	Let $\ML{\cal L}$ be any matrix query language fragment. Consider two graphs $G$ and $H$ of the same order and sentences $e_1(X), e_2(X),\ldots,e_k(X)$ in $\ML{\cal L}$. Let $f:\C^k\to\C$ be a function in $\Omega$. Suppose that for each $i=1,\ldots,k$, $e_i(A_G)=e_i(A_H)$ (i.e., they are $T$-conjugate for an arbitrary matrix $T$). Then also $\Apply_{\mathsf{s}}[f](e_1(A_G),\ldots,\allowbreak e_k(A_G))=\Apply_{\mathsf{s}}[f](e_1(A_H),\ldots,\allowbreak e_k(A_H))$ (i.e., they are $T$-conjugate as well). 
\end{lemma}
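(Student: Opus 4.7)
The plan is essentially immediate: since each $e_i(X)$ is a sentence, $e_i(A_G)$ and $e_i(A_H)$ are scalars in $\C$, and by Definition~\ref{def:sim}, $T$-conjugation of scalars reduces to plain equality, regardless of the matrix $T$. Hence the hypothesis $e_i(A_G) = e_i(A_H)$ (for every $i$) is to be used directly, and the conclusion $\Apply_{\mathsf{s}}[f](e_1(A_G),\ldots,e_k(A_G)) = \Apply_{\mathsf{s}}[f](e_1(A_H),\ldots,e_k(A_H))$ again reduces to equality of scalars in $\C$.

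First I would unfold the definition of $\Apply_{\mathsf{s}}[f]$ on sentences: by the semantics in Table~\ref{tbl:mloperations} for pointwise function application, when the inputs are $1\times 1$-matrices $a_1,\ldots,a_k\in\C$, the result $\Apply_{\mathsf{s}}[f](a_1,\ldots,a_k)$ is simply the scalar $f(a_1,\ldots,a_k)\in\C$. Next I would invoke functional extensionality: because $e_i(A_G) = e_i(A_H)$ for each $i=1,\ldots,k$, the tuples $(e_1(A_G),\ldots,e_k(A_G))$ and $(e_1(A_H),\ldots,e_k(A_H))$ are identical, so applying $f$ to them produces the same scalar in $\C$. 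This scalar equality is exactly $T$-conjugation for scalars.

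There is effectively no obstacle here; the statement is a sanity check confirming that pointwise function application restricted to scalar inputs cannot break $T$-conjugation. The only thing worth emphasising is that the matrix $T$ plays no role whatsoever in this case, which is consistent with the convention set in Definition~\ref{def:sim} that $T$-conjugation of scalars is equality. Consequently, this lemma combined with Lemmas~\ref{lem:multp-sim},~\ref{lem:trace-sim},~\ref{cor:ridoflinear}, and~\ref{lem:complextranspose-elim1} will yield by a routine induction on expressions that extending $\ML{\cdot,\tr}$ (or any other fragment) with scalar $\Apply_{\mathsf{s}}[f]$ operations adds no distinguishing power — exactly the message this subsection is building towards.
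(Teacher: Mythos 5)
Your proposal is correct and follows exactly the paper's argument: the paper's proof is the one-line observation that the output of $f:\C^k\to\C$ is fully determined by its inputs, which is precisely your point that equal scalar tuples yield equal values under $f$, with $T$-conjugation of scalars being plain equality. You have merely spelled out the same reasoning in more detail.
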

\begin{proof}
	This is straightforward to verify since the result of a function $f:\C^k\to \C$ is fully determined by its input values. 
\end{proof}

Given these lemmas, we can infer that the characterisation given in Proposition~\ref{theorem:trace} remains to hold for $\ML{\cdot,\tr,+,\times,{}^*,\Apply_{\mathsf{s}}[f],\allowbreak f\in\Omega}$-equivalence. 
\begin{corollary}\label{corr:tracelincomb} 
	For two graphs $G$ and $H$ of the same order, we have that  $G\equiv_{\ML{\cdot,\,\tr}} H$ if and only if $G\equiv_{\ML{\cdot,\,\tr,{}^*,+,\allowbreak\times,\allowbreak\Apply_{\mathsf{s}}[f],f\in\Omega}} H$.~\hfill$\qed$ 
\end{corollary}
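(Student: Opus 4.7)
The forward implication is immediate: any sentence in $\ML{\cdot,\tr}$ is also a sentence in the larger fragment, so $G\equiv_{\ML{\cdot,\tr,{}^*,+,\times,\Apply_{\mathsf{s}}[f],f\in\Omega}} H$ trivially implies $G\equiv_{\ML{\cdot,\tr}} H$. The substantive direction is the converse, and the plan is to lift the $O$-conjugation characterisation coming from Propositions~\ref{prop:tracesim} and~\ref{theorem:trace} through every additional operation in the extended fragment.

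Assume $G\equiv_{\ML{\cdot,\tr}} H$. By Propositions~\ref{prop:tracesim} and~\ref{theorem:trace}, there exists an orthogonal matrix $O$ with $A_G\mdot O=O\mdot A_H$; so $A_G$ and $A_H$ are $O$-conjugate. Because $O$ is real orthogonal, $O^*=\tp{O}=O^{-1}$, and rearranging $A_G\mdot O=O\mdot A_H$ by multiplying on the left and right by $\tp{O}=O^*$ yields $A_H\mdot O^*=O^*\mdot A_G$. Hence $A_H$ and $A_G$ are also $O^*$-conjugate. This pair of conditions is exactly what is needed to feed into Lemma~\ref{lem:complextranspose-elim1}.

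The main step is an induction on the structure of an arbitrary expression $e(X)$ in $\ML{\cdot,\tr,{}^*,+,\times,\Apply_{\mathsf{s}}[f],f\in\Omega}$, proving the joint invariant that $e(A_G)$ and $e(A_H)$ are $O$-conjugate \emph{and} $e(A_H)$ and $e(A_G)$ are $O^*$-conjugate. The base case $e(X)=X$ is handled above. For the inductive step, each operation in the extended fragment corresponds to one of the lemmas collected in Section~\ref{subsec:multrace_ext}: matrix multiplication is handled by Lemma~\ref{lem:multp-sim} (applied twice, once with $T=O$ and once with $T=O^*$); the trace is handled by Lemma~\ref{lem:trace-sim}, using that $O$ and $O^*$ are invertible; addition and scalar multiplication are handled by Lemma~\ref{cor:ridoflinear}; conjugate transposition is handled by Lemma~\ref{lem:complextranspose-elim1}, which is precisely why the joint invariant involving both $O$ and $O^*$ must be carried along; and scalar pointwise function application is handled by Lemma~\ref{lem:ridofscalarpointw}, noting that on scalar subexpressions the inductive hypothesis gives exact equality, so the output of $\Apply_{\mathsf{s}}[f]$ is again equal on $A_G$ and $A_H$ and hence trivially $O$- and $O^*$-conjugate.

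Applying the invariant to a sentence $e(X)$ yields scalars $e(A_G)$ and $e(A_H)$ that are $O$-conjugate, which by Definition~\ref{def:sim} means $e(A_G)=e(A_H)$. Since this holds for every sentence of the extended fragment, we conclude $G\equiv_{\ML{\cdot,\tr,{}^*,+,\times,\Apply_{\mathsf{s}}[f],f\in\Omega}} H$. The only subtle point is remembering to maintain the symmetric $O^*$-condition alongside the $O$-condition throughout the induction: without it, Lemma~\ref{lem:complextranspose-elim1} cannot be invoked at the transposition step, which is where the argument would otherwise break down.
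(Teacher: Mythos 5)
Your proof is correct and follows essentially the same route as the paper's: extract the orthogonal $O$ from Propositions~\ref{prop:tracesim} and~\ref{theorem:trace}, observe that $A_H$ and $A_G$ are $O^*$-conjugate with $O^*$ again orthogonal, and propagate the joint $O$/$O^*$-conjugation invariant through Lemmas~\ref{lem:multp-sim},~\ref{lem:trace-sim},~\ref{cor:ridoflinear},~\ref{lem:complextranspose-elim1} and~\ref{lem:ridofscalarpointw} by structural induction. The only (immaterial) difference is that you derive $A_H\mdot O^*=O^*\mdot A_G$ from orthogonality of $O$ alone, whereas the paper obtains it by transposing $A_G\mdot O=O\mdot A_H$ and using the symmetry of the adjacency matrices.
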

\begin{proof}
	We only need to show that $G\equiv_{\ML{\cdot,\,\tr}} H$ implies $G\equiv_{\ML{\cdot,\,\tr,{}^*,+,\times,\allowbreak\Apply_{\mathsf{s}}[f],f\in\Omega}} H$. By Proposition~\ref{theorem:trace}, there exists an orthogonal matrix $O$ such that $A_G\mdot O=O\mdot A_H$. Furthermore, we have that $O^*\mdot A_G=(A_G\mdot O)^*=(O\mdot A_H)^*=A_H\mdot O^*$ since $A_G$ and $A_H$ are symmetric real matrices. Hence, $A_H$ and $A_G$ are $O^*$-conjugate. We also, importantly, observe that $O^*$ is an orthogonal matrix as well. Lemmas~\ref{lem:multp-sim} and~\ref{lem:trace-sim} then imply that $e(A_G)$ and $e(A_H)$ are $O$-conjugate, and $e(A_H)$ and $e(A_G)$ are $O^*$-conjugate for any expression $e(X)$ in $\ML{\cdot,\tr}$. Furthermore, Lemmas~\ref{cor:ridoflinear},~\ref{lem:complextranspose-elim1} and~\ref{lem:ridofscalarpointw} imply that addition, scalar multiplication, complex conjugate transposition and pointwise function applications on scalars preserve $O$ and $O^*$-conjugation. This in turn implies that $e(A_G)=e(A_H)$ for any sentence $e(X)\in\ML{\cdot,\,\tr,{}^*,+,\times,\allowbreak\Apply_{\mathsf{s}}[f],f\in\Omega}$. 
\end{proof}
As a consequence, the graphs $G_1$ (\!\raisebox{-0.4ex}{\mbox{ 
\includegraphics[height=0.3cm]{graphG1}}}\hspace{.08em}) and $H_1$ (\!\raisebox{-0.4ex}{\mbox{ 
\includegraphics[height=0.3cm]{graphH1}}}\hspace{.08em}) from Example~\ref{ex:cospectral} cannot be distinguished by sentences in $\ML{\cdot,\tr,{}^*,+,\times,\Apply_{\mathsf{s}}[f],f\in\Omega}$. As we will see later, including any other operation from Table~\ref{tbl:mloperations}, such as $\ones(\cdot)$, $\diag(\cdot)$ or pointwise function applications on vectors or matrices, allows us to distinguish $G_1$ and $H_1$.

\section{The impact of the \texorpdfstring{$\ones(\cdot)$}{1(.)} and ${}^*$ operations}\label{subsec:comain} 
\change{We next consider two fragments that support 
complex conjugate transposition  ${}^*$ and the operation $\ones(\cdot)$, which returns the all-ones column vector $\ones$\footnote{We use $\ones$ to denote the all-ones \emph{vector} (of appropriate dimension) and use $\ones(\cdot)$ (with brackets) for the corresponding one-vector \textit{operation}.}. More specifically, we consider the fragments $\ML{\cdot,{}^*,\ones}$ and $\ML{\cdot,\tr,{}^*,\ones}$.}

\change{The presence of $\ones(\cdot)$ allows to extract, in combination with ${}^*$, other information from graphs than just the number of closed walks. Indeed, consider the sentence
\begin{linenomath}
	\[ \#\mathsf{walk}_k(X):=(\ones(X))^*\mdot X^k \mdot\ones(X)\]
\end{linenomath}
in $\ML{\cdot,{}^*,\ones}$ and $\ML{\cdot,\tr,{}^*,\ones}$.}
 When applied on the adjacency matrix $A_G$ of a graph $G$, $\#\mathsf{walk}_k(A_G)$ 
 returns the number of (not necessarily closed) walks in $G$ of length $k$. In relation to the previous section, co-spectral graphs have the same number of closed walks of any length, yet do not necessarily have the same number of walks of any length. Similarly, graphs with the same number of walks of any length are not necessarily co-spectral. We illustrate this by the following example.
\begin{example}\label{ex:cospecvsallwalks} 
	\normalfont It can be verified that the co-spectral graphs $G_1$ (\!\raisebox{-0.4ex}{\mbox{ 
	\includegraphics[height=0.3cm]{graphG1}}}\hspace{.08em}) and $H_1$ (\!\raisebox{-0.4ex}{\mbox{ 
	\includegraphics[height=0.3cm]{graphH1}}}\hspace{.08em}) of Example~\ref{ex:cospectral} have $16$ versus $20$ walks of length $2$, respectively. As a consequence, $\ML{\cdot,{}^*,\ones}$ and $\ML{\cdot,\tr,\change{{}^*},\ones}$ can distinguish $G_1$ from $H_1$ by means of the \change{sentence $\#\mathsf{walk}_2(X)$.}
	 By contrast, the graphs $G_2$ (\!\raisebox{-1.4ex}{\mbox{ 
	\includegraphics[height=0.6cm]{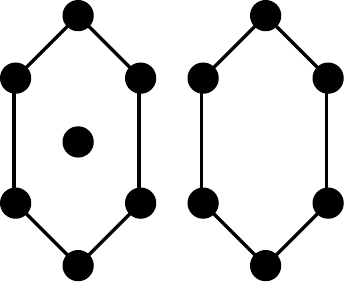}}}\hspace{.08em}) and $H_2$ (\!\raisebox{-1.4ex}{\mbox{ 
	\includegraphics[height=0.6cm]{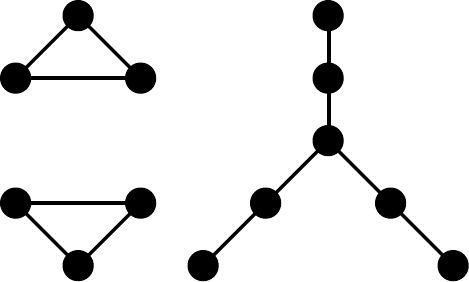}}}) are not co-spectral, yet have the same number of walks of any length. It is easy to see that $G_2$ and $H_2$ are not co-spectral (apart from verifying that their spectra are different): $H_2$ has $12$ closed walks of length $3$ (because of the triangles), whereas $G_2$ has no closed walks of length $3$. As a consequence, $\ML{\cdot,\tr}$ (and thus also $\ML{\cdot,\tr,\change{{}^*},\ones}$) can distinguish $G_2$ and $H_2$. We argue below that $G_2$ and $H_2$ have the same number of walks of any length and show that $\ML{\cdot,{}^*,\ones}$ cannot distinguish $G_2$ and $H_2$. ~\hfill$\qed$ 
\end{example}
The previous example illustrates the key difference between the fragments $\ML{\cdot,{}^*,\ones}$ and $\ML{\cdot,\tr,\change{{}^*},\ones}$. The former can only detect differences in the number of walks of certain lengths, the latter can detect differences in both the number of walks and the number of closed walks of certain lengths.

Graphs sharing the same number of walks of any length have been investigated before in spectral graph theory~\cite{Dragos1978,Dragos1997,Dragos2009,Harary1979,Rowlinson2007}. To state a spectral characterisation, the so-called \emph{main spectrum} of a graph needs to be considered. The main spectrum of a graph is the set of eigenvalues whose eigenspace is not orthogonal to the $\ones$ vector. More formally, consider an eigenvalue $\lambda$ and its corresponding eigenspace, represented by a matrix $V$ whose columns are eigenvectors of $\lambda$ that span the eigenspace of $\lambda$. Then, the \emph{main angle} $\beta_\lambda$ of $\lambda$'s eigenspace is $\frac{1}{\sqrt{n}}\| \tp{V}\mdot \ones\|_2$, where $\|\mdot \|_2$ is the Euclidean norm. The \textit{main eigenvalues} are now simply those eigenvalues with a non-zero main angle. Furthermore, two graphs are said to be \emph{co-main} if they have the same set of main eigenvalues and corresponding main angles. Intuitively, the importance of the orthogonal projection on $\ones$ stems from the observation that $\#\mathsf{walk}_k(A_G)$ can be expressed as $\sum_{i} \lambda_i^k \beta_{\lambda_i}^2$ where the $\lambda_i$'s are the distinct  eigenvalues of $A_G$.\footnote{\change{Underlying this observation is that $A_G$ is a symmetric matrix and hence diagonalisable.}} Clearly, only those eigenvalues $\lambda_i$ for which $\beta_{\lambda_i}$ is non-zero matter when computing $\#\mathsf{walk}_k(A_G)$. This results in the following characterisation. 
\begin{proposition}[Theorem 1.3.5 in Cvetkovi\'c et al.~\cite{Dragos2009}]\label{prop:comain} 
	Two graphs $G$ and $H$ of the same order are co-main if and only if they have the same number of walks of length $k$, for every $k\geq 0$. ~\hfill$\qed$ 
\end{proposition}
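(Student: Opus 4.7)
The plan is to exploit the spectral decomposition of the (symmetric) adjacency matrices in order to express $\#\mathsf{walk}_k(A_G)$ as a simple exponential sum in the main eigenvalues, and then extract the equality of main spectra from equality of these sums via a Vandermonde argument.

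First, I would recall that since $A_G$ is real symmetric it admits a spectral decomposition $A_G = \sum_{i=1}^{p} \lambda_i P_i$, where $\lambda_1 < \cdots < \lambda_p$ are the distinct eigenvalues of $A_G$ and each $P_i$ is the orthogonal projection onto the eigenspace of $\lambda_i$. Consequently $A_G^k = \sum_{i=1}^{p} \lambda_i^k P_i$ for every $k \geq 0$, and therefore
\begin{linenomath*}
\[
\#\mathsf{walk}_k(A_G) \;=\; \tp{\ones} \mdot A_G^k \mdot \ones \;=\; \sum_{i=1}^{p} \lambda_i^k \, \|P_i \ones\|_2^2 \;=\; n \sum_{i=1}^{p} \lambda_i^k\, \beta_{\lambda_i}^2 ,
\]
\end{linenomath*}
where the last equality uses that $\|P_i\ones\|_2^2 = \|\tp{V_i}\mdot\ones\|_2^2 = n\beta_{\lambda_i}^2$ for a matrix $V_i$ whose columns form an orthonormal basis of the eigenspace of $\lambda_i$. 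The analogous identity holds for $H$.

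For the ``$\Rightarrow$'' direction, assume $G$ and $H$ are co-main. The non-main eigenvalues contribute $0$ to the above sum since their main angles vanish; the main eigenvalues and their main angles agree by assumption. Hence $\#\mathsf{walk}_k(A_G) = \#\mathsf{walk}_k(A_H)$ for every $k \geq 0$ (the case $k=0$ following from $G$ and $H$ having the same order).

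For the ``$\Leftarrow$'' direction, let $\mu_1 < \cdots < \mu_q$ enumerate the union of the eigenvalues of $A_G$ and $A_H$, and denote by $a_j$ (resp.\ $b_j$) the squared main angle of $\mu_j$ in $G$ (resp.\ $H$), with the convention that $a_j = 0$ (resp.\ $b_j = 0$) if $\mu_j$ is not an eigenvalue. The assumption yields $\sum_{j=1}^{q} \mu_j^k (a_j - b_j) = 0$ for every $k \geq 0$. Restricting to $k = 0, 1, \ldots, q-1$ gives a homogeneous linear system in $(a_j - b_j)_{j=1}^{q}$ whose coefficient matrix is the Vandermonde matrix in the pairwise distinct values $\mu_1, \ldots, \mu_q$, and hence invertible. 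Therefore $a_j = b_j$ for all $j$, which means that $G$ and $H$ have exactly the same main eigenvalues (namely the $\mu_j$ for which $a_j = b_j \neq 0$) with the same main angles, i.e., they are co-main.

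The only delicate point is the Vandermonde step; it requires working over the common enumeration of eigenvalues of both graphs (including eigenvalues of one that are not eigenvalues of the other, with main angle $0$) so that the $\mu_j$'s are genuinely distinct. Beyond that, the argument is purely a computation with the spectral decomposition and no further combinatorial input about $G$ and $H$ is needed.
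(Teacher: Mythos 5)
Your proof is correct. The paper does not prove this proposition itself --- it imports it as Theorem 1.3.5 of Cvetkovi\'c et al.~\cite{Dragos2009} --- and your argument (spectral decomposition giving $\#\mathsf{walk}_k(A_G)=\sum_i\lambda_i^k\,\|P_i\ones\|_2^2$, then a Vandermonde/linear-independence step over the common enumeration of distinct eigenvalues to recover the main angles) is precisely the standard proof behind that cited result, including the care you take to pad with zero main angles so the Vandermonde nodes are distinct.
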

Furthermore, the following proposition follows implicitly from the proof of Theorem 3 in van Dam et al.~\cite{Vandam2007}. This proposition is also explicitly proved more recently in Theorem 1.2 in Dell et al.~\cite{Dell2018} in the context of distinguishing graphs by means of homomorphism vectors
$\textsf{HOM}_{\cal F}(G)$ and $\textsf{HOM}_{\cal F}(H)$ where ${\cal F}$ consists of all paths.
\begin{proposition}\label{prop:qds} 
	Two graphs $G$ and $H$ of the same order have the same number of walks of length $k$, for every $k\geq 0$, if and only if there is a doubly quasi-stochastic matrix $Q$ such that $A_G\mdot Q=Q\mdot A_H$. ~\hfill$\qed$ 
\end{proposition}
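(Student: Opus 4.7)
My plan is as follows. First observe that $\#\mathsf{walk}_k(A_G) = \ones^* \mdot A_G^k \mdot \ones$, so the walk-count hypothesis is exactly $\ones^* \mdot A_G^k \mdot \ones = \ones^* \mdot A_H^k \mdot \ones$ for every $k\geq 0$. Also recall that a real matrix $Q$ is doubly quasi-stochastic iff $Q\mdot\ones=\ones$ and $\tp{Q}\mdot \ones = \ones$.

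The easy direction ($\Leftarrow$) is an induction: $A_G\mdot Q = Q\mdot A_H$ together with $Q\mdot\ones=\ones$ and $\tp{Q}\mdot\ones=\ones$ gives $A_G^k\mdot Q = Q\mdot A_H^k$ for every $k$, and hence $\ones^*\mdot A_G^k\mdot\ones = \ones^*\mdot A_G^k\mdot Q\mdot\ones = \ones^*\mdot Q\mdot A_H^k\mdot\ones = \ones^*\mdot A_H^k\mdot\ones$.

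For the harder direction ($\Rightarrow$), I would build $Q$ from the \emph{walk modules} $V_G := \mathrm{span}\{A_G^k\mdot\ones : k\geq 0\}$ and $V_H$ defined analogously. Define $\phi:V_H\to V_G$ on the spanning set by $\phi(A_H^k\mdot\ones):=A_G^k\mdot\ones$ and extend linearly. The hypothesis guarantees that the two Gram-type matrices $(\ones^*\mdot A_G^{i+j}\mdot\ones)_{i,j}$ and $(\ones^*\mdot A_H^{i+j}\mdot\ones)_{i,j}$ are equal; expanding $\|\sum_k c_k A_H^k\mdot\ones\|^2$ and the corresponding expression for $G$ shows that the kernel relations coincide. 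Hence $\phi$ is well defined and is in fact a linear isometry from $V_H$ onto $V_G$; moreover, by construction it intertwines $A_H$ and $A_G$ on these subspaces, i.e., $\phi\mdot A_H|_{V_H} = A_G|_{V_G}\mdot\phi$.

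Now set $Q := \phi \mdot P_H$, where $P_H$ is the orthogonal projection onto $V_H$. Since $A_H$ is symmetric and $V_H$ is $A_H$-invariant, the complement $V_H^\perp$ is also $A_H$-invariant, and similarly for $V_G^\perp$. A direct split into $V_H$ and $V_H^\perp$ then shows $A_G\mdot Q = Q\mdot A_H$: on $V_H$ this is the intertwining of $\phi$, on $V_H^\perp$ both sides vanish. For the boundary conditions, $\ones\in V_H$ gives $Q\mdot\ones = \phi(\ones)=A_G^0\mdot\ones = \ones$; and for $\tp{Q}\mdot\ones = \ones$ it suffices to show $\langle Q v,\ones\rangle = \langle v,\ones\rangle$ for all $v$, which holds on $V_H^\perp$ because $\ones\in V_H$ forces both inner products to be $0$, and on $V_H$ because $\langle \phi(v),\ones\rangle = \langle \phi(v),\phi(\ones)\rangle = \langle v,\ones\rangle$ by isometry of $\phi$. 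The main subtlety is the Gram-matrix argument that makes $\phi$ simultaneously well defined and an isometry; everything else is bookkeeping using the invariance of $V_H^\perp$ under $A_H$. Note that $Q$ is in general not invertible (indeed graphs with equal walk counts need not be cospectral, as Example~\ref{ex:cospecvsallwalks} shows), which is why the projection-based construction is essential.
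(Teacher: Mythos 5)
Your argument is correct. Both directions check out: the reverse implication is the routine induction you describe, and for the forward implication the Gram-matrix identity $\bigl\|\sum_k c_k A_H^k\mdot\ones\bigr\|^2=\sum_{i,j}c_ic_j\,\ones^*\mdot A_H^{i+j}\mdot\ones=\sum_{i,j}c_ic_j\,\ones^*\mdot A_G^{i+j}\mdot\ones=\bigl\|\sum_k c_k A_G^k\mdot\ones\bigr\|^2$ (using symmetry of the adjacency matrices) does exactly what you claim: the kernels of the two spanning families coincide, so $\phi$ is a well-defined bijective isometry of the Krylov subspaces intertwining $A_H$ and $A_G$. The verification that $Q=\phi\mdot P_H$ satisfies $A_G\mdot Q=Q\mdot A_H$ uses the $A_H$-invariance of both $V_H$ and $V_H^\perp$ (the latter from symmetry of $A_H$), and the two boundary conditions $Q\mdot\ones=\ones$ and $\tp{Q}\mdot\ones=\ones$ follow from $\ones\in V_H$, $\phi(\ones)=\ones$, and the inner-product preservation of $\phi$; since doubly quasi-stochastic matrices are not required to be invertible, the non-invertibility of $Q$ is harmless, as you note.

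The comparison with the paper is slightly unusual here: the paper does not prove this proposition at all, but attributes it to the proof of Theorem 3 in van Dam et al.\ and to Theorem 1.2 in Dell et al. Your walk-module construction is essentially the standard argument underlying those references, so you have supplied a self-contained proof of a statement the paper only cites. It is also worth noting that your construction is structurally parallel to the paper's later explicit construction of the orthogonal matrix in the proof of Proposition~\ref{prop:trstoch}: there too the conjugating matrix is built by pairing distinguished invariant subspaces (spanned by indicator vectors and eigenvectors) and acting compatibly on each piece; the difference is that co-spectrality lets that proof extend the map to an orthogonal (hence invertible) matrix on all of $\C^n$, whereas here, without co-spectrality, one must collapse $V_H^\perp$ to zero, which is precisely why your $Q$ is only quasi-stochastic and not orthogonal.
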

\begin{example}[Continuation of Example~\ref{ex:cospecvsallwalks}]\label{ex:stoch} 
	\normalfont Consider the subgraph $G_3$ (\!\raisebox{-1.3ex}{\mbox{ 
	\includegraphics[height=0.6cm]{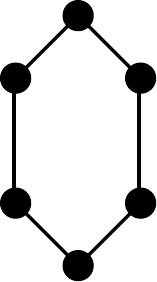}}}\hspace{.08em}) of $G_2$ and the subgraph $H_3$ (\!\raisebox{-1.3ex}{\mbox{ 
	\includegraphics[height=0.6cm]{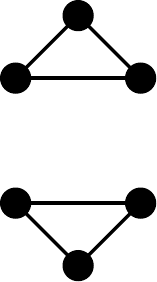}}}\hspace{.08em}) of $H_2$. It is readily verified that there exists a doubly quasi-stochastic matrix $Q$ such that $A_{G_3}\mdot Q=Q\mdot A_{H_3}$. Indeed, $A_{G_3}\mdot Q$ is equal to 
	\begin{linenomath}
		\[ 
		\begin{bmatrix}
			0 & 0 & 1 & 0 & 0& 1\\
			0 & 0 & 0 & 1 & 1 & 0\\
			1 & 0 & 0 & 0 & 1 & 0\\
			0 & 1 & 0 & 0 & 0& 1\\
			0 &1 & 1 & 0 & 0 & 0\\
			1 & 0 & 0 & 1 & 0 & 0\\
		\end{bmatrix}
		\mdot 
		\begin{bmatrix}
			0 & 0 & \frac{1}{2} & 0 & 0& \frac{1}{2}\\
			0 & 0 & 0 & \frac{1}{2} & \frac{1}{2} & 0\\
			\frac{1}{2} & 0 & 0 & 0 & \frac{1}{2} & 0\\
			0 & \frac{1}{2} & 0 & 0 & 0& \frac{1}{2}\\
			0 &\frac{1}{2} & \frac{1}{2} & 0 & 0 & 0\\
			\frac{1}{2} & 0 & 0 & \frac{1}{2} & 0 & 0\\
		\end{bmatrix}
		= 
		\begin{bmatrix}
			0 & 0 & \frac{1}{2} & 0 & 0& \frac{1}{2}\\
			0 & 0 & 0 & \frac{1}{2} & \frac{1}{2} & 0\\
			\frac{1}{2} & 0 & 0 & 0 & \frac{1}{2} & 0\\
			0 & \frac{1}{2} & 0 & 0 & 0& \frac{1}{2}\\
			0 &\frac{1}{2} & \frac{1}{2} & 0 & 0 & 0\\
			\frac{1}{2} & 0 & 0 & \frac{1}{2} & 0 & 0\\
		\end{bmatrix}
		\mdot 
		\begin{bmatrix}
			0 & 0 & 1 & 0 & 0& 1\\
			0 & 0 & 0 & 1 & 1 & 0\\
			1 & 0 & 0 & 0 & 1 & 0\\
			0 & 1 & 0 & 0 & 0& 1\\
			0 &1 & 1 & 0 & 0 & 0\\
			1 & 0 & 0 & 1 & 0 & 0\\
		\end{bmatrix}
		,\]
	\end{linenomath}
	which is equal to $Q\mdot A_{H_3}$. Hence by Proposition~\ref{prop:qds}, $G_3$ and $H_3$ have the same number of walks on any length. ~\hfill$\qed$ 
\end{example}

Just as for the fragment $\ML{\cdot,\tr}$ (Proposition~\ref{theorem:trace}), it turns out that sentences in $\ML{\cdot,{}^*,\ones}$ can only extract information from adjacency matrices related to the number of walks in graphs.
\change{More precisely, we have the following proposition.}
\begin{proposition}\label{prop:quasistoch} 
	Let $G$ and $H$ be two graphs of the same order. Then, $G\equiv_{\ML{\cdot,{}^*,\ones}} H$ if and only if $G$ and $H$ have the same number of walks of any length. 
\end{proposition}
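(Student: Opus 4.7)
My plan is to mirror the structure of the proof of Proposition~\ref{theorem:trace}, replacing the orthogonal conjugation matrix used there by the doubly quasi-stochastic one supplied by Proposition~\ref{prop:qds}. For the forward direction, I would exhibit a single sentence in $\ML{\cdot,{}^*,\ones}$ that counts walks, namely $\#\mathsf{walk}_k(X) := (\ones(X))^*\mdot X^k\mdot \ones(X)$, which evaluates on $A_G$ to the number of walks of length $k$ in $G$. Equality of walk counts for $k\geq 1$ then follows directly from $G\equiv_{\ML{\cdot,{}^*,\ones}} H$; for $k=0$ the number of walks coincides with the order of the graph, which is equal by assumption.

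For the converse direction, I would first invoke Proposition~\ref{prop:qds} to obtain a doubly quasi-stochastic matrix $Q$ witnessing $A_G\mdot Q = Q\mdot A_H$. The inductive invariant I plan to carry along, mimicking the proof of Corollary~\ref{corr:tracelincomb}, is: for every expression $e(X)\in\ML{\cdot,{}^*,\ones}$, the results $e(A_G)$ and $e(A_H)$ are $Q$-conjugate, \emph{and} $e(A_H)$ and $e(A_G)$ are $Q^*$-conjugate. Since $Q$-conjugation of scalars is just equality, instantiating this claim at any sentence $e(X)$ yields $e(A_G)=e(A_H)$, as desired.

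The induction has two base cases and two inductive cases. For $e(X):=X$, the $Q$-conjugation is given; the $Q^*$-conjugation comes from taking conjugate transposes of $A_G\mdot Q=Q\mdot A_H$ and exploiting that $A_G$ and $A_H$ are real symmetric, so $A_G^* = A_G$ and $A_H^* = A_H$. For $e(X):=\ones(e'(X))$, the essential observation is that $Q\mdot\ones = \ones$ (rows of $Q$ sum to one) and $Q^*\mdot\ones = \ones$ (columns of $Q$ sum to one); this is precisely where the \emph{doubly} quasi-stochastic property is invoked. The inductive steps for matrix multiplication and for ${}^*$ are then handed off to Lemmas~\ref{lem:multp-sim} and~\ref{lem:complextranspose-elim1}, both of which apply to an arbitrary conjugation matrix and do not require invertibility.

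The main subtlety to keep in mind is that $Q$ need not be invertible, so Lemma~\ref{lem:trace-sim} is unavailable; this is consistent with $\tr$ being absent from the fragment and already hints at why adding $\tr$ on top of ${}^*$ and $\ones$ should yield a strictly stronger notion of equivalence. The two-sided invariant (both $Q$- and $Q^*$-conjugation) is essential, because Lemma~\ref{lem:complextranspose-elim1} for the ${}^*$ operation requires both conjugations simultaneously; this in turn hinges on the base-case symmetry argument and hence on adjacency matrices of undirected graphs being real and symmetric.
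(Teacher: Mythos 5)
Your proposal is correct and follows essentially the same route as the paper: the forward direction via the sentence $\#\mathsf{walk}_k(X)$, and the converse via Proposition~\ref{prop:qds} together with an induction showing that the two-sided $Q$/$Q^*$-conjugation invariant is preserved by $\cdot$, ${}^*$ and $\ones(\cdot)$ (the paper packages the one-vector step as Lemma~\ref{lem:ones-sim}, which you re-derive inline). Your observations that the doubly quasi-stochastic property is exactly what the $\ones(\cdot)$ step needs, and that the absence of $\tr$ is what lets you avoid any invertibility assumption on $Q$, match the paper's reasoning precisely.
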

\begin{proof}
	It is straightforward to show that $G\equiv_{\ML{\cdot,{}^*,\ones}} H$ implies that $G$ and $H$ must have the same number of walks of any length. This follows from the same argument as given in the proof of Proposition~\ref{theorem:trace}. For the converse, we use the characterisation given in Proposition~\ref{prop:qds}. That is, if $G$ and $H$ have the same number of walks of any length, then there exists a doubly quasi-stochastic matrix $Q$ such that $A_G\mdot Q=Q\mdot A_H$. In other words, $A_G$ and $A_H$ are $Q$-conjugate. We now show that when $A_G$ and $A_H$ are $Q$-conjugate, for a doubly quasi-stochastic matrix $Q$, then $e(A_G)=e(A_H)$ for all sentences $e(X)$ in $\ML{\cdot,{}^*,\ones}$. We here rely on a more general result (Lemma~\ref{lem:ones-sim} below), which states that $T$-conjugation is preserved by the operation $\ones(\cdot)$ provided that $T$ is  quasi-stochastic.
We again separate this lemma from the current proof because we need it also later in the paper. This suffices to conclude that expressions in $\ML{\cdot,{}^*,\ones}$ preserve $Q$-conjugation for a doubly quasi-stochastic matrix $Q$. Indeed, to deal with complex conjugate transposition, we note that $A_G\mdot Q=Q\mdot A_H$ implies that $A_H\mdot Q^*=(Q\mdot A_H)^*=(A_G\mdot Q)^*=Q^*\mdot A_G$ since $A_G$ and $A_H$ are symmetric real matrices. Hence, $A_H$ and $A_G$ are $Q^*$-conjugate. Furthermore, since $Q$ is a real matrix and quasi doubly-stochastic, also $Q^*\cdot\ones=\one$ holds. That is, $Q^*$ is a (doubly) quasi-stochastic matrix as well. Hence, Lemmas~\ref{lem:multp-sim} and~\ref{lem:ones-sim} imply that $Q$-conjugation and $Q^*$-conjugation are preserved by matrix multiplication and the one-vector operation. Combined with Lemma~\ref{lem:complextranspose-elim1}, we may  conclude that $Q$-conjugation and $Q^*$-conjugation is also preserved by complex conjugate transposition. Hence, by induction on the structure of expressions, $e(A_G)=e(A_H)$ for any sentence $e(X)\in \ML{\cdot,{}^*,\ones}$. 
\end{proof}

We now show that $T$-conjugation is preserved under the one-vector operation for any quasi-stochastic matrix $T$. In fact, since the result of $\ones(\cdot)$ is only dependent on the dimensions of the input, we have do not even need the $T$-conjugation assumption on the inputs.
\begin{lemma}\label{lem:ones-sim} Let $\ML{\cal L}$ be any matrix query language fragment and consider two  graphs $G$ and $H$ of the same order. Let $e_1(X)$ be an expression in $\ML{\cal L}$. Then, $\ones(e_1(A_G))$ and $\ones(e_1(A_H))$ are $T$-conjugate for any quasi-stochastic matrix $T$. 
\end{lemma}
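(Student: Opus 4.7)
The plan is to observe that this lemma is almost immediate from the definition of the $\ones(\cdot)$ operation, together with the defining property of quasi-stochastic matrices. In particular, no hypothesis on $e_1$ is needed beyond well-formedness, because $\ones(\cdot)$ only depends on the number of rows of its argument.

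First I would note that since $G$ and $H$ have the same order, $A_G$ and $A_H$ are square matrices of the same dimension; by induction on the structure of $e_1(X)$ one sees that $e_1(A_G)$ and $e_1(A_H)$ have identical dimensions, say $m\times n$. By the semantics in Table~\ref{tbl:mloperations}, we then have
\begin{linenomath*}
\[ \ones(e_1(A_G)) = \ones(e_1(A_H)) = \ones \in \C^{m\times 1}, \]
\end{linenomath*}
where $\ones$ denotes the all-ones column vector.

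Next I would invoke Definition~\ref{def:sim}: two column vectors $A,B\in\C^{m\times 1}$ are $T$-conjugate (for a square matrix $T$ of matching dimension) iff $A=T\mdot B$. With $A=B=\ones$, what we must check is $\ones=T\mdot\ones$. But this is exactly the defining condition of a quasi-stochastic matrix: for each row $i$,
\begin{linenomath*}
\[ (T\mdot\ones)_i = \sum_{j=1}^{m} T_{ij}\cdot 1 = \sum_{j=1}^{m} T_{ij} = 1. \]
\end{linenomath*}
Hence $\ones(e_1(A_G))$ and $\ones(e_1(A_H))$ are $T$-conjugate for every quasi-stochastic $T\in\R^{m\times m}$.

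The only subtlety, and I do not expect it to be a real obstacle, is the degenerate case $m=1$, where $e_1(A_G)$ is a scalar. In that case $\ones(e_1(A_G)) = \ones(e_1(A_H)) = 1 \in \C^{1\times 1}$, and by Definition~\ref{def:sim} $T$-conjugation of scalars is plain equality, which trivially holds. So the result is uniform over all output shapes of $e_1$.
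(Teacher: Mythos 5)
Your proof is correct and follows essentially the same route as the paper: observe that $\ones(e_1(A_G))=\ones(e_1(A_H))$ because the one-vector operation depends only on the (shared) dimensions, then note that $\ones=T\mdot\ones$ is exactly the quasi-stochastic condition for the column-vector case, and that the scalar case (output in $\C^{1\times 1}$) reduces to trivial equality. The paper's only cosmetic difference is that it splits the case analysis by the shape of $e_1(A_G)$ ($n\times n$ or $n\times 1$ versus $1\times n$ or scalar) rather than by the shape of the output, but the content is identical.
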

\begin{proof}
	The proof is straightforward. Let $e(X):=\ones(e_1(X))$. We distinguish between the following cases, depending on the dimensions of $e_1(A_G)$. If $e_1(A_G)$ is an $n\times n$-matrix or $n\times 1$-vector, \change{for $n>1$,} then $e(A_G)=e(A_H)=\ones$ and $e(A_G)=\ones=T\cdot\ones=T\mdot e(A_H)$. Furthermore, if $e_1(A_G)$ is a $1\times n$-vector or sentence, then $e(A_G)=e(A_H)=1$ and thus these agree and are $T$-conjugate. 
\end{proof}

We next turn our attention to $\ML{\cdot,\tr,\change{{}^*},\ones}$. We know from Propositions~\ref{prop:tracesim} and~\ref{theorem:trace} that $G\equiv_{\ML{\cdot,\,\tr,\change{{}^*},\ones}} H$ implies that $G$ and $H$ are co-spectral. Combined with Proposition~\ref{prop:comain} and the fact that the sentence $\#\mathsf{walk}_k(X)$ counts the number of walks of length $k$, we have that $G\equiv_{\ML{\cdot,\,\tr,\change{{}^*},\ones}} H$ implies that $G$ and $H$ are co-spectral and co-main. The following is known about such graphs. 
\begin{proposition}[Corollary to Theorem 2 in Johnson and Newman~\cite{Johnson1980}]\label{prop:cospeccomain} 
	Two co-spectral graphs $G$ and $H$ of the same order are co-main if and only if there exists a doubly quasi-stochastic orthogonal matrix $O$ such that $A_G\mdot O=O\mdot A_H$. ~\hfill$\qed$ 
\end{proposition}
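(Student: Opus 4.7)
The plan is to prove the two directions separately. The backward implication is immediate: assume a doubly quasi-stochastic orthogonal matrix $O$ satisfies $A_G\mdot O=O\mdot A_H$. Orthogonality gives $A_G=O\mdot A_H\mdot \tp{O}$, so $A_G$ and $A_H$ are orthogonally similar, confirming co-spectrality. The doubly quasi-stochastic condition says $O\mdot\ones=\ones$ and $\tp{O}\mdot\ones=\ones$, hence
\[
\tp{\ones}\mdot A_G^k\mdot\ones \;=\; \tp{\ones}\mdot O\mdot A_H^k\mdot \tp{O}\mdot\ones \;=\; \tp{\ones}\mdot A_H^k\mdot\ones
\]
for every $k\geq 0$, so $G$ and $H$ have the same number of walks of each length and are co-main by Proposition~\ref{prop:comain}.

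For the forward direction, assume $G$ and $H$ are co-spectral and co-main. Let $\lambda_1<\lambda_2<\cdots<\lambda_p$ be their common list of distinct eigenvalues with multiplicities $m_1,\ldots,m_p$, and let $V_i^G$, $V_i^H$ denote the corresponding eigenspaces of $A_G$ and $A_H$. Since $A_G$ and $A_H$ are real and symmetric, we get orthogonal direct-sum decompositions $\R^n=\bigoplus_i V_i^G=\bigoplus_i V_i^H$ with orthogonal projections $P_i^G$ and $P_i^H$. Taking an orthonormal eigenbasis in the definition of the main angle yields $\beta_{\lambda_i}=\frac{1}{\sqrt{n}}\|P_i^G\mdot\ones\|_2$ on the $G$-side and analogously for $H$, so the co-main hypothesis is exactly $\|P_i^G\mdot\ones\|_2=\|P_i^H\mdot\ones\|_2$ for every $i$.

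I then build $O$ blockwise: for each $i$, pick an orthogonal isomorphism $O_i\colon V_i^H\to V_i^G$ with $O_i(P_i^H\mdot\ones)=P_i^G\mdot\ones$, and assemble $O=\bigoplus_i O_i$ with respect to the two orthogonal decompositions. Such an $O_i$ exists because, for main $\lambda_i$, one normalises $P_i^H\mdot\ones$ and $P_i^G\mdot\ones$, which have equal length, and extends to orthonormal bases of the equidimensional eigenspaces; for non-main $\lambda_i$ both projected vectors vanish and any orthogonal isomorphism does. By construction, on each $V_i^H$ both $A_G\mdot O$ and $O\mdot A_H$ act as $\lambda_i\mdot O_i$, giving $A_G\mdot O=O\mdot A_H$; using $\ones=\sum_i P_i^H\mdot\ones$ and the defining property of the $O_i$, I get $O\mdot\ones=\sum_i P_i^G\mdot\ones=\ones$, and an analogous calculation with $\tp{O}=O^{-1}$ (which restricts to $O_i^{-1}\colon V_i^G\to V_i^H$) gives $\tp{O}\mdot\ones=\ones$, so $O$ is doubly quasi-stochastic.

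The main obstacle is the blockwise matching inside each eigenspace pair: we need a single orthogonal transformation that simultaneously conjugates $A_H$ to $A_G$ and sends $\ones$ to $\ones$, which forces the eigenspace-by-eigenspace length condition $\|P_i^H\mdot\ones\|_2=\|P_i^G\mdot\ones\|_2$. This is exactly what the co-main hypothesis supplies on top of co-spectrality, which explains why both conditions are needed.
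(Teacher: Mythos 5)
Your proof is correct. Note that the paper does not actually prove this proposition; it is stated as a corollary to Theorem~2 of Johnson and Newman and imported as a black box, so there is no in-paper argument to compare against. What you have written is essentially the standard proof of that cited result: the backward direction via $\ones^{\mathsf{t}}\cdot A_G^k\cdot\ones=\ones^{\mathsf{t}}\cdot A_H^k\cdot\ones$ together with Proposition~\ref{prop:comain}, and the forward direction by decomposing $\R^n$ into the common eigenspaces, observing that co-spectrality matches the dimensions while the co-main hypothesis matches the lengths $\|P_i^G\cdot\ones\|_2=\|P_i^H\cdot\ones\|_2$ (zero for non-main eigenvalues, equal main angles otherwise), and assembling $O$ block by block so that each block is an isometry carrying $P_i^H\cdot\ones$ to $P_i^G\cdot\ones$. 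All the steps check out: the blockwise map is orthogonal because the eigenspace decompositions are orthogonal, the intertwining $A_G\cdot O=O\cdot A_H$ follows since each block carries the $\lambda_i$-eigenspace of $A_H$ into that of $A_G$, and $O\cdot\ones=\ones=\tp{O}\cdot\ones$ follows from $\ones=\sum_i P_i^H\cdot\ones=\sum_i P_i^G\cdot\ones$. Your closing remark correctly identifies why co-spectrality alone is not enough: without the eigenspace-by-eigenspace length condition one cannot make a single orthogonal conjugation also fix $\ones$. This is a useful self-contained replacement for the citation.
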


In other words, $G\equiv_{\ML{\cdot,\,\tr,\change{{}^*},\ones}} H$ implies the existence of a doubly quasi-stochastic orthogonal matrix $O$ such that  $A_G\mdot O=O\mdot A_H$. We further observe that $A_H\mdot O^*=O^*\mdot A_G$ and that $O^*$ is a doubly quasi-stochastic orthogonal matrix as well.
 We can now use Lemmas~\ref{lem:multp-sim},~\ref{lem:trace-sim} and~\ref{lem:ones-sim} to show the converse. Indeed, these lemmas combined tell us that $A_G\mdot O=O\mdot A_H$ implies that $e(A_G)=e(A_H)$ for any sentence $e(X)$ in $\ML{\cdot,\,\tr,\change{{}^*,}\ones}$. As a consequence, we have the following proposition.
\begin{proposition}\label{prop:traceones} 
	For two graphs $G$ and $H$ of the same order, $G\equiv_{\ML{\cdot,\,\tr,\change{{}^*},\ones}} H$ if and only if $G$ and $H$ have the same number of closed walks of any length, and the same number of walks of any length, if and only if $A_G\mdot O=O\mdot A_H$ for a doubly quasi-stochastic orthogonal matrix $O$. ~\hfill$\qed$ 
\end{proposition}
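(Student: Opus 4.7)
The plan is to prove this as a chain of implications between the three equivalent conditions, leveraging the machinery already developed in the paper. Call the three conditions (a) $G\equiv_{\ML{\cdot,\tr,{}^*,\ones}} H$, (b) $G$ and $H$ have the same numbers of closed walks and of walks of every length, and (c) $A_G\mdot O=O\mdot A_H$ for some doubly quasi-stochastic orthogonal matrix $O$. The middle condition (b) serves as the bridge, and the propositions cited earlier do essentially all the heavy lifting; the only new work is a clean induction for (c)$\Rightarrow$(a).

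First I would show (a)$\Rightarrow$(b) by exhibiting two families of sentences. The sentence $\#\mathsf{cwalk}_k(X):=\tr(X^k)$ lies in $\ML{\cdot,\tr}\subseteq \ML{\cdot,\tr,{}^*,\ones}$ and counts closed walks of length $k$; the sentence $\#\mathsf{walk}_k(X):=(\ones(X))^*\mdot X^k\mdot\ones(X)$ lies in $\ML{\cdot,{}^*,\ones}\subseteq \ML{\cdot,\tr,{}^*,\ones}$ and counts walks of length $k$. If (a) holds, then both families agree on $A_G$ and $A_H$ for every $k\geq 0$, which is exactly (b).

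Next, (b)$\Rightarrow$(c) is a direct chaining of existing results. Having the same number of closed walks of every length makes $G$ and $H$ co-spectral by Proposition~\ref{prop:tracesim}, and having the same number of walks of every length makes them co-main by Proposition~\ref{prop:comain}. Co-spectrality together with co-mainness then yields a doubly quasi-stochastic orthogonal $O$ with $A_G\mdot O=O\mdot A_H$ by Proposition~\ref{prop:cospeccomain} (the Johnson--Newman corollary).

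The main piece of work is (c)$\Rightarrow$(a). I would first observe that if $O$ is doubly quasi-stochastic and orthogonal, then so is $O^*$: it is real since $O$ is real, orthogonal since the orthogonal group is closed under transposition, and doubly quasi-stochastic because the row/column-sum condition transfers under transposition. Moreover, $A_G\mdot O=O\mdot A_H$ together with symmetry of $A_G$ and $A_H$ gives $A_H\mdot O^*=O^*\mdot A_G$, so $A_G,A_H$ are $O$-conjugate and $A_H,A_G$ are $O^*$-conjugate. I would then prove by induction on the structure of expressions $e(X)\in\ML{\cdot,\tr,{}^*,\ones}$ that $e(A_G)$ and $e(A_H)$ are $O$-conjugate and $e(A_H)$ and $e(A_G)$ are $O^*$-conjugate. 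The base case $e(X)=X$ is the observation above. For the inductive step, matrix multiplication preserves conjugation by Lemma~\ref{lem:multp-sim}; the trace preserves it by Lemma~\ref{lem:trace-sim}, using that $O$ is invertible (orthogonal); conjugate transposition preserves both $O$- and $O^*$-conjugation simultaneously by Lemma~\ref{lem:complextranspose-elim1}; and the one-vector operation preserves $T$-conjugation for any quasi-stochastic $T$ by Lemma~\ref{lem:ones-sim}, which applies to both $O$ and $O^*$. Since sentences return scalars, $T$-conjugation of sentences is equality, so $e(A_G)=e(A_H)$ for every sentence, establishing (a). The main obstacle is not any single step but making sure the induction invariant carries both $O$- and $O^*$-conjugacy in parallel, which is exactly what is needed to handle ${}^*$ cleanly.
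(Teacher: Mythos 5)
Your proposal is correct and follows essentially the same route as the paper: conditions (a)$\Rightarrow$(b) via the sentences $\#\mathsf{cwalk}_k$ and $\#\mathsf{walk}_k$, (b)$\Leftrightarrow$(c) via Propositions~\ref{prop:tracesim}, \ref{prop:comain} and~\ref{prop:cospeccomain}, and (c)$\Rightarrow$(a) by the structural induction using the conjugation-preservation lemmas with the parallel $O$/$O^*$ invariant. Your explicit inclusion of Lemma~\ref{lem:complextranspose-elim1} to handle ${}^*$ is a small point of added care over the paper's statement, which lists only Lemmas~\ref{lem:multp-sim}, \ref{lem:trace-sim} and~\ref{lem:ones-sim} while implicitly relying on the same $O^*$-conjugacy observation.
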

We can also phrase $\ML{\cdot,\,\tr,\change{{}^*},\ones}$-equivalence in terms of homomorphism vectors. That is, 
$G\equiv_{\ML{\cdot,\,\tr,\change{{}^*},\ones}} H$ if and only if $\textsf{HOM}_{\cal F}(G)=\textsf{HOM}_{\cal F}(H)$, where ${\cal F}$ now consists of all cycles and paths. This complements the results in Dell et al.~\cite{Dell2018}.

As a note aside, an alternative characterisation to Proposition~\ref{prop:cospeccomain} (Theorem 3 in van Dam et al.~\cite{Vandam2007}) is that $G$ and $H$ are co-spectral and co-main if and only if both $G$ and $H$ \emph{and} their complement graphs $\bar G$ and $\bar H$ are co-spectral. Here, the complement graph $\bar G$ of $G$ is the graph with adjacency matrix given by $J-A_G-I$, and similarly for $\bar H$. 
\begin{example}[Continuation of Example~\ref{ex:cospecvsallwalks}] \label{ex:graphs4}
	\normalfont Consider the subgraph $G_4$ (\!\raisebox{-1.3ex}{\mbox{ 
	\includegraphics[height=0.6cm]{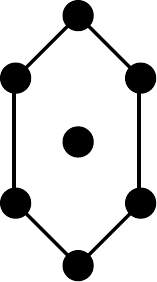}}}\hspace{.08em}) of $G_2$ and the subgraph $H_4$ (\!\raisebox{-1.1ex}{\mbox{ 
	\includegraphics[height=0.6cm]{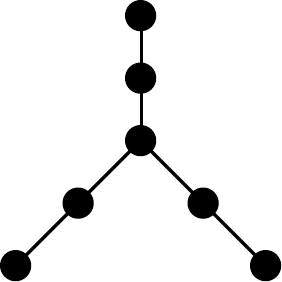}}}\hspace{.08em}) of $H_2$. These are known to be the smallest non-isomorphic co-spectral graphs with co-spectral complements (see e.g., Figure 4 in~\cite{Haemers2004}). From the previous remark it  follows that $G_4$ and $H_4$ have the same number of  walks of any length and the same number of closed walks of any length. These graphs are thus indistinguishable by sentences in $\ML{\cdot,{}^*,\ones}$ and $\ML{\cdot,\tr,\change{{}^*},\ones}$. Combined with our observation in Example~\ref{ex:stoch} that also $G_3$ and $H_3$ have the same number of walks, we conclude that the disjoint unions $G_2=G_3\cup G_4$ (\!\raisebox{-1.3ex}{\mbox{ 
	\includegraphics[height=0.6cm]{graphG2}}}\hspace{.08em}) and $H_2=H_3\cup H_4$ (\!\raisebox{-1.3ex}{\mbox{ 
	\includegraphics[height=0.6cm]{graphH2}}}) have the same number of walks of any length, as anticipated in Example~\ref{ex:cospecvsallwalks}. ~\hfill$\qed$ 
\end{example}

Clearly,
$G\equiv_{\ML{\cdot,\,\tr,\change{{}^*},\ones}} H$ implies $G\equiv_{\ML{\cdot,^*,\ones}} H$. We already mentioned in Example~\ref{ex:cospecvsallwalks} that the graphs $G_2$ (\!\raisebox{-1.3ex}{\mbox{ 
\includegraphics[height=0.6cm]{graphG2}}}) and $H_2$ (\!\raisebox{-1.3ex}{\mbox{ 
\includegraphics[height=0.6cm]{graphH2}}}) show that the converse does not hold.

We conclude again by observing that addition, scalar multiplication and pointwise function application on scalars can be added to $\ML{\cdot,^*,\ones}$ and $\ML{\cdot,\tr,\change{{}^*},\ones}$ at no increase in expressiveness. 
\begin{corollary}\label{cor:oneplusall} 
	Let $G$ and $H$ be two graphs of the same order. Then, 
	\begin{enumerate}
		\item[(1)] $G\equiv_{\ML{\cdot,{}^*,\ones,+,\times,\Apply_{\mathsf{s}}[f], f\in\Omega}} H$ if and only if $G\equiv_{\ML{\cdot,^*,\ones}} H$; and 
		\item[(2)] $G\equiv_{\ML{\cdot,\,\tr,{}^*,\ones,+,\times,\Apply_{\mathsf{s}}[f],f\in\Omega}} H$ if and only if $G\equiv_{\ML{\cdot,\,\tr,\change{{}^*,}\ones}} H$. 
	\end{enumerate}
\end{corollary}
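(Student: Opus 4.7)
The forward implications are immediate since $\ML{\cdot,{}^*,\ones}$ is a syntactic subfragment of $\ML{\cdot,{}^*,\ones,+,\times,\Apply_{\mathsf{s}}[f], f\in\Omega}$, and similarly for part (2). The plan is therefore to prove the converses by essentially the same strategy used for Corollary~\ref{corr:tracelincomb}: extract a suitable conjugating matrix from the smaller fragment's characterisation, then verify inductively that every operation of the extended fragment preserves conjugation by this matrix (and its conjugate transpose).

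For part (1), I would start by invoking Proposition~\ref{prop:quasistoch} together with Proposition~\ref{prop:qds} to obtain a doubly quasi-stochastic matrix $Q$ with $A_G\mdot Q = Q\mdot A_H$. Since $A_G$ and $A_H$ are symmetric real, taking $*$ on both sides yields $A_H\mdot Q^* = Q^*\mdot A_G$, and because $Q$ is a real doubly quasi-stochastic matrix, so is $Q^*$. Thus $A_G$ and $A_H$ are $Q$-conjugate and $A_H$ and $A_G$ are $Q^*$-conjugate, with both $Q$ and $Q^*$ quasi-stochastic. I would then argue by induction on the structure of an expression $e(X)\in\ML{\cdot,{}^*,\ones,+,\times,\Apply_{\mathsf{s}}[f], f\in\Omega}$ that $e(A_G)$ and $e(A_H)$ are $Q$-conjugate and $e(A_H)$ and $e(A_G)$ are $Q^*$-conjugate. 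The atomic case $e(X)=X$ is handled by the observations above. Multiplication is covered by Lemma~\ref{lem:multp-sim}; the operation $\ones(\cdot)$ by Lemma~\ref{lem:ones-sim} (which needs precisely the quasi-stochasticity of the conjugating matrix); complex conjugate transposition by Lemma~\ref{lem:complextranspose-elim1} (which needs exactly the two-sided hypothesis being maintained); addition and scalar multiplication by Lemma~\ref{cor:ridoflinear}; and pointwise applications on scalars by Lemma~\ref{lem:ridofscalarpointw}. Applied to a sentence $e(X)$, $Q$-conjugation reduces to equality, so $e(A_G)=e(A_H)$.

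For part (2), the argument is identical in structure, but I would instead invoke Proposition~\ref{prop:traceones} to obtain a \emph{doubly quasi-stochastic orthogonal} matrix $O$ with $A_G\mdot O=O\mdot A_H$. Symmetry of $A_G$ and $A_H$ again gives $A_H\mdot O^*=O^*\mdot A_G$, and $O^*$ inherits both properties from $O$. Hence both $O$ and $O^*$ are quasi-stochastic (needed for Lemma~\ref{lem:ones-sim}) \emph{and} invertible (needed for Lemma~\ref{lem:trace-sim}, which handles the additional $\tr$ operation available in this larger fragment). The inductive closure argument then proceeds exactly as before, using the same lemmas together with Lemma~\ref{lem:trace-sim}.

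The one step that is not entirely mechanical is ensuring the two-sided conjugation hypothesis required by Lemma~\ref{lem:complextranspose-elim1} is maintained throughout the induction. This is why one must carry both ``$e(A_G),e(A_H)$ are $T$-conjugate'' and ``$e(A_H),e(A_G)$ are $T^*$-conjugate'' as a joint invariant, rather than just the former. Everything else amounts to checking the preservation lemmas already established. The result then follows.
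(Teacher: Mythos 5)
Your proposal is correct and follows essentially the same route as the paper: both directions extract the doubly quasi-stochastic matrix $Q$ (resp.\ the doubly quasi-stochastic orthogonal matrix $O$) from Proposition~\ref{prop:quasistoch} (resp.\ Proposition~\ref{prop:traceones}), derive the $Q^*$- (resp.\ $O^*$-) conjugation from symmetry of the adjacency matrices, and close the argument by structural induction using Lemmas~\ref{lem:multp-sim}, \ref{cor:ridoflinear}, \ref{lem:complextranspose-elim1}, \ref{lem:ridofscalarpointw}, \ref{lem:ones-sim} (and \ref{lem:trace-sim} for part (2)). Your explicit remark about carrying the two-sided conjugation invariant needed for Lemma~\ref{lem:complextranspose-elim1} is a point the paper leaves implicit, but it is the same proof.
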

\begin{proof}
	(1) We only need to show that $G\equiv_{\ML{\cdot,^*,\ones}} H$ implies $G\equiv_{\ML{\cdot,{}^*,\ones,+,\times,\allowbreak \Apply_{\mathsf{s}}[f], \allowbreak f\in\Omega}} H$. We have that $G\equiv_{\ML{\cdot,^*,\ones}} H$ implies $A_G\mdot Q=Q\mdot A_H$ for a doubly quasi-stochastic matrix $Q$ (Proposition~\ref{prop:quasistoch}). Furthermore, in the proof of Proposition~\ref{prop:quasistoch} we have shown that $A_H\mdot Q^*=Q^*\mdot A_G$ where $Q^*$ is again a doubly quasi-stochastic matrix. Lemmas~\ref{lem:multp-sim},~\ref{cor:ridoflinear},~\ref{lem:complextranspose-elim1},~\ref{lem:ridofscalarpointw} and~\ref{lem:ones-sim} imply that $Q$-conjugation and $Q^*$-conjugation are preserved by all operations in the fragment $\ML{\cdot,{}^*,\ones,+,\times,\Apply_{\mathsf{s}}[f], f\in\Omega}$.
	
	\smallskip 
	\noindent (2) We only need to show that $G\equiv_{\ML{\cdot,\,\tr,\change{{}^*,}\ones}} H$ implies $G\equiv_{\ML{\cdot,\,\tr,\,{}^*,\ones,+,\times,\Apply_{\mathsf{s}}[f],f\in\Omega}} H$. We have that $G\equiv_{\ML{\cdot,\,\tr,\change{{}^*,}\ones}} H$ implies $A_G\mdot O=O\mdot A_H$ for a doubly quasi-stochastic orthogonal matrix $O$ (Proposition~\ref{prop:traceones}). We observed earlier that $A_H\mdot O^*=O^*\mdot A_G$  and that $O^*$ is a doubly quasi-stochastic orthogonal matrix as well. Lemmas~\ref{lem:multp-sim}, \ref{lem:trace-sim}, \ref{cor:ridoflinear}, \ref{lem:complextranspose-elim1}, \ref{lem:ridofscalarpointw} and~\ref{lem:ones-sim}, imply that $O$-conjugation and $O^*$-conjugation are preserved by all operations in the fragment $\ML{\cdot,\,{}^*,\tr,\ones,+,\times,\Apply_{\mathsf{s}}[f],f\in\Omega}$.
	
	In both cases, we can therefore conclude, by induction on the structure of expressions, that for any sentence $e(X)$, $e(A_G)$ and $e(A_H)$ are conjugate and hence, $e(A_G)=e(A_H)$. 
\end{proof}

As we will see later, including any other operation from Table~\ref{tbl:mloperations}, such as $\diag(\cdot)$ or pointwise function applications on vectors or matrices, allows us to distinguish $G_4$ and $H_4$. We recall from Example~\ref{ex:graphs4} that these graphs cannot be distinguished by sentences in $\ML{\cdot,{}^*,\ones}$ and $\ML{\cdot,\tr,\change{{}^*,}\ones}$.

\section{The impact of the \texorpdfstring{$\diag(\cdot)$}{diag(.)}  operation}\label{sec:diag} 
We next consider the operation $\diag(\cdot)$ which takes as a column vector as input and returns the diagonal matrix with the input vector on its diagonal.
\footnote{The $\diag(\cdot)$ operation is also defined for $1\times1$-matrices (scalars) in which case it just returns that scalar.} 
\change{From the fragments considered so far, 
$\ML{\cdot,{}^*,\ones}$ and $\ML{\cdot,\tr,\change{{}^*,}\ones}$ are the only fragments in which vectors can be defined and for which the inclusion of $\diag(\cdot)$ has an impact.
Therefore, in this section we consider equivalence with regards to  $\ML{\cdot,{}^*,\ones,\diag}$ and $\ML{\cdot,\tr,\change{{}^*,}\ones,\diag}$. This section is organised as follows. First, we illustrate what information can be obtained from graphs using the
$\diag(\cdot)$ operation (Section~\ref{subsec:diag_ex}). We then show in Section~\ref{subsec:diag_equit} that one can compute so-called equitable partitions of graphs. From this, we can infer that equivalence of graphs with regards to the fragments under consideration implies
that the graphs have a common equitable partition. We use this observation in Sections~\ref{subsec:diag_notr} and~\ref{sub-sec:trdiag} to characterise $\ML{\cdot,{}^*,\ones}$- and $\ML{\cdot,\tr,\change{{}^*,}\ones}$-equivalence, respectively.
Finally, in Section~\ref{subsec:diag_funct_w} we show that we can add pointwise function applications on vectors without increasing the distinguishing power of the fragments.}

\subsection{\change{Example of the impact of the presence of  \texorpdfstring{$\diag(\cdot)$}{diag(.)}}}\label{subsec:diag_ex}
Using $\diag(\cdot)$ we can extract new information from graphs, as is illustrated in the following example. 
\begin{example}\label{ex:degree3} 
	\normalfont Consider graphs $G_4$ (\!\raisebox{-1.3ex}{\mbox{ 
	\includegraphics[height=0.6cm]{graphG4}}}\hspace{.08em}) and $H_4$ (\!\!\raisebox{-1.3ex}{\mbox{ 
	\includegraphics[height=0.6cm]{graphH4}}}\!). In $G_4$ we have vertices of degrees $0$ and $2$, and in $H_4$ we have vertices of degrees $1$, $2$ and $3$. We will count the number of vertices of degree $3$. Given that we know that $3$ is an upper bound on the degrees of vertices in $G_4$ and $H_4$, we consider the sentence $\#\mathsf{3degr}(X)$ given by
	\begin{linenomath}
		\postdisplaypenalty=0 
		\begin{multline*}
			\left(\frac{1}{6}\right)\times (\ones(X))^*\cdot\bigl(\diag(X\cdot\ones(X)- 0\times\ones(X))\cdot\diag(X\cdot\ones(X)-1\times \ones(X))\\
			{}\mdot \diag(X\cdot\ones(X)-2\times\ones(X))\bigr)\cdot\ones(X), 
		\end{multline*}
	\end{linenomath}
	in which we, for convenience, allow addition and scalar multiplications. Each of the subexpressions $\diag(X\cdot\ones(X)-d\times \ones(X))$, for $d=0,1$ and $2$, sets the diagonal entry corresponding to vertex $v$ to $0$ when $v$ has degree $d$. By taking the product of these diagonal matrices, entries that are set to $0$ will remain zero in the resulting diagonal matrix. This implies that, after taking these products, the only non-zero diagonal entries are those corresponding to vertices of degree different from $0$, $1$ or $2$. In other words, only for vertices of degree $3$ the diagonal entries carry a non-zero value, i.e., the value $6=(3-0)(3-1)(3-2)$. By appropriately rescaling by the factor $\frac{1}{6}$, the diagonal entries for the degree three vertices are set to $1$, and then summed up. Hence, $\#\mathsf{3degr}(X)$ indeed counts the number vertices of degree three when evaluated on adjacency matrices of graphs with vertices of maximal degree $3$. Since $\#\mathsf{3degr}(A_{G_4})=[0]\neq [1]=\#\mathsf{3degr}(A_{H_4})$ we can distinguish $G_4$ and $H_4$. We can obtain similar expressions for $
	\#d\!\mathsf{degr}(X)$ for arbitrary $d$, provided that we know the maximal degree of vertices in the graph. The way that these expressions are constructed is similar to the so-called Schur-Wielandt Principle indicating how to extract entries from a matrix that hold specific values by means of pointwise multiplication of matrices (see e.g., Proposition 1.4 in~\cite{Pech2002}). Here, we do not have pointwise matrix multiplication available but since we extract information from vectors, pointwise multiplication of vectors is simulated by normal matrix multiplication of diagonal matrices with the vectors on their diagonals. ~\hfill$\qed$ 
\end{example}

The use of the diagonal matrices and their products as in our example sentence $\#\mathsf{3degr}(X)$ can also be generalised to obtain information about so-called \emph{iterated degrees} of vertices in graphs, e.g., to identify and/or count vertices that have a number of neighbours each of which have neighbours of specific degrees, and so on. Such iterated degree information is closely related to \emph{equitable partitions} and \emph{fractional isomorphisms} of graphs (see e.g., Chapter 6 in~\cite{Scheinerman1997}). We phrase our results in terms of equitable partitions instead of iterated degree sequences.

\subsection{Equitable partitions} \label{subsec:diag_equit}
Formally, an \emph{equitable partition} ${\cal V}=\{V_1,\ldots,V_\ell\}$ of $G$ is  a partition of the vertex set $V$ of $G$ such that for all $i,j=1,\ldots,\ell$ and $v,v'\in V_i$, $\degr(v,V_j)=\degr(v',V_j)$. Here, $\degr(v,V_j)$ is the number of vertices in $V_j$ that are adjacent to $v$. In other words, an equitable partition is such that the graph is regular within each part, i.e., all vertices in a part have the same degree, and is bi-regular between any two different parts, i.e., the number of edges between any two vertices in two different parts is constant. A graph always has a \emph{trivial} equitable partition: simply treat each vertex as a part by its own. More interesting is the \emph{coarsest} equitable partition of a graph, i.e., the \emph{unique} equitable partition for which any other equitable partition of the graph is a refinement thereof~\cite{Scheinerman1997}. 

\change{The conditions underlying equitable partitions can be equivalently stated in terms of adjacency matrices and indicator vectors describing the partitions. We first introduce the notion of indicator vector.
Let $G=(V,E)$ be a graph of order $n$ with $V=\{1,\ldots,n\}$.
Let $V'\subseteq V$. We denote the \textit{indicator vector of $V'$} as the column vector $\ones_{V'}$ in  $\R^{n\times 1}$ and defined such that $(\ones_{V'})_v=1$ if $v\in V'$ and $(\ones_{V'})_v=0$ otherwise. Then, given a partition ${\cal V}=\{V_1,\ldots,V_\ell\}$ of $V$ we represent ${\cal V}$ by the $\ell$ indicator vectors $\ones_{V_i},\ldots,\ones_{V_\ell}$.
We observe that $\ones=\sum_{i=1}^\ell \ones_{V_i}$ due to ${\cal V}$ being a partition. We can now express that 
 ${\cal V}$ is an equitable partition in linear algebra terms. More precisely, ${\cal V}$ is an equitable partition  of $G$ if and only if for all $i,j=1,\ldots,\ell$, 
\begin{linenomath}
	\[\diag(\ones_{V_i})\mdot A_G\cdot\ones_{V_j}=\degr(v,V_j)\times\ones_{V_i},\]
\end{linenomath}
for some (arbitrary) vertex $v\in V_i$.}

\change{We next use the standard terminology for relating two graphs in terms of their equitable partitions~\cite{Scheinerman1997}.} More specifically, two graphs $G$ and $H$ are said to have a \emph{common equitable partition} if there exists an equitable partition ${\cal V}=\{V_1,\ldots,V_\ell\}$ of $G$ and an equitable partition ${\cal W}=\{W_1,\ldots,W_\ell\}$ of $H$ such that (a)~the sizes of the parts agree, i.e., $|V_i|=|W_i|$ for each $i=1,\ldots,\ell$, and (b)~$\degr(v,V_j)=\degr(w,W_j)$ for any $v\in V_i$ and $w\in W_i$ and any $i,j=1,\ldots,\ell$. \change{When ${\cal V}$ and ${\cal W}$
satisfy these conditions, we say that these partitions \textit{witness} that $G$ and $H$ have a common equitable partition.}
\change{Similarly, two graphs are said to have a \textit{common coarsest equitable partition} if the partitions ${\cal V}$ of $G$ and ${\cal W}$ of $H$ mentioned above are the coarsest equitable partitions of $G$ and $H$, respectively.}
Proposition~\ref{prop:fractiso} below characterises when two graphs do have a common equitable partition. Furthermore, when two graphs have a common equitable partition they also have a common coarsest equitable partition (see e.g., Theorem 6.5.1 in~\cite{Scheinerman1997}).

Equitable partitions naturally arise as the result of the \emph{colour refinement procedure}~\cite{Bastert2001,Grohe2014,Weisfeiler1968}, also known as the $1$-dimensional Weisfeiler-Lehman (\textsf{1WL}) algorithm, used as a subroutine in graph isomorphism solvers. Furthermore, there is a close connection to the study of \emph{fractional isomorphisms} of graphs~\cite{Scheinerman1997,Tinhofer1986}, as already mentioned in the introduction. We recall: two graphs $G$ and $H$ are said to be fractional isomorphic if there exists a doubly stochastic matrix $S$ such that $A_G\mdot S=S\mdot A_H$. Furthermore, a logical characterisation of graphs with a common equitable partition exists, as is stated next. 
\begin{proposition}[Theorem 1 in Tinhofer~\cite{Tinhofer1986}, Section 4.8 in Immerman and Lander~\cite{Immerman1990}]\label{prop:fractiso} 
	Let $G$ and $H$ be two graphs of the same order. Then, $G$ and $H$ are fractional isomorphic if and only if $G$ and $H$ have a common equitable partition if and only if $G\equiv_{\CLK{2}} H$. ~\hfill$\qed$ 
\end{proposition}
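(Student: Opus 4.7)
The plan is to establish the two equivalences by closing a cycle of implications. The equivalence between fractional isomorphism and having a common equitable partition can be proved by a direct matrix construction in one direction and by iterating the conjugacy identity in the other, while the equivalence with $\CLK{2}$-equivalence rests on the fact that the colour refinement procedure is simulable in $\CLK{2}$ and conversely characterised by a two-pebble counting game.

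For \emph{common equitable partition implies fractional isomorphism}, let ${\cal V}=\{V_1,\ldots,V_\ell\}$ and ${\cal W}=\{W_1,\ldots,W_\ell\}$ witness the common equitable partition. I would define the block matrix $S\in\R^{n\times n}$ by setting $S_{vw}:=1/|V_i|$ when $v\in V_i$ and $w\in W_i$ (for some $i$), and $S_{vw}:=0$ otherwise. Because $|V_i|=|W_i|$ for each $i$, the matrix $S$ is doubly stochastic. To verify $A_G\mdot S=S\mdot A_H$ one computes, for $v\in V_i$ and $w\in W_j$, that $(A_G\mdot S)_{vw}=\degr(v,V_j)/|V_j|$ and $(S\mdot A_H)_{vw}=\degr(w,W_i)/|V_i|$, and then invokes the edge-counting identity $|V_i|\mdot \degr(v,V_j)=|V_j|\mdot \degr(v',V_i)$ within $G$ (for any $v\in V_i$, $v'\in V_j$) together with the common-partition condition that equates degrees in $G$ with the corresponding degrees in $H$. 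For the converse, given a doubly stochastic $S$ with $A_G\mdot S=S\mdot A_H$, iterating yields $A_G^k\mdot S=S\mdot A_H^k$ for all $k\geq 0$. Combined with $S\mdot\ones=\ones$ and $\ones^*\mdot S=\ones^*$, this implies that the iterated degree vectors $A_G^k\mdot\ones$ and $A_H^k\mdot\ones$ are related by $S$. Since the stable colouring produced by colour refinement is determined by these iterated degree sequences, $G$ and $H$ must produce identical stable colourings and therefore share a common coarsest equitable partition.

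For the equivalence with $\CLK{2}$-equivalence I would observe that one round of colour refinement is expressible by $\CLK{2}$-formulas of the form
\begin{linenomath*}
\[\varphi^{k+1}_c(x):=\varphi^k_c(x)\wedge\bigwedge_{c',\,m}\exists^{\geq m}y\bigl(R(x,y)\wedge\varphi^k_{c'}(y)\bigr),\]
\end{linenomath*}
which uses only the two variables $x$ and $y$. If $G$ and $H$ produce distinct stable colourings, then some $\CLK{2}$-sentence counting the number of vertices carrying a given colour separates them, showing $G\not\equiv_{\CLK{2}} H$. Conversely, if $G$ and $H$ share a common equitable partition, I would run the standard two-pebble counting game: Duplicator's invariant is that whenever a pebble on $G$ sits in class $V_i$, the matching pebble on $H$ sits in $W_i$. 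This invariant is maintained under Spoiler's counting moves precisely because the partition is equitable (equal numbers of neighbours in each other class), and it ensures Duplicator wins every round, which characterises $G\equiv_{\CLK{2}} H$.

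The main obstacle is the $\CLK{2}$-side of the argument: one must show that Duplicator's strategy remains well-defined across arbitrarily many rounds, which requires a careful induction using the equitable-partition hypothesis and the bijectivity of neighbour counts between matching parts. The matrix constructions and iteration argument for fractional isomorphism are in comparison routine, and all remaining details can be filled in from the cited work of Tinhofer and of Immerman--Lander.
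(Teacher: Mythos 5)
The paper does not prove this proposition at all -- it is stated as a citation to Tinhofer and to Immerman--Lander -- so your attempt can only be judged on its own merits. Your construction of $S$ for the direction ``common equitable partition implies fractional isomorphism'' is the standard one (it is the same block matrix the paper later recalls from Scheinerman--Ullman in the proof of Proposition~\ref{lem:eqpartequalsent}), and your $\CLK{2}$ sketch via colour refinement and the two-pebble counting game is the standard Immerman--Lander argument; both are fine in outline.

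The genuine gap is in the direction ``fractional isomorphism implies common equitable partition''. You derive $A_G^k\mdot\ones = S\mdot A_H^k\mdot\ones$ for all $k$ and then assert that the stable colouring of colour refinement ``is determined by these iterated degree sequences''. That is false: the vectors $A^k\mdot\ones$ record only the \emph{number} of walks of length $k$ from each vertex, i.e.\ the \emph{sum} of the neighbours' previous values, whereas colour refinement refines by the \emph{multiset} of neighbours' colours. A vertex whose neighbours have degrees $\{1,3\}$ and one whose neighbours have degrees $\{2,2\}$ agree on $(A^2\mdot\ones)_v$ but separate after two rounds of refinement. Indeed the paper's own examples show that equality of all walk counts is strictly weaker than having a common equitable partition ($G_2$ and $H_2$ in Examples~\ref{ex:cospecvsallwalks} and~\ref{ex:fractiso} have the same number of walks of every length yet no common equitable partition), so no argument that only extracts walk-count information can close this direction. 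There is a second, smaller defect in the same step: from $x = S\mdot y$ with $S$ doubly stochastic one gets only that $x$ is majorized by $y$, not that $x$ and $y$ have the same multiset of entries (take $y=(0,2)$ and $S$ the all-$\frac{1}{2}$ matrix). What is actually needed -- and is the substance of Tinhofer's Theorem~1 -- is to exploit the \emph{non-negativity} of $S$ via a support/extremality induction: one shows round by round that $S_{vw}>0$ forces $v$ and $w$ to receive the same refinement colour, so that $S$ is block-structured with respect to the colour classes of each round and hence $\ones_{V_c}=S\mdot\ones_{W_c}$ for every stable colour $c$. Your proposal never uses non-negativity in this direction, which is why the step fails.
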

\begin{example}\label{ex:fractiso} 
	\normalfont The matrix linking the adjacency matrices of $G_3$ (\!\raisebox{-1.3ex}{\mbox{ 
	\includegraphics[height=0.6cm]{graphG3}}}\hspace{.08em}) and $H_3$ (\!\raisebox{-1.3ex}{\mbox{ 
	\includegraphics[height=0.6cm]{graphH3}}}\hspace{.08em}) in Example~\ref{ex:stoch} is in fact a doubly stochastic matrix (all its entries are either $0$ or $\frac{1}{2}$). Hence, $G_3$ and $H_3$ have a common equitable partition by Proposition~\ref{prop:fractiso}. 
	\change{One can alternatively verify that the partitions of $G_3$ and $H_3$
	consisting of a single part containing all the vertices of $G_3$ and $H_3$, respectively, witness that $G_3$ and $H_3$ have a common equitable partition. }
	 By contrast, graphs $G_2$ (\!\raisebox{-1.3ex}{\mbox{ 
	\includegraphics[height=0.6cm]{graphG2}}}\hspace{.08em}) and $H_2$ (\!\raisebox{-1.3ex}{\mbox{ 
	\includegraphics[height=0.6cm]{graphH2}}}) do not have a common equitable partition. Indeed, fractional isomorphic graphs must have the same multi-set of degrees, i.e., the same multi-set consisting of the degrees of vertices (Proposition 6.2.6 in~\cite{Scheinerman1997}), which does not hold for $G_2$ and $H_2$. Indeed, we note that there is an isolated vertex in $G_2$ but not in $H_2$. For the same reason, $G_1$ (\!\raisebox{-0.4ex}{\mbox{ 
	\includegraphics[height=0.3cm]{graphG1}}}\hspace{.08em}) and $H_1$ (\!\raisebox{-0.4ex}{\mbox{ 
	\includegraphics[height=0.3cm]{graphH1}}}\hspace{.08em}), and $G_4$ (\!\raisebox{-1.3ex}{\mbox{ 
	\includegraphics[height=0.6cm]{graphG4}}}\hspace{.08em}) and $H_4$ (\!\raisebox{-1.3ex}{\mbox{ 
	\includegraphics[height=0.6cm]{graphH4}}}) are not fractional isomorphic. ~\hfill$\qed$ 
\end{example}

To relate equitable partitions to $\ML{\cdot,{}^*,\ones,\diag}$- and $\ML{\cdot,\tr,\change{{}^*,}\ones,\diag}$-equi\-va\-lence, we show that the presence of $\diag(\cdot)$ allows us to formulate a number of expressions, denoted by $\mathsf{eqpart}_i(X)$, for $i=1,\ldots,\ell$, that together extract the \emph{coarsest equitable partition} from a given graph. By evaluating these expressions on $A_G$ and $A_H$, one can use additional sentences to detect whether these partitions witness that $G$ and $H$ have a common equitable partition. In this subsection, ${\cal L}$ can be either  $\{\cdot,{}^*,\ones,\diag\}$ or $\{\cdot,\tr,\change{{}^*,}\ones,\diag\}$. 

\SetAlgoCaptionLayout{small} \SetAlgoVlined 
\begin{algorithm}
	[t] 
	\caption{\normalfont\small Computing the coarsest equitable partition based on algorithm \textsc{GDCR}~\cite{Kersting2014}.} \label{alg:eqpart} \SetKwInOut{Input}{Input}\SetKwInOut{Output}{Output} \ResetInOut{Output} \Input{Adjacency matrix $A_G$ of $G$ of dimension $n\times n$.} \Output{Indicator vectors of coarsest equitable partition of $G$.} Let $B^{(0)}:=\ones$\; Let $i=1$\; \While{$i\leq n$}{ Let $M^{(i)}:=A_G\mdot B^{(i-1)}$\; Let ${\cal V}^{(i)}:=\{V_1^{(i)},\ldots,V_{\ell_i}^{(i)}\}$ a partition such that $v,w \in V_j^{(i)}$ if and only if $M^{(i)}_{v*}=M^{(i)}_{w*}$\; Let $B^{(i)}:=[\ones_{V_1^{(i)}},\ldots,\ones_{V_{\ell_i}^{(i)}}]$\; Let $i=i+1$\; } Return $B^{(n)}$. 
\end{algorithm}
\begin{proposition}\label{prop:equipart} 
	Let $G$ and $H$ be two graphs of the same order. Then, $G\equiv_{\ML{{\cal L}}} H$ implies that $G$ and $H$ have a common equitable partition. 
\end{proposition}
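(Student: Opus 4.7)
The plan is to show that $\ML{\cal L}$ is expressive enough to compute, for every rooted tree $T$, the vector of local tree-homomorphism counts into the input graph. Once this is established, the assumed equivalence will force the total tree-homomorphism counts $\mathsf{Hom}(T, G)$ and $\mathsf{Hom}(T, H)$ to agree for every tree $T$. The conclusion then follows by combining this with the tree-homomorphism characterisation of $1\mathsf{WL}$ due to Dell et al.~\cite{Dell2018} (recalled in the introduction), together with Proposition~\ref{prop:fractiso}. This approach is essentially an expressive-power reformulation of what Algorithm~\ref{alg:eqpart} does: successive applications of $\diag(\cdot)$ mimic the iterated colour-refinement step.

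Concretely, for every rooted tree $T$ I would construct an expression $h_T(X)$ such that for every graph $G$, $(h_T(A_G))_v$ equals the number of homomorphisms from $T$ into $G$ mapping the root to $v$. If $T$ consists of a single root, set $h_T(X):=\ones(X)$. Otherwise, if the root has children rooting subtrees $T_1,\ldots,T_d$ with $d\geq 1$, the standard recurrence
\[(h_T(A_G))_v \,=\, \prod_{i=1}^{d}\,\bigl(A_G\mdot h_{T_i}(A_G)\bigr)_v\]
translates, via the identity $\diag(u)\mdot w = u \odot_v w$, into the expression
\[h_T(X) \,:=\, \diag(X\mdot h_{T_1}(X))\mdot \diag(X\mdot h_{T_2}(X))\cdots\diag(X\mdot h_{T_{d-1}}(X))\mdot X\mdot h_{T_d}(X).\]
This lives entirely in $\ML{\cdot,\ones,\diag}$. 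Summing the entries, the sentence $s_T(X):=(\ones(X))^*\mdot h_T(X)$ belongs to $\ML{\cdot,{}^*,\ones,\diag}$, and hence to both $\ML{\cal L}$ under consideration, and evaluates on $A_G$ to $\mathsf{Hom}(T, G)$.

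Assuming $G \equiv_{\ML{\cal L}} H$, this gives $\mathsf{Hom}(T,G) = \mathsf{Hom}(T, H)$ for every (rooted, and hence unrooted) tree $T$. By the tree case of Dell et al., this is equivalent to indistinguishability by $1\mathsf{WL}$; combined with Proposition~\ref{prop:fractiso} it follows that $G$ and $H$ have a common equitable partition. The principal obstacle I foresee is that the fragments in question do not contain addition, scalar multiplication, or an explicit pointwise-product operation on vectors, so the Schur--Wielandt-style extraction of Example~\ref{ex:degree3} is not directly available. The trick that circumvents this is precisely the observation that $\diag(\cdot)$ followed by matrix multiplication already realises pointwise vector multiplication, which is exactly what is required to execute the tree-homomorphism recurrence and thereby capture the iterated-degree information computed by Algorithm~\ref{alg:eqpart}.
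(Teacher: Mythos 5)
Your proof is correct, but it takes a genuinely different route from the paper's. The paper simulates the colour-refinement algorithm \textsc{GDCR} directly: it builds expressions $\mathsf{eqpart}_i(X)$ whose definitions depend on the values appearing during the run on $A_G$ (via the Schur--Wielandt-style extraction), and then uses a battery of verification sentences ($\mathsf{binary\_diag}$, the partition-sum check, $\mathsf{equi\_test}_{ij}$, the part-size counts) to force the same expressions, evaluated on $A_H$, to yield a witnessing equitable partition of $H$. You instead construct \emph{uniform}, input-independent expressions $h_T(X)$ realising the rooted-tree homomorphism recurrence --- correctly observing that $\diag(u)\mdot w=u\odot_v w$ supplies the needed pointwise product inside the bare fragment $\ML{\cdot,{}^*,\ones,\diag}$, with no need for $+$ or $\times$ --- and then outsource the combinatorial core to the theorem of Dell, Grohe and Rattan that equality of all tree-homomorphism counts is equivalent to $1\mathsf{WL}$-indistinguishability, finishing with Proposition~\ref{prop:fractiso}. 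What your route buys is brevity, uniformity of the expressions, and a direct derivation of the homomorphism-vector characterisation that the paper only records as a remark after Theorem~\ref{thm:C2}; what it costs is self-containedness, since the Dell et al.\ theorem is a substantial external result, whereas the paper's argument needs only Tinhofer's theorem. A further point worth noting: the paper's explicit expressions $\mathsf{eqpart}_i(X)$ are not merely a proof device here --- they are reused later (in the Specht-trace arguments of Theorem~\ref{thm:treqpart} and Theorem~\ref{thm:pwtrpart}) to pin down the compatibility of the conjugating orthogonal matrix with the equitable partitions, so your shortcut would not eliminate the need for that construction elsewhere in the paper.
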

\begin{proof}
\change{The proof is quite lengthy so we first describe its structure.
\begin{itemize}
\item[(a)] We first argue that we can use addition and scalar multiplication at no increase in distinguishing power. This will simplify the construction of the expressions later on. We denote by ${\cal L}^+$ the extension of ${\cal L}$ with $+$ and $\times$. 
\item[(b)] We then show how to construct a number of expressions, denoted by $\mathsf{eqpart}_i(X)$, for $i=1,\ldots,\ell$, in $\ML{{\cal L}^+}$. The key property of these expressions is that when they are evaluated on the adjacency matrix $A_G$ of $G$, 
$\mathsf{eqpart}_i(A_G)$, for $i=1,\ldots,\ell$, correspond to indicator vectors representing an equitable partition of $G$. 
\item[(c)] 
The construction of the expressions $\mathsf{eqpart}_i(X)$, for $i=1,\ldots,\ell$, depend on $A_G$. As such, it is not guaranteed that $\mathsf{eqpart}_i(A_H)$, for $i=1,\ldots,\ell$, correspond to indicator vectors representing an equitable partition of $H$. We show, however, that when  $G\equiv_{\ML{{{\cal L}^+}}} H$ holds, then $\mathsf{eqpart}_i(A_H)$, for $i=1,\ldots,\ell$, indeed correspond to indicator vectors representing an equitable partition of $H$. To show this, we construct a number of sentences in $\ML{{\cal L}^+}$.
\item[(d)] Finally, we observe that the partitions represented by 
$\mathsf{eqpart}_i(A_G)$ and $\mathsf{eqpart}_i(A_H)$, for $i=1,\ldots,\ell$, witness that $G$ and $H$ have a common equitable partition.
\end{itemize}
Hence, all combined, this suffices to conclude that $G\equiv_{\ML{{\cal L}^+}} H$ implies that $G$ and $H$ have a common equitable partition. Given (a), the same conclusion holds for $G\equiv_{\ML{{\cal L}}} H$.}

\medskip	
\noindent
\change{\textbf{(a)~Showing that $G\equiv_{\ML{{\cal L}}} H$ if and only if $G\equiv_{\ML{{\cal L}^+}} H$.}
As mentioned earlier, it will be convenient to use addition and scalar multiplication
in our expressions. Clearly, $G\equiv_{\ML{{\cal L}^+}} H$ implies $G\equiv_{\ML{\cal L}} H$. Hence, it suffices to show that $G\equiv_{\ML{{\cal L}}} H$  implies $G\equiv_{\ML{{\cal L}^+}} H$.}\footnote{We remark that we cannot rely yet on the conjugation-preservation Lemma~\ref{cor:ridoflinear} to show that $G\equiv_{\ML{{\cal L}^+}} H$ if and only if $G\equiv_{\ML{\cal L}} H$. Indeed, at this point we do not know yet for what kind of matrices $T$, $T$-conjugation is preserved by the $\diag(\cdot)$-operation. This will only be settled in Lemma~\ref{lem:diag-sim} later in this section.}
\change{To see this, we observe that any expression in $\ML{{\cal L}^+}$ 
can be written as a linear combination of expressions in $\ML{\cal L}$. For completeness, we verify this  in the appendix.}
 \change{Consider now  a sentence $e(X)$ in $\ML{{\cal L}^+}$.
Hence, $e(X)$ can be written as $\sum_{i=1}^p a_i\times e_i(X)$
with $a_i\in \C$ and sentences $e_i(X)\in\ML{\cal L}$, for $i=1,\ldots,p$. By assumption, $e_i(A_G)=e_i(A_H)$, for $i=1,\ldots,p$, Hence also $e(A_G)=\sum_{i=1}^p a_i\times e_i(A_G)=\sum_{i=1}^p a_i\times e_i(A_H)=e(A_H)$.}
As a consequence, $G\equiv_{\ML{{\cal L}}} H$ indeed implies $G\equiv_{\ML{{{\cal L}^+}}} H$.
	
\medskip	
\noindent
\change{\textbf{(b)~Computing the coarsest equitable partition of $G$ by expression in $\ML{{\cal L}^+}$.}}
\change{We next show that we can compute the indicator vectors of an equitable partition of $G$ by means of expressions in $\ML{{\cal L}^+}$. } To see this, we implement the algorithm \textsc{GDCR} for finding this partition~\cite{Kersting2014}. We recall this algorithm (in a slightly different form than presented in Kersting et al.~\cite{Kersting2014}) in Algorithm~\ref{alg:eqpart}. In a nutshell, the algorithm takes as input $A_G$, the adjacency matrix of $G$, and returns a matrix whose columns hold indicator vectors that represent an equitable partition of $G$.
	
	The algorithm starts, on line 1, by creating a partition consisting of a single part containing all vertices, represented by the indicator vector $\ones$, and stored in vector $B^{(0)}$. Then, in the $i^{\text{th}}$ step, the current partition is represented by $\ell_{i-1}$ indicator vectors $\ones_{V_1^{(i-1)}},\ldots,\ones_{V_{\ell_{i-1}}^{(i-1)}}$ which constitute the columns of matrix $B^{(i-1)}$. The refinement of this partition is then computed in two steps. First, the matrix $M^{(i)}:=A_G\mdot B^{(i-1)}$ (line 4) is computed; \change{s}econd, each $\ones_{V_j^{(i-1)}}$ is refined by putting vertices $v$ and $w$ in the same part if and only if they have the same rows in $M^{(i)}$, i.e., when $M_{v*}^{(i)}=M_{w*}^{(i)}$ holds (line 5). The corresponding partition ${\cal V}^{(i)}$ is then represented by, say $\ell_i$, indicator vectors and stored as the columns of $B^{(i)}$ (line 6). This is repeated until no further refinement of the partition is obtained. At most $n$ iterations are needed. The correctness of the algorithm is established in~\cite{Kersting2014}. That is, the resulting indicator vectors represent an equitable partition of $G$. In fact, they represent the coarsest equitable partition of $G$.
	
	We next detail how a run of the algorithm on adjacency matrix $A_G$ can be simulated using expressions in $\ML{{\cal L}^+}$. The initialisation step is easy: We compute $B^{(0)}$ by means of the expression $b^{(0)}(X):=\ones(X)$. Clearly, $B^{(0)}=b^{(0)}(A_G)$. Next, suppose by induction that we have $\ell_{i-1}$ expressions $b^{(i-1)}_1(X),\ldots, b^{(i-1)}_{\ell_{i-1}}(X)$ such that when these expressions are evaluated on $A_G$, they return the indicator vectors stored in the columns of $B^{(i-1)}$. That is, $\ones_{V_j^{(i-1)}}=b^{(i-1)}_j(A_G)$ for all $j=1,\ldots,\ell_{i-1}$. We next show how the $i^{\text{th}}$ iteration is simulated. 
	
	We first compute the $\ell_{i-1}$ vectors stored in the columns of $M^{(i)}$ (line 4). We compute  these column vectors one at a time. To this aim, we consider expressions 
	\begin{linenomath}
		\[ m^{(i)}_j(X):=X\mdot b^{(i-1)}_j(X), \quad \text{for } j=1,\ldots,\ell_{i-1}. \]
	\end{linenomath}
	Clearly, $m^{(i)}_j(A_G)=M^{(i)}_{*j}$, as desired.
	
	A bit more challenging is the computation of the refined partition in ${\cal V}^{(i)}$ (line 5) since we need to inspect all columns $M^{(i)}_{*j}$ and identify rows on which all these columns agree, as explained above. It is here that the $\diag(\cdot)$ operation plays a crucial role. Moreover, to compute this refined partition we need to know all values occurring in $M^{(i)}$. The expressions below depend on these values and hence on the input adjacency matrix $A_G$.

	Let $D^{(i)}_j$ be the set of values occurring in the column vector $M^{(i)}_{*j}$, for $j=1,\ldots,\ell_{i-1}$. We compute, by means of an $\ML{{\cal L}^+}$ expression, an indicator vector which identifies the rows in $M^{(i)}_{*j}$ that hold a specific value $c\in D^{(i)}_j$. This expression is similar to the one used in Example~\ref{ex:degree3} to extract vertices of degree 3 from the degree vector. More precisely, we consider expressions 
	\begin{linenomath}
		\[ \ones_{=c}^{(i),j}(X)=\left(\frac{1}{\prod_{c'\in D^{(i)}_j,c'\neq c}(c-c')}\right)\times\biggl(\Bigl(\!\!\!\!\!\!\prod_{c'\in D^{(i)}_j\!\!, c'\neq c}\diag\bigl(m^{(i)}_j(X)- c'\times\ones(X)\bigr)\Bigr)\cdot\ones(X)\biggr), \]
	\end{linenomath}
	for the current iteration $i$, column $j$ in $M^{(i)}$, and value $c\in D^{(i)}_j$. The correctness of these expressions follows from a similar explanation as given in Example~\ref{ex:degree3}. Given these expressions, one can now easily obtain an indicator vector identifying all rows in $M^{(i)}$ that hold a specific value combination $(c_1,\ldots,c_{\ell_{i-1}})$ in their columns, where each $c_j\in D^{(i)}_j$, as follows: 
	\begin{linenomath}
		\[ \ones_{=(c_1,\ldots,c_{\ell_{i-1}})}^{(i)}(X)=\diag(\ones_{=c_1}^{(i),1}(X))\cdot\cdots\mdot \diag(\ones_{=c_{\ell_{i-1}}}^{(i),\ell_{i-1}}(X))\cdot\ones(X).\]
	\end{linenomath}
	That is, we simply take the boolean conjunction of all indicator vectors $\ones_{=c_j}^{(i),j}(X)$, for $j=1,\ldots,\ell_{i-1}$. We note that $\ones_{=(c_1,\ldots,c_{\ell_{i-1}})}^{(i)}(A_G)$ may return the zero vector, i.e., when $(c_1,\ldots,c_{\ell_{i-1}})$ does not occur as a row in $M^{(i)}$.
	We are only interested in  value combinations that do occur. Suppose that there are $\ell_i$ distinct value combinations $(c_1,\ldots,c_{\ell_{i-1}})$ for which $\ones_{=(c_1,\ldots,c_{\ell_{i-1}})}^{(i)}(A_G)$ returns a non-zero indicator vector. We denote by $b^{(i)}_1(X),\ldots, \allowbreak b^{(i)}_{\ell_i}(X)$ the corresponding expressions of the form $\ones_{=(c_1,\ldots,c_{\ell_{i-1}})}^{(i)}(X)$. It should be clear that $b^{(i)}_1(A_G),\ldots, \allowbreak b^{(i)}_{\ell_i}(A_G)$ are indicator vectors corresponding to the refined partition ${\cal V}^{(i)}$ as stored in $B^{(i)}$. This concludes the simulation of the $i^{\text{th}}$ iteration of the algorithm.
	
Finally, after the $n^{\text{th}}$ iteration we define 
	\begin{linenomath}
		\[ \mathsf{eqpart}_i(X):=b^{(n)}_i(X), \]
	\end{linenomath}
	for $i=1,\ldots,\ell_n$. In the following, we denote $\ell_n$ by $\ell$. We remark once more that all expressions defined above depend on the input $A_G$, as their definitions rely on the values occurring in the matrices $M^{(i)}$ computed along the way. 

\medskip	
\noindent
\change{\textbf{(c)~Checking that $G\equiv_{\ML{{\cal L}^+}} H$ implies that 
$\mathsf{eqpart}_i(A_H)$, for $i=1,\ldots,\ell$, also represent an equitable partition of $H$.}	
	Recall that we want to show that if $G\equiv_{\ML{{\cal L}^+}} H$ holds, then $G$ and $H$ have a common equitable partition. As a first step we verify that 
	$G\equiv_{\ML{{\cal L}^+}} H$ implies that the vectors $\mathsf{eqpart}_i(A_H)$, for $i=1,\ldots,\ell$, correspond to an equitable partition of $H$.}
	The challenge is to check all this by means of sentences in $\ML{{\cal L}^+}$.
We will use the following sentences.

	\begin{enumerate}
		\item For each $i=1,\ldots,\ell$, we first check whether $\mathsf{eqpart}_i(A_H)$ is also a binary vector.
		 We note that, by construction of the expression $\mathsf{eqpart}_i(X)$, $\mathsf{eqpart}_i(A_H)$ returns a real vector. To check whether every entry in $\mathsf{eqpart}_i(A_H)$ is either $0$ or $1$, we show that all of its entries must satisfy the equation $x(x-1)=0$. To this aim, we consider the $\ML{{\cal L}^+}$ sentence
		\begin{linenomath}
			\[ \mathsf{binary\_diag}(X):=(\ones(X))^*\cdot\bigl( (X\mdot X- X)\mdot (X\mdot X-X)\bigr)\mdot \ones(X). \]
		\end{linenomath}
		We claim that if $X$ is assigned a diagonal real matrix, say $\Delta$, then $\mathsf{binary\_diag}(\Delta)=[0]$ if and only if $\Delta$ is a \textit{binary} diagonal matrix. 
		
		Indeed, if $\Delta$ is a binary diagonal matrix, then $\Delta\cdot\Delta=\Delta$, $\Delta\cdot\Delta-\Delta=Z$, where $Z$ is the zero matrix, and hence $\mathsf{binary\_diag}(\Delta)=\tp{\ones}\mdot Z\mdot Z\cdot\ones=[0]$. Conversely, suppose that $\mathsf{binary\_diag}(\Delta)=[0]$. We observe that $(\Delta\mdot \Delta- \Delta)\mdot (\Delta\mdot \Delta-\Delta)$ is a diagonal matrix with squared real numbers on its diagonal. Hence, $\mathsf{binary\_diag}(\Delta)=[0]$ implies that the sum of the (squared real) diagonal elements in $\Delta\mdot \Delta- \Delta$ is $0$. This in turn implies that every element on the diagonal in $\Delta\cdot\Delta-\Delta$ must be zero. Hence, every element on $\Delta$'s diagonal must satisfy the equation $x^2-x=0$, implying that either $x=0$ or $x=1$. As a consequence, $\Delta$ is a binary diagonal matrix.
		
We now observe that $ \mathsf{binary\_diag}(\diag(\mathsf{eqpart}_i(A_G)))=[0]$ since  $\mathsf{eqpart}_i(A_G)$ returns an indicator vector. Then, $G\equiv_{\ML{{\cal L}^+}} H$ implies that the equality 
		\begin{linenomath}
			\[ \mathsf{binary\_diag}\bigl(\diag(\mathsf{eqpart}_i(A_G))\bigr)=[0]=\mathsf{binary\_diag}\bigl(\diag(\mathsf{eqpart}_i(A_H))\bigr),\]
		\end{linenomath}
		must hold, for all $i=1,\ldots,\ell$. Hence, the matrices $\diag(\mathsf{eqpart}_i(A_H))$ are indeed binary and so are its diagonal elements described by $\mathsf{eqpart}_i(A_H)$, as desired.

		\item We next verify that all indicator vectors $\mathsf{eqpart}_i(A_H)$ combined represent a partition of the vertex set of $H$. To verify this partition condition, 
\change{we simply consider the sentence $\ones(X)^*\mdot\bigl(\sum_{i=1}^\ell \mathsf{equit}_i(X)\bigr)$. Since $G\equiv_{\ML{{\cal L}^+}} H$ implies
		\begin{linenomath}
			\[ \ones(X)^*\cdot\big(\sum_{i=1}^\ell\mathsf{equit}_{i}(A_G)\bigr)=[n]=
		\ones(X)^*\cdot\big(\sum_{i=1}^\ell\mathsf{equit}_{i}(A_H)\bigr),\]
 		\end{linenomath}
it must be the case that the indicator vectors $\mathsf{eqpart}_i(A_H)$, for $i=1,\ldots,\ell$, form a partition.}
			\item We finally verify that the partition of $H$ represented by the indicator vectors $\mathsf{equit}_i(A_H)$, for $i=1,\ldots,\ell$, is an equitable partition of $H$. Let ${\cal V}=\{V_1,\ldots,V_\ell\}$ be the  equitable partition of $G$, represented by the indicator vectors $\mathsf{eqpart}_i(A_G)$, for $i=1,\ldots,\ell$. Similarly, let 
		${\cal W}=\{W_1,\ldots,W_\ell\}$ be the partition of $H$, represented by the indicator vectors $\mathsf{eqpart}_i(A_H)$, for $i=1,\ldots,\ell$. We show that ${\cal W}$ is an equitable partition of $H$.
				As already mentioned at the beginning of this section, we can rephrase ``being equitable'' in linear algebra terms. In particular, we know that for any $i,j=1,\ldots,\ell$, 
		\begin{linenomath}
			\[ \diag(\mathsf{eqpart}_i(A_G))\mdot A_G\mdot \diag(\mathsf{eqpart}_j(A_G))\cdot\ones-\deg(v,V_j)\times\mathsf{eqpart}_i(A_G) \]
		\end{linenomath}
		returns the zero vector, where $v$ is an arbitrary vertex in $V_i$, the part corresponding to the indicator vector $\mathsf{eqpart}_i(A_G)$. We want to check whether the same condition holds for $A_H$. We therefore consider the expressions $\mathsf{equi\_test}_{ij}(X)$, for $i,j=1,\ldots,\ell$ given by 
		\begin{linenomath}
			\[ \diag\Bigl(\diag(\mathsf{eqpart}_i(X))\mdot X\mdot \diag(\mathsf{eqpart}_j(X))\cdot\ones(X)-\deg(v,V_j)\times\mathsf{eqpart}_i(X)\Bigr) \]
		\end{linenomath}
		and check whether, when evaluated on $A_H$, the obtained diagonal matrix is the zero matrix. This would imply that for all $i,j=1,\ldots,\ell$,
		\begin{linenomath}
			\[ \diag(\mathsf{eqpart}_i(A_H))\mdot A_H\mdot \diag(\mathsf{eqpart}_j(A_H))\cdot\ones-\deg(v,V_j)\times\mathsf{eqpart}_i(A_H)\]
		\end{linenomath}
		 also returns the zero vector. As a consequence, ${\cal W}$ is an equitable partition of $H$. It rests us only to show that we can check, by means of sentences, whether a diagonal matrix is the zero matrix. We use the sentence 
		\begin{linenomath}
			\[ \mathsf{zerotest\_diag}(X):=(\ones(X))^*\mdot X\mdot X\cdot\ones(X), \]
		\end{linenomath}
		for this purpose. A similar argument as for the expression $\mathsf{binary\_diag}(X)$ shows that the $\mathsf{zerotest\_diag}(X)$ expression returns $[0]$ on diagonal real matrices if and only if the diagonal matrix is the zero matrix. We here again use that a sum of squares equals zero if and only if each summand is zero. Since $G\equiv_{\ML{{\cal L}^+}} H$, we have that  $$\mathsf{zerotest\_diag}(\mathsf{equi\_test}_{ij}(A_G))=[0]=\mathsf{zerotest\_diag}(\mathsf{equi\_test}_{ij}(A_H)),$$ for all $i,j=1,\ldots,\ell$, as desired.
	\end{enumerate}

\medskip	
\noindent
\change{\textbf{(d)~Verifying that $G$ and $H$ have a common equitable partition.}	
To conclude the proof we need to show that the equitable partitions ${\cal V}$ of $G$
and ${\cal W}$ of $H$, as defined above, witness that $G$ and $H$ have a common equitable partition. In other words, }we must have that (i)~$|V_i|=|W_i|$ for every $i=1,\ldots,\ell$; and
(ii)~for any $i,j=1,\ldots,\ell$, $\degr(v,V_j)=\degr(w,W_j)$ for any $v\in V_i$ and any $w\in W_i$. \change{We observe that the expressions $\mathsf{equi\_test}_{ij}(X)$ used above already verify condition (ii). We next show that, for each $i=1,\ldots,\ell$, the indicator vector $\mathsf{equit}_i(A_G)$ 
contains the same number of $1$'s as $\mathsf{eqpart}_i(A_G)$.	}
To show this, it suffices to consider the sentences $(\ones(X))^*\cdot\mathsf{eqpart}_i(X)$, for $i=1,\ldots,\ell$. Clearly, $G\equiv_{\ML{{\cal L}^+}} H$ implies that $\ones^*\cdot\mathsf{eqpart}_i(A_G)=\ones^*\cdot\mathsf{eqpart}_i(A_H)$, for $i=1,\ldots,\ell$. Hence, $\mathsf{eqpart}_i(A_H)$ and $\mathsf{eqpart}_i(A_G)$ contain the same number of ones. So, we may conclude that $G\equiv_{\ML{{\cal L}^+}} H$ implies indeed that $G$ and $H$ have a common equitable partition. Given that 
 $G\equiv_{\ML{{\cal L}^+}} H$ if and only if $G\equiv_{\ML{{\cal L}}} H$, also 
 $G\equiv_{\ML{{\cal L}}} H$ implies that $G$ and $H$ have a common equitable partition, as desired. \end{proof}

\subsection{Characterisation of \texorpdfstring{$\ML{\cdot,{}^*,\ones,\diag}$}{ML(.,*,1,diag)}-equivalence} 
\label{subsec:diag_notr}
\change{We next consider $\ML{\cdot,{}^*,\ones,\diag}$-equivalence.} We have just shown that $\ML{\cdot,{}^*,\ones,\diag}$-e\-qui\-va\-lent graphs have a common equitable partition. The converse also holds, as will be shown below (Proposition~\ref{lem:eqpartequalsent}).

\change{We first introduce a notion of \textit{compatibility} that will be used to define the proper notion of conjugation for  $\ML{\cdot,{}^*,\ones,\diag}$-equivalence.
Let $G$ and $H$ be two graphs of order $n$.
Let ${\cal V}=\{V_1,\ldots,V_\ell\}$ and ${\cal W}=\{W_1,\ldots,W_\ell\}$ be partitions of $G$ and $H$, respectively.
Furthermore, let $\ones_{V_i}$ and $\ones_{W_i}$, for $i=1,\ldots,\ell$, denote the indicator vectors corresponding to $V_i\in{\cal V}$ and $W_i\in{\cal W}$, respectively. Consider a matrix $T$ in  $\C^{n\times n}$. Then, $T$ is said to be \textit{compatible} with the partitions ${\cal V}$ of $G$ and ${\cal W}$ of $H$, if for $i=1,\ldots,\ell$,
	\begin{linenomath}
		\[ \diag(\ones_{V_i})\mdot T=T\cdot\diag(\ones_{W_i}). \]
	\end{linenomath}
Intuitively, this condition implies that $T$ has a block structure determined by the partitions ${\cal V}$ and ${\cal W}$ and only has non-zero blocks for blocks corresponding to the same parts $V_i$ and $W_i$ in these partitions. }

\begin{proposition}\label{lem:eqpartequalsent} 
	Let $G$ and $H$ be two graphs of the same order. If $G$ and $H$ have a common equitable partition, then $e(A_G)=e(A_H)$ for every sentence $e(X)$ in $\ML{\cdot,{}^*,\ones,\diag}$. 
\end{proposition}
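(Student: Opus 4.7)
The plan is to exhibit a single matrix $T$ such that $A_G$ and $A_H$ are $T$-conjugate and such that $T$-conjugation is preserved by every operation of $\ML{\cdot,{}^*,\ones,\diag}$, from which $e(A_G)=e(A_H)$ follows for every sentence by induction, using the already-established conjugation-preservation lemmas for $\cdot$, ${}^*$, and $\ones(\cdot)$ (Lemmas~\ref{lem:multp-sim}, \ref{lem:complextranspose-elim1}, and~\ref{lem:ones-sim}). What is missing is a matching lemma for $\diag(\cdot)$, and the subtlety is that $\diag$ will only preserve conjugation for a very specific $T$ tailored to the equitable partitions.

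First I would construct such a $T$. Let ${\cal V}=\{V_1,\ldots,V_\ell\}$ of $G$ and ${\cal W}=\{W_1,\ldots,W_\ell\}$ of $H$ be partitions witnessing that $G$ and $H$ have a common equitable partition, so in particular $|V_i|=|W_i|$ for each $i$. Define $S$ by $S_{uv}=1/|V_i|$ when $u\in V_i$ and $v\in W_i$ (same index $i$) and $S_{uv}=0$ otherwise. A direct computation using the equitable-partition condition $\deg(u,V_j)|V_i|=\deg(v,W_i)|V_j|$ for $u\in V_i$, $v\in W_j$ shows $A_G\mdot S=S\mdot A_H$; clearly $S$ is doubly stochastic (hence quasi-stochastic), and one verifies that $S$ is compatible with the partitions in the sense introduced just before the proposition, i.e.\ $\diag(\ones_{V_i})\mdot S=S\mdot \diag(\ones_{W_i})$ for every $i$. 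Because $A_G$, $A_H$ and $S$ are real and $A_G$, $A_H$ are symmetric, taking transposes yields $A_H\mdot S^*=S^*\mdot A_G$, and $S^*=\tp{S}$ has exactly the same form (doubly stochastic, partition-compatible).

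Next I would prove by induction on the structure of expressions that for every $e(X)$ in $\ML{\cdot,{}^*,\ones,\diag}$, both $e(A_G)$ and $e(A_H)$ are $S$-conjugate, and $e(A_H)$ and $e(A_G)$ are $S^*$-conjugate. The base case $e(X)=X$ is immediate from the previous paragraph. The inductive cases for $\cdot$, $\ones(\cdot)$, and ${}^*$ are handled by directly invoking Lemmas~\ref{lem:multp-sim}, \ref{lem:ones-sim} and~\ref{lem:complextranspose-elim1} (using that $S$ and $S^*$ are quasi-stochastic for the $\ones$-step). For a sentence, $S$-conjugation of scalars is equality, which delivers $e(A_G)=e(A_H)$.

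The hard part is the $\diag(\cdot)$ step. Here I would prove a new lemma: whenever column vectors $v_G$ and $v_H$ are $S$-conjugate ($v_G=S\mdot v_H$) and also $S^*$-conjugate in the reverse direction ($v_H=\tp{S}\mdot v_G$), then $\diag(v_G)\mdot S=S\mdot \diag(v_H)$, and symmetrically for $S^*$. The key observation is that the two conjugation conditions together force $v_G$ to be constant on each $V_i$ and $v_H$ to be constant on each $W_i$, with the same value on $V_i$ and $W_i$: from $v_G=S v_H$ the value of $v_G$ on $V_i$ is the average of $v_H$'s entries over $W_i$, and from $v_H=\tp{S}v_G$ that average must equal every $(v_H)_c$ for $c\in W_i$. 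Given this, both $\diag(v_G)\mdot S$ and $S\mdot \diag(v_H)$ vanish outside the blocks $V_i\times W_i$, and on those blocks they both equal $a_i/|V_i|$, where $a_i$ is the common value; hence they coincide entrywise. Combining this $\diag$-lemma with the three existing lemmas closes the induction and completes the proof.
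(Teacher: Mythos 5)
Your proof is correct, and while the overall skeleton (exhibit the block-averaging doubly stochastic matrix $S$, then show every operation preserves $S$- and $S^*$-conjugation by structural induction) matches the paper, your treatment of the $\diag(\cdot)$ step is genuinely different. The paper's Lemma~\ref{lem:diag-sim} requires as a hypothesis that $\ML{\cdot,{}^*,\ones,\diag}$-vectors are \emph{constant on equitable partitions}, a structural property of the language that is established separately (Proposition~\ref{prop:determined1}) by a fairly long induction in the appendix proving invariance conditions for every expression. You instead extract the needed decomposition $v_G=\sum_i a_i\times\ones_{V_i}$, $v_H=\sum_i a_i\times\ones_{W_i}$ directly from the \emph{two-way} conjugation hypothesis $v_G=S\mdot v_H$ and $v_H=\tp{S}\mdot v_G$: since $S$ averages over parts, the first identity forces $v_G$ constant on each $V_i$ and the second forces $v_H$ constant on each $W_i$ with matching values. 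This eliminates the dependence on Proposition~\ref{prop:determined1} entirely and is a genuine simplification for this proposition; your verification of $A_G\mdot S=S\mdot A_H$ via $\degr(u,V_j)\,|V_i|=\degr(v,W_i)\,|V_j|$ is also sound (it follows from double-counting edges between parts together with the common-equitable-partition condition). The trade-off is that your $\diag$-lemma is specific to this averaging $S$: for a general doubly quasi-stochastic compatible matrix $T$ (e.g.\ the orthogonal $O$ of Theorem~\ref{thm:treqpart}), the two conjugation identities $v_G=T\mdot v_H$ and $v_H=T^*\mdot v_G$ carry no averaging information and do not force constancy on parts, so the paper's more general Lemma~\ref{lem:diag-sim} plus Proposition~\ref{prop:determined1} cannot be dispensed with in the later sections where the same machinery is reused.
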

\begin{proof}
	By assumption, $G$ and $H$ have a common equitable partition. 
	 Let ${\cal V}=\{V_1,\ldots,\allowbreak V_\ell\}$ and ${\cal W}=\{W_1,\ldots,W_\ell\}$ be partitions of $G$ and $H$, respectively, that witness this fact.
	 As before, we denote by $\ones_{V_i}$ and $\ones_{W_i}$, for $i=1,\ldots,\ell$, the corresponding indicator vectors. We know from Proposition~\ref{prop:fractiso} that there exists a doubly stochastic matrix $S$ such that $A_G\mdot S=S\mdot A_H$. As previously observed, also $A_H\mdot S^*=S^*\mdot A_G$ holds, where $S^*$ is again doubly stochastic. Then, Lemmas~\ref{lem:multp-sim},~\ref{lem:complextranspose-elim1} and~\ref{lem:ones-sim} imply that $S$-conjugation and $S^*$-conjugation are preserved by matrix multiplication, complex conjugate transposition, and the one-vector operation. To conclude that $G\equiv_{\ML{\cdot,{}^*,\ones,\diag}} H$ holds, we verify that the $\diag(\cdot)$ operation also preserves $S$- and $S^*$-conjugation. We rely on a more general result (Lemma~\ref{lem:diag-sim} below), which states that $T$-conjugation, for a matrix $T$, is preserved by the $\diag(\cdot)$ operation provided that $T$ is doubly quasi-stochastic and compatible \change{with equitable partitions of $G$ and $H$ that witness that $G$ and $H$ have a common equitable partition}. We again separate this lemma from the current proof because we need it later in the paper. 
When considering the doubly stochastic matrix $S$ such that $A_G\mdot S=S\mdot A_H$ holds, the matrix $S$ can be assumed to be compatible with ${\cal V}$ and $\cal W$. To see this, we recall from the proof of Theorem 6.5.1 in~\cite{Scheinerman1997} that we can take $S$ to be such that for $i\neq j$, $\diag(\ones_{V_i})\mdot S\mdot \diag(\ones_{W_j})$ is the $|V_i|\times |W_j|$ zero matrix, and for $i=j$, $\diag(\ones_{V_i})\mdot S\mdot \diag(\ones_{W_i})$ is the square $|V_i|\times |W_i|$ matrix in which all entries are equal to $\frac{1}{|V_i|}$. Hence,
$\diag(\ones_{V_i})\mdot S=S\cdot\diag(\ones_{W_i})$ for all $i,j=1,\ldots,\ell$.
	
	As a consequence, if $e_1(A_G)$ and $e_1(A_H)$ are $S$-conjugate, then Lemma~\ref{lem:diag-sim} implies that $\diag(e_1(A_G))$ and $\diag(e_1(A_H))$ are $S$-conjugate. We also note that $\diag(\ones_{W_i})\mdot S^*=S^*\cdot\diag(\ones_{V_i})$. So, $S^*$ is compatible with ${\cal W}$ and ${\cal V}$. An inductive argument then shows that $e(A_G)$ and $e(A_H)$ are $S$-conjugate (and thus equal) for any sentence $e(X)$ in $\ML{\cdot,{}^*,\ones,\diag}$, as desired. 
\end{proof}

\change{Let ${\cal V}$ and ${\cal W}$ be equitable partitions of $G$ and $H$, respectively, that witness that $G$ and $H$ have a common equitable partition.}
We next show that $T$-conjugation,  by means \change{of} doubly quasi-stochastic matrices $T$ that are compatible with ${\cal V}$ and ${\cal W}$,
 is indeed preserved by the $\diag(\cdot)$ operation. Showing this requires a bit more work than our previous conjugation-preservation results. 
 
\change{More precisely, a \textit{key property} underlying this preservation property} is the following: Any vector that can be obtained by evaluating expressions in $\ML{\cdot,{}^*,\ones,\diag}$ on an adjacency matrix of a graph, can  be written in a canonical way in terms of the indicator vectors representing an equitable partitions of that graph. We state this requirement for general matrix query languages, as follows.

Let $\ML{\cal L}$ be a matrix query language. 
We say that \textit{$\ML{\cal L}$-vectors are constant on equitable partitions} if, for any expression $e(X)\in \ML{\cal L}$, \change{any graph $G$}, and
\change{any equitable partition ${\cal V}=\{V_1,\ldots,V_\ell\}$ of $G$,}
when $e(A_G)$ is an $n\times 1$-vector then there exists  scalars $a_i\in\C$, for $i=1,\ldots,\ell$,
such that
\begin{linenomath}
	\begin{equation}
		e(A_G)=\sum_{i=1}^\ell a_i\times \ones_{V_i},
			\label{eq:det} \end{equation}
\end{linenomath}
holds. Here, \change{$\ones_{V_1},\ldots,\ones_{V_\ell}$ represent the equitable partition ${\cal V}$ of $G$.}

Intuitively, this condition is important for the $\diag(\cdot)$ operation since it takes a vector as input and the linear combination (\ref{eq:det}) allows one to only reason about (linear combinations of) diagonal matrices obtained by the indicator vectors of the equitable partitions. Compatibility in turn implies conjugation preservation for such (indicator vector-based) diagonal matrices, which can then be lifted, due to linearity, to conjugation of arbitrary diagonal matrices. \change{We make this argument more formal in the proof of the following Lemma.}
\begin{lemma}\label{lem:diag-sim} 
\change{Let $\ML{\cal L}$ be a matrix query language fragment such that $\ML{\cal L}$-vectors are constant on equitable partitions.
	Let $G$ and $H$ be two graphs of the same order which have a common equitable partition. 
	Furthermore, let ${\cal V}$ and ${\cal W}$ be equitable partitions of $G$ and $H$, respectively, that witness that $G$ and $H$ have a common equitable partition, and	let $T$ be a doubly quasi-stochastic matrix which is compatible with ${\cal V}$ and ${\cal W}$.}
Finally, let $e(X)$ be an expression in $\ML{\cal L}$. Then, if $e(A_G)$ and $e(A_H)$ are $T$-conjugate, then also $\diag(e(A_G))$ and $\diag(e(A_H))$ are $T$-conjugate. 
\end{lemma}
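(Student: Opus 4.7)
The plan is to proceed by case analysis on the shape of $e(A_G)$ and $e(A_H)$ (which match, since both come from the same expression). When both are scalars in $\C^{1\times 1}$, $T$-conjugation of scalars is just equality by Definition~\ref{def:sim}, and $\diag$ acts as the identity on scalars, so the conclusion is immediate. The substantive case is when $e(A_G), e(A_H) \in \C^{n\times 1}$ with $n > 1$, where I must show $\diag(e(A_G)) \cdot T = T \cdot \diag(e(A_H))$.

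For this case, I would first apply the hypothesis that $\ML{\cal L}$-vectors are constant on equitable partitions, once to $G$ with its partition ${\cal V}=\{V_1,\ldots,V_\ell\}$ and once to $H$ with ${\cal W}=\{W_1,\ldots,W_\ell\}$, obtaining scalars $a_i, b_i \in \C$ such that
\[
e(A_G) \;=\; \sum_{i=1}^\ell a_i \times \ones_{V_i}, \qquad e(A_H) \;=\; \sum_{i=1}^\ell b_i \times \ones_{W_i}.
\]
The key technical step is to show $a_i = b_i$ for every $i$. For this I would first derive the identity $T \cdot \ones_{W_i} = \ones_{V_i}$: apply the compatibility condition $\diag(\ones_{V_i}) \cdot T = T \cdot \diag(\ones_{W_i})$ to $\ones$ on the right, then use that $T$ is quasi-stochastic and so $T \cdot \ones = \ones$, yielding
\[
T \cdot \ones_{W_i} \;=\; T \cdot \diag(\ones_{W_i}) \cdot \ones \;=\; \diag(\ones_{V_i}) \cdot T \cdot \ones \;=\; \diag(\ones_{V_i}) \cdot \ones \;=\; \ones_{V_i}.
\]
Substituting into the $T$-conjugation hypothesis $e(A_G) = T \cdot e(A_H)$ gives $\sum_i a_i \times \ones_{V_i} = \sum_i b_i \times \ones_{V_i}$, and linear independence of the $\ones_{V_i}$ (immediate since the $V_i$ are non-empty and pairwise disjoint) forces $a_i = b_i$.

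With coefficients matched, I would conclude by linearity of $\diag$ and a term-by-term application of compatibility:
\[
\diag(e(A_G)) \cdot T \;=\; \sum_{i=1}^\ell a_i \times \diag(\ones_{V_i}) \cdot T \;=\; \sum_{i=1}^\ell a_i \times T \cdot \diag(\ones_{W_i}) \;=\; T \cdot \diag(e(A_H)),
\]
which is precisely the $T$-conjugation of $\diag(e(A_G))$ and $\diag(e(A_H))$. The main obstacle in this proof is the coefficient-matching step $a_i = b_i$: it is the place where both assumptions on $T$ (doubly quasi-stochastic \emph{and} compatible with ${\cal V}, {\cal W}$) are genuinely needed, and where the "constant on equitable partitions" hypothesis is consumed. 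Once this matching is in hand, the remainder is a direct computation and does not require the specific structure of $e(X)$ beyond its output type.
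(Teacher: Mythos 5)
Your proof is correct and follows essentially the same route as the paper's: the same case analysis, the same decomposition via ``constant on equitable partitions'', the same derivation of $T\cdot\ones_{W_i}=\ones_{V_i}$ from compatibility and quasi-stochasticity, and the same concluding computation by linearity of $\diag(\cdot)$. The only (cosmetic) difference is in matching the coefficients $a_i=b_i$: you substitute and invoke linear independence of the indicator vectors, whereas the paper extracts the coefficients by multiplying with $\tp{\ones_{V_i}}$ and using $\tp{\ones_{V_i}}\mdot\ones_{V_j}=\delta_{ij}|V_i|$ together with $|V_i|=|W_i|$ --- both arguments rest on the identical key identity.
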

\begin{proof}
	Let $e(X)$ be an expression in $\ML{\cal L}$. Consider now $e'(X):=\diag(e(X))$. We distinguish between two cases, depending on the dimensions of $e(A_G)$. First, if $e(A_G)$ is a sentence then we know by induction that $e(A_G)=e(A_H)$. Hence, 
	\begin{linenomath}
		\[ e'(A_G)=\diag(e(A_G))=e(A_G)=e(A_H)=\diag(e(A_H))=e'(A_H). \]
	\end{linenomath}
	Next, if $e(A_G)$ is a column vector, then we know that $e(A_G)=T\mdot e(A_H)$ and furthermore, since $\ML{{\cal L}}$-vectors are constant on equitable partitions, that $e(A_G)=\sum_{i=1}^\ell a_i\times \ones_{V_i}$ and $e(A_H)=\sum_{i=1}^\ell b_i\times \ones_{W_i}$ \change{for some scalars $a_i$ and $b_i$  in $\C$, for $i=1,\ldots,\ell$. Here, $\ones_{V_i}$ and $\ones_{W_i}$, for $i=1,\ldots,\ell$,
	are the indicator vectors representing the equitable partitions ${\cal V}=\{V_1,\ldots,V_\ell\}$ of $G$ and ${\cal W}=\{W_1,\ldots,W_\ell\}$ of $H$, respectively.}
	We first show that $a_i=b_i$, for $i=1,\ldots,\ell$. Indeed, since $T\cdot\ones=\ones$ and $T$ is \change{compatible with ${\cal V}$ and ${\cal W}$}, we have that	\begin{linenomath}
		\[ \ones_{V_i}=\diag(\ones_{V_i})\cdot\ones=\diag(\ones_{V_i})\mdot T\mdot \ones=T\cdot\diag(\ones_{W_i})\cdot\ones=T\cdot\ones_{W_i}. \]
	\end{linenomath}
	As a consequence, using that $\tp{\ones_{V_i}}\mdot \ones_{V_j}^{\phantom{a}}$ is $0$ if $i\neq j$ and $|V_i|$ if $i=j$, we obtain 
	\begin{linenomath}
		\postdisplaypenalty=0 
		\begin{align*}
	a_i\times |V_i|&=\tp{\ones_{V_i}}\mdot e(A_G)=\tp{\ones_{V_i}}\mdot T\mdot e(A_H)\\
			&=\sum_{j=1}^\ell b_j\times(\tp{\ones_{V_i}}\mdot T\mdot \ones_{W_j}^{\phantom{a}})=b_i\times |W_i|, 
		\end{align*}
	\end{linenomath}
	for all $i=1,\ldots,\ell$. Since $|V_i|=|W_i|\neq 0$, we indeed have that $a_i=b_i$ for all $i=1,\ldots,\ell$.
	
	We may now conclude that 
	\begin{linenomath}
		\postdisplaypenalty=0 
		\begin{align*}
			e'(A_G)\mdot T&=\diag(e(A_G))\mdot T=\sum_{i=1}^\ell a_i\times  (\diag(\ones_{V_i})\mdot T)\\
			& =\sum_{i=1}^\ell a_i\times (T\mdot \diag(\ones_{W_i}))=T\mdot \diag(e(A_H))=T\mdot e'(A_H). 
		\end{align*}
	\end{linenomath}
	Hence $e'(A_G)$ and $e'(A_H)$ are indeed $T$-conjugate. 
\end{proof}

In the context of Proposition~\ref{lem:eqpartequalsent}, i.e., to show that the $\diag(\cdot)$ operation preserves $S$-con\-ju\-ga\-tion (and $S^*$-conjugation), we need to verify that $\ML{\cdot,{}^*,\ones,\diag}$-vectors are constant on equitable partitions. We verify this, in the appendix, by induction on the structure of expressions in $\ML{\cdot,{}^*,\ones,\diag}$.
 In fact, we more generally show the following.
\begin{proposition}\label{prop:determined1} 
	$\ML{\cdot,\tr, {}^*,\ones,\diag,+,\times,\Apply_{\mathsf{s}}[f],f\in\Omega}$-vectors are constant on equitable partitions.~\hfill$\qed$ 
\end{proposition}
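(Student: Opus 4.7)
The statement concerns only column vectors produced by expressions, but a direct induction on such expressions will not close, because column vectors are built out of matrices and row vectors (via multiplications, transpositions, $\diag$, etc.). My plan is therefore to \emph{strengthen} the induction hypothesis into a simultaneous statement about all four types (scalars, column vectors, row vectors, matrices) and prove it by induction on the structure of the expression $e(X)$.

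Fix a graph $G$ and an equitable partition ${\cal V}=\{V_1,\ldots,V_\ell\}$ of $G$, and let $B:=[\ones_{V_1},\ldots,\ones_{V_\ell}]\in\R^{n\times\ell}$ be the matrix whose columns are the indicator vectors. The strengthened invariant I would prove for every expression $e(X)\in\ML{\cdot,\tr,{}^*,\ones,\diag,+,\times,\Apply_{\mathsf{s}}[f]}$ is:
\begin{itemize}
\item if $e(A_G)$ is a scalar, no condition;
\item if $e(A_G)$ is an $n\times 1$ column vector, then $e(A_G)=B\mdot a$ for some $a\in\C^{\ell\times 1}$ (this is the actual conclusion of the proposition);
\item if $e(A_G)$ is a $1\times n$ row vector, then $e(A_G)=\tp{a}\mdot\tp{B}$ for some $a\in\C^{\ell\times 1}$;
\item if $e(A_G)$ is an $n\times n$ matrix, then there exist $\ell\times\ell$ matrices $D,D'$ with $e(A_G)\mdot B=B\mdot D$ and $\tp{B}\mdot e(A_G)=D'\mdot\tp{B}$.
\end{itemize}
Taking $e(X):=X$ as the base case, both relations hold for $A_G$ with $D=D'=C$ (the quotient matrix whose $(i,j)$ entry is $\deg(v,V_j)$ for any $v\in V_i$); this is the very definition of an equitable partition, plus the fact that $A_G$ is symmetric.

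For the inductive step I would go through the operations one by one. Addition, scalar multiplication, trace, and $\Apply_{\mathsf{s}}[f]$ are trivial (linearity, or the output is a scalar). The $\ones$ operation produces $\ones=B\mdot\ones_\ell$, which lies in the column span of $B$. Conjugate transposition swaps the two sides of the matrix condition and exchanges column-span and row-span, using $B^*=\tp{B}$. Matrix multiplication is straightforward to check in each of the typed cases (matrix-matrix, matrix-vector, row-matrix, row-vector, vector-row) by composing the hypotheses; for example, if $M\mdot B=B\mdot D$ and $v=B\mdot a$, then $M\mdot v=B\mdot D\mdot a$.

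The one genuinely non-routine case, which I expect to be the main obstacle, is $\diag(\cdot)$: it turns a column vector into a matrix, and I must verify the two-sided matrix condition for the output from only the column-vector hypothesis on the input. Suppose inductively that $v=B\mdot a=\sum_{i=1}^\ell a_i\ones_{V_i}$. The key calculation is that because each $\ones_{V_i}$ is $\{0,1\}$-valued on disjoint supports, $\diag(\ones_{V_i})\mdot\ones_{V_j}^{\phantom{a}}$ equals $\ones_{V_i}$ if $i=j$ and the zero vector otherwise. Summing with weights $a_i$ yields
\[
\diag(v)\mdot B=\bigl[a_1\ones_{V_1},\ldots,a_\ell\ones_{V_\ell}\bigr]=B\mdot\mathsf{diag}(a_1,\ldots,a_\ell),
\]
so we may take $D=\mathsf{diag}(a_1,\ldots,a_\ell)$. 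Since $\diag(v)$ is symmetric, the same $D$ serves as $D'$ on the left. This closes the induction, and restricting the invariant to column-vector expressions gives the proposition.
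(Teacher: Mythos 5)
Your proposal is correct and is essentially the paper's own proof: the appendix establishes exactly your strengthened invariant, stated blockwise as $\diag(\ones_{V_i})\mdot e(A_G)\mdot\ones_{V_j}=a_{ij}\times\ones_{V_i}$ and $\tp{\ones_{V_j}}\mdot e(A_G)\mdot\diag(\ones_{V_i})=b_{ij}\times\tp{\ones_{V_i}}$ for matrix-valued expressions and $\diag(\ones_{V_i})\mdot e(A_G)=a_i\times\ones_{V_i}$ for column vectors (equivalent to your $e(A_G)\mdot B=B\mdot D$, $\tp{B}\mdot e(A_G)=D'\mdot\tp{B}$ and $e(A_G)=B\mdot a$, with the row-vector case handled there inside the conjugate-transpose step), and it resolves $\diag(\cdot)$ by the same disjoint-support computation you give. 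The only nit is your base case: symmetry of $A_G$ yields $D'=\tp{C}$ rather than $D'=C$ (the quotient matrix is not symmetric in general), which is harmless since the invariant only asserts that some $D'$ exists.
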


All combined, we obtain the following characterisations. 
\begin{theorem}\label{thm:C2} 
	Let $G$ and $H$ be two graphs of the same order. Then, $G\equiv_{\ML{\cdot,^*,\ones,\diag}} H$ if and only if there is doubly stochastic matrix $S$ such that $A_G\mdot S=S\mdot A_H$ if and only if $G\equiv_{\CLK{2}} H$ if and only if $G$ and $H$ have a common equitable partition.~\hfill$\qed$ 
\end{theorem}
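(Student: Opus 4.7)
The plan is to observe that essentially all the work has already been done in the preceding propositions, and the theorem follows by assembling them into a chain of equivalences. Proposition~\ref{prop:fractiso} already gives us the three-way equivalence between the existence of a doubly stochastic $S$ with $A_G\mdot S = S\mdot A_H$, the existence of a common equitable partition, and $\CLK{2}$-equivalence. So the only new content is to integrate $\ML{\cdot,{}^*,\ones,\diag}$-equivalence into this chain, and for this it suffices to show the two implications:
(i)~$G\equiv_{\ML{\cdot,{}^*,\ones,\diag}} H$ implies that $G$ and $H$ have a common equitable partition, and
(ii)~If $G$ and $H$ have a common equitable partition, then $G\equiv_{\ML{\cdot,{}^*,\ones,\diag}} H$.

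For implication (i), I would simply invoke Proposition~\ref{prop:equipart} with ${\cal L}=\{\cdot,{}^*,\ones,\diag\}$, which directly yields that $\ML{\cdot,{}^*,\ones,\diag}$-equivalent graphs have a common equitable partition.

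For implication (ii), I would invoke Proposition~\ref{lem:eqpartequalsent}, which states that if $G$ and $H$ have a common equitable partition, then $e(A_G) = e(A_H)$ for every sentence $e(X)$ in $\ML{\cdot,{}^*,\ones,\diag}$. By definition of $\ML{\cdot,{}^*,\ones,\diag}$-equivalence, this gives $G\equiv_{\ML{\cdot,{}^*,\ones,\diag}} H$.

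Combining (i) and (ii) with Proposition~\ref{prop:fractiso}, we obtain the full chain of equivalences claimed in the theorem. There is no genuine obstacle here since all the substantial technical work, namely constructing the equitable-partition-extracting expressions (Proposition~\ref{prop:equipart}), verifying that $\diag(\cdot)$ preserves conjugation by a suitably chosen doubly stochastic $S$ compatible with the partitions (Lemma~\ref{lem:diag-sim} together with Proposition~\ref{prop:determined1}), and the classical characterisation of fractional isomorphism (Proposition~\ref{prop:fractiso}), has already been established. The proof thus reduces to a short citation-driven assembly.
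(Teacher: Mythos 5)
Your proposal is correct and matches the paper's own proof, which likewise derives the theorem as a direct consequence of Propositions~\ref{prop:fractiso}, \ref{prop:equipart}, \ref{lem:eqpartequalsent} and~\ref{prop:determined1}. The assembly of the chain of equivalences you describe, including the role of Lemma~\ref{lem:diag-sim} and Proposition~\ref{prop:determined1} inside Proposition~\ref{lem:eqpartequalsent}, is exactly how the paper proceeds.
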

\begin{proof}
	This is a direct consequence of Propositions~\ref{prop:fractiso},~\ref{prop:equipart},~\ref{lem:eqpartequalsent} and~\ref{prop:determined1}.
\end{proof}
We further complement Theorem~\ref{thm:C2} by observing that $G\equiv_{\ML{\cdot,^*,\ones,\diag}} H$ if and only if $\textsf{HOM}_{\cal F}(G)=\textsf{HOM}_{\cal F}(H)$ where
${\cal F}$ consists of all trees~\cite{Dell2018}.

As a consequence, following Example~\ref{ex:fractiso}, sentences in $\ML{\cdot,^*,\ones,\diag}$ can distinguish $G_1$ (\!\raisebox{-0.4ex}{\mbox{ 
\includegraphics[height=0.3cm]{graphG1}}}\hspace{.08em}) and $H_1$ (\!\raisebox{-0.4ex}{\mbox{ 
\includegraphics[height=0.3cm]{graphH1}}}\hspace{.08em}), $G_2$ (\!\raisebox{-1.3ex}{\mbox{ 
\includegraphics[height=0.6cm]{graphG2}}}\hspace{.08em}) and $H_2$ (\!\raisebox{-1.3ex}{\mbox{ 
\includegraphics[height=0.6cm]{graphH2}}}), $G_4$ (\!\raisebox{-1.3ex}{\mbox{ 
\includegraphics[height=0.6cm]{graphG4}}}\hspace{.08em}) and $H_4$ (\!\raisebox{-1.3ex}{\mbox{ 
\includegraphics[height=0.6cm]{graphH4}}}\hspace{.08em}), because all these pairs of graphs do not have a common equitable partition. By contrast, $G_3$ (\!\raisebox{-1.3ex}{\mbox{ 
\includegraphics[height=0.6cm]{graphG3}}}\hspace{.08em}) and $H_3$ (\!\raisebox{-1.3ex}{\mbox{ 
\includegraphics[height=0.6cm]{graphH3}}}\hspace{.08em}) cannot be distinguished by sentences in $\ML{\cdot,^*,\ones,\diag}$.

We remark that $G\equiv_{\ML{\cdot,^*,\ones,\diag}} H$ if and only if $G\equiv_{\ML{\cdot,^*,\ones,\diag,+,\times,\Apply_{\mathsf{s}}[f],f\in\Omega}} H$. This is again a direct consequence of the fact that $G\equiv_{\ML{\cdot,^*,\ones,\diag}} H$ implies that $A_G\mdot S=S\mdot A_H$ and $A_H\mdot S^*=S^*\mdot A_G$, for a doubly stochastic matrix $S$, and that all operations in $\ML{\cdot,^*,\ones,\diag,+,\times,\allowbreak \Apply_{\mathsf{s}}[f],f\in\Omega}$ preserve $S$-conjugation and $S^*$-conjugation.

\subsection{Characterisation of
\texorpdfstring{$\ML{\cdot,\tr,\change{{}^*,}\ones,\diag}$}{ML(.,tr,*,1,diag)}-equivalence}\label{sub-sec:trdiag} We next consider $\ML{\cdot,\tr,\change{{}^*,}\ones,\diag}$-equivalence. We already know a couple of implications when $G\equiv_{\ML{\cdot,\tr,\change{{}^*,}\ones,\diag}} H$ holds. For example, there must exist an orthogonal matrix $O$ such that $O\cdot\ones=\ones$ and $A_G\mdot O=O\mdot A_H$ (Propositions~\ref{prop:cospeccomain} and~\ref{prop:traceones}). Furthermore, we know that $G$ and $H$ must have a common equitable partition and hence, there exists a doubly stochastic matrix $S$ such that $A_G\mdot S=S\mdot A_H$ (Proposition~\ref{thm:C2}). It is tempting to conjecture that $G\equiv_{\ML{\cdot,\tr,\change{{}^*,}\ones,\diag}} H$ if and only if there exists an orthogonal doubly stochastic matrix $O$ such that $A_G\mdot O=O\mdot A_H$. This does not hold, however. Indeed, invertible doubly stochastic matrices are necessarily permutation matrices (see e.g., Theorem 2.1 in~\cite{farahat_1966}). Then, $A_G\mdot O=O\mdot A_H$ would imply that $G$ and $H$ are isomorphic, contradicting that our fragments cannot go beyond $\CLK{3}$-equivalence~\cite{Brijder2018}. Instead, we have the following characterisation. 
\begin{theorem}\label{thm:treqpart} 
	Let $G$ and $H$ be two graphs of the same order. Then the following hold: $G\equiv_{\ML{\cdot,\,\tr,\change{{}^*,}\ones,\diag}} H$ if and only if $G$ and $H$ have a common equitable partition and $A_G\mdot O=O\mdot A_H$ for some doubly quasi-stochastic orthogonal matrix $O$ which is compatible with some  \change{equitable partitions ${\cal V}$ of $G$ and ${\cal W}$ of $H$, that witness that $G$ and $H$ have a common equitable partition.}
\end{theorem}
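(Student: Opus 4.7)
For the reverse direction I would start from the assumption: $G$ and $H$ share witnessing equitable partitions ${\cal V},{\cal W}$ and $O$ is a doubly quasi-stochastic orthogonal matrix compatible with them, satisfying $A_G\mdot O=O\mdot A_H$. Since $O$ is real and orthogonal, $O^{*}=\tp{O}=O^{-1}$, and taking the conjugate transpose of the conjugacy equation (using symmetry of $A_G$ and $A_H$) yields $A_H\mdot O^{*}=O^{*}\mdot A_G$, with $O^{*}$ inheriting the required properties (orthogonal, doubly quasi-stochastic, compatible with ${\cal W},{\cal V}$). I would then proceed by induction on the structure of expressions $e(X)\in\ML{\cdot,\tr,{}^*,\ones,\diag}$, invoking Lemma~\ref{lem:multp-sim} for matrix multiplication, Lemma~\ref{lem:trace-sim} for trace (using invertibility of $O$), Lemma~\ref{lem:ones-sim} for $\ones(\cdot)$ (using quasi-stochasticity), Lemma~\ref{lem:complextranspose-elim1} for ${}^*$, and Lemma~\ref{lem:diag-sim} for $\diag(\cdot)$ (whose applicability follows from the compatibility of $O$ together with Proposition~\ref{prop:determined1}, which guarantees that $\ML{\cdot,\tr,{}^*,\ones,\diag}$-vectors are constant on equitable partitions). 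On sentences, $O$-conjugacy collapses to equality, giving $e(A_G)=e(A_H)$.

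For the forward direction I would proceed in three steps. First, Proposition~\ref{prop:equipart} gives that $G$ and $H$ have a common equitable partition; I take witnessing partitions ${\cal V}=\{V_1,\ldots,V_\ell\}$ and ${\cal W}=\{W_1,\ldots,W_\ell\}$ (for instance the coarsest common one, computable by the $\mathsf{eqpart}_i(X)$ expressions from the proof of Proposition~\ref{prop:equipart}). Second, for each $i$ I would use the interplay of $\diag$ and $\tr$ to construct $\MLANG$-sentences that count closed and open walks inside the induced subgraphs: closed walks of length $k$ via $\tr\bigl((\diag(\mathsf{eqpart}_i(X))\mdot X\mdot\diag(\mathsf{eqpart}_i(X)))^k\bigr)$ and open walks of length $k$ via $(\ones(X))^{*}\mdot\diag(\mathsf{eqpart}_i(X))\mdot(X\mdot\diag(\mathsf{eqpart}_i(X)))^{k}\mdot\ones(X)$. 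Both expressions restrict all intermediate vertices to the relevant part, so by $\MLANG$-equivalence these counts match between $G[V_i]$ and $H[W_i]$ for every $k$; thus these induced subgraphs are co-spectral and co-main, and Proposition~\ref{prop:cospeccomain} supplies an orthogonal doubly quasi-stochastic matrix $O^{(i)}$ with $A_{G[V_i]}\mdot O^{(i)}=O^{(i)}\mdot A_{H[W_i]}$. Third, I assemble $O$ as the block matrix with $O^{(i)}$ placed at the $V_i\times W_i$ positions and zeros elsewhere; by construction $O$ is orthogonal, doubly quasi-stochastic, and compatible with ${\cal V},{\cal W}$.

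The main obstacle lies in verifying the full conjugacy $A_G\mdot O=O\mdot A_H$: the diagonal blocks hold by construction, but the off-diagonal equations $A_G^{(i,j)}\mdot O^{(j)}=O^{(i)}\mdot A_H^{(i,j)}$ for $i\neq j$ are not forced by local co-spectrality alone. To overcome this I would exploit finer $\MLANG$-sentences that count walks with prescribed part-sequences, such as $(\ones(X))^{*}\mdot\diag(\mathsf{eqpart}_{i_0}(X))\mdot X\mdot\diag(\mathsf{eqpart}_{i_1}(X))\mdot X\cdots\diag(\mathsf{eqpart}_{i_k}(X))\mdot\ones(X)$ and their traced analogues. $\MLANG$-equivalence forces all such ``colored'' walk counts to coincide in $G$ and $H$, which strictly strengthens local co-spectrality and constrains the off-diagonal blocks. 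The cleanest realization of the construction would be a colored analogue of Proposition~\ref{prop:cospeccomain}, applied to the pair $(A_G,A_H)$ equipped with the vertex-colouring induced by $\cal V$ and $\cal W$; this should yield the required block-structured, partition-compatible $O$ satisfying all conjugacy equations simultaneously, and thus close the forward direction.
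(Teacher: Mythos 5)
Your reverse direction is correct and is essentially the paper's argument: establish $O$- and $O^*$-conjugacy of $A_G,A_H$, then propagate conjugation through the operations via Lemmas~\ref{lem:multp-sim},~\ref{lem:trace-sim},~\ref{lem:ones-sim},~\ref{lem:complextranspose-elim1} and~\ref{lem:diag-sim}, with Proposition~\ref{prop:determined1} supplying the hypothesis of Lemma~\ref{lem:diag-sim}.

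The forward direction, however, has a genuine gap, and you have correctly located it yourself: assembling $O$ block-diagonally from per-part matrices $O^{(i)}$ obtained via Proposition~\ref{prop:cospeccomain} only gives the diagonal-block conjugacy equations, and nothing in that construction forces $A_G^{(i,j)}\mdot O^{(j)}=O^{(i)}\mdot A_H^{(i,j)}$ for $i\neq j$. Your proposed repair --- that the equality of all ``colored'' walk counts should yield a ``colored analogue of Proposition~\ref{prop:cospeccomain}'' --- names exactly the right family of invariants but leaves the crucial step unproven: you have no mechanism for converting the equality of all these scalar quantities into the \emph{existence of a single orthogonal matrix} satisfying all conjugacy equations simultaneously. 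The paper's proof supplies precisely this mechanism via Specht's Theorem (in its real/orthogonal form): applied to the transposition-closed sequences ${\cal A}=A_G,J,\diag(\ones_{V_1}),\ldots,\diag(\ones_{V_\ell})$ and ${\cal B}=A_H,J,\diag(\ones_{W_1}),\ldots,\diag(\ones_{W_\ell})$, it states that an orthogonal $O$ with $A_G\mdot O=O\mdot A_H$, $J\mdot O=O\mdot J$ and $\diag(\ones_{V_i})\mdot O=O\mdot\diag(\ones_{W_i})$ exists if and only if $\tr(w({\cal A}))=\tr(w({\cal B}))$ for every word $w$ --- and these trace identities are exactly your traced colored walk counts, expressible as sentences $e_w(X)$ built from the $\mathsf{eqpart}_i(X)$ expressions (after first noting that $\ML{\cdot,\tr,{}^*,\ones,\diag}$-equivalence implies equivalence for the extension with $+$ and $\times$). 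A further small point the paper must and does address: Specht's Theorem does not directly give double quasi-stochasticity; one derives $O\cdot\ones=\pm\ones$ from $J\mdot O=O\mdot J$ together with orthogonality, and rescales $O$ by $-1$ if necessary. Without Specht's Theorem (or an equivalent simultaneous-conjugation result), your forward direction does not close.
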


\change{In the proof of Theorem~\ref{thm:treqpart} we will rely on Specht's Theorem (see e.g.,~\cite{Jing2015}), which we recall first. Let ${\cal A}=A_1,\ldots,A_p$ and ${\cal B}=B_1,\ldots,B_p$ be two sequences of complex matrices that are closed under complex conjugate transposition. That is, if $A_i$ occurs in ${\cal A}$ then so does $A_i^*$; \change{the same must hold for  ${\cal B}$.}
 The sequences ${\cal A}$ and ${\cal B}$ are called \textit{simultaneously unitary equivalent} if there exists a unitary matrix $U$ such that $A_i\mdot U=U\mdot B_i$, for $i=1,\ldots,p$. Specht's Theorem provides a means of checking simultaneous unitary equivalence in terms of \textit{trace identities}. Indeed, Specht's Theorem states that ${\cal A}$ and ${\cal B}$ are simultaneously unitary equivalent if and only if 
	\begin{linenomath}
		\[ \tr(w(A_1,\ldots,A_p))=\tr(w(B_1,\ldots,B_p)), \]
	\end{linenomath}
	for all words $w(x_1,\ldots,x_p)$ over the alphabet $\{x_1,\ldots,x_p\}$. In expression $w(A_1,\ldots,A_p)$ we instantiated $x_i$ with $A_i$ and interpret concatenation in the word $w$ as matrix multiplication; we interpret $w(B_1,\ldots,B_p)$ in a similar way. Specht's Theorem also holds when ${\cal A}$ and ${\cal B}$ consist of real matrices and unitary matrices are replaced by orthogonal matrices~\cite{Jing2015}.  The required condition is that ${\cal A}$ and ${\cal B}$ are closed under transposition. We next show Theorem~\ref{thm:treqpart}.}

\begin{proof}
	To show that the existence of a matrix $O$, as stated in the Theorem, implies the equivalence  $G\equiv_{\ML{\cdot,\,\tr,\change{{}^*,}\ones,\diag}} H$, we argue as before. More precisely, we show that $O$-conjugation is preserved by the operations in $\ML{\cdot,\,\tr,\change{{}^*,}\ones,\diag}$. This is, however, a direct consequence of Lemmas~\ref{lem:multp-sim},~\ref{lem:trace-sim},~\ref{lem:ones-sim} and~\ref{lem:diag-sim}. We remark that Proposition~\ref{prop:determined1} guarantees that Lemma~\ref{lem:diag-sim} can be applied. Indeed, Proposition~\ref{prop:determined1} implies that $\ML{\cdot,\,\tr,\change{{}^*,}\ones,\diag}$-vectors are constant on equitable partitions. We may thus conclude that all expressions in $\ML{\cdot,\,\tr,\change{{}^*,}\ones,\diag}$ preserve $O$-conjugation. Hence, $e(A_G)=e(A_H)$ for any sentence $e(X)$ in $\ML{\cdot,\,\tr,\change{{}^*,}\ones,\diag}$.
	
	For the converse direction, we need to show that $G\equiv_{\ML{\cdot,\,\tr,\change{{}^*,}\ones,\diag}} H$ implies that there exists an orthogonal matrix $O$ such that $A_G\mdot O=O\mdot A_H$, and where $O$ satisfies the conditions mentioned in the statement of the Theorem.
	
	The existence of the orthogonal matrix $O$ is shown using Specht's Theorem, which we just recalled.  \change{This is done as follows: We first rephrase the conditions required for $O$, i.e., that it is a doubly quasi-stochastic matrix which is compatible with \change{some equitable partitions ${\cal V}$ and ${\cal W}$ of $G$ and $H$,} respectively, in terms of such trace identities. Then we show that these trace identities can be expressed by sentences in  $\ML{\cdot,\,\tr,\change{{}^*,}\ones,\diag}$.}
	
	We note that our approach is inspired by Th\"une~\cite{Thune2012}. In that work, simultaneous unitary equivalence of graphs is studied with respect to their so-called $0$, $1$ and $2$-dimensional Weisfeiler-Lehman closures. Here, we consider simultaneous orthogonal equivalence with respect to all possible matrices that can be obtained using $\ML{\cdot,\,\tr,\change{{}^*,}\ones,\diag}$-expressions.
	
	We start by defining the sequences ${\cal A}$ and ${\cal B}$ on which we will apply Specht's Theorem. Consider the  sequences of real symmetric matrices: ${\cal A}:=A_G, J,\allowbreak \diag(\ones_{V_1}),\ldots, \allowbreak \diag(\ones_{V_\ell})$ and ${\cal B}:= A_H, J,\diag(\ones_{W_1}),\ldots, \diag(\ones_{W_\ell})$, where $\ones_{V_i}$ and $\ones_{W_i}$, for $i=1,\ldots,\ell$, denote the indicator vectors corresponding to the coarsest equitable partitions ${\cal V}=\{V_1,\ldots, V_\ell\}$ and ${\cal W}=\{W_1,\ldots,W_\ell\}$ of  $G$ and $H$, respectively. We observe that ${\cal A}$ and ${\cal B}$ are closed under transposition. By the real counterpart of Specht's Theorem we can check whether there exists an orthogonal matrix $O$ such that 
	\begin{linenomath}
		\postdisplaypenalty=0
		\begin{align}
			A_G\mdot O&=O\mdot A_H\label{eq:cond1}\\
			J\mdot O&= O\mdot J\label{eq:cond2}\\
			\diag(\ones_{V_i})\mdot O&=O\mdot \diag(\ones_{W_i}),
		\label{eq:cond3} \end{align}
	\end{linenomath}
	hold, for $i=1,\ldots, \ell$, in terms of trace identities. It is clear that conditions~(\ref{eq:cond1}) and~(\ref{eq:cond3}) express that $A_G$ and $A_H$ must be $O$-conjugate and that $O$ must be compatible with ${\cal V}$ and ${\cal W}$. The orthogonality of $O$ is implied by Specht's Theorem. Condition~(\ref{eq:cond2}) ensures that we can choose $O$ such that $O\mdot \ones=\ones$ holds. To see this, we next modify the proof of Lemma 4 in Th\"une~\cite{Thune2012}, stated for unitary matrices, so that it holds for orthogonal matrices. 
	
We first observe that $\ones$ is an eigenvector of $O$. Indeed, $J\mdot O\cdot\ones=\ones\mdot (\tp{\ones}\mdot O\mdot \ones)=\alpha \times \ones$ with $\alpha=\tp{\ones}\mdot O\cdot\ones$ and $J\mdot O\cdot\ones=O\mdot J\cdot\ones=(\tp{\ones}\cdot\ones)\times O\cdot\ones$. In other words, $O\mdot \ones=\frac{\alpha}{n}\times\ones$ since $\tp{\ones}\cdot\ones=n$. Furthermore, because $\tp{\ones}\mdot \tp{O}\cdot\ones$ is a scalar, $\tp{\ones}\mdot \tp{O}\cdot\ones=(\tp{\ones}\cdot\tp{O}\cdot\ones)^{\mathsf{t}}=\tp{\ones}\mdot O\cdot\ones=\alpha$. We next show that $\alpha=\pm n$. Indeed, since $O$ is an orthogonal matrix 
	\begin{linenomath}
		\[ n=\tp{\ones}\mdot I\cdot\ones=\tp{\ones}\mdot \tp{O}\mdot O\mdot \ones=\frac{\alpha}{n}\times (\tp{\ones}\mdot \tp{O}\mdot \ones)=\frac{\alpha^2}{n}, \]
	\end{linenomath}
	and thus $\alpha^2=n^2$ or $\alpha=\pm n$. Hence, $O\cdot\ones=\pm \ones$. When $O\cdot\ones=\ones$, $O$ is already doubly quasi-stochastic. In case \change{when} $O\cdot\ones=-\ones$, we simply replace $O$ by $(-1)\times O$ to obtain that $O\cdot\ones=\ones$. This rescaling does not impact the validity of conditions~(\ref{eq:cond1}) and (\ref{eq:cond3}). Hence, $O$ can indeed be assumed to be doubly quasi-stochastic.
	
	It remains to show that the trace identities implying the existence of an orthogonal matrix $O$ satisfying conditions~(\ref{eq:cond1}),~(\ref{eq:cond2}) and (\ref{eq:cond3}) can be expressed in $\ML{\cdot,\tr,\change{{}^*,}\ones,\diag}$. For every word $w(x,j,b_1,\allowbreak \ldots,b_\ell)$ we consider the sentence 
	\begin{linenomath}
		\[ e_w(X):=\tr(w(X,\ones(X)\mdot (\ones(X))^*,\diag(\mathsf{eqpart}_1(X)),\ldots ,\diag(\mathsf{eqpart}_\ell(X)))),\]
	\end{linenomath}
	in which variables $x, j, b_1,\ldots,b_\ell$ are assigned to matrix variable $X$, expression $\ones(X)\mdot (\ones(X))^*$ (which evaluates to $J$), and expressions $\diag(\mathsf{eqpart}_i(X))$, for $i=1,\ldots,\ell$, respectively. Here, the expressions $\mathsf{eqpart}_i(X)$ correspond to the expressions extracting the indicator vectors of the (coarsest) equitable partition of a graph, as defined in the proof of Proposition~\ref{prop:equipart}. We recall from that proof that $\mathsf{eqpart}_i(X)$ are expressible in $\ML{\cdot,\tr,{}^*,\ones,\diag,+,\times}$.
As a consequence, the sentences $e_w(X)$ belong to $\ML{\cdot,\tr,\change{{}^*,}\ones,\diag,+,\times}$. We have seen, however, in the proof of Proposition~\ref{prop:equipart}, that
$G\equiv_{\ML{\cdot,\tr,\change{{}^*},\ones,\diag}} H$ implies
$G\equiv_{\ML{\cdot,\tr,\change{{}^*,}\ones,\diag,+,\times}} H$.
Hence, $G\equiv_{\ML{\cdot,\,\tr,\change{{}^*,}\ones,\diag}} H$ implies that $e_w(A_G)=e_w(A_H)$ for every word $w$. 
\end{proof}

We note that $G\equiv_{\ML{\cdot,\tr,\change{{}^*,}\ones,\diag}} H$ implies $G\equiv_{\ML{\cdot,{}^*,\ones,\diag}} H$. The converse does not hold, as is illustrated next.
\begin{example}\label{ex:stochequiv}
	\normalfont Consider $G_3$ (\!\raisebox{-1.3ex}{\mbox{ 
	\includegraphics[height=0.6cm]{graphG3}}}\hspace{.08em}) and $H_3$ (\!\raisebox{-1.3ex}{\mbox{ 
	\includegraphics[height=0.6cm]{graphH3}}}\hspace{.08em}). These graphs are fractional isomorphic (and thus $\ML{\cdot,{}^*,\ones,\diag}$-equivalent) but are not co-spectral. Hence, $G_3\not\equiv_{\ML{\cdot,\,\tr,\change{{}^*,}\ones,\diag}} H_3$ since $\ML{\cdot,\tr,\change{{}^*,}\ones,\diag}$-equivalence implies co-spectrality. On the other hand, $G_5$ (\!\raisebox{-1.3ex}{\mbox{ 
	\includegraphics[height=0.6cm]{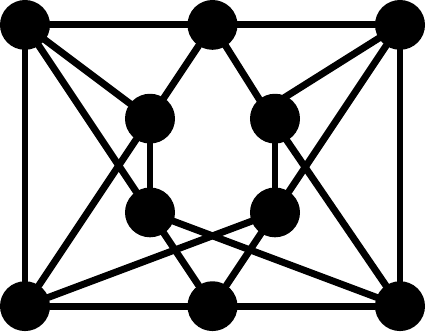}}}) and $H_5$ (\!\raisebox{-1.3ex}{\mbox{ 
	\includegraphics[height=0.6cm]{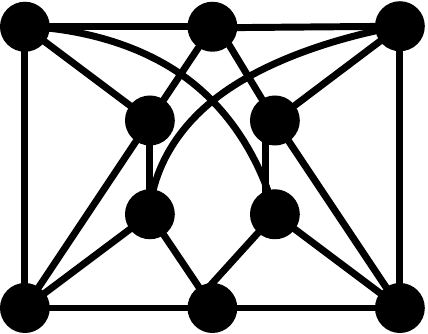}}}) are co-spectral regular graphs (Figure 2 in van Dam et al.~\cite{Vandam2003}), with co-spectral complements. \change{The equitable partitions ${\cal V}$ of $G_5$ and ${\cal W}$ of $H_5$, consisting of a single part containing all vertices, witness that $G_5$ and $H_5$ have a common equitable partition. } We thus know from before that there exists an orthogonal matrix $O$ such that $A_{G_5}\mdot O=O\mdot A_{H_5}$ and $O\mdot \ones=\ones$ (this follows from $G_5$ and $H_5$ being co-spectral and co-main). Moreover, the compatibility requirement \change{with ${\cal V}$ and ${\cal W}$} is vacuously satisfied. Indeed, these partitions are represented by the indicator vector $\one$ and we note that $\diag(\ones)\mdot O=O\mdot \diag(\ones)$ always holds.  Hence, $G_5$ and $H_5$ cannot be distinguished by $\ML{\cdot,\tr,\change{{}^*,}\ones,\diag}$ by Theorem~\ref{thm:treqpart}. ~\hfill$\qed$ 
\end{example}

It would be tempting to conjecture that $G\equiv_{\ML{\cdot,\,\tr,\change{{}^*,} \ones,\diag}} H$ if and only if $G$ and $H$ are co-spectral and have a common equitable partition. It is clearly a necessary condition. We see in the next Section, however, that there exists graphs that are co-spectral and have a common equitable partition, yet are not $\ML{\cdot,\,\tr,\change{{}^*,}\ones,\diag}$-equivalent.

We remark that $G\equiv_{\ML{\cdot,\,\tr,\change{{}^*,}\ones,\diag}} H$ if and only if $G\equiv_{\ML{\cdot.\tr,\change{{}^*,}\ones,\diag,+,\times,\Apply_{\mathsf{s}}[f],f\in\Omega}} H$. This is again a direct consequence of the fact that $G\equiv_{\ML{\cdot,\tr,\change{{}^*,}\ones,\diag}} H$ implies that $A_G\mdot O=O\mdot A_H$, $A_H\mdot O^*=O^*\mdot A_G$, for an orthogonal doubly quasi-stochastic matrix $O$ which is compatible \change{with equitable partitions of $G$ and $H$, that witness that $G$ and $H$ have a common equitable partition.} Furthermore, for such matrices $O$, we observe that all operations in $\ML{\cdot,\tr,\change{{}^*,}\ones,\diag,+,\times,\allowbreak \Apply_{\mathsf{s}}[f],f\in\Omega}$ preserve $O$-conjugation and $O^*$-conjugation.

\subsection{Pointwise function applications on vectors}\label{subsec:diag_funct_w} A crucial ingredient for obtaining characterisations of equivalence in the presence of the $\diag(\cdot)$ operation is that vectors are constant on equitable partitions (Proposition~\ref{prop:determined1} and Lemma~\ref{lem:diag-sim}). In this way, vectors obtained by evaluating expressions on equivalent $A_G$ and $A_H$ are ``almost'' the same, up to the use of indicator vectors. More precisely,
when $e(A_G)=\sum_{i=1}^\ell a_i\times \ones_{V_i}$ then $e(A_H)=\sum_{i=1}^\ell a_i\times \ones_{W_i}$, and vice versa.

 We next show that this tight relationship between vectors computed from $A_G$ and $A_H$ allows us to extend the matrix query languages considered in this section with pointwise function applications on \textit{vectors}. More precisely, we denote by $\Apply_{\mathsf{v}}[f]$, for $f\in\Omega$, that we only allow function applications of the form $e(X):=\Apply_{\mathsf{v}}[f](e_1(X),\ldots,\allowbreak e_p(X))$ where each $e_i(X)$ returns a vector when evaluated on a matrix. 
\begin{proposition}\label{prop:pwvectors} 
	Let $G$ and $H$ be two graphs of the same order. 
	\begin{itemize}
		\item[(1)] $G\equiv_{\ML{\cdot,{}^*,\ones,\diag}} H$ if and only if $G\equiv_{\ML{\cdot,{}^*,\ones,\diag,+,\times,\Apply_{\mathsf{s}}[f],\Apply_{\mathsf{v}}[f],f\in\Omega}} H$. 
		\item[(2)]$G\equiv_{\ML{\cdot,\,\tr,\change{{}^*,}\ones,\diag}} H$ if and only if $G\equiv_{\ML{\cdot,\tr,{}^*,\ones,\diag,+,\times,\,\Apply_{\mathsf{s}}[f],\Apply_{\mathsf{v}}[f],f\in\Omega}} H$. 
	\end{itemize}
\end{proposition}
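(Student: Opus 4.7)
The plan is to establish each biconditional by proving only the non-trivial direction: since the extended fragment contains the smaller one, equivalence in the extended fragment trivially implies equivalence in the smaller one. For the converse in case (1), Theorem~\ref{thm:C2} provides a doubly stochastic matrix $S$ such that $A_G\mdot S=S\mdot A_H$, which can be chosen compatible with equitable partitions ${\cal V}=\{V_1,\ldots,V_\ell\}$ of $G$ and ${\cal W}=\{W_1,\ldots,W_\ell\}$ of $H$ that witness a common equitable partition; for case (2), Theorem~\ref{thm:treqpart} yields an orthogonal doubly quasi-stochastic matrix $O$ with the analogous properties. In both cases, conjugate transposition produces a matrix ($S^*$ or $O^*$) of the same type conjugating $A_H$ to $A_G$, and all operations in the extended fragment other than $\Apply_{\mathsf{v}}[f]$ are already known to preserve such conjugation by the previously established Lemmas~\ref{lem:multp-sim}, \ref{lem:trace-sim}, \ref{cor:ridoflinear}, \ref{lem:complextranspose-elim1}, \ref{lem:ridofscalarpointw}, \ref{lem:ones-sim} and~\ref{lem:diag-sim}. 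Note that Lemma~\ref{lem:diag-sim} is applicable only because vectors are constant on equitable partitions (Proposition~\ref{prop:determined1}), so the induction below must simultaneously re-establish this property for the extended fragment.

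It therefore suffices to prove a conjugation-preservation lemma for $\Apply_{\mathsf{v}}[f]$ under the doubly-(quasi-)stochastic, compatibility, and ``constant-on-partitions'' hypotheses. Assume inductively that expressions $e_1(X),\ldots,e_p(X)$ return column vectors with $e_i(A_G)$ and $e_i(A_H)$ being $T$-conjugate and each constant on the equitable partitions. By the argument already used in the proof of Lemma~\ref{lem:diag-sim}, compatibility of $T$ together with $T\mdot\ones=\ones$ forces $T\mdot\ones_{W_j}=\ones_{V_j}$, and consequently the canonical decompositions $e_i(A_G)=\sum_{j=1}^\ell a_{ij}\times\ones_{V_j}$ and $e_i(A_H)=\sum_{j=1}^\ell b_{ij}\times\ones_{W_j}$ satisfy
\begin{linenomath}
\[|V_j|\times a_{ij}=\tp{\ones_{V_j}}\mdot e_i(A_G)=\tp{\ones_{V_j}}\mdot T\mdot e_i(A_H)=|W_j|\times b_{ij},\]
\end{linenomath}
so $a_{ij}=b_{ij}$ since $|V_j|=|W_j|>0$. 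Applying $f$ pointwise then yields
\begin{linenomath}
\[\Apply_{\mathsf{v}}[f]\bigl(e_1(A_G),\ldots,e_p(A_G)\bigr)=\sum_{j=1}^\ell f(a_{1j},\ldots,a_{pj})\times\ones_{V_j},\]
\end{linenomath}
and an identical formula for $A_H$ with $V_j$ replaced by $W_j$. These vectors are again constant on the equitable partitions and $T$-conjugate via $T\mdot\ones_{W_j}=\ones_{V_j}$. The same inductive step extends Proposition~\ref{prop:determined1} to cover $\Apply_{\mathsf{v}}[f]$.

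Structural induction over well-formed expressions in the extended fragment, combining the new preservation lemma with the existing ones, then yields $T$-conjugation of $e(A_G)$ and $e(A_H)$ for every expression $e(X)$; specialising to sentences gives $e(A_G)=e(A_H)$, as required. The main obstacle is the coefficient-matching step $a_{ij}=b_{ij}$: without the compatibility of $T$ with the witnessing partitions and the condition $T\mdot\ones=\ones$, pointwise application of the generally non-linear $f$ need not commute with $T$-conjugation. Once these two ingredients align the canonical representations of the input vectors, pointwise evaluation becomes automatically coherent across the two sides, and the remaining verifications are routine.
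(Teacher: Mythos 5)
Your proposal is correct and follows essentially the same route as the paper: reduce to the non-trivial direction, obtain the compatible doubly (quasi-)stochastic conjugating matrix from the earlier characterisations, and then handle $\Apply_{\mathsf{v}}[f]$ by simultaneously extending the ``constant on equitable partitions'' property and proving conjugation-preservation via the coefficient-matching argument $a_{ij}=b_{ij}$ (which the paper likewise borrows from the proof of Lemma~\ref{lem:diag-sim}) together with the disjointness of the indicator-vector supports. The only cosmetic difference is that the paper phrases the pointwise image as $\sum_i \Apply_{\mathsf{s}}[f](a_i^{(1)},\ldots,a_i^{(p)})\times\ones_{V_i}$, which is exactly your formula.
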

\begin{proof}
	In view of the previous results, it suffices to show that (1)~$\ML{\cdot,{}^*,\ones,\diag,+,\times,\allowbreak \Apply_{\mathsf{s}}[f],\allowbreak f\in\Omega}$-equivalence implies $\ML{\cdot,{}^*,\ones,\diag,+,\times,\Apply_{\mathsf{s}}[f],\allowbreak \Apply_{\mathsf{v}}[f],f\in\Omega}$-equivalence; and that (2)~$\ML{\cdot,\tr,{}^*,\ones,\diag,+,\times,\allowbreak\,\Apply_{\mathsf{s}}[f],f\in\Omega}$-equivalence implies $\ML{\cdot,\tr,{}^*,\ones,\diag,+,\times,\allowbreak\,\Apply_{\mathsf{s}}[f],\Apply_{\mathsf{v}}[f],f\in\Omega}$-equivalence. Both implication follow if we can show that $\ML{\cdot,\tr,{}^*,\allowbreak \ones,\diag,+,\times,\Apply_{\mathsf{s}}[f],\Apply_{\mathsf{v}}[f],f\in\Omega}$-vectors are constant on equitable partitions and that $\Apply_{\mathsf{v}}[f]$, for $f\in\Omega$, preserves conjugation by quasi doubly-stochastic matrices that are compatible with equitable partitions of \change{$G$ and $H$ that witness that $G$ and $H$ have a common equitable partition.}
	
	For conciseness, let ${\cal L}^{\dagger}$ denote  $\{\cdot,\tr,{}^*,\ones,\diag,+,\times,\Apply_{\mathsf{s}}[f]\Apply_{\mathsf{v}}[f],f\in\Omega\}$, i.e., ${\cal L}^{\dagger}$ consists of all operations considered so far. Proposition~\ref{prop:determined1} (being constant on partitions) trivially generalizes to $\ML{{\cal L}^{\dagger}}$-vectors. 
	Indeed, it suffices to consider the case $$e(X):=\Apply_{\mathsf{v}}[f](e_1(X),\ldots,e_p(X)),$$ where $e_1(X),\ldots,e_p(X)$ are expressions in $\ML{{\cal L}^{\dagger}}$ such that each $e_i(A_G)$ returns a vector. We may assume by induction that for $i=1,\ldots,p$, $e_i(A_G)=\sum_{j=1}^\ell a_{j}^{(i)} \times \ones_{V_i}$ 
for  scalars $a_{j}^{(i)}\in\C$, for $j=1,\ldots,\ell$, \change{and where $\ones_{V_i}$, for $i=1,\ldots, \ell$, are indicator vectors representing an equitable partition of $G$.}
		Since the sets of entries in the indicator vectors holding value $1$ are disjoint for any two different indicator vectors, we have that 
	\begin{linenomath}
\[
			e(A_G)=\sum_{i=1}^\ell \Apply_{\mathsf{s}}[f]\bigl(a_{i}^{(1)},\ldots,a_i^{(p)}) \times \ones_{V_i}.\]
	\end{linenomath}
So, indeed, $\ML{{\cal L}^{\dagger}}$-vectors are constant on equitable partitions.	

\change{Let ${\cal V}=\{V_1,\ldots,V_\ell\}$ and ${\cal W}=\{W_1,\ldots,W_\ell\}$ be equitable partitions of $G$ and $H$, respectively. Suppose that these partitions witness that $G$ and $H$ have a common equitable partition.} That $T$-conjugation, for a quasi doubly-stochastic matrices $T$ that is compatible \change{with ${\cal V}$ and ${\cal W}$},  is also preserved by pointwise function applications on vectors now  easily follows. Indeed, consider $e(X):=\Apply_{\mathsf{v}}[f](e_1(X),\ldots,e_p(X))$. By assumption, $e_i(A_G)=T\mdot e_i(A_H)$ for all $i=1,\ldots,p$. Furthermore,  
$e_i(A_G)=\sum_{j=1}^\ell a_{j}^{(i)} \times \ones_{V_i}$ and $e_i(A_H)=\sum_{j=1}^\ell b_{j}^{(i)} \times \ones_{W_i}$. \change{The indicator vectors $\ones_{V_i}$ and $\ones_{W_i}$, for $i=1,\ldots,\ell$, represent the partitions ${\cal V}$ and ${\cal W}$, as before.}
We have seen in the proof of Lemma~\ref{lem:diag-sim} that  $T$-conjugation of these vectors implies $a_{j}^{(i)}=b_{j}^{(i)}$ for $j=1,\ldots,\ell$ and $i=1,\ldots,p$.
As a consequence, $e(A_G)$ is equal to
	\begin{linenomath}\postdisplaypenalty=0 
	\begin{align*}
		\Apply_{\mathsf{v}}[f](e_1(A_G),\ldots,e_p(A_G))&=\sum_{i=1}^\ell \Apply_{\mathsf{s}}[f]\bigl(a_{i}^{(1)},\ldots,a_i^{(p)}) \times \ones_{V_i} \\
		&=\sum_{i=1}^\ell \Apply_{\mathsf{s}}[f]\bigl(a_{i}^{(1)},\ldots,a_i^{(p)}) \times (T\mdot \ones_{W_i})\\ &= T\mdot \Apply_{\mathsf{v}}[f](e_1(A_H),\ldots,e_p(A_H)), \end{align*}
	\end{linenomath}
	which is equal to $T\mdot e(A_H)$, as desired. 
\end{proof}

Going back to the graphs $G_5$ (\!\raisebox{-1.3ex}{\mbox{ 
\includegraphics[height=0.6cm]{graphG5}}}) and $H_5$ (\!\raisebox{-1.3ex}{\mbox{ 
\includegraphics[height=0.6cm]{graphH5}}}) in Example~\ref{ex:stochequiv}, these cannot even be distinguished by sentences in the large fragments in Proposition~\ref{prop:pwvectors}. In  Section~\ref{sec:C3}, we show that by allowing pointwise function applications on matrices, and in particular the Schur-Hadamard product, we can distinguish these two graphs.

\section{The impact of pointwise multiplication on vectors}\label{sec:pw}
In the preceding section the main use of the $\diag(\cdot)$ operation related to the construction of the coarsest equitable partition (see e.g.,  the proof of Proposition~\ref{prop:equipart}) and more specifically, to the ability to pointwise multiply two vectors (see e.g., Example~\ref{ex:degree3}). In this section we investigate how $\ML{\cdot,{}^*,\ones,\diag}$-equivalence and $\ML{\cdot,\tr,\change{{}^*},\ones,\diag}$-equivalence change if we replace the $\diag(\cdot)$ operation with the operation, denoted by $\odot_v$, which pointwise multiplies vectors. 

We first remark that $e_1(X)\odot_v e_2(X)$ can be expressed as $\diag(e_1(X))\cdot\diag(e_2(X))\cdot\allowbreak\ones(X)$. So surely,
$G\equiv_{\ML{\cdot,{}^*,\ones,\diag}} H$ implies $G\equiv_{\ML{\cdot,{}^*,\ones,\odot_v}} H$. We show below that the converse also holds.
For this fragment, it thus does not matter whether we include $\diag(\cdot)$ or $\odot_v$.
Similarly, $G\equiv_{\ML{\cdot,\tr,\change{{}^*,}\ones,\diag}} H$ implies $G\equiv_{\ML{\cdot,\tr,\change{{}^*,}\ones,\odot_v}} H$. 
 The converse, however, does not hold as we will show below\footnote{It was incorrectly stated in the conference version~\cite{Geerts19} that $\diag(\cdot)$ and $\odot_v$ are interchangeable.}. Finally, $G\equiv_{\ML{\cdot,\tr,{}^*,\ones,\odot_v}} H$ trivially implies $G\equiv_{\ML{\cdot,{}^*,\ones,\odot_v}} H$. The results in this section show that the converse does not hold.

\begin{example}\label{ex:pw}\normalfont 
Consider the graphs 
$G_6$ (\!\!\raisebox{-1.1ex}{\mbox{ 
\includegraphics[height=0.5cm]{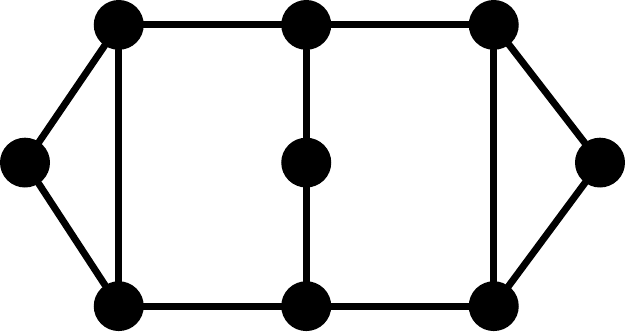}}}) and $H_6$ (\!\raisebox{-1.1ex}{\mbox{ 
\includegraphics[height=0.5cm]{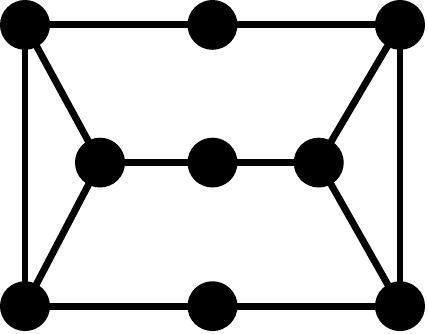}}})\footnote{I am indebted to David E. Roberson for providing these two graphs~\cite{Roberson}.}. On can verify that these graphs are co-spectral and have a common equitable partition (and thus also have co-spectral complements).
Using the diagonal operation we can construct the Laplacian of a graph by simply considering expression $L(X):=(\diag(X\cdot\ones(X))-X$. It is now easy to detect that $G_6$
and $H_6$ have Laplacians that are not co-spectral. Indeed, consider the $\ML{\cdot,\tr,\change{{}^*,}\ones,\diag,+,\times}$ expression $e_{L,k}(X):=\tr(L(X)^k)$. Then, we can check that $e_{L,3}(A_{G_6})=1602\neq 1618=e_{L,3}(A_{H_6})$. The relation between co-spectrality and traces of powers of matrices (cfr.  Proposition~\ref{prop:tracesim}) holds more generally for symmetric matrices (this follows easily from the real version of Specht's Theorem used in the proof of Theorem~\ref{thm:treqpart}).
Hence, we can infer that  the Laplacians of $G_6$ and $H_6$ are not co-spectral.
Another way of verifying this is that $G_6$ and $H_6$ have a different number of spanning trees ($192$ versus $160$) and Kirchhoff's matrix tree theorem (see e.g., Proposition 1.3.4 in~\cite{Brouwer2012}) implies that graphs with co-spectral Laplacians must have the same number of spanning trees. Hence, $G_6$ and $H_6$ can be distinguished by $\ML{\cdot,\tr,\change{{}^*,}\ones,\diag,+,\times}$ (and hence also by sentences in  $\ML{\cdot,\tr,\change{{}^*,}\ones,\diag}$ as we have seen before). Nevertheless, we will see that $G_6$ and $H_6$ cannot be distinguished by sentences in $\ML{\cdot,\tr,{}^*,\ones,\odot_v}$. More generally, we show that two graphs are $\ML{\cdot,\tr,{}^*,\ones,\odot_v}$-equivalent if and only if they are co-spectral and have a common equitable partition (see Proposition~\ref{prop:trstoch} below).~\hfill$\qed$
\end{example}

As mentioned earlier, when considering $\ML{\cdot,{}^*,\ones,\diag}$ one can equivalently use $\odot_v$ instead of $\diag(\cdot)$.
\begin{proposition}\label{prop:pwmultnotr}
Let $G$ and $H$ be two graphs of the same order. Then, $G\equiv_{\ML{\cdot,{}^*,\ones,\odot_v}} H$ if and only if 
$G\equiv_{\ML{\cdot,{}^*,\ones,\diag}} H$. 
\end{proposition}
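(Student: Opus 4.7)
The forward direction, that $G\equiv_{\ML{\cdot,{}^*,\ones,\diag}} H$ implies $G\equiv_{\ML{\cdot,{}^*,\ones,\odot_v}} H$, is immediate: for any two column vectors $v,w$ of the same dimension, $v\odot_v w=\diag(v)\cdot w$, so every sentence in $\ML{\cdot,{}^*,\ones,\odot_v}$ can be rewritten into an equivalent sentence in $\ML{\cdot,{}^*,\ones,\diag}$ by replacing each subexpression $e_1\odot_v e_2$ with $\diag(e_1)\cdot e_2$.

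For the converse, I would invoke Theorem~\ref{thm:C2}: it suffices to show that $G\equiv_{\ML{\cdot,{}^*,\ones,\odot_v}} H$ implies that $G$ and $H$ have a common equitable partition. The plan is to replay the proof of Proposition~\ref{prop:equipart}, but with every occurrence of $\diag(\cdot)$ translated into a use of $\odot_v$. The key observation that makes this translation routine is that in that proof every $\diag(\cdot)$ appears in one of two ``bracketed'' shapes: either $\diag(v_1)\cdot\diag(v_2)\cdots\diag(v_k)\cdot u$ for vectors $v_1,\ldots,v_k,u$, which equals $v_1\odot_v v_2\odot_v\cdots\odot_v v_k\odot_v u$, or $\ones^*\cdot\diag(z)\cdot\diag(z)\cdot\ones$, which equals $\ones^*\cdot(z\odot_v z)$.

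Concretely, the partition-extraction expressions $\ones_{=c}^{(i),j}(X)$ and $\ones_{=(c_1,\ldots)}^{(i)}(X)$ become iterated $\odot_v$-products of the vectors that were previously fed into $\diag(\cdot)$; the sentence $\mathsf{binary\_diag}(\diag(v))$ becomes $\ones^*\cdot\bigl((v\odot_v v-v)\odot_v(v\odot_v v-v)\bigr)$; the equitability check $\mathsf{zerotest\_diag}(\mathsf{equi\_test}_{ij}(X))$ becomes $\ones^*\cdot(z\odot_v z)$ with $z:=\mathsf{eqpart}_i(X)\odot_v\bigl(X\cdot\mathsf{eqpart}_j(X)\bigr)-\deg(v,V_j)\times\mathsf{eqpart}_i(X)$; and the partition and size checks $\ones^*\cdot\bigl(\sum_i\mathsf{eqpart}_i(X)\bigr)$ and $\ones^*\cdot\mathsf{eqpart}_i(X)$ do not involve $\diag(\cdot)$ and are used verbatim. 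All the resulting sentences lie in $\ML{\cdot,{}^*,\ones,\odot_v,+,\times}$, and step~(a) of the proof of Proposition~\ref{prop:equipart}, which only relies on bilinearity of the basic operations, carries over unchanged to show that $+$ and $\times$ can be added to $\ML{\cdot,{}^*,\ones,\odot_v}$ without increasing distinguishing power. Putting these observations together, $\ML{\cdot,{}^*,\ones,\odot_v}$-equivalence forces $G$ and $H$ to have a common equitable partition, and Theorem~\ref{thm:C2} yields $G\equiv_{\ML{\cdot,{}^*,\ones,\diag}} H$.

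The main obstacle is the bookkeeping task of verifying that every occurrence of $\diag(\cdot)$ in the proof of Proposition~\ref{prop:equipart} sits in a position where the resulting diagonal matrix is immediately contracted, either against a vector on the right, or sandwiched between $\ones^*$ on the left and $\ones$ on the right. Once this has been checked for the finitely many subroutines used in that proof, the translation to $\odot_v$ is mechanical.
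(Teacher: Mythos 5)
Your proof is correct, but the harder direction takes a genuinely different route from the paper. For the implication $G\equiv_{\ML{\cdot,{}^*,\ones,\odot_v}} H\Rightarrow G\equiv_{\ML{\cdot,{}^*,\ones,\diag}} H$, the paper argues purely syntactically: by induction on the number of occurrences of $\diag(\cdot)$, any \emph{sentence} in $\ML{\cdot,{}^*,\ones,\diag,\odot_v}$ can be written as $e_1(X)\mdot\diag(e_2(X))\mdot e_3(X)$, and that innermost $\diag$ is eliminated by replacing the expression with $e_1(X)\mdot e_2(X)\mdot e_3(X)$ (scalar case) or $e_1(X)\mdot\bigl(e_2(X)\odot_v e_3(X)\bigr)$ (vector case); hence every $\diag$-sentence is equivalent to an $\odot_v$-sentence and the equivalence of the two notions of indistinguishability is immediate, with no appeal to equitable partitions. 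You instead go semantic: you invoke Theorem~\ref{thm:C2} and re-run the proof of Proposition~\ref{prop:equipart} inside the $\odot_v$-fragment, checking that each occurrence of $\diag(\cdot)$ there sits in a contractible position (either $\diag(v_1)\cdots\diag(v_k)\mdot u$ or $\ones^*\mdot\diag(z)\mdot\diag(z)\mdot\ones$). Your case analysis of those positions is accurate, and the appendix's linear-combination argument does cover $\odot_v$, so the addition of $+$ and $\times$ is legitimate. What the paper's route buys is brevity and self-containedness (no dependence on the equitable-partition machinery, and in fact the paper then \emph{derives} Corollary~\ref{corr:equipw} from this proposition); what your route buys is an explicit demonstration that the $\odot_v$-fragment can itself compute the coarsest equitable partition, which is conceptually informative but costs you the bookkeeping you acknowledge at the end.
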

\begin{proof}
We already observed at the beginning of this section that $G\equiv_{\ML{\cdot,{}^*,\ones,\diag}} H$ implies $G\equiv_{\ML{\cdot,{}^*,\ones,\odot_v}} H$.
\change{It remains to show the reverse implication. This can be proved by
a simple inductive argument, transforming sentences in $\ML{\cdot,{}^*,\ones,\diag,\odot_v}$ into sentences without $\diag(\cdot)$.
With each sentence $e(X)$ in $\ML{\cdot,{}^*,\ones,\diag,\odot_v}$
we associate, in a natural way, the number  of occurrences, denoted by $k$, of the $\diag(\cdot)$ operation in this sentence.}
\change{We will show, by induction on $k$,
that any sentence $e(X)$ in $\ML{\cdot,{}^*,\ones,\diag,\odot_v}$ is equivalent to a sentence $\tilde{e}(X)$ in $\ML{\cdot,{}^*,\ones,\odot_v}$. Here, equivalence of sentences means that $e(A)=\tilde{e}(A)$ for any input matrix $A$. Consider a sentence $e(X)$ in $\ML{\cdot,{}^*,\ones,\diag,\odot_v}$.
Clearly, if $k=0$, then $e(X)$
is already a sentence in $\ML{\cdot,{}^*,\ones,\odot_v}$. Assume now that any sentence
$e(X)$ in $\ML{\cdot,{}^*,\ones,\diag,\odot_v}$ with $k$ occurrences is equivalent to a sentence $\tilde{e}(X)$ in $\ML{\cdot,{}^*,\ones,\odot_v}$. 
Consider a sentence $e(X)$ in $\ML{\cdot,{}^*,\ones,\diag,\odot_v}$ having $k+1$ occurrences of $\diag(\cdot)$. 
Since $e(X)$ is a sentence, we can write
$e(X)$ as $e_1(X)\mdot\diag(e_2(X))\mdot e_3(X)$, where $e_1(X)$, $e_2(X)$ and $e_3(X)$ are expressions in $\ML{\cdot,{}^*,\ones,\diag,\odot_v}$. We will next eliminate the occurrence of $\diag(\cdot)$ in  $\diag(e_2(X))$. We distinguish between the following two cases, depending on the dimension of $e_2(A)$, were $A$ is an  $n\times n$-matrix. Suppose first that $e_2(A)$ is a scalar. Clearly, in this case $\diag(e_2(X))$ is equivalent to $e_2(X)$ and hence $e(X)$ is equivalent to the expression $e'(X):=e_1(X)\mdot e_2(X)\mdot e_3(X)$. Suppose next that $e_2(A)$ evaluates to an $n\times 1$ column vector,  with $n>1$. Then, necessarily, $e_1(A)$ evaluates to a $1\times n$ row vector and
$e_3(A)$ evaluates to an $n\times 1$ column vector. This implies that $e(X)$ is equivalent to $e'(X):=e_1(X)\mdot (e_2(X)\odot_v e_3(X))$. We remark that in both cases, $e'(X)$ is an expression in $\ML{\cdot,{}^*,\ones,\diag,\odot_v}$
consisting of  $k$ occurrences of $\diag(\cdot)$. By the induction hypothesis, $e'(X)$ (and hence also $e(X)$) is indeed equivalent to a sentence $\tilde{e}(X)$ in 
$\ML{\cdot,{}^*,\ones,\odot_v}$, as desired. Hence,  $G\equiv_{\ML{\cdot,{}^*,\ones,\odot_v}} H$ implies $G\equiv_{\ML{\cdot,{}^*,\ones,\diag}} H$.
}
\end{proof}

The above proof fails when the trace operation is present. The reason is that we can have sentences like $e_{L,k}(X)$ in Example~\ref{ex:pw}, \change{which cannot be decomposed as $e_1(X)\mdot \diag(e_2(X))\mdot e_3(X)$. The previous proof crucially relies on the fact that all sentences can be written in that form.}

We next consider $\ML{\cdot,\tr,{}^*,\ones,\odot_v}$. To analyse the distinguishability of graphs by sentences in this fragment, we follow the same approach as for $\ML{\cdot,\tr,\change{{}^*,}\ones,\diag}$.

\begin{corollary}\label{corr:equipw}
	Let $G$ and $H$ be two graphs of the same order. Then, $G\equiv_{\ML{\cdot,\tr,{}^*,\ones,\odot_v}} H$ implies that $G$ and $H$ have a common equitable partition. 
\end{corollary}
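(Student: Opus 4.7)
My plan is to deduce this as a short corollary from already-established results rather than re-doing the colour-refinement construction from scratch. The chain is as follows. Since $\{\cdot,{}^*,\ones,\odot_v\}\subseteq\{\cdot,\tr,{}^*,\ones,\odot_v\}$, the hypothesis $G\equiv_{\ML{\cdot,\tr,{}^*,\ones,\odot_v}} H$ immediately gives $G\equiv_{\ML{\cdot,{}^*,\ones,\odot_v}} H$. Proposition~\ref{prop:pwmultnotr} then yields $G\equiv_{\ML{\cdot,{}^*,\ones,\diag}} H$. Finally, Proposition~\ref{prop:equipart}, instantiated with ${\cal L}=\{\cdot,{}^*,\ones,\diag\}$, concludes that $G$ and $H$ must have a common equitable partition.

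This chaining approach is my preferred route because all of the real work (simulating the \textsc{GDCR} algorithm and the verification sentences that check the common-equitable-partition conditions) has already been carried out in the proof of Proposition~\ref{prop:equipart}, and the delicate passage from $\diag$ to $\odot_v$ has already been handled in Proposition~\ref{prop:pwmultnotr}. There is no real obstacle to this plan; one only needs to observe that the hypothesis with $\tr$ is strictly stronger than the one without, which is immediate from the language inclusion.

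If instead one wanted a self-contained direct argument (without invoking Proposition~\ref{prop:pwmultnotr}), I would re-run the proof of Proposition~\ref{prop:equipart} verbatim, replacing each occurrence of the $\diag(\cdot)$ operation by $\odot_v$. Every such occurrence in that proof appears in a context of the form $\diag(v(X))\cdot w(X)$ with $w(X)$ evaluating to a vector, and such an expression coincides with $v(X)\odot_v w(X)$. The two diagonal-matrix test sentences used there, $\mathsf{binary\_diag}(X)$ and $\mathsf{zerotest\_diag}(X)$, would be replaced by the vector-level tests $\ones(X)^*\cdot\bigl((v\odot_v v-v)\odot_v(v\odot_v v-v)\bigr)$ and $\ones(X)^*\cdot(v\odot_v v)$ respectively, exploiting the same sum-of-squares-of-reals trick. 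The main difficulty of this direct route is purely bookkeeping, which is why the short chain through Propositions~\ref{prop:pwmultnotr} and~\ref{prop:equipart} is preferable.
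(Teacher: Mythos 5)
Your chain of implications (drop $\tr$ by language inclusion, convert $\odot_v$ to $\diag$ via Proposition~\ref{prop:pwmultnotr}, then apply Proposition~\ref{prop:equipart} with ${\cal L}=\{\cdot,{}^*,\ones,\diag\}$) is exactly the argument the paper gives for this corollary. The proof is correct and takes essentially the same approach.
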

\begin{proof}
This is a direct consequence of Propositions~\ref{prop:equipart} and~\ref{prop:pwmultnotr}. Indeed, we know already that
 $G\equiv_{\ML{\cdot,{}^*,\ones,\diag}} H$ implies that 
$G$ and $H$ have a common equitable partition and we have just shown that $G\equiv_{\ML{\cdot,{}^*,\ones,\diag}} H$ if and only if 
$G\equiv_{\ML{\cdot,{}^*,\ones,\odot_v}} H$. It now suffices to observe that $\ML{\cdot,{}^*,\ones,\odot_v}$ is a smaller fragment than $\ML{\cdot,\tr,{}^*,\ones,\odot_v}$.\end{proof}

Furthermore, we can add pointwise vector multiplication to the list of operations in Proposition~\ref{prop:determined1}.  We defer the proof to the appendix.
\begin{proposition}\label{prop:determined2} 
	$\ML{\cdot,\tr,{}^*, \ones,\diag,\odot_v,\diag,+,\times,\Apply_{\mathsf{s}}[f],f\in\Omega}$-vectors are constant on equitable partitions. \qed
\end{proposition}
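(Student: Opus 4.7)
My plan is to prove Proposition~\ref{prop:determined2} by structural induction on expressions $e(X)$ in the fragment $\ML{\cdot,\tr,{}^*,\ones,\diag,\odot_v,+,\times,\Apply_{\mathsf{s}}[f],f\in\Omega}$, extending the induction used to prove Proposition~\ref{prop:determined1}. The inductive hypothesis I will carry is the one formalised in equation~(\ref{eq:det}): whenever $e(A_G)$ evaluates to an $n\times 1$ column vector, there exist scalars $a_1,\ldots,a_\ell\in\C$ such that $e(A_G)=\sum_{i=1}^\ell a_i\times\ones_{V_i}$, where $\ones_{V_1},\ldots,\ones_{V_\ell}$ are the indicator vectors of a fixed but arbitrary equitable partition ${\cal V}=\{V_1,\ldots,V_\ell\}$ of $G$.

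For every operation already included in the fragment treated by Proposition~\ref{prop:determined1}, the inductive step is literally the one carried out in the appendix of that proposition, so I only need to address the one genuinely new case, namely $e(X):=e_1(X)\odot_v e_2(X)$. In this case, both $e_1(A_G)$ and $e_2(A_G)$ must evaluate to $n\times 1$ column vectors, so the inductive hypothesis applies and gives $e_1(A_G)=\sum_{i=1}^\ell a_i\times\ones_{V_i}$ and $e_2(A_G)=\sum_{i=1}^\ell b_i\times\ones_{V_i}$ for scalars $a_i,b_i\in\C$. The key observation is that the indicator vectors $\ones_{V_1},\ldots,\ones_{V_\ell}$ have pairwise disjoint supports, so $\ones_{V_i}\odot_v\ones_{V_j}=\ones_{V_i}$ if $i=j$ and equals the zero vector otherwise. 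Expanding the pointwise product and using bilinearity of $\odot_v$ yields
\begin{linenomath}
\[ e(A_G)=\Bigl(\sum_{i=1}^\ell a_i\times\ones_{V_i}\Bigr)\odot_v\Bigl(\sum_{j=1}^\ell b_j\times\ones_{V_j}\Bigr)=\sum_{i=1}^\ell (a_i\times b_i)\times\ones_{V_i}, \]
\end{linenomath}
which is of the required form with scalars $a_i\times b_i$. This is essentially the same argument used for $\Apply_{\mathsf{v}}[f]$ in the proof of Proposition~\ref{prop:pwvectors}; indeed, $\odot_v$ is just a special instance of pointwise function application on two vector inputs.

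I do not expect a significant obstacle here: the argument for $\odot_v$ is short because the disjoint-support property of indicator vectors of a partition turns pointwise multiplication into pointwise multiplication of the coefficients. The only care one must take is that the result is a vector (so $\odot_v$ applies only when both operands are vectors), but this is guaranteed by well-formedness of $\MLANG$-expressions. All remaining inductive cases ($\cdot$, ${}^*$, $\ones(\cdot)$, $\diag(\cdot)$, $\tr$, $+$, scalar multiplication, $\Apply_{\mathsf{s}}[f]$) are handled by the same case analysis as in the appendix proof of Proposition~\ref{prop:determined1} and require no modification.
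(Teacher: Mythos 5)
Your proposal is correct and follows essentially the same route as the paper: it reuses the structural induction from Proposition~\ref{prop:determined1} verbatim for all previously treated operations and handles the single new case $e_1(X)\odot_v e_2(X)$ by writing both operands as linear combinations of the indicator vectors and exploiting that $\ones_{V_i}\odot_v\ones_{V_j}$ equals $\ones_{V_i}$ when $i=j$ and the zero vector otherwise, yielding coefficients $a_i\times b_i$. The only cosmetic difference is that the paper phrases the inductive hypothesis via $\diag(\ones_{V_i})\mdot e(A_G)=a_i\times\ones_{V_i}$ (together with a companion hypothesis for matrix-valued subexpressions), which is equivalent to the linear-combination form you state.
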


It remains to identify an appropriate \change{class of matrices for which conjugation is preserved by pointwise vector multiplication.} Let $G$ and $H$ be two graphs that have a common equitable partition.
As before, let ${\cal V}=\{V_1,\ldots, V_\ell\}$ and ${\cal W}=\{W_1,\ldots,W_\ell\}$ be equitable  partitions of $G$ and $H$, respectively. The corresponding
indicator vectors are denoted by $\ones_{V_i}$ and $\ones_{W_i}$, for $i=1,\ldots,\ell$, respectively. We say that a matrix $T$ \textit{preserves the partitions ${\cal V}$ and ${\cal W}$}
if $\ones_{V_i}=T\cdot\ones_{W_i}$
for all $i=1,\ldots,\ell$.
 We note that this condition is weaker than the compatibility notion used before (see the  proof of Lemma~\ref{lem:diag-sim} \change{where} we verified the \textit{preservation} of partitions for matrices that are \textit{compatible} with these partitions).

\begin{lemma}\label{lem:diag-sim2} 
Let $\ML{\cal L}$ be a matrix query language such that $\ML{\cal L}$-vectors are constant on equitable partitions. 	\change{Let $G$ and $H$ be two graphs of the same order and let ${\cal V}$ and ${\cal W}$ be equitable partitions of $G$ and $H$, respectively. Assume that ${\cal V}$ and ${\cal W}$ witness that $G$ and $H$ have a common equitable partition. Furthermore, 
let $T$ be a matrix which preserves ${\cal V}$ and ${\cal W}$.} Finally, Let $e_1(X)$  and $e_2(X)$ be  expressions in $\ML{\cal L}$ which evaluate to vectors. Then, if $e_1(A_G)$ and $e_1(A_H)$ are $T$-conjugate, and $e_2(A_G)$ and $e_2(A_H)$ are $T$-conjugate,
	then also $e_1(A_G)\odot_v e_2(A_G)$ and $e_1(A_H)\odot_v e_2(A_H)$ are $T$-conjugate. 
\end{lemma}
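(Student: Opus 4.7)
The plan is to mimic the proof of Lemma~\ref{lem:diag-sim} but now exploiting the orthogonality of indicator vectors of an equitable partition under the pointwise vector product $\odot_v$. Since $\ML{\cal L}$-vectors are constant on equitable partitions, I would first expand
\begin{linenomath}
\[
e_k(A_G)=\sum_{i=1}^\ell a_i^{(k)}\times\ones_{V_i}, \qquad
e_k(A_H)=\sum_{i=1}^\ell b_i^{(k)}\times\ones_{W_i},
\]
\end{linenomath}
for $k=1,2$, where $\ones_{V_i}$ and $\ones_{W_i}$ represent the equitable partitions $\mathcal{V}$ and $\mathcal{W}$ of $G$ and $H$, respectively, and $a_i^{(k)}, b_i^{(k)}\in\C$.

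The first key step is to argue that $a_i^{(k)}=b_i^{(k)}$ for every $i=1,\ldots,\ell$ and $k=1,2$. Since $T$ preserves $\mathcal{V}$ and $\mathcal{W}$, we have $T\mdot \ones_{W_i}=\ones_{V_i}$. Combined with the $T$-conjugation assumption $e_k(A_G)=T\mdot e_k(A_H)$, this yields
\begin{linenomath}
\[
\sum_{i=1}^\ell a_i^{(k)}\times\ones_{V_i}=\sum_{i=1}^\ell b_i^{(k)}\times(T\mdot \ones_{W_i})=\sum_{i=1}^\ell b_i^{(k)}\times\ones_{V_i}.
\]
\end{linenomath}
Because the indicator vectors $\ones_{V_1},\ldots,\ones_{V_\ell}$ are supported on disjoint subsets of vertices and are hence linearly independent, identifying coefficients gives $a_i^{(k)}=b_i^{(k)}$, as desired.

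The second step uses the crucial property that distinct indicator vectors of a partition are orthogonal under $\odot_v$: $\ones_{V_i}\odot_v\ones_{V_j}=\ones_{V_i}$ if $i=j$ and $\zerovec$ otherwise, and similarly for the $\ones_{W_i}$'s. Expanding bilinearly,
\begin{linenomath}
\[
e_1(A_G)\odot_v e_2(A_G)=\sum_{i=1}^\ell a_i^{(1)}a_i^{(2)}\times\ones_{V_i},\qquad
e_1(A_H)\odot_v e_2(A_H)=\sum_{i=1}^\ell a_i^{(1)}a_i^{(2)}\times\ones_{W_i},
\]
\end{linenomath}
where I have already used the coefficient identity from step one. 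Applying $T$ to the second expression and using $T\mdot\ones_{W_i}=\ones_{V_i}$ once more,
\begin{linenomath}
\[
T\mdot\bigl(e_1(A_H)\odot_v e_2(A_H)\bigr)=\sum_{i=1}^\ell a_i^{(1)}a_i^{(2)}\times\ones_{V_i}=e_1(A_G)\odot_v e_2(A_G),
\]
\end{linenomath}
which establishes $T$-conjugation of the two pointwise products.

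The only potentially delicate point is confirming the coefficient identity $a_i^{(k)}=b_i^{(k)}$; here one needs to carefully exploit only the partition-preservation $T\mdot \ones_{W_i}=\ones_{V_i}$ (which is strictly weaker than the compatibility condition used in Lemma~\ref{lem:diag-sim}), rather than full compatibility with $\diag(\ones_{V_i})$. Fortunately, the linear independence of indicator vectors makes this immediate, so there is no real obstacle and the lemma follows directly.
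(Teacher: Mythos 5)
Your proof is correct and follows essentially the same route as the paper's: expand both vectors in the indicator-vector basis via the constant-on-equitable-partitions property, match the coefficients using $T\mdot\ones_{W_i}=\ones_{V_i}$, and exploit that $\ones_{V_i}\odot_v\ones_{V_j}$ is $\ones_{V_i}$ or the zero vector. The only (harmless) variation is that you derive the coefficient identity from linear independence of the indicator vectors, whereas the paper obtains it by pairing with $\tp{\ones_{V_i}}$; both hinge on the same preservation hypothesis.
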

\begin{proof}
The proof is similar to the proof of Lemma~\ref{lem:diag-sim}. Consider $e(X):=e_1(X)\odot_v e_2(X)$. We distinguish between two cases, depending on the dimensions of $e(A_G)$. First, if $e(A_G)$ is a sentence then we know by induction that $e_1(A_G)=e_1(A_H)$ and $e_2(A_G)=e_2(A_H)$. Hence, 
	\begin{linenomath}
		\postdisplaypenalty=0 
		\begin{align*}
 e(A_G)&=e_1(A_G)\odot_v e_2(A_G)=e_1(A_G)\mdot e_2(A_G)\\
 &=e_1(A_H)\mdot e_2(A_H)=e_1(A_H)\odot_v e_2(A_H)=e(A_H). 
 	\end{align*}
	\end{linenomath}
	Next, if $e_1(A_G)$ and $e_2(A_G)$ are (column) vectors, then we know that $e_1(A_G)=T\mdot e_1(A_H)$ and $e_2(A_G)=T\mdot e_2(A_H)$.
	We argued in the proof of Lemma~\ref{lem:diag-sim} that when $\ones_{V_i}=T\cdot\ones_{W_i}$ holds for $i=1,\ldots,\ell$, then since vectors are constant on equitable partitions,
	$e_1(A_G)=\sum_{i=1}^\ell a_i\times \ones_{V_i}=\sum_{i=1}^\ell a_i\times (T\mdot \ones_{W_i})=T\mdot e_1(A_H)$ and 
	$e_2(A_G)=\sum_{i=1}^\ell b_i\times \ones_{V_i}=\sum_{i=1}^\ell b_i\times (T\mdot \ones_{W_i})=T\mdot e_2(A_H)$.	
	We may now conclude that 
	\begin{linenomath}
		\postdisplaypenalty=0 
		\begin{align*}
			e(A_G)&=e_1(A_G)\odot_v e_2(A_G)=\sum_{i=1}^\ell (a_i\times b_i)\times \ones_{V_i}=\sum_{i=1}^\ell (a_i\times b_i)\times (T\mdot \ones_{W_i})\\
			& =T\cdot\bigl(\sum_{i=1}^\ell (a_i\times b_i) \times \ones_{W_i})\bigr)=T\cdot(e_1(A_H)\odot_v e_2(A_H))=T\mdot e(A_H). 
		\end{align*}
	\end{linenomath}
	Hence, $e(A_G)$ and $e(A_H)$ are indeed $T$-conjugate. 
\end{proof}

We can now state a characterisation of $\ML{\cdot,\tr,{}^*,\ones,\odot_v}$-equivalence.
\begin{theorem}\label{thm:pwtrpart}
Let $G$ and $H$ be two graphs of the same order. Let ${\cal V}$ and ${\cal W}$ be equitable partitions of $G$ and $H$, respectively, that witness that $G$ and $H$ have a common equitable partition.
Then,
$G\equiv_{\ML{\cdot,\tr,{}^*,\ones,\odot_v}} H$ if and only if there exists an orthogonal matrix $O$  which preserves ${\cal V}$ and ${\cal W}$ and such that $A_G\mdot O=O\mdot A_H$.
\end{theorem}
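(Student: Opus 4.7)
The plan follows the two-step template of Theorem~\ref{thm:treqpart}, adjusted so that every appeal to $\diag(\cdot)$ is replaced by an appeal to $\odot_v$.

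For the \emph{if} direction, I would first verify that an orthogonal $O$ preserving ${\cal V}$ and ${\cal W}$ is automatically doubly quasi-stochastic: $O\mdot\ones = \sum_{i=1}^\ell O\mdot\ones_{W_i} = \sum_{i=1}^\ell \ones_{V_i} = \ones$, and, using $\tp{O}=O^{-1}$, also $\tp{O}\mdot\ones=\ones$. Transposing $A_G\mdot O = O\mdot A_H$ and using the symmetry of $A_G$ and $A_H$ gives $A_H\mdot O^* = O^*\mdot A_G$, with $O^* = \tp{O}$ again orthogonal, doubly quasi-stochastic, and preserving ${\cal W}$ and ${\cal V}$. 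Induction on the structure of expressions, invoking Lemmas~\ref{lem:multp-sim}, \ref{lem:trace-sim}, \ref{lem:complextranspose-elim1}, \ref{lem:ones-sim} and \ref{lem:diag-sim2}, yields that $O$- and $O^*$-conjugation are preserved by every operation of the fragment. The applicability of Lemma~\ref{lem:diag-sim2} is ensured by Proposition~\ref{prop:determined2}. For sentences, conjugation collapses to equality, giving $e(A_G)=e(A_H)$.

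For the \emph{only if} direction, I would proceed as in Theorem~\ref{thm:treqpart}. Step~1: show that $+$ and $\times$ can be added to the fragment for free by repeating verbatim step~(a) of the proof of Proposition~\ref{prop:equipart}; the bilinearity of matrix multiplication used there is replaced by bilinearity of $\odot_v$ wherever needed. Step~2: adapt the construction of the indicator-vector expressions $\mathsf{eqpart}_i(X)$ from the same proof. In that construction, $\diag(\cdot)$ is used only to simulate pointwise multiplication of vectors; every such use can be replaced by $\odot_v$ directly. This yields expressions in $\ML{\cdot,{}^*,\ones,\odot_v,+,\times}$ computing the indicator vectors $\ones_{V_i}$ and $\ones_{W_i}$ of the coarsest common equitable partition (whose existence is guaranteed by Corollary~\ref{corr:equipw}). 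Step~3: apply the real form of Specht's theorem to the transposition-closed sequences
\begin{linenomath*}
\[
{\cal A} := A_G,\, J,\, \ones_{V_1}\tp{\ones},\, \ones\tp{\ones_{V_1}},\, \ldots,\, \ones_{V_\ell}\tp{\ones},\, \ones\tp{\ones_{V_\ell}},
\]
\end{linenomath*}
and ${\cal B}$ defined analogously with $W_i$ in place of $V_i$. For every word $w$, the trace $\tr(w({\cal A}))$ is expressible as a sentence in $\ML{\cdot,\tr,{}^*,\ones,\odot_v,+,\times}$ by substituting $\ones(X)(\ones(X))^*$ for $J$, $\mathsf{eqpart}_i(X)(\ones(X))^*$ for $\ones_{V_i}\tp{\ones}$, and $\ones(X)(\mathsf{eqpart}_i(X))^*$ for $\ones\tp{\ones_{V_i}}$. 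Since these sentences agree on $A_G$ and $A_H$ by the equivalence hypothesis together with Step~1, Specht's theorem delivers an orthogonal $O$ with $A_G\mdot O = O\mdot A_H$, $JO = OJ$, and $\ones_{V_i}\tp{\ones}\mdot O = O\mdot \ones_{W_i}\tp{\ones}$. From $JO=OJ$ and orthogonality, the argument in the proof of Theorem~\ref{thm:treqpart} forces $O\mdot\ones = \pm\ones$, and after possibly replacing $O$ by $-O$ we may assume $\tp{\ones}\mdot O = \tp{\ones}$. The last identity then reduces to $\ones_{V_i}\tp{\ones} = (O\mdot\ones_{W_i})\tp{\ones}$, whence $\ones_{V_i} = O\mdot\ones_{W_i}$; thus $O$ preserves the partitions, as required.

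The main obstacle is the only-if step of building the trace-identity sentences without access to $\diag(\cdot)$. The critical idea is to apply Specht's theorem to the rank-one matrices $\ones_{V_i}\tp{\ones}$, in place of the diagonal matrices $\diag(\ones_{V_i})$ used for Theorem~\ref{thm:treqpart}: rank-one matrices can be assembled from their vector factors by ordinary matrix multiplication with $\ones$, which is why $\odot_v$ suffices. This substitution also explains why the condition obtained on $O$ is merely preservation of ${\cal V}$ and ${\cal W}$, rather than the strictly stronger compatibility condition of Theorem~\ref{thm:treqpart}, in agreement with the separation witnessed by Example~\ref{ex:pw}.
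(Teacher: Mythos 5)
Your proposal is correct, and its overall architecture (conjugation-preservation for the \emph{if} direction, Specht's theorem applied to expressible trace identities for the \emph{only if} direction) matches the paper's. The one genuine difference is the choice of matrices fed to Specht's theorem: the paper uses the symmetric rank-one matrices $\ones_{V_i}\mdot\tp{\ones_{V_i}}$ (and no $J$), which yields only $O\mdot\ones_{W_i}=\pm\ones_{V_i}$ for each $i$ and then requires a separate sign-consistency argument, deferred to the appendix, exploiting $\ones=\sum_i\ones_{V_i}=\sum_i\ones_{W_i}$ to rule out mixed signs. Your sequences $J,\ \ones_{V_i}\tp{\ones},\ \ones\tp{\ones_{V_i}}$ are also closed under transposition and equally expressible from $\mathsf{eqpart}_i(X)$ and $\ones(X)$; after the single global normalisation $O\mdot\ones=\ones$ forced by $J\mdot O=O\mdot J$ (which survives the rescaling $O\mapsto -O$, as do the other identities), each identity $\ones_{V_i}\tp{\ones}=(O\mdot\ones_{W_i})\tp{\ones}$ collapses immediately to $\ones_{V_i}=O\mdot\ones_{W_i}$ by cancelling the nonzero factor $\tp{\ones}$. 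This is a cleaner derivation of the preservation condition that makes the appendix argument unnecessary. Your other observations — that the $\mathsf{eqpart}_i(X)$ construction only ever uses $\diag(\cdot)$ in the pattern $\diag(u)\mdot\diag(v)\cdots\mdot\ones$, hence is directly transcribable into $\odot_v$, and that an orthogonal partition-preserving $O$ is automatically doubly quasi-stochastic so that Lemma~\ref{lem:ones-sim} applies — coincide with what the paper does.
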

\begin{proof}
To show that the existence of a matrix $O$, as stated in the Theorem, implies $G\equiv_{\ML{\cdot,\,\tr,{}^*,\ones,\odot_v}} H$, we argue as before. More precisely, we show that $O$-conjugation and $O^*$-conjugation is preserved by the operations in $\ML{\cdot,\tr,{}^*,\ones,\odot_v}$. This is, however, a direct consequence of Lemmas~\ref{lem:multp-sim},~\ref{lem:trace-sim},~\ref{lem:complextranspose-elim1},~\ref{lem:ones-sim} and~\ref{lem:diag-sim2}. We remark that Proposition~\ref{prop:determined2} guarantees that Lemma~\ref{lem:diag-sim2} can be applied. Indeed, Proposition~\ref{prop:determined2} implies that $\ML{\cdot,\tr,{}^*,\ones,\odot_v}$-vectors are constant on equitable partitions. Furthermore, since $\ones_{V_i}=O\cdot\ones_{W_i}$, for all $i=1,\ldots,\ell$, and $\ones=\sum_{i=1}^\ell \ones_{V_i}=\sum_{i=1}^\ell \ones_{W_i}$, we have that $\ones=O\cdot\ones$. Hence, $O$ is doubly quasi-stochastic and Lemma~\ref{lem:ones-sim} applies. Moreover, if $A_G\mdot O=O\mdot A_H$ then $A_H\mdot O^*=O^*\mdot A_G$ due to the orthogonality of $O$, and $O^*$  \change{preserves ${\cal W}$ and ${\cal V}$}. Hence, Lemma~\ref{lem:complextranspose-elim1} applies.
We may thus conclude that all expressions in $\ML{\cdot,\tr,{}^*,\ones,\odot_v}$ preserve $O$- and $O^*$-conjugation. Hence, $e(A_G)=e(A_H)$ for any sentence $e(X)$ in $\ML{\cdot,\tr,{}^*,\ones,\odot_v}$.
	
	For the converse direction, we need to show that $G\equiv_{\ML{\cdot,{}^*,\,\tr,\ones,\odot_v}} H$ implies that there exists an orthogonal matrix $O$ such that $A_G\mdot O=O\mdot A_H$, and where $O$ \change{preserves the partitions ${\cal V}$ and ${\cal W}$.}
	This can be shown, just like in the proof of Theorem~\ref{thm:treqpart}, by means of trace conditions. In particular, we impose trace conditions such that $O$ satisfies $A_G\mdot O=O\mdot A_H$ and $(\ones_{V_i}\mdot (\ones_{V_i})^*)\mdot O=O\mdot (\ones_{W_i}\cdot(\ones_{W_i})^*)$,
	for $i=1,\ldots,\ell$. These conditions replace conditions~(\ref{eq:cond2}) and~(\ref{eq:cond3}) in the proof of Theorem~\ref{thm:treqpart}. We show in the appendix that this indeed implies that $O$ \change{can be chosen such that it preserves ${\cal V}$ and ${\cal W}$}. The argument is based on a generalisation of Lemma 4 in Th\"une~\cite{Thune2012}. As in the proof of Theorem~\ref{thm:treqpart}, the trace conditions $e_w(X)$ which ensure the existence of an orthogonal matrix $O$ such that $(\ones_{V_i}\mdot (\ones_{V_i})^*)\mdot O=O\mdot (\ones_{W_i}\cdot(\ones_{W_i})^*)$ holds for $i=1,\ldots,\ell$, rely on the expressions $\mathsf{eqpart}_i(X)$ (from the proof of Proposition~\ref{prop:equipart}). These expressions use addition and scalar multiplication. \change{It is now easily verified (see appendix) that $G\equiv_{\ML{\cdot,\tr,{}^*,\ones,\odot_v}} H$
	implies $G\equiv_{\ML{\cdot,\tr,{}^*,\ones,\odot_v,+,\times}} H$} and hence, $G\equiv_{\ML{\cdot,\tr,{}^*,\ones,\odot_v}} H$
 implies that
	$e_w(A_G)=e_w(A_H)$.	\end{proof}

As it turns out, $\ML{\cdot,\tr,{}^*,\ones,\odot_v}$-equivalence precisely captures co-spectral and fractional isomorphic graphs.
\begin{proposition}\label{prop:trstoch} 
	Let $G$ and $H$ be graphs of the same order. Then, $G\equiv_{\ML{\cdot,\tr,{}^*,\ones,\odot_v}} H$ if and only if $G$ and $H$ are co-spectral and have a common equitable partition. 
	\end{proposition}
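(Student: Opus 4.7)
The forward direction is immediate from earlier results. Indeed, since $\ML{\cdot,\tr}\subseteq \ML{\cdot,\tr,{}^*,\ones,\odot_v}$, Proposition~\ref{theorem:trace} together with Proposition~\ref{prop:tracesim} implies that $G$ and $H$ are co-spectral; and since $\ML{\cdot,{}^*,\ones,\odot_v}\subseteq \ML{\cdot,\tr,{}^*,\ones,\odot_v}$, Corollary~\ref{corr:equipw} gives that $G$ and $H$ have a common equitable partition.

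For the converse, assume $G$ and $H$ are co-spectral and admit a common equitable partition witnessed by ${\cal V}=\{V_1,\ldots,V_\ell\}$ of $G$ and ${\cal W}=\{W_1,\ldots,W_\ell\}$ of $H$. By Theorem~\ref{thm:pwtrpart}, it suffices to construct an orthogonal matrix $O$ which preserves ${\cal V}$ and ${\cal W}$ (i.e.\ $O\mdot\ones_{W_i}=\ones_{V_i}$ for $i=1,\ldots,\ell$) and satisfies $A_G\mdot O=O\mdot A_H$. The construction proceeds by decomposing both adjacency matrices according to their equitable partitions and pasting together two orthogonal maps.

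Let $B_V$ be the $n\times\ell$ matrix whose columns are $\ones_{V_1},\ldots,\ones_{V_\ell}$, and define $P_V:=B_V(\tp{B_V}\mdot B_V)^{-1}\tp{B_V}$, the orthogonal projection onto $\mathrm{span}(B_V)$; define $B_W$ and $P_W$ analogously. Using that ${\cal V}$ is equitable, $A_G\mdot B_V=B_V\mdot Q$ for the $\ell\times\ell$ quotient matrix $Q$, from which one computes $A_G\mdot P_V=P_V\mdot A_G\mdot P_V$, and since $A_G$ is symmetric, $A_G$ commutes with $P_V$. The analogous statement holds for $A_H$ and $P_W$. Hence $A_G$ is block diagonal with respect to $\mathrm{im}(P_V)\oplus\mathrm{im}(I-P_V)$, and likewise for $A_H$. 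The orthonormal basis $\hat{\ones}_{V_i}:=\ones_{V_i}/\sqrt{|V_i|}$ of $\mathrm{im}(P_V)$ represents $A_G|_{\mathrm{im}(P_V)}$ by the matrix $\tilde{Q}$ with entries $\tilde{Q}_{ij}=Q_{ij}\sqrt{|V_i|/|V_j|}$, which depends only on the part sizes and the quotient matrix; since $|V_i|=|W_i|$ and the quotient matrices coincide (witnessing that ${\cal V}$ and ${\cal W}$ are a common equitable partition), the same matrix $\tilde{Q}$ represents $A_H|_{\mathrm{im}(P_W)}$ in the basis $\hat{\ones}_{W_i}$.

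Define $O_1:\mathrm{im}(P_W)\to\mathrm{im}(P_V)$ by $O_1\mdot\hat{\ones}_{W_i}:=\hat{\ones}_{V_i}$ for $i=1,\ldots,\ell$. This is an isometry between the subspaces, intertwines $A_H|_{\mathrm{im}(P_W)}$ with $A_G|_{\mathrm{im}(P_V)}$ (both are represented by $\tilde{Q}$), and satisfies $O_1\mdot\ones_{W_i}=\sqrt{|W_i|/|V_i|}\times\ones_{V_i}=\ones_{V_i}$. For the complementary subspaces, co-spectrality of $G$ and $H$ together with equality of the spectra of $A_G|_{\mathrm{im}(P_V)}$ and $A_H|_{\mathrm{im}(P_W)}$ (both equal to the spectrum of $\tilde{Q}$) implies that $A_G|_{\mathrm{im}(I-P_V)}$ and $A_H|_{\mathrm{im}(I-P_W)}$ have identical spectra with multiplicities. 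Both being symmetric, they are orthogonally equivalent by the spectral theorem, so there is an isometry $O_2:\mathrm{im}(I-P_W)\to\mathrm{im}(I-P_V)$ with $A_G\mdot O_2=O_2\mdot A_H$ on these subspaces. Defining $O$ to act as $O_1$ on $\mathrm{im}(P_W)$ and as $O_2$ on $\mathrm{im}(I-P_W)$ yields an orthogonal matrix satisfying $A_G\mdot O=O\mdot A_H$ and $O\mdot\ones_{W_i}=\ones_{V_i}$ for all $i$, which is exactly what Theorem~\ref{thm:pwtrpart} requires.

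The main obstacle is the construction of $O_1$: one must recognize that an equitable partition makes $\mathrm{span}(B_V)$ an $A_G$-invariant subspace on which $A_G$ depends only on the shared quotient data, so that the part-size-normalized indicator bases transport intertwiningly between $G$ and $H$. Once this is in place, co-spectrality canonically handles the orthogonal complement via the spectral theorem.
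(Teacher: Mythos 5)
Your proof is correct and follows essentially the same route as the paper's: both directions cite the same earlier results, and the converse is established by the same orthogonal decomposition of $\R^n$ into the span of the partition's indicator vectors and its orthogonal complement, with the common equitable partition supplying the intertwiner on the first block and co-spectrality plus the spectral theorem supplying it on the second. The only difference is a welcome streamlining: by noting that the normalized indicator bases represent both restrictions by the same matrix $\tilde{Q}$, you bypass the paper's detour through diagonalising the quotient matrix $C$ (and its footnoted minimal-polynomial argument), while still arriving at the same map $O_1=\sum_{i=1}^{\ell}\frac{1}{|V_i|}\times(\ones_{V_i}\mdot\tp{\ones}_{W_i})$ on that block.
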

	We also observe that $G\equiv_{\ML{\cdot,\tr,^*,\ones,\odot_v}} H$ if and only if $\textsf{HOM}_{\cal F}(G)=\textsf{HOM}_{\cal F}(H)$ where
	${\cal F}$ consists of all trees and cycles. This follows from the results in Dell et al.~\cite{Dell2018}.
	
	\begin{proof}
If $G\equiv_{\ML{\cdot,\tr,{}^*,\ones,\odot_v}} H$, then $G$ and $H$ must have a common equitable partition by Corollary~\ref{corr:equipw}. 
Furthermore, we know from \change{Propositions}~\ref{prop:tracesim} and~\ref{theorem:trace}, that $G$ and $H$ must also be co-spectral. 
For the converse, we explicitly construct an orthogonal matrix $O$ such that $A_G\mdot O=O\mdot A_H$ and $O$ preserves \change{equitable partitions of $G$ and $H$ that witness} that $G$ and $H$ have a common equitable partition.
Then, Theorem~\ref{thm:pwtrpart} implies that $G\equiv_{\ML{\cdot,\tr,{}^*,\ones,\odot_v}} H$ holds.

We next construct the matrix $O$.
\change{Let $G$ and $H$ be two graphs of order $n$ and assume that ${\cal V}=\{V_1,\ldots,V_\ell\}$ and ${\cal W}=\{W_1,\ldots,W_\ell\}$ are equitable partitions of $G$ and $H$, respectively, that witness that $G$ and $H$ have a common equitable partition. As before, we denote by $\ones_{V_1},\ldots,\ones_{V_\ell}$ the indicator vectors corresponding to ${\cal V}$ and by
$\ones_{W_1},\ldots,\ones_{W_\ell}$ the indicator vectors corresponding to ${\cal W}$.}

We first observe the following. It is known that, for indicator vectors representing an equitable partition, the subspace $U_G=\textsf{span}(\ones_{V_1},\ldots,\ones_{V_\ell})$ of $\C^n$ is an $A_G$-invariant subspace (see e.g., Lemma 5.2 in~\cite{Chan1997}).
In other words, for any $v\in U_G$, $A_G\mdot v\in U_G$. Furthermore, since $A_G$ is a symmetric matrix, also the orthogonal complement subspace $U_G^\bot$ is $A_G$-invariant
(see e.g., Theorem 36 in~\cite{Kaplansky1974}). Here, $U_G^\bot$ consists of all vectors $v'$ in $\C^n$ that are orthogonal to any vector $v\in U_G$, i.e., such $\tp{v}\mdot v'=0$ holds.
Let us interpret $A_G$ as the linear operator $T_G:\C^n\to \C^n:v\mapsto A_G\mdot v$. This is a diagonalisable operator (because $A_G$ is symmetric) and it is known that
the  restrictions $T_G|_{U_G}$ and $T_G|_{U_G^\bot}$ are also diagonalisable operators (because of the invariance of these two subspaces (see e.g., Corollary 15.9 in~\cite{gorodentsev2016})). This implies, as this is merely a restatement of a linear operator to be diagonalisable,  that
there exists eigenvectors $v_1,\ldots,v_\ell,v_1',\ldots,v_{n-\ell}'$ of $A_G$ such that $U_G=\textsf{span}(v_1,\ldots,v_\ell)$ and $U_G^\bot=\textsf{span}(v_1',\ldots,v_{n-\ell}')$.
Furthermore, if we denote by $P_G$ the matrix with columns $\ones_{V_1},\ldots,\ones_{V_\ell}$, then $A_G\mdot P_G=P_G\mdot C$ with $C$ the $\ell\times\ell$-matrix
such that $C_{ij}=\deg(v,V_j)$ for some (arbitrary) vertex $v\in V_i$ (see e.g., Lemma 6.1 in~\cite{Chan1997}). Also $C_{ij}$ is diagonalisable (this follows from the fact that the characteristic polynomial of $C$ divides that of $A_G$ (see e.g., Theorem 6.2 in~\cite{Chan1997})
and hence there exists $\ell$ linearly independent eigenvectors $c_1,\ldots,c_\ell$ of $C$\footnote{We here use that we work in the algebraically closed field $\C$ for which being diagonalisable coincides with the minimal polynomial being a product of monic linear factors of the form $(x-\lambda)$. So, since
the characteristic polynomial of $C$ divides that of $A_G$, also the minimal polynomial of $C$ divides that of $A_G$. Since $A_G$ is diagonalisable, its minimal polynomial is a product of monic linear factors. Hence, also the minimal polynomial of $C$ has this form and $C$ is diagonalisable as well.}. It is known that $v_i=P_G\mdot c_i$, for $i=1,\ldots,\ell$, are independent
eigenvectors of $A_G$. More precisely, if $C\mdot c_i=\lambda_i \times c_i$ then $A_G\mdot (P_G\mdot c_i)=\lambda_i\times (P_G\mdot c_i)$.
Moreover, $P_G\mdot c_i\in U_G$, for $i=1,\ldots,\ell$. We may thus assume
that $U_G$ is spanned by $P_G\mdot c_1,\ldots, P_G\mdot c_\ell$.

The reasoning above also holds for $A_H$, i.e., there are eigenvectors $w_1,\ldots,w_\ell,w_1',\allowbreak \ldots,w_{n-\ell}'$ of $A_H$ such that $U_H=\textsf{span}(w_1,\ldots,w_\ell)$ and $U_H^\bot=\textsf{span}(w_1',\ldots,w_{n-\ell}')$. 

\change{Important to observe here is that ${\cal V}$ and ${\cal W}$ witness that  $G$ and $H$ have a common equitable partition.} As a consequence, $A_H\mdot P_H=P_H\mdot C$, where $P_H$ is now the matrix with columns $\ones_{W_1},\ldots,\ones_{W_\ell}$ and $C$ is the same $\ell\times\ell$-matrix as used above.
We may thus assume
that $U_H$ is spanned by $P_H\mdot c_1,\ldots, P_H\mdot c_\ell$ and furthermore, $P_G\mdot c_i$ and $P_H\mdot c_i$ are eigenvectors of $A_G$ and $A_H$, respectively, both belonging to
the same eigenvalue $\lambda_i$ of $C$.

We next use that $G$ and $H$ are co-spectral. The argument above, combined with co-spectrality, implies that the (multi-set) of eigenvalues corresponding to the eigenvectors spanning $U_G$ and $U_H$ are
the same. This implies in turn, by co-spectrality, that we may also assume that $A_G\mdot v_i'=\lambda_i\times v_i'$ and $A_H\mdot w_i'=\lambda_i\times w_i'$, for $i=1,\ldots,n-\ell$, for some eigenvalues $\lambda_i$
of $A_G$ (and $A_H$). We recall that $U_G$ and $U_H$ are also spanned by 
$\ones_{V_i},\ldots,\ones_{V_\ell}$ and $\ones_{W_1},\ldots,\ones_{W_\ell}$, respectively. This implies, that the eigenvectors spanning $U_G^\bot$ and $U_H^\bot$ are necessarily orthogonal
to these indicator vectors. 

We define $O$ as the matrix $O_G\mdot \tp{O_H}$, where $O_G$ is the orthogonal matrix whose columns consist of the vectors  $\frac{1}{\sqrt{n_1}}\ones_{V_1},\ldots,\frac{1}{\sqrt{n_\ell}}\ones_{V_\ell},v_1',\ldots,v_{n-\ell}'$
and $O_H$ is the  orthogonal matrix whose columns consist of the vectors  $\frac{1}{\sqrt{n_1}}\ones_{W_1},\ldots,\frac{1}{\sqrt{n_\ell}}\one_{W_\ell},w_1',\ldots,w_{n-\ell}'$, where $n_i=|V_i|=|W_i|$ and \change{where} we assume the
eigenvectors $v_i'$ and $w_i'$ to be normalised. As a consequence, $O$ is clearly an orthogonal matrix and thus $O\mdot \tp{O}=I=\tp{O}\mdot O$ holds. In view of the construction of the eigenvectors,
we have the following more explicit expression for $O$:
\begin{linenomath} 
\[
O=\sum_{j=1}^\ell \Bigl(\frac{1}{n_j}\times (\ones_{V_j}\cdot\tp{\ones}_{W_j})\Bigr) + \sum_{j=1}^{n-\ell} v_j'\cdot\tp{(w_j')}.\]
\end{linenomath} 
We verify the required conditions.
To begin with, we note that $O\cdot\ones_{W_i}=\ones_{V_i}$, for $i=1,\ldots,\ell$. Indeed, this follows from  the fact that 
$\tp{\ones}_{W_j}\mdot \ones_{W_i}$ is zero when $i\neq j$ and is $|W_i|=n_i$ when $i=j$. Moreover, $\tp{(w_j')}\cdot\ones_{W_i}=0$ because of $w_j'\in U_H^\bot$, for all $j=1,\ldots,n-\ell$.
Hence, $O$ indeed preserves \change{${\cal V}$ and ${\cal W}$.}
It remains to verify that
$A_G\mdot O=O\mdot A_H$. We verify this for both terms in the above expression for $O$. Since $v_i'$ and $w_i'$ are eigenvectors of $A_G$ and $A_H$, respectively, belonging to the
same eigenvalue $\lambda_i$, we have for the second term:
\allowdisplaybreaks
\begin{linenomath}
		\postdisplaypenalty=0 
\begin{align*}
A_G\mdot \Bigl(\sum_{j=1}^{n-\ell} v_j'\cdot\tp{(w_j')}\Bigr)&=\sum_{j=1}^{n-\ell} A_G\mdot v_j'\cdot\tp{(w_j')}=\sum_{j=1}^{n-\ell} \lambda_j \times ( v_j'\cdot\tp{(w_j')})\\
&=\sum_{j=1}^{n-\ell} v_j'\cdot\tp{(w_j')}\mdot A_H=\Bigl(\sum_{j=1}^{n-\ell} v_j'\cdot\tp{(w_j')}\Bigr)\mdot A_H.
\end{align*}\end{linenomath} 
For the first term in the expression for $O$, we consider the matrices\allowdisplaybreaks 
\begin{linenomath}
		\postdisplaypenalty=0\begin{align*}
B_G=A_G\mdot \Bigl(\sum_{i=1}^\ell \frac{1}{n_j}\times (\ones_{V_i}\cdot\tp{\ones}_{W_i}) \Bigr)&=\sum_{i=1}^\ell\sum_{j=1}^\ell  \bigl(\frac{1}{n_i} \times \deg(v_i,V_j)\bigr)\times (\ones_{V_j}\cdot\tp{\ones}_{W_i})\\
B_H= \Bigl(\sum_{i=1}^\ell \frac{1}{n_i}\times (\ones_{V_i}\cdot\tp{\ones}_{W_i}) \Bigr)\mdot A_H&=\sum_{i=1}^\ell\sum_{j=1}^\ell  \bigl(\frac{1}{n_i} \times\deg(w_i,W_j)\bigr)\times (\ones_{V_i}\cdot\tp{\ones}_{W_j}),
\end{align*}\end{linenomath} 
for some (arbitrary) vertices $v_i\in V_i$ and $w_i\in W_i$. We here used that the indicator vectors represent equitable partitions.
We now look at the entries in the matrices $B_G$ and $B_H$ and observe that  $J=\sum_{i,j=1}^\ell \ones_{V_j}\cdot\tp{\ones}_{W_i}$. Hence, for each $p,q\in\{1,\ldots,n\}$ we can define
$f(p)$ and $g(q)$ as the unique indexes of indicator vectors $\ones_{V_{f(p)}}$ and $\ones_{W_{g(q)}}$ such that they hold value $1$ at position $p$ and $q$, respectively.
Then, 
\begin{linenomath} 
\[
(B_G)_{p,q}=\frac{1}{n_{f(p)}}\times \deg(v_{f(p)},V_{g(q)})=\frac{1}{n_{f(p)}}\times \deg(w_{f(p)},W_{g(q)})=(B_H)_{p,q},\]
\end{linenomath} 
because the indicator vectors \change{correspond to equitable partitions that witness that $G$ and $H$ have a common equitable partition.} Hence, we may indeed conclude that $A_G\mdot O=O\mdot A_H$.
\end{proof}

\begin{example}\normalfont
We already mentioned that the graphs 
$G_6$ (\!\!\raisebox{-1.1ex}{\mbox{ 
\includegraphics[height=0.5cm]{graphG6}}}) and $H_6$ (\!\raisebox{-1.1ex}{\mbox{ 
\includegraphics[height=0.5cm]{graphH6}}}) are  co-spectral and have a common equitable partition. Proposition~\ref{prop:trstoch} implies that 
$G_6\equiv_{\ML{\cdot,\tr,{}^*,\ones,\odot_v}} H_6$, as anticipated. \hfill~$\qed$
\end{example}

We mention that we can extend $\ML{\cdot,\tr,{}^*,\ones,\odot_v}$ with $+$, $\times$,  and pointwise function applications
on scalars and vectors, without increasing the distinguishing power of the fragments. This can be shown in precisely the same way as before
by showing that $O$- and $O^*$-conjugation is preserved by these operations when $O$ is an orthogonal matrix which preserves 
equitable partitions \change{(that witness that the graphs have a common equitable partition)}.

Finally, we separate distinguishability by $\ML{\cdot,\tr,{}^*,\ones,\odot_v}$ and  $\ML{\cdot,{}^*,\ones,\odot_v}$.
It suffices to consider the graphs 
  $G_3$ (\!\raisebox{-1.3ex}{\mbox{ 
	\includegraphics[height=0.6cm]{graphG3}}}\hspace{.08em}) and $H_3$ (\!\raisebox{-1.3ex}{\mbox{ 
	\includegraphics[height=0.6cm]{graphH3}}}\hspace{.08em}) which are fractionally isomorphic but not co-spectral. 
	Proposition~\ref{prop:pwmultnotr} implies that these are indistinguishable by $\ML{\cdot,{}^*,\ones,\odot_v}$.
	Proposition~\ref{prop:trstoch} implies that they can be distinguished by $\ML{\cdot,\tr,{}^*,\ones,\odot_v}$.

\section{The impact of pointwise functions on matrices}\label{sec:C3} 
The final operation that we consider is pointwise function applications on \emph{matrices}. In particular, we start by considering the Schur-Hadamard product, which we denote by the binary operation $\odot$, i.e., $(A\odot B)_{ij}=A_{ij}B_{ij}$ for matrices $A$ and $B$. \change{Our results will imply that once two
graphs are equivalent with regards to sentences in $\ML{\cdot,\,\tr,{}^*,\ones,\allowbreak \diag,\odot}$, then they will be equivalent with regards to sentences in $\ML{\cdot,\,\tr,{}^*,\ones,\allowbreak \diag,\odot,\Apply[f],f\in\Omega}$
for \textit{any} pointwise function application $\Apply[f]$, be it on scalars, vectors or matrices. That is, the graphs will be $\MLANG$-equivalent.}

\change{This section is structured in a similar way as Section~\ref{sec:diag}. More precisely, we first  illustrate the additional power that the Schur-Hadamard product provides by means of an example in
Section~\ref{subsec:example_hada}. Then, in Section~\ref{subsec:stable_edge_hada} we show that we can compute a so-called 
stable edge partition of a graph by means of expressions in $\ML{\cdot,\,\tr,{}^*,\ones,\allowbreak \diag,\odot,+,\times}$. Such an edge partition can be regarded as a generalisation of the notion of equitable partition. Stable edge partitions can be obtained as the result of the edge colouring or $2$-dimensional Weisfeiler-Lehman ($\textsf{2WL}$) algorithm.
In Section~\ref{subsec:stable_edge_hada}, we also show that  $G\equiv_{\ML{\cdot,\,\tr,{}^*,\ones,\allowbreak \diag,\odot}} H$ implies that $G$ and $H$  are indistinguishable by the 2WL-algorithm. It is known from the seminal paper by Cai, F\"urer and Immerman~\cite{Cai1992}, that this is equivalent to $\CLK{3}$-equivalence.
Based on this, in Section~\ref{subsec:c3} we prove the main result of this section, i.e., that $G\equiv_{\ML{\cdot,\,\tr,{}^*,\ones,\allowbreak \diag,\odot}} H$ if and only if 
 $G\equiv_{\MLANG} H$ if and only if $G\equiv_{\CLK{3}} H$.}

We remark that from the work by Brijder et al.~\cite{Brijder2018} it implicitly follows that $\CLK{3}$-equivalence implies $\MLANG$-equivalence. Our results thus show that converse implication also holds. That is, $\MLANG$-equivalence coincides with
$\CLK{3}$-equivalence. 

\subsection{\change{Example of the impact of the presence of Schur-Hadamard product}}\label{subsec:example_hada}

\change{We start with an example showing what extra information can be computed from graphs when the Schur-Hadamard product is present.}
\begin{example}\normalfont
	We recall that in expression $\#\mathsf{3degr}(X)$ in Example~\ref{ex:degree3}, products of diagonal matrices resulted in the ability to zoom in on \emph{vertices} that carry specific degree information. When diagonal matrices are concerned, the product of matrices coincides with pointwise multiplication of the \emph{vectors} on the diagonals. Allowing pointwise multiplication on matrices has the same effect, but now on \emph{edges} in graphs. As an example, suppose that we want to count the number of ``triangle walks'' in $G$, i.e., walks $(v_0,\ldots,v_k)$ of length $k$ in $G$ such that each edge $\{v_{i-1},v_i\}$ in the  walk is part of a triangle. This can be done by expression
\begin{linenomath}
	\[ \#\Delta\mathsf{paths}_k(X):=\ones(X)^*\cdot((\Apply[f_{>0}](X^2 \odot X))^k\cdot\ones(X), \]
	\end{linenomath}
	where $f_{>0}(x)=1$ if $x\neq 0$ and $f_{>0}(x)=0$ otherwise\footnote{The use of $\Apply[f_{>0}](\cdot)$ is just for convenience. Its application inside sentences can be simulated with operations in $\ML{\cdot,{}^*,\,\tr,\ones,\allowbreak \diag,\odot}$ when evaluated on given adjacency matrices.}. Indeed, when evaluated on adjacency matrix $A_G$, $A_G^2\odot A_G$ extracts from $A_G^2$ only those entries corresponding to paths $(u,v,w)$ of length $2$ such that $(u,w)$ is an edge as well, i.e., it identifies edges involved in triangles in $G$. Then, $\Apply[f_{>0}](A_G^2\odot A_G)$ sets all non-zero entries to $1$. By considering the $k^{\text{th}}$ power of this matrix and summing up all its entries, the number of triangle paths of length $k$ is obtained. It can be verified that for graphs $G_5$ (\!\raisebox{-1.3ex}{\mbox{ 
	\includegraphics[height=0.6cm]{graphG5}}}) and $H_5$ (\!\raisebox{-1.3ex}{\mbox{ 
	\includegraphics[height=0.6cm]{graphH5}}}), $\#\Delta\mathsf{paths}_2(A_{G_5})=[160]\neq[132]=\#\Delta\mathsf{paths}_2(A_{H_5})$ and hence, they can be distinguished when the Schur-Hadamard product is available. Recall that all previous fragments could not distinguish between these two graphs. ~\hfill$\qed$ 
\end{example}

As mentioned earlier, we will use the Schur-Hadamard product to compute  \textit{stable edge partitions} of graphs, obtained as the result of the edge colouring or $2$-dimensional Weisfeiler-Lehman ($\textsf{2WL}$) algorithm~\cite{Bastert2001,Cai1992,O2017,Weisfeiler1968}. Such partitions can be seen as a generalization of equitable partitions, but now partitioning all pairs of vertices, rather than single vertices. \change{We detail these notions next.}

	 \SetAlgoCaptionLayout{small} \SetAlgoVlined 
\begin{algorithm}
	[t] \caption{Computing the stable colouring based on algorithm \textsc{2-Stab}~\cite{Bastert2001}.}\label{alg:stabcol}  \SetKwInOut{Input}{Input}\SetKwInOut{Output}{Output} \ResetInOut{Output}
	 \Input{A graph $G=(V,E)$ of order $n$.} \Output{Stable  colouring $\chi:V\times V\to C$.} Let $\chi:=\chi_0$\; Let $C:=\{0,1,2\}$\; \Repeat{{\normalfont $|C|$ does not change}}{ \For{$(v_1,v_2)\in V\times V$} { Compute $\mathsf{L}^2(v_1,v_2)$ relative to $\chi$\; } Replace $C$ by a minimal set of new colours $C'$ and define $\chi':V\times V\to C'$ such that\\
	\For{pairs $(v_1,v_2)$, $(v_1',v_2')$ in $V\times V$} {$\chi'(v_1,v_2)=\chi'(v_1',v_2') \Leftrightarrow \mathsf{L}^2(v_1,v_2)=\mathsf{L}^2(v_1',v_2')$} Let $C:=C'$\; Let $\chi:=\chi'$\; }
	
	 \end{algorithm}

\subsection{Stable edge partitions}\label{subsec:stable_edge_hada}
\change{We first recall the notion of stable edge partition and define when two graphs are
indistinguishable by the $\textsf{2WL}$ algorithm. Then,
similarly to the proof of Proposition~\ref{prop:equipart}, we show that when two
graphs are indistinguishable by sentences in $\ML{\cdot,\tr,{}^*,\ones,\diag,\odot}$, then they are indistinguishable by the $\textsf{2WL}$ algorithm (Proposition~\ref{prop:equiWL} below).} 

As already mentioned, the stable edge partition of a graph $G=(V,E)$ arises as the result of applying the \textsf{2WL}-algorithm~\cite{Bastert2001,Cai1992,O2017,Weisfeiler1968} on $G$. In Algorithm~\ref{alg:stabcol} we provide the pseudo-code of the algorithm \textsc{2-Stab}, taken from Bastert~\cite{Bastert2001}, which implements  the \textsf{2WL}-algorithm.  In a nutshell, the algorithm starts by assigning every vertex pair a colour, and then revises colourings iteratively based on some structural information. When no revision of the colouring occurs, the colouring has stabilized, the algorithm stops and returns the stable colouring. Colourings naturally induce partitions of $V\times V$, by simply grouping together vertex pairs with the same colour. The stable edge partition of $G$ is the partition induced by the stable colouring returned by \textsc{2-Stab}. The algorithm \textsc{2-Stab} needs at most $n^2$ iterations when evaluated on a graph of order $n$.

More precisely, \change{in this context,} a \textit{colouring} $\chi$ assigns a colour to each vertex pair in $V\times V$, i.e., if we denote by $C$ a set of colours, it is a function $\chi:V\times V\to C$.  The partition of $V\times V$ induced by $\chi$ is denoted by $\Pi_\chi(G)$ and will  be represented by \textit{indicator matrices}, one for each colour $c\in C$. \change{Assume that $V=\{1,\ldots,n\}$. An indicator matrix for a subset $Y\subseteq V\times V$
is a matrix $E_Y$ in $\R^{n\times n}$ such that $(E_Y)_{v_1v_2}=1$ if $(v_1,v_2)\in Y$
and $(E_Y)_{v_1v_2}=0$ otherwise. For a colour $c\in C$, we denote by $E_c$ the indicator matrix for $\{(v_1,v_2)\in V\times V\mid \chi(v_1,v_2)=c\}$.}
Hence,  $\Pi_\chi(G)$ is represented by the indicator matrices $E_{c}$, for $c\in C$.

Algorithm  \textsc{2-Stab} starts (on lines 1 and 2) with an initial colouring $\chi_0:V\times V\to\{0,1,2\}$ encoding adjacency, non-adjacency and loop information. More precisely, for vertices $v,w\in V$, $\chi_0(v,v)=2$, $\chi_0(v,w)=1$ if $(v,w)\in E$, and $\chi_0(v,w)=0$ for $v\neq w$ and $(v,w)\not\in E$. Then,  \textsc{2-Stab} adjusts the current colouring in each iteration, as follows.

Suppose that the current colouring is $\chi:V\times V\to C$. Given this colouring, for each pair of vertices $v_1,v_2\in V$, the so-called \textit{structure list} $\mathsf{L}^2(v_1,v_2)$ is computed (lines 4 and 5). To define these lists, the \textit{structure constants} are needed, which are defined as
\begin{linenomath}
\[
 p_{v_1,v_2}^{c,d}:=|\{ v_3\in V\mid \chi(v_1,v_3)=c, \chi(v_3,v_2)=d\}|, \]\end{linenomath}
for colours $c$ and $d$ in $C$ and vertices $v_1$ and $v_2$ in $V$. These numbers count the number of triangles\footnote{With a triangle one simply means a triple $(v_1,v_2)$, $(v_1,v_3)$ and $(v_2,v_3)$ of vertex pairs, none of which has to be an edge in $G$.}, based on $(v_1,v_2)$ whose other two pairs of vertices $(v_1,v_3)$ and $(v_3,v_2)$ have prescribed colours $c$ and $d$, respectively. Then, in a structure list we simply gather all these numbers for a specific vertex pair. That is, 
\begin{linenomath}
\[ \mathsf{L}^2(v_1,v_2):=\{ (c,d,p_{v_1,v_2}^{c,d})\mid p_{v_1,v_2}^{c,d}\neq 0\}.  \]\end{linenomath}
Based on this information,  \textsc{2-Stab} will assign new colours to pairs of vertices (lines 6--8). More precisely, $C$ is replaced by a minimal set of colours $C'$ such that each unique $\mathsf{L}^2(v_1,v_2)$ corresponds precisely to a single colour $c'$ in $C'$. Hence, the new colouring $\chi':V\times V\to C'$ will assign $(v_1',v_2')$ the colour $c'$, corresponding to $\mathsf{L}^2(v_1,v_2)$,  when  $\mathsf{L}^2(v_1,v_2)= \mathsf{L}^2(v_1',v_2')$. It is easily verified that the partition $\Pi_{\chi'}(G)$ is a refinement of $\Pi_{\chi}(G)$, which in turn is a refinement of $\Pi_{\chi_0}(G)$. 

Algorithm
 \textsc{2-Stab} now replaces $\chi$ by $\chi'$ and $C$ by $C'$ (lines 9 and 10), and repeats this process until the number of colours remains fixed (line 11). In other words, the corresponding partition is not further refined. The algorithm then returns this final (stable) colouring. 
 
 The \textit{stable edge partition of $G$}, denoted by $\Pi(G)$, is now the partition induced by the stable colouring.  It is known that $\Pi(G)$ is the unique coarsest partition of $V\times V$ which refines $\Pi_{\chi_0}(G)$ and corresponding to a colouring satisfying the stability condition stated on lines 7 and 8 in Algorithm~\ref{alg:stabcol}. 

Two graphs $G=(V,E)$ and $H=(W,F)$ of the same order are now said to be \textit{indistinguishable by the \textsf{2WL} algorithm}, denoted by $G\equiv_{\mathsf{2WL}} H$, if the stable
edge partitions $\Pi(G)$ and $\Pi(H)$ of $G$ and $H$, respectively, are (i)~of the form $\Pi(G)=\{E_{c_1},\ldots,E_{c_\ell}\}$ and $\Pi(H)=\{F_{c_1},\ldots,F_{c_\ell}\}$, that is, the parts in the partitions correspond to the same colours; and (ii)~the corresponding parts in these partitions have the same size, that is, $E_{c_i}$ and $F_{c_i}$ have the same number of entries carrying the value $1$, for $i=1,\ldots,\ell$.

In the seminal paper by Cai, F\"urer and Immerman~\cite{Cai1992}, the connection with logical indistinguishability was made.
\begin{theorem}\label{thm:cfi} 
	Let $G$ and $H$ be two graphs of the same order. Then,  $G\equiv_{\mathsf{2WL}} H$  if and only if $G\equiv_{\CLK{3}} H$. ~\hfill$\qed$ 
\end{theorem}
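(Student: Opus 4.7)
The plan is to establish the two-way correspondence between the stable $\mathsf{2WL}$ colouring on vertex pairs and the $\CLK{3}$-type function on vertex pairs, and then lift this to sentences. For a graph $G$ and pair $(u,v)\in V\times V$, let $\mathsf{tp}_{\CLK{3}}(u,v):=\{\varphi(x,y)\in\CLK{3}\mid G\models\varphi(u,v)\}$, where $\varphi(x,y)$ ranges over $\CLK{3}$-formulas with free variables among $\{x,y\}$. The aim is to prove that, for any two pairs $(u,v)$ in $G$ and $(u',v')$ in $H$,
\begin{linenomath*}
\[
\chi^{\infty}_G(u,v)=\chi^{\infty}_H(u',v')\iff \mathsf{tp}_{\CLK{3}}(u,v)=\mathsf{tp}_{\CLK{3}}(u',v'),
\]
\end{linenomath*}
where $\chi^{\infty}$ denotes the stable colouring computed by Algorithm~\ref{alg:stabcol}. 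Once this equivalence of pair-partitions is in place, it straightforwardly upgrades to $G\equiv_{\mathsf{2WL}} H\Leftrightarrow G\equiv_{\CLK{3}} H$: the $\CLK{3}$-sentence $\exists x\,\exists y\,\varphi_c(x,y)$, for the defining formula $\varphi_c$ of a colour class $c$, distinguishes graphs with different class sizes, while any $\CLK{3}$-sentence is a Boolean combination of statements about the colour classes.

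For the direction $G\equiv_{\mathsf{2WL}} H\Rightarrow G\equiv_{\CLK{3}} H$, I would proceed by induction on the structure of $\CLK{3}$-formulas, showing that for every $\varphi(x,y)\in\CLK{3}$ and any pairs $(u,v)$, $(u',v')$ with the same stable colour, $G\models\varphi(u,v)\Leftrightarrow H\models\varphi(u',v')$; analogous statements for formulas with one or zero free variables follow as special cases (relating to the colours on the diagonal and the multiset of all colours). Atomic formulas $E(x,y)$ and $x=y$ are handled by the initial colouring $\chi_0$, which distinguishes the diagonal, the edges, and the non-edges. Boolean connectives are immediate. The interesting case is a counting quantifier $\exists^{\geq m} z\,\psi(x,y,z)$; because only three variables are available, $\psi$ must reuse one of $x,y,z$, and after renaming its free-variable slots the counting reduces to questions of the form ``how many $z$ make $(x,z)$ and $(z,y)$ fall into given colour classes,'' which is precisely the information recorded by the structure constants $p^{c,d}_{u,v}$ that the stable colouring preserves.

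For the converse, $G\equiv_{\CLK{3}} H\Rightarrow G\equiv_{\mathsf{2WL}} H$, the plan is to construct, for every $i\geq 0$ and every colour $c$ that appears at iteration $i$ of Algorithm~\ref{alg:stabcol}, a defining $\CLK{3}$-formula $\varphi_{c}^{(i)}(x,y)$ such that $\chi^{(i)}(u,v)=c$ iff $G\models\varphi_c^{(i)}(u,v)$. The base case uses $x=y$, $E(x,y)$ and $\neg E(x,y)\wedge x\neq y$ to define $\varphi^{(0)}_0,\varphi^{(0)}_1,\varphi^{(0)}_2$. For the inductive step, since the colour at iteration $i+1$ is determined by the multiset $\mathsf{L}^2(u,v)=\{(c,d,p^{c,d}_{u,v})\}$, we can write
\begin{linenomath*}
\[
\varphi^{(i+1)}_{c'}(x,y)\;:=\;\bigwedge_{(c,d,p)\in L}\exists^{\geq p} z\bigl(\varphi^{(i)}_{c}(x,z)\wedge\varphi^{(i)}_{d}(z,y)\bigr)\wedge\neg\exists^{\geq p+1}z(\cdots),
\]
\end{linenomath*}
where $L$ is the structure list common to all pairs of colour $c'$. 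Only three variables $x,y,z$ are required, matching the dimension of $\mathsf{2WL}$. Finally, since stabilisation occurs after at most $n^2$ iterations, this yields $\CLK{3}$-formulas defining every stable colour class; the equality of class sizes then follows from $G\equiv_{\CLK{3}} H$ applied to the sentence counting those witnesses.

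The main obstacle I expect is the bookkeeping around the three-variable restriction: when unfolding a counting quantifier $\exists^{\geq m}z\,\psi(x,y,z)$ or composing defining formulas of colours, one must reuse variables carefully and ensure free variables are renamed without conflict so that the resulting formula remains in $\CLK{3}$. A clean way to manage this is to prove the two directions simultaneously by induction on $i$, maintaining as invariants both ``every iteration-$i$ colour class is $\CLK{3}$-definable'' and ``pairs in the same iteration-$i$ class agree on all $\CLK{3}$-formulas of quantifier depth $\leq i$,'' so that the counting-quantifier step uses exactly the structure-constant information provided by the previous level.
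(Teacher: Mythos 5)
You should first note that the paper does not prove this statement at all: Theorem~\ref{thm:cfi} is imported verbatim from Cai, F\"urer and Immerman~\cite{Cai1992} (hence the $\qed$ attached to the statement), so there is no in-paper argument to compare against. What you have written is essentially a reconstruction of the standard proof of that classical result: a simultaneous induction showing that the iteration-$i$ colour classes of the $\mathsf{2WL}$ refinement coincide with the equivalence classes of vertex pairs under $\CLK{3}$-formulas, with counting quantifiers matched against the structure constants $p^{c,d}_{u,v}$ in one direction, and the colour classes made $\CLK{3}$-definable by exact-count formulas in the other; the lift from pair-types to sentences via the diagonal classes and their cardinalities is the right way to finish. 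Two points in your sketch need more care if written out. First, in the direction $\equiv_{\mathsf{2WL}}\Rightarrow\equiv_{\CLK{3}}$, the counting-quantifier step requires the explicit observation that a $\CLK{3}$-formula $\psi(x,y,z)$ with three free variables is a Boolean combination of atomic formulas over $\{x,y,z\}$ and of subformulas with at most two free variables (any quantifier inside $\psi$ must rebind one of $x,y,z$); only then does the truth of $\exists^{\geq m}z\,\psi$ at $(u,v)$ reduce to counts of $w$ according to the colour pair of $\bigl((u,w),(w,v)\bigr)$ together with the colour of $(u,v)$, which is exactly what the structure lists record. Second, the biconditional $\chi^{\infty}_G(u,v)=\chi^{\infty}_H(u',v')$ presupposes a canonical identification of colours across the two graphs, which are coloured by independent runs of Algorithm~\ref{alg:stabcol}; you must fix this identification (e.g.\ via the structure-list histories, or by running the refinement on the disjoint union), which is also what makes the paper's definition of $G\equiv_{\mathsf{2WL}}H$ (``the parts correspond to the same colours'') meaningful. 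With those details supplied, your argument is the Cai--F\"urer--Immerman proof, and citing~\cite{Cai1992} as the paper does is the economical alternative.
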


We next show that $\ML{\cdot,\,\tr,{}^*,\ones,\allowbreak \diag,\odot}$-equivalence implies indistinguishability by the \textsf{2WL} algorithm.  
\begin{proposition}\label{prop:equiWL}
Let $G$ and $H$ be graphs of the same order. Then, $G\equiv_{\ML{\cdot,\,\tr,{}^*,\ones,\allowbreak \diag,\odot}} H$ implies that $G\equiv_{\mathsf{2WL}} H$.\end{proposition}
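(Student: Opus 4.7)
The plan is to mirror the proof of Proposition~\ref{prop:equipart}, but now simulating Algorithm~\ref{alg:stabcol} instead of Algorithm~\ref{alg:eqpart}. As a preliminary step, the same linear-combination argument used there shows that $G\equiv_{\ML{\cdot,\tr,{}^*,\ones,\diag,\odot}} H$ if and only if $G\equiv_{\ML{\cdot,\tr,{}^*,\ones,\diag,\odot,+,\times}} H$, so I may freely use $+$ and $\times$ inside expressions when constructing the simulation.

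Next, I would explicitly build expressions $\mathsf{stabedge}_1(X),\ldots,\mathsf{stabedge}_\ell(X)$ whose evaluations on $A_G$ are precisely the indicator matrices of the stable edge partition $\Pi(G)$. The initial colouring $\chi_0$ is realised by $\diag(\ones(X))$ for loops, $X$ for edges, and $\ones(X)\cdot(\ones(X))^* - X - \diag(\ones(X))$ for non-edges. Assume inductively that we have expressions $E_{c_1}^{(i-1)}(X),\ldots,E_{c_k}^{(i-1)}(X)$ for the indicator matrices of the partition after step $i-1$. Then for any two current colours $c_s,c_t$, the product $E_{c_s}^{(i-1)}(X)\cdot E_{c_t}^{(i-1)}(X)$ evaluates on $A_G$ to the matrix whose $(v_1,v_2)$-entry is exactly the structure constant $p_{v_1,v_2}^{c_s,c_t}$. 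To sort vertex pairs according to their structure lists I would invoke a \emph{matrix-level} analogue of the Schur--Wielandt trick from Example~\ref{ex:degree3}: for an expression $e(X)$ and a known value $c$ appearing among the entries of $e(A_G)$ (with $D$ the set of all such distinct values), the expression
\[
\mathsf{1}_{e=c}(X) \;:=\; \Bigl(\tfrac{1}{\prod_{c'\in D,\,c'\neq c}(c-c')}\Bigr)\times \bigodot_{c'\in D,\,c'\neq c}\bigl(e(X) - c'\times \ones(X)\cdot(\ones(X))^*\bigr)
\]
evaluates on $A_G$ to the $0/1$-matrix marking precisely the entries of $e(A_G)$ equal to $c$. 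Taking the Schur--Hadamard product over all pairs $(c_s,c_t)$ of these value-indicator matrices isolates the vertex pairs that realise a prescribed structure list. Iterating this refinement at most $n^2$ times, and keeping only those value-combinations that actually occur in $A_G$, yields the expressions $\mathsf{stabedge}_i(X)$.

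Finally, as in the proof of Proposition~\ref{prop:equipart}, I would use sentences to force the matrices $\mathsf{stabedge}_i(A_H)$ to represent the stable edge partition $\Pi(H)$. The checks mirror those in parts~(b)--(c) of that proof: a binary-matrix test, given by the sentence $(\ones(X))^*\cdot\bigl((M\odot M - M)\odot (M\odot M - M)\bigr)\cdot\ones(X)$ where $M$ stands for $\mathsf{stabedge}_i(X)$; a partition test, zero-testing $\bigl(\sum_i \mathsf{stabedge}_i(X)\bigr) - \ones(X)\cdot(\ones(X))^*$ in the same pointwise-square-and-sum way; a stability test, zero-testing $\mathsf{stabedge}_i(X)\odot\bigl(\mathsf{stabedge}_s(X)\cdot\mathsf{stabedge}_t(X) - p_{s,t,i}\times \mathsf{stabedge}_i(X)\bigr)$ for each structure constant $p_{s,t,i}$ read off from $A_G$; and a size test via $(\ones(X))^*\cdot\mathsf{stabedge}_i(X)\cdot\ones(X)$. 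By assumption each such sentence returns the same value on $A_G$ and $A_H$, so the matrices $\mathsf{stabedge}_i(A_H)$ define a colouring of $V_H\times V_H$ that refines $\chi_0$ on $H$, satisfies the stability condition with the same structure constants as $\Pi(G)$, and has parts of the same sizes. By uniqueness of the coarsest stable refinement of $\chi_0$, this colouring is $\Pi(H)$, and therefore $G\equiv_{\mathsf{2WL}} H$.

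The main obstacle is the matrix-level Schur--Wielandt extraction: the $\diag$-based vector trick used for equitable partitions does not directly lift to \emph{pairs} of vertices under ordinary matrix multiplication, which is exactly why the Schur--Hadamard product $\odot$ is indispensable for this fragment. Everything else is a careful transcription of the machinery developed for Proposition~\ref{prop:equipart}.
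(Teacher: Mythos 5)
Your construction phase matches the paper's almost verbatim: the same initial colouring expressions, the same products $\mathsf{stabcol}_c^{(i)}(X)\mdot\mathsf{stabcol}_d^{(i)}(X)$ for the structure constants, the same matrix-level Schur--Wielandt extraction via $\odot$, and the same Schur--Hadamard conjunction to isolate structure lists. The divergence is in the verification phase, and there your argument has a genuine gap. The paper proves by induction \emph{on the iterations of} \textsc{2-Stab} that, under the equivalence, the colours and class sizes used by \textsc{2-Stab} on $G$ and on $H$ agree at every step, so the expressions evaluated on $A_H$ literally track the run of \textsc{2-Stab} on $H$ and the final matrices are $\Pi(H)$ by construction. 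You instead verify only the end state -- that the matrices $\mathsf{stabedge}_i(A_H)$ form a binary partition satisfying the stability condition with the same structure constants and part sizes as $\Pi(G)$ -- and then conclude ``by uniqueness of the coarsest stable refinement of $\chi_0$, this colouring is $\Pi(H)$.'' That inference does not follow: uniqueness of the \emph{coarsest} stable refinement only tells you that your partition $\Pi'$ \emph{refines} $\Pi(H)$; it does not rule out that $\Pi'$ is strictly finer than $\Pi(H)$, in which case your size comparison would be between $\Pi(G)$ and the wrong partition of $H$, and $G\equiv_{\mathsf{2WL}}H$ would not be established. (A secondary, smaller omission: your listed sentences do not actually check that $\Pi'$ refines $\chi_0$ on $H$, which you need even to invoke coarsest-ness.)

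The gap is repairable without reverting to the paper's induction: since $\equiv_{\ML{\cdot,\tr,{}^*,\ones,\diag,\odot}}$ is symmetric, you can run the whole construction with the roles of $G$ and $H$ exchanged to obtain a stable refinement of $\chi_0$ on $G$ with $|\Pi(H)|$ classes, giving $|\Pi(G)|\leq|\Pi(H)|$; combined with $|\Pi(H)|\leq|\Pi'|=|\Pi(G)|$ from your direction, a refinement with the same number of parts must be equality, so $\Pi'=\Pi(H)$. As written, though, the final step of your proof is a non sequitur and the proof is incomplete.
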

\begin{proof}
	The overall proof is similar  (both in terms of structure as strategy) to the proof of Proposition~\ref{prop:equipart}, but using indicator matrices (representing the edge partitions) instead of indicator vectors (which represented the vertex partitions), and by relying on the algorithm  \textsc{2-Stab} to compute the  stable edge partition of a graph instead of algorithm \textsc{GDCR} (which computed the coarsest equitable partition). First, to simplify the construction of the expressions later on, we allow for 
	addition and scalar multiplication. \change{It can be verified (see appendix) that	$G\equiv_{\ML{\cdot,\,\tr,{}^*,\ones,\allowbreak \diag,\odot}} H$ implies
	$G\equiv_{\ML{\cdot,\,\tr,{}^*,\ones,\allowbreak \diag,\odot,+,\times}} H$. } We may thus indeed consider
 $\ML{\cdot,\,\tr,{}^*,\ones,\allowbreak \diag,\odot,+,\times}$
from now on. 

\change{The proof consists of the following two steps:
		\begin{itemize}
	\item[(a)] We first construct a number of expressions in $\ML{\cdot,\,\tr,{}^*,\ones,\allowbreak \diag,\odot,+,\times}$, denoted by $\mathsf{stabcol}_c(X)$, for $c\in C$ and $C$ a set of colours.
	 The key property of these expressions is that when they are evaluated on the adjacency matrix $A_G$ of $G$, 
	$\mathsf{stabcol}_c(A_G)$, for $c\in C$, correspond to indicator matrices representing the stable edge partition of  $G$. 
	\item[(b)] 
	The construction of the expressions $\mathsf{stabcol}_c(X)$, for $c\in C$, depend on $A_G$. As such, it is not guaranteed that $\mathsf{stabcol}_c(A_H)$, for $c\in C$, correspond to indicator matrices representing the stable edge partition of $H$. We show, however, when  $G\equiv_{\ML{\cdot,\,\tr,{}^*,\ones,\allowbreak \diag,\odot,+,\times}} H$.  holds, then $\mathsf{stabcol}_c(A_H)$, for $c\in C$, indeed correspond to indicator matrices representing the stable edge partition of $H$. To show this, we construct a number of sentences in 
	$\ML{\cdot,\,\tr,{}^*,\ones,\allowbreak \diag,\odot,+,\times}$.
Along the way, based on the construction of the expressions $\mathsf{stabcol}_c(X)$, for $c\in C$, we show that  $G\equiv_{\ML{\cdot,\,\tr,{}^*,\ones,\allowbreak \diag,\odot,+,\times}} H$ implies that $G\equiv_{\mathsf{2WL}} H$ holds.
	\end{itemize}}

	\medskip
	\noindent
		\change{\textbf{(a)~Compute the stable edge partition of a graph.}}
	 	 Given $G=(V,E)$, let $\Pi(G)=\{E_{c_1},\ldots,E_{c_\ell}\}$ be its stable edge partition. We show that we can construct expressions $\mathsf{stabcol}_{c_i}(X)$ in $\ML{\cdot,\tr,{}^*,\allowbreak \ones,\allowbreak \diag,\odot,+,\times}$, such that  $E_{c_i}=\mathsf{stabcol}_{c_i}(A_G)$, for $i=1,\ldots,\ell$. The expressions are constructed by simulating the run of the algorithm  \textsc{2-Stab} on $A_G$.

The initialisation step of  \textsc{2-Stab} is easy to simulate in $\ML{\cdot,\tr,{}^*,\ones,\allowbreak \diag,\odot,+,\times}$. Indeed, we simply consider expressions $\mathsf{stabcol}_2^{(0)}(X):=\diag(\ones(X))$; $\mathsf{stabcol}_1^{(0)}(X):=X$; and finally,  $\mathsf{stabcol}_0^{(0)}(X):=\ones(X)\mdot  (\ones(X))^*-X-\diag(\ones(X))$. Then, the indicator matrices $\mathsf{stabcol}_0^{(0)}(A_G)$, $\mathsf{stabcol}_1^{(0)}(A_G)$, and $\mathsf{stabcol}_2^{(0)}(A_G)$ represent the initial partition $\Pi_{\chi_0}(G)=\{E_0,E_1,E_2\}$ corresponding to the initial colouring $\chi_0$. 

Suppose now that after iteration $i$, the current set of colours  is $C$ and the  colouring is $\chi:V\times V\to C$.  Assume, by induction, that we have expressions $\mathsf{stabcol}_c^{(i)}(X)$ in  $\ML{\cdot,\tr,{}^*,\ones,\allowbreak \diag,+,\times,\odot}$, one for each $c\in C$, such that $\mathsf{stabcol}_c^{(i)}(A_G)$ is an indicator matrix representing the part in the edge partition $\Pi_{\chi}(G)$, induced by $\chi$, for colour $c$. Given these, we next construct  expressions for the refined partition computed by \textsc{2-Stab} in the next iteration. 

First, for  each pair of colours $(c,d)$ in $C$, we consider the expression
\begin{linenomath}\[ P_{c,d}^{(i+1)}(X):= \mathsf{stabcol}_c^{(i)}(X)\mdot \mathsf{stabcol}_d^{(i)}(X). \]
\end{linenomath}
On input $A_G$, it is readily verified that $P_{c,d}^{(i+1)}(A_G)$ is a matrix whose entry corresponding to vertices $v_1$ and $v_2$ holds the value $p_{v_1,v_2}^{c,d}$.

Let ${\cal P}_{c,d}^{(i+1)}$ be the set of  numbers occurring in  $P_{c,d}^{(i+1)}(A_G)$. For each value $p$ in ${\cal P}_{c,d}^{(i+1)}$, we now extract an indicator matrix indicating the positions in $P_{c,d}^{(i+1)}(A_G)$ that hold value $p$. 

This can be done using an expression $\mathsf{ind}_{c,d,p}^{(i+1)}(X)$ which works in a similar way as $\#\mathsf{3deg}(X)$ in Example~\ref{ex:degree3}, but uses the Schur-Hadamard product instead of products of diagonal matrices. The following example
illustrates the underlying idea  (see also the Schur-Wielandt Principle~\cite{Pech2002} mentioned before).
\begin{example}\label{ex:sw-matrix}\normalfont
Consider  $P_{c,d}=\begin{pmatrix} 2 & 0 & 3\\
1 & 3 & 2 \\
0 & 2 & 3\end{pmatrix}$ with ${\cal P}_{c,d}=\{0,1,2,3\}$. Suppose that we want to find all entries holding value $3$. This can be computed, as follows:
\begin{linenomath}
\small \[
\begin{pmatrix} 0 & 0 & 1\\
0 & 1 & 0 \\
0 & 0 & 1\end{pmatrix}=
\frac{1}{6}\times\left(
\begin{pmatrix} 2 & 0 & 3\\
1 & 3 & 2 \\
0 & 2 & 3\end{pmatrix}\odot
\left(\begin{pmatrix} 2 & 0 & 3\\
1 & 3 & 2 \\
0 & 2 & 3\end{pmatrix}- \begin{pmatrix} 1 & 1 & 1\\
1 & 1 & 1 \\
1 & 1 & 1\end{pmatrix}\right)\odot
\left(\begin{pmatrix} 2 & 0 & 3\\
1 & 3 & 2 \\
0 & 2 & 3\end{pmatrix}- \begin{pmatrix} 2 & 2 & 2\\
2 & 2 & 2 \\
2 & 2 & 2\end{pmatrix}\right)\right),
\]\end{linenomath}
where $\frac{1}{6}=\frac{1}{3(3-1)(3-2)}$, just as in  Example~\ref{ex:degree3}.~\hfill$\qed$
\end{example}
More generally, to identify positions that hold a specific value in $P_{c,d}^{(i+1)}(A_G)$, we consider the expression $\mathsf{ind}^{(i+1)}_{c,d,p}(X)$ defined by
\begin{linenomath}\[
\Biggl(\frac{1}{\prod_{p'\in {\cal P}^{(i+1)}_{c,d},p\neq p'}(p-p')}\Biggr) \times\!\!\!\!\!\!\! \bigodot_{p'\in {\cal P}^{(i+1)}_{c,d},p\neq p'} \!\!\!\!\!\!\!\!\!\!\bigl(P^{(i+1)}_{c,d}(X)-p'\times (\ones(X)\mdot (\ones(X))^*)\bigr).
\]\end{linenomath}
It should be clear from Example~\ref{ex:sw-matrix} that $\mathsf{ind}^{(i+1)}_{c,d,p}(A_G)$ indeed results in the desired indicator matrix. We note that the expression
$\mathsf{ind}^{(i+1)}_{c,d,p}(X)$ depends on the values in ${\cal P}^{(i+1)}_{c,d}$ and hence also depends on $A_G$.

Let $C'$ be the new set of colours assigned by \textsf{2-Stab}$(G)$ during the current iteration. As mentioned earlier, each colour $c$ in $C'$ is in correspondence with  $\mathsf{L}^2(v_1,v_2)$ for some vertices $v_1$ and $v_2$. Let us pick a colour $c$ in $C'$ and assume that it corresponds to 
$$\mathsf{L}^2(v_1,v_2)=\{(c_1,d_2,p_{v_1,v_2}^{c_1,d_1}),\ldots,(c_s,d_s,p_{v_1,v_2}^{c_s,d_s})\}.$$
We next use $\mathsf{ind}^{(i+1)}_{c,d,p}(X)$ and the Schur-Hadamard product to identify all vertex pairs that are assigned colour $c$, as follows: 
 \begin{linenomath}\[\mathsf{stabcol}_c^{(i+1)}(X):= \mathsf{ind}^{(i+1)}_{c_1,d_2,p_{v_1,v_2}^{c_1,d_1}}(X)\odot\cdots \odot\mathsf{ind}^{(i+1)}_{c_s,d_s,p_{v_1,v_2}^{c_s,d_s}}(X). 
\]\end{linenomath}
In other words, we use the Schur-Hadamard product to simulate the ``conjunction'' of the binary matrices representing the vertex pairs $(v_1,v_2)$ having non-zero structure constants $p_{v_1,v_2}^{c_i,d_i}$, for $i=1,\ldots,s$.
It is now easily verified that, on input $A_G$, $\mathsf{stabcol}_c^{(i+1)}(A_G)$ returns an indicator matrix in which the entries holding a $1$ correspond precisely to the pairs $(v_1',v_2')\in V\times V$ such that $\mathsf{L}^2(v_1',v_2')=\mathsf{L}^2(v_1,v_2)$ where $\mathsf{L}^2(v_1,v_2)$  corresponds to colour $c$. In other words, $\mathsf{stabcol}_c^{(i+1)}(A_G)$ represents the refined edge partition corresponding to the part associated with colour $c$. We do this for every colour in $C'$. Clearly,  $\mathsf{stabcol}_c^{(i+1)}(A_G)$, for $c\in C'$, represent the refined partition $\Pi_{\chi'}(G)$ corresponding to $\chi':V\times V\to C'$.

We continue in this way until the colouring stabilises. i.e., no further colours are needed. We denote the final set of colours by $C$ and by $\mathsf{stabcol}_{c}(X)$, for $c\in C$, the 
 $\ML{\cdot,\tr,{}^*,\ones,\allowbreak \diag,\odot,+,\allowbreak \times}$ expressions computing the parts $E_c$ in $\Pi(G)$. The correctness of these expressions follows from the previous arguments and the correctness of the algorithm \textsc{2-Stab}.

	\medskip
	\noindent
		\change{\textbf{(b)~Verifying that $G\equiv_{\mathsf{2WL}} H$.}}

Just as in the proof of Proposition~\ref{prop:equipart}, the expressions $\mathsf{stabcol}_c(X)$ depend on $A_G$ since  we explicitly used the values  occurring in  $P^{(i)}_{c,d}(A_G)$ and the colours assigned to vertex pairs  during each iteration $i$ of the execution of  \textsc{2-Stab} on $G$. Let $\Pi(H)$ be stable edge partition of $H$.
We next show that $G \equiv_{\ML{\cdot,\tr,{}^*,\ones,\odot,+,\times}} H$ implies that $\Pi(H)$ consists of $\mathsf{stabcol}_{c}(A_H)$, for $c\in C$. Furthermore, we show that the number of ones in 
 $\mathsf{stabcol}_{c}(A_G)$ and  $\mathsf{stabcol}_{c}(A_H)$  agree for all $c\in C$. Hence, $G$ and $H$ are indistinguishable by  the \textsf{2WL} algorithm.

The proof is by induction on the number of iterations of $\textsc{2-Stab}(G)$ and $\textsc{2-Stab}(H)$.
We denote by $\chi_G^{(i)}:V\times V\to C^{(i)}_G$ and $\chi_H^{(i)}:W\times W\to C^{(i)}_H$ the colourings used in the $i^{\text{th}}$ iteration of  $\textsc{2-Stab}(G)$ and $\textsc{2-Stab}(H)$, respectively. 
The induction hypothesis is that  $G \equiv_{\ML{\cdot,\tr,{}^*,\ones,\odot,+,\times}} H$ implies that $C^{(i)}_G=C^{(i)}_H=C^{(i)}$ and furthermore that for each $c\in C^{(i)}$, $\mathsf{stabcol}_{c}^{(i)}(A_H)$ is an indicator
matrix, and all $\mathsf{stabcol}_{c}^{(i)}(A_H)$ together constitute the edge partition $\Pi_{\chi^{(i)}_H}(H)$. Moreover, we show that for each $c\in C^{(i)}$, $\mathsf{stabcol}_{c}^{(i)}(A_G)$ and $\mathsf{stabcol}_{c}^{(i)}(A_H)$ have the same number of ones. This clearly suffices, for if this holds, $\mathsf{stabcol}_{c}(A_H)$, for $c\in C$, constitute $\Pi(H)$ and $\mathsf{stabcol}_{c}(A_G)$ and  $\mathsf{stabcol}_{c}(A_H)$ have the same number of ones, for all $c\in C$. 

We start by verifying the hypothesis for the base case, i.e., when $i=0$.
Clearly, $\chi_G^{(0)}$ and $\chi_H^{(0)}$ use the same colours $C^{(0)}_G=C^{(0)}_H=C^{(0)}=\{0,1,2\}$. 
By definition of the expressions $\mathsf{stabcol}_c^{(0)}(X)$,  all $\mathsf{stabcol}_c^{(0)}(A_H)$ together represent $\Pi_{\chi^{(0)}_H}(H)$.
Moreover,  by considering the sentences
\begin{linenomath}
\[
\mathsf{\#ones}^{(0)}_c(X):=(\ones(X))^*\mdot \mathsf{stabcol}_c^{(0)}(X)\cdot\ones(X),
\]\end{linenomath}
for $c\in C^{(0)}$, $G \equiv_{\ML{\cdot,\tr,{}^*,\ones,\odot,+,\times}} H$ implies that $\mathsf{\#ones}^{(0)}_c(A_G)=\mathsf{\#ones}^{(0)}_c(A_H)$. 
Hence, we may conclude that $\mathsf{stabcol}_{c}^{(0)}(A_G)$ and $\mathsf{stabcol}_{c}^{(0)}(A_H)$ have the same number of ones, as desired.

Suppose, by induction, that $G \equiv_{\ML{\cdot,\tr,{}^*,\ones,\odot,+,\times}} H$ implies that $\chi_G^{(i)}:V\times V\to C^{(i)}_G$ and $\chi_H^{(i)}:W\times W\to C^{(i)}_H$ with 
$C^{(i)}_G=C^{(i)}_H=C^{(i)}$. Furthermore, the current edge partition $\Pi_{\chi^{(i)}_H}(H)$ of $H$ is represented by  $\mathsf{stabcol}_c^{(i)}(A_H)$, for $c\in C^{(i)}$. Furthermore,
for each $c\in C^{(i)}$, the number of ones in $\mathsf{stabcol}^{(i)}_c(A_H)$ and $\mathsf{stabcol}^{(i)}_c(A_G)$ agree. 

As before, let ${\cal P}^{(i+1)}_{c,d}$ be the set of values occurring in $P_{c,d}^{(i+1)}(A_G)$ and consider the expressions $\mathsf{ind}^{(i+1)}_{c,d,p}(X)$ for $c,d\in C^{(i)}$ and $p\in {\cal P}^{(i+1)}_{c,d}$.
We show that $\mathsf{ind}^{(i+1)}_{c,d,p}(A_H)$ is a binary matrix as well containing the same number of ones as $\mathsf{ind}^{(i+1)}_{c,d,p}(A_G)$. 
This implies that each value $p\in {\cal P}^{(i+1)}_{c,d}$ occurs in $P_{c,d}^{(i+1)}(A_H)$ and
moreover, it occurs the same number of times as in $P_{c,d}^{(i+1)}(A_G)$. Hence, the set of values occurring in $P^{(i+1)}_{c,d}(A_H)$ is the same as those occurring
in $P_{c,d}^{(i+1)}(A_G)$.

To check that $\mathsf{ind}^{(i+1)}_{c,d,p}(A_H)$  is a binary matrix, we use the sentence
\begin{linenomath}\[
\mathsf{binary}(X):=(\ones(X))^*\cdot\bigl((X\odot X- X)\odot(X\odot X-X)\bigr)\cdot\ones(X).
\]\end{linenomath}
This sentence will return $[0]$, when given a real matrix as input, if and only if the input matrix is a binary matrix. We have seen a similar expression in the proof of Proposition~\ref{prop:equipart}.
Indeed, for a binary
matrix $B$, $B\odot B=B$ and hence $B\odot B-B=Z$, where $Z$ is the zero matrix. Since $Z\odot Z=Z$, $\mathsf{binary}(B)=\tp{\ones}\mdot Z\mdot \ones=[0]$. For the converse,
assume that $\mathsf{binary}(B)=[0]$. We observe that each entry in $(B\odot B- B)\odot(B\odot B-B)$ is non-negative value.
Indeed, all entries are squares of real numbers. Hence, when $\mathsf{binary}(B)=[0]$, the sum of all these squared entries must be zero.
This implies that $B\odot B-B=Z$. This in turn implies that $B$ can only 
contain $0$ or $1$ as entries, since these are the only real  values satisfying $x^2-x=0$.
Hence, when $G \equiv_{\ML{\cdot,\tr,{}^*,\ones,\odot,+,\times}} H$ holds, then since all $\mathsf{ind}_{c,d,p}^{(i+1)}(A_G)$, for $c,d\in C^{(i)}$ and $p\in {\cal P}^{(i+1)}_{c,d}$, are binary matrices,
 \begin{linenomath} \[
 \mathsf{binary}(\mathsf{ind}^{(i+1)}_{c,d,p}(A_G))=[0]=\mathsf{binary}(\mathsf{ind}^{(i+1)}_{c,d,p}(A_H)).
 \]\end{linenomath}
So indeed, $\mathsf{ind}^{(i+1)}_{c,d,p}(A_H)$ is a binary matrix as well. 

The new colours in  $\textsc{2-Stab}(G)$ are assigned based on the structure lists $\mathsf{L}^2(v_1,v_2)$. We show that for every unique structure list 
$\mathsf{L}^2(v_1,v_2)$ there is a pair of vertices $w_1,w_2$ in $W$ such that $\mathsf{L}^2(v_1,v_2)=\mathsf{L}^2(w_1,w_2)$. This implies that $\textsc{2-Stab}(H)$
will use the same colours for refining $\chi_H^{(i)}$ as $\textsc{2-Stab}(G)$ uses to refine $\chi_G^{(i)}$. Hence, the revised colourings $\chi_G^{(i+1)}:V\times V\to C_G^{(i+1)}$
and $\chi_H^{(i+1)}:W\times W\to C_H^{(i+1)}$ satisfy indeed that $C_G^{(i+1)}=C_H^{(i+1)}=C^{(i+1)}$. 

Consider a structure list $\mathsf{L}^2(v_1,v_2)$ and assume that it corresponds to a new colour $c\in C_G^{(i+1)}$. We know that $\mathsf{stabcol}_c^{(i+1)}(A_G)$ returns the indicator matrix indicating which vertex pairs in $V\times V$ have
this structure list (colour $c$). The expression $\mathsf{stabcol}_c^{(i+1)}(X)$ consists of the Schur-Hadamard product of $\mathsf{ind}^{(i+1)}_{c,d,p}(X)$ for every $(c,d,p)$ in $\mathsf{L}^2(v_1,v_2)$.
We have shown above that  $\mathsf{ind}^{(i+1)}_{c,d,p}(A_G)$ and  $\mathsf{ind}^{(i+1)}_{c,d,p}(A_H)$ contain the same number of ones, meaning that there are vertex pairs $(w_1,w_2)\in W\times W$
for which $p_{w_1,w_2}^{c,d}=p=p_{v_1,v_2}^{c,d}$.  Furthermore, in a similar way as above, we can show that $G \equiv_{\ML{\cdot,{}^*,\tr,\ones,\odot,+,\times}} H$ implies that 
$\mathsf{stabcol}_c^{(i+1)}(A_H)$ is a binary matrix which consists of the same number of ones as $\mathsf{stabcol}_c^{(i+1)}(A_G)$. So,  $\textsc{2-Stab}(H)$ needs the
same set of  colours $C_G^{(i+1)}$ as  $\textsc{2-Stab}(G)$ in the refinement phase. Hence, we can take $C_G^{(i+1)}=C_H^{(i+1)}=C^{(i+1)}$.

By construction, $\mathsf{stabcol}_c^{(i+1)}(A_H)$ and $\mathsf{stabcol}_{c'}^{(i+1)}(A_H)$ do not have a common entry holding value $1$, for each distinct pair of colours $c,c'\in C^{(i+1)}$. 
We note that the number of entries holding value $1$ in all $\mathsf{stabcol}_c^{(i+1}(A_H)$ combined sum up $n^2$. Indeed, we know that this holds for $\mathsf{stabcol}_c^{(i+1)}(A_G)$ and we have
just shown that $\mathsf{stabcol}_c^{(i+1))}(A_H)$ consists of the same number of ones as $\mathsf{stabcol}_c^{(i+1)}(A_G)$. Hence, $\mathsf{stabcol}_c^{(i+1)}(A_H)$ also represent
a partition of $W\times W$, i.e., the partition $\Pi_{\chi_H^{(i+1)}}(H)$ and the induction hypothesis is  satisfied.
\end{proof}


\subsection{Main result}\label{subsec:c3}
We are now ready to show the main result of this section.
\begin{theorem}\label{thm:c3} 
	Let $G$ and $H$ be two graphs of the same order, then $G \equiv_{\ML{\cdot,\,\tr,{}^*,\ones,\diag,\odot}} H$ if and only if $G \equiv_{\MLANG} H$ if and only  if $G\equiv_{\CLK{3}} H$. 
\end{theorem}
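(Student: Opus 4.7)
The plan is to establish the three stated equivalences by closing the loop $(B)\Rightarrow(A)\Rightarrow(C)\Rightarrow(B)$, where I write $(A)$ for $G \equiv_{\ML{\cdot,\,\tr,{}^*,\ones,\diag,\odot}} H$, $(B)$ for $G \equiv_{\MLANG} H$, and $(C)$ for $G\equiv_{\CLK{3}} H$. The implication $(B)\Rightarrow(A)$ is free, since $\ML{\cdot,\tr,{}^*,\ones,\diag,\odot}$ is a sub-fragment of $\MLANG$ and therefore a sentence-set comparison over the smaller fragment is a consequence of equality over the larger one.

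For $(A)\Rightarrow(C)$, my plan is to invoke Proposition~\ref{prop:equiWL} together with Theorem~\ref{thm:cfi}. Proposition~\ref{prop:equiWL} already gives $G\equiv_{\ML{\cdot,\,\tr,{}^*,\ones,\diag,\odot}} H \Rightarrow G\equiv_{\mathsf{2WL}} H$, and the Cai--F\"urer--Immerman result recalled in Theorem~\ref{thm:cfi} identifies $\mathsf{2WL}$-equivalence with $\CLK{3}$-equivalence. So this step is essentially citation work.

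The substantive implication is $(C)\Rightarrow(B)$, which I would argue by appealing to the translation of $\MLANG$ into aggregate logic with three (non-numerical) variables, as worked out by Brijder et al.~\cite{Brijder2018,Brijder2019} and alluded to in the introduction. Concretely, one shows by induction on the structure of an $\MLANG$-expression $e(X)$ that, when evaluating on $A_G$, each intermediate matrix/vector/scalar value can be described by a formula of aggregate logic with at most three non-numerical variables naming the row/column indices; matrix multiplication uses the third variable as a summation index, $\diag(\cdot)$ and $\ones(\cdot)$ are trivially encodable, conjugate transposition swaps two of the variables, trace performs an aggregation on the diagonal, and pointwise operations (including the Schur--Hadamard product and all $\Apply[f]$) are simply handled entry-by-entry. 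Since $\CLK{3}$-equivalence implies equivalence with respect to every aggregate-logic sentence using three non-numerical variables (via the classical correspondence between $\CLK{k}$ and its aggregate counterpart on bounded-arity structures), any $\MLANG$ sentence yields the same scalar on $A_G$ and $A_H$ when $G\equiv_{\CLK{3}}H$.

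The main obstacle, and the place where care is needed, is the $(C)\Rightarrow(B)$ step: one must handle \emph{arbitrary} pointwise function applications $\Apply[f]$ for $f\in\Omega$, which are not a priori expressible by a single counting formula. My plan for handling this cleanly is to observe that, because we only need equality of scalar outputs on the fixed inputs $A_G$ and $A_H$, we may for each concrete evaluation replace $f$ by a finite look-up table parameterised by the set of value-tuples that actually appear in the input matrices; this look-up is expressible in aggregate logic using $\CLK{3}$-style formulas that isolate individual entries by their value. The rest of the proof is then a routine structural induction. Combining the three implications closes the cycle and yields the theorem.
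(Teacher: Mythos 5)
Your proposal is correct and follows essentially the same route as the paper: the trivial inclusion for $(B)\Rightarrow(A)$, Proposition~\ref{prop:equiWL} combined with Theorem~\ref{thm:cfi} for $(A)\Rightarrow(C)$, and the Brijder et al.\ translation of $\MLANG$ sentences into three-base-variable aggregate logic, bridged back to $\CLK{3}$ via the infinitary counting logic $C^{3}_{\infty\omega}$ and Hella's theorem, for $(C)\Rightarrow(B)$. The paper handles arbitrary $\Apply[f]$ not by an explicit look-up table but by fixing the concrete scalar $c=e(A_G)$, forming the sentence $\exists z\,\varphi_e(z)\land z=c$, and invoking the standard translation into $C^{3}_{\infty\omega}$ --- the same finiteness-of-relevant-values idea you describe (note only that the relevant value-tuples are those arising in \emph{intermediate} results on the two fixed inputs, not merely those in the input matrices).
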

\begin{proof}
We show that $G \equiv_{\ML{\cdot,\,\tr,{}^*,\ones,\diag,\odot}} H$ implies $G\equiv_{\CLK{3}} H$, and that
$G\equiv_{\CLK{3}} H$ implies  $G\equiv_{\MLANG} H$. Since $\ML{\cdot,\tr,{}^*,\ones,\diag,\odot}$ is a smaller fragment than $\MLANG$, $G\equiv_{\MLANG} H$ clearly implies $G \equiv_{\ML{\cdot,\,\tr,{}^*,\ones,\diag,\odot}} H$, resulting in the theorem.

Let us assume that $G \equiv_{\ML{\cdot,\,\tr,{}^*,\ones,\diag,\odot}} H$ holds. Then, the previous proposition implies that $G\equiv_{\mathsf{2WL}} H$.
 Combined with Theorem~\ref{thm:cfi}, this implies that $G\equiv_{\CLK{3}} H$. Next, we assume that $G\equiv_{\CLK{3}} H$  holds. We show that this implies that $G\equiv_{\MLANG} H$. We here rely on the connection between $\MLANG$ and  three-variable logics~\cite{Brijder2018}, which we first recall.
 
 In Proposition 4.2 in Brijder et al.~\cite{Brijder2018} it was shown that for every sentence $e(X)$ in $\MLANG$ there exists an equivalent formula $\varphi_e(z)$ in the relational calculus with aggregates which uses only three ``base variables''.  We will
not recall the syntax of this calculus formally (see~\cite{libkin_sql} for a full definition) but only recall that in this calculus, we have base variables and numerical variables. Base variables can be bound to base columns of relations, and compared for equality. Numerical variables can be bound to numerical columns, and can be equated to function applications and aggregates. 
The free variable $z$ in $\varphi_e(z)$  is a numeric variable since a scalar is returned by $e(X)$.

We now make the connection between matrices, on which $\MLANG$ expressions are evaluated, and such typed relations, on which calculus expressions are evaluated. More specifically, a matrix $A$ is encoded as a ternary relation $\textsf{Rel}(A)$ where two base columns are reserved for the indices of the matrix and the numerical column holds the value in each entry (vectors and scalars are represented analogously). 
It is now understood that the equivalence of  $e(X)$ and $\varphi_e(z)$ means that   $e(A_G)$  and the evaluation of $\varphi_e(z)$ on $\textsf{Rel}(A_G)$ results in the same scalar. Let $c=e(A_G) \in \C$ and consider the calculus sentence $\psi_e:=\exists z\, \varphi_e(z)\land z=c$. Following the arguments in the proof of Proposition \change{4.4} in~\cite{Brijder2018}, which in turn rely on  standard
translation techniques (see e.g.,~\cite{hlnw_aggregate,libkin_sql}), one can show that $\psi_e$ can be equivalently expressed by a sentence $\psi_e'$ in $C_{\infty\omega}^{3}$~\cite{otto_bounded}, i.e., in infinitary counting logic with three distinct (untyped) variables over binary relations. These binary relations encode graphs in a standard way by simply storing the edge relation.
It is known that $G\equiv_{C_{\infty\omega}^{3}} H$ if and only if $G\equiv_{\CLK{3}} H$~\cite{Hella1996}. By assumption  $G\equiv_{\CLK{3}} H$ and hence $G\equiv_{C_{\infty\omega}^{3}} H$. This implies that $\psi_e'(G)=\psi_e'(H)$ since $\psi_e'$ is a sentence in $C_{\infty\omega}^{3}$. Hence, also $\psi_e$ evaluates to true on both $\mathsf{Rel}(A_G)$ and $\mathsf{Rel}(A_H)$,
and $\varphi_e(z)$ returns the value $c$ on both $\mathsf{Rel}(A_G)$ and $\mathsf{Rel}(A_H)$. As a consequence, also $e(A_H)=c$ and $e(A_G)=e(A_H)$. Since this argument works for any $\MLANG$ sentence $e(X)$,  we have that $G\equiv_{\MLANG} H$. \end{proof}

The results by Dell et al.~\cite{Dell2018} also tell that $G\equiv_{\MLANG} H$ if and only if $\textsf{HOM}_{\cal F}(G)=\textsf{HOM}_{\cal F}(H)$ where
	${\cal F}$ consists of all graphs of tree-width at most two.
	
We conclude by providing an algebraic characterisation of $\MLANG$-equivalence based on a result by Dawar et al.~\cite{Dawar2016}. To state this result, we need the notion of 
coherent algebra (see e.g.,~\cite{Friedland1989}). The \textit{coherent algebra} $\mathfrak{C}(A_G)$ associated with $A_G$ is the smallest complex matrix algebra containing $A_G$, $I$, and $J$ and which is closed under the Schur-Hadamard product. \change{The coherent algebra $\mathfrak{C}(A_H)$ associated with $A_H$ is defined similarly.} The algebras $\mathfrak{C}(A_G)$ and $\mathfrak{C}(A_H)$ are said to be \textit{algebraically isomorphic} if there is bijection $\imath:\mathfrak{C}(A_G)\to \mathfrak{C}(A_H)$ which is
an algebra morphism which in addition satisfies: $\imath(J)=J$, $\imath(A^*)=(\imath(A))^*$ and $\imath(A\odot B)=\imath(A)\odot\imath(B)$, for all matrices $A, B\in\mathfrak{C}(A_G)$.

\begin{proposition}[Proposition 7 in Dawar et al.~\cite{Dawar2016}]\label{prop:dawar}
Let $G$ and $H$ be two graphs of the same order. Then, $G\equiv_{\CLK{3}} H$ if and only if  there exists an algebraic isomorphism $\imath:\mathfrak{C}(A_G)\to \mathfrak{C}(A_H)$ such that $\imath(A_G)=\imath(A_H)$.\hfill$\qed$
\end{proposition}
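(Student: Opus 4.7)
The plan is to pass through the $2\mathsf{WL}$-algorithm characterisation. By Theorem~\ref{thm:cfi}, $G\equiv_{\CLK{3}} H$ if and only if $G\equiv_{\mathsf{2WL}} H$, so it suffices to show that $G\equiv_{\mathsf{2WL}} H$ is equivalent to the existence of an algebraic isomorphism $\imath:\mathfrak{C}(A_G)\to\mathfrak{C}(A_H)$ with $\imath(A_G)=A_H$ (which I read as the intended meaning of the displayed condition). The bridge between the two sides is the classical fact from the theory of coherent configurations that $\mathfrak{C}(A_G)$ admits a distinguished \emph{standard basis} consisting of $0/1$-matrices whose supports partition $V\times V$, and that this basis coincides exactly with the indicator matrices $\{E_{c_1},\ldots,E_{c_\ell}\}$ of the cells of the stable edge partition $\Pi(G)$ produced by Algorithm~\ref{alg:stabcol}; analogously for $\mathfrak{C}(A_H)$ with cells $\{F_{c_1},\ldots,F_{c_\ell}\}$.

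For the direction $G\equiv_{\mathsf{2WL}} H \Rightarrow$ algebraic isomorphism, I would define $\imath$ on the standard basis by $\imath(E_{c_i}):=F_{c_i}$, using that $\mathsf{2WL}$-equivalence guarantees that both partitions use the same set of colours with matching cell sizes, and then extend linearly. Being an algebra morphism reduces to $\imath(E_{c}\mdot E_{d})=\imath(E_{c})\mdot\imath(E_{d})$ for all colours $c,d$, and this in turn follows directly from the equality of the structure constants $p^{c,d}_{v_1,v_2}$ on both sides: these are precisely the coefficients of $E_c\mdot E_d$ in the standard basis, they depend only on the colour of $(v_1,v_2)$ by stability of the partition, and they coincide across $G$ and $H$ by $\mathsf{2WL}$-equivalence. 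The remaining properties are immediate: $\imath(J)=J$ since $J=\sum_i E_{c_i}$; $\imath(M^*)=\imath(M)^*$ because transposition acts on cells as a colour-preserving involution; and $\imath(M\odot N)=\imath(M)\odot\imath(N)$ since the standard basis consists of pairwise $\odot$-orthogonal idempotents. Finally, $A_G$ is the sum of precisely those $E_{c_i}$ that refine the initial edge colour in the $\mathsf{2WL}$ initialisation, and analogously for $A_H$ and the $F_{c_i}$, whence $\imath(A_G)=A_H$. Conversely, given an algebraic isomorphism $\imath$ with $\imath(A_G)=A_H$, one uses that the standard basis of a coherent algebra is \emph{intrinsic}: it is the unique basis of $0/1$-matrices, equivalently the set of minimal non-zero Schur-idempotents with disjoint supports summing to $J$. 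Preservation of $\odot$ forces $\imath$ to carry the standard basis of $\mathfrak{C}(A_G)$ bijectively onto that of $\mathfrak{C}(A_H)$; preservation of $\mdot$ forces the induced bijection between cells to preserve structure constants; and $\imath(A_G)=A_H$ together with preservation of the identity forces the bijection to match the initial $\mathsf{2WL}$ colouring. This yields $G\equiv_{\mathsf{2WL}} H$.

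The main obstacle is the coherent-configuration-theoretic lemma underlying the whole bridge: that $\mathfrak{C}(A_G)$'s standard basis coincides with the cells of $\Pi(G)$ \emph{and} is intrinsically characterisable inside the algebra (as the minimal non-zero Schur-idempotents whose supports partition $V\times V$). This is standard but non-trivial and most modern references cite it from the coherent-configurations literature rather than proving it directly. A secondary subtlety in the converse direction is deducing $\imath(I)=I$ purely from the abstract algebraic data, which is needed in order that the diagonal cells be preserved and that the ``edge'' refinement of $A_G$ be matched to that of $A_H$; this follows from $I$ being distinguished inside $\mathfrak{C}(A_G)$ as the sum of those standard-basis elements supported on the diagonal, a property that is preserved by any $\odot$- and $\mdot$-preserving isomorphism.
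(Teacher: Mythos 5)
There is no in-paper proof to compare against: Proposition~\ref{prop:dawar} is stated as an imported result (Proposition~7 of Dawar et al.~\cite{Dawar2016}) and closed with a \textsf{qed} symbol in the statement itself, so the paper never argues it. Your proposal is essentially the standard coherent-configurations proof of that result, and it is sound in outline. You are also right that the displayed condition $\imath(A_G)=\imath(A_H)$ is a typo for $\imath(A_G)=A_H$. The bridge you use --- that the standard basis of $\mathfrak{C}(A_G)$ is exactly the set of indicator matrices of the cells of $\Pi(G)$ --- is the same fact the paper itself invokes without proof inside Proposition~\ref{thm:algiso} (``the binary matrices in $\Pi(G)$ and $\Pi(H)$ form a basis for $\mathfrak{C}(A_G)$ and $\mathfrak{C}(A_H)$''), so your route is consistent with the paper's toolkit; the difference is that the paper routes the orthogonal-conjugation characterisation through Specht's theorem and treats the algebraic-isomorphism statement as a black box, whereas you open the box.

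Three small points deserve to be made explicit. First, for $\imath(M^*)=\imath(M)^*$ it is not enough that transposition permutes the cells of each configuration; you need the two involutions $c\mapsto c^*$ on colours to agree across $G$ and $H$, which follows because $c^*$ is determined by the structure constants (e.g.\ as the unique $d$ with $p^{c,d}_{\delta}>0$ for a diagonal colour $\delta$), and these are matched. Second, in the converse direction the paper's definition of $\equiv_{\mathsf{2WL}}$ also demands equal cell sizes; this is recoverable from $J\mdot E_c\mdot J=|E_c|\times J$ together with $\imath(J)=J$, but you should say so. Third, $\imath(I)=I$ is most cleanly obtained from surjectivity of $\imath$ (the image of the unit is a two-sided unit of $\mathfrak{C}(A_H)$, hence equals $I$), rather than by first identifying the diagonal cells, which is slightly circular as you phrase it. None of these affects the correctness of the argument, and you rightly flag the one genuinely non-trivial ingredient --- that the coherent closure of $A_G$ coincides with the $\mathsf{2WL}$-stable partition --- as a cited classical fact rather than something proved here.
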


This correspondence can be made a bit more precise and in line with our previous characterisations. 

\begin{proposition}\label{thm:algiso} 
	Let $G$ and $H$ be two graphs of the same order, then $G \equiv_{\MLANG} H$ if and only if there exists an orthogonal matrix $O$ such that $E_c\mdot O=O\mdot F_c$, for $c\in C$, where
$E_c$ and $F_c$, for $c\in C$, constitute  the stable edge partitions $\Pi(G)$ and $\Pi(H)$ of $G$ and $H$, respectively. (Here, $C$ denotes the set of colours used by the colourings that induce
the partitions).
		\end{proposition}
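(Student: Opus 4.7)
The plan is to bridge the orthogonal-conjugation condition with the $\CLK{3}$-equivalence characterisation of $\MLANG$-equivalence given by Theorem~\ref{thm:c3}, by going via the coherent algebras of $A_G$ and $A_H$ and invoking Proposition~\ref{prop:dawar}.

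For the forward direction, I would first note that $G\equiv_{\MLANG} H$ implies, via Theorems~\ref{thm:c3} and~\ref{thm:cfi}, that $G\equiv_{\mathsf{2WL}} H$, so $\Pi(G)=\{E_{c_1},\ldots,E_{c_\ell}\}$ and $\Pi(H)=\{F_{c_1},\ldots,F_{c_\ell}\}$ are indexed by a common colour set $C$. The construction in the proof of Proposition~\ref{prop:equiWL} gives expressions $\mathsf{stabcol}_c(X)$ in $\ML{\cdot,\tr,{}^*,\ones,\diag,\odot,+,\times}$ with $\mathsf{stabcol}_c(A_G)=E_c$ and $\mathsf{stabcol}_c(A_H)=F_c$ for every $c\in C$. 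Since the stable edge partition of an undirected graph is closed under taking transposes of vertex pairs, the sequences $(E_{c_1},\ldots,E_{c_\ell})$ and $(F_{c_1},\ldots,F_{c_\ell})$ consist of real matrices closed under transposition. Just as in the proof of Theorem~\ref{thm:treqpart}, I would then consider, for every word $w(x_1,\ldots,x_\ell)$, the $\MLANG$ sentence
\begin{linenomath}
\[ e_w(X):=\tr\bigl(w(\mathsf{stabcol}_{c_1}(X),\ldots,\mathsf{stabcol}_{c_\ell}(X))\bigr). \]
\end{linenomath}
The assumption $G\equiv_{\MLANG} H$ yields $\tr(w(E_{c_1},\ldots,E_{c_\ell}))=\tr(w(F_{c_1},\ldots,F_{c_\ell}))$ for every $w$, so the real version of Specht's Theorem produces an orthogonal $O$ satisfying $E_c\mdot O=O\mdot F_c$ for all $c\in C$.

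For the reverse direction, I would use the classical fact that the coherent algebra $\mathfrak{C}(A_G)$ coincides with the linear span $\mathrm{span}\{E_c:c\in C\}$ of the indicator matrices of the stable edge partition $\Pi(G)$ (and similarly for $H$). Define $\imath:\mathfrak{C}(A_G)\to \mathfrak{C}(A_H)$ by $\imath(A):=O^{*}\mdot A\mdot O$. Orthogonality of $O$ makes $\imath$ a bijective linear map preserving matrix multiplication, and the hypothesis gives $\imath(E_c)=F_c$ for every $c\in C$, so $\imath$ maps $\mathfrak{C}(A_G)$ onto $\mathfrak{C}(A_H)$. I would then verify the remaining conditions for an algebraic isomorphism in the sense of Proposition~\ref{prop:dawar}: (i)~$\imath(J)=J$, because $J=\sum_c E_c=\sum_c F_c$; (ii)~$\imath(A^{*})=(\imath(A))^{*}$, which is immediate from $(O^{*}AO)^{*}=O^{*}A^{*}O$; (iii)~$\imath(A\odot B)=\imath(A)\odot \imath(B)$, which reduces by bilinearity to the basis identity $E_c\odot E_d=\delta_{cd}E_c$ (and its analogue for $\{F_c\}$) since $\{E_c\}$ and $\{F_c\}$ are indicator matrices of partitions of $V\times V$; and (iv)~$\imath(A_G)=A_H$, because $A_G$ is the sum of those $E_c$ whose colour $c$ refines the initial edge colour, and the same subset of indices recovers $A_H$ as $\sum F_c$. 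Proposition~\ref{prop:dawar} then yields $G\equiv_{\CLK{3}} H$, and Theorem~\ref{thm:c3} gives $G\equiv_{\MLANG} H$.

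The main obstacle is the reverse direction, specifically verifying that conjugation by $O$ preserves the Schur--Hadamard product. This is not a generic property of conjugation: it works precisely because the coherent algebras have a canonical partition basis $\{E_c\}$ and $\{F_c\}$ in which $\odot$ is ``diagonal'' ($E_c\odot E_d=\delta_{cd}E_c$), and $\imath$ happens to map these canonical bases to each other. Thus the proof crucially needs the identification $\mathfrak{C}(A_G)=\mathrm{span}\{E_c\}$, which links the Weisfeiler--Lehman stabilisation used in Section~\ref{sec:C3} with the algebraic characterisation from Dawar et al.~\cite{Dawar2016}.
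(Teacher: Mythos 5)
Your proof is correct and follows essentially the same route as the paper: Specht's Theorem applied to the trace sentences $e_w(X)$ built from the $\mathsf{stabcol}_c(X)$ expressions for the forward direction, and passage to an algebraic isomorphism of the coherent algebras $\mathfrak{C}(A_G)$ and $\mathfrak{C}(A_H)$ together with Proposition~\ref{prop:dawar} and Theorem~\ref{thm:c3} for the converse. The only difference is that you explicitly verify that conjugation by $O$ preserves the Schur--Hadamard product via the basis identity $E_c\odot E_d=\delta_{cd}E_c$, where the paper simply cites Friedland for the fact that $A\mapsto O\mdot A\mdot\tp{O}$ is an algebraic isomorphism.
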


\begin{proof}
We know from Proposition~\ref{prop:equiWL} that $G \equiv_{\MLANG} H$ implies that $G\equiv_{\mathsf{2WL}} H$. Moreover, we can compute $\Pi(G)$ and $\Pi(H)$ by means of the expressions $\mathsf{stabcol}_c(X)$ in $\MLANG$. Let $C=\{c_1,\ldots,c_\ell\}$ be the set of colours used in these partitions. Just as in the proof of Theorem~\ref{thm:treqpart}, we consider sentences
$e_w(X):=\tr\bigl(w(\mathsf{stabcol}_{c_1}(X),\ldots,\allowbreak \mathsf{stabcol}_{c_\ell}(X))\bigr)$ for some word $w$ over $\ell$ variables. Then, $G \equiv_{\MLANG} H$ implies that
$e_w(A_G)=e_w(A_H)$ for any such word $w$, and  thus by the real version of Specht's Theorem, there exists an orthogonal matrix $O$ such that $\mathsf{stabcol}_{c}(A_G)\mdot O=O\mdot \mathsf{stabcol}_{c}(A_H)$ for all $c\in C$, as desired. In the application of Specht's Theorem it is crucial that $\Pi(G)$ and $\Pi(H)$ are closed under transposition. This known to hold, i.e., for every part $E_c$ in $\Pi(G)$ there is a part $E_{c'}$ such that $\tp{E_c}=E_{c'}$. \change{This property also holds for $\Pi(H)$} (see e.g.,~\cite{Bastert2001}). 

For the converse, suppose that  there exists an orthogonal matrix $O$ such that $E_c\mdot O=O\mdot F_c$, for $c\in C$. We note that this implies that $A_G\mdot O=O\mdot A_H$ since $A_G=\sum_{c\in D} E_c$ and $A_H=\sum_{c\in D} F_c$ for some subset of colours $D$ of $C$. This follows the fact that the  \textsf{2WL} algorithm refines the initial colouring, in which edges  are coloured differently than non-edges. 
So, a colour used for an edge in $G$ can only be used for an edge in $H$, and vice versa.
Moreover, it is known that
the binary matrices in $\Pi(G)$ and $\Pi(H)$ form a basis for $\mathfrak{C}(A_G)$ and $\mathfrak{C}(A_H)$, respectively. 
If we now consider $\imath:\mathfrak{C}(A_G)\to \mathfrak{C}(A_H):A\mapsto O\mdot A\cdot\tp{O}$, then this is known to be an algebraic isomorphism between  $\mathfrak{C}(A_G)$ and $\mathfrak{C}(A_H)$~\cite{Friedland1989}. Hence, by Proposition~\ref{prop:dawar}, $G\equiv_{\CLK{3}} H$ and thus also 
$G \equiv_{\MLANG} H$ by Theorem~\ref{thm:c3}.\end{proof}

\begin{remark}
	The orthogonal matrix $O$ in the statement of Proposition~\ref{thm:algiso} can be taken to be \change{compatible with the coarsest equitable partitions of $G$ and $H$, that witness that $G$ and $H$ have a common equitable partition.} This is in agreement with Theorem~\ref{thm:treqpart}. This follows from the fact that there is a subset $K$ of colours such that $I=\sum_{c\in K}E_c=\sum_{c\in K}F_c$~\cite{Bastert2001}. Furthermore, the diagonal matrices $E_c$, for $c\in K$, correspond to $\diag(\ones_{V_c})$ for the coarsest 
	equitable partition ${\cal V}=\{V_c\mid c\in K\}$ of $G$. Similarly, for $c\in K$, $F_c=\diag(\ones_{W_c})$ correspond to the coarsest equitable partition ${\cal W}=\{W_c\mid c\in K\}$ of $H$~\cite{Bastert2001}. 
\end{remark}

\begin{remark}
The proof of Proposition~\ref{thm:algiso} relied on results by Brijder et al.~\cite{Brijder2018} and Dawar et al.~\cite{Dawar2016} in which connections with $\CLK{3}$-equivalence were made. A direct proof of Proposition~\ref{thm:algiso} is possible. Indeed, it suffices to show that $O$-conjugation, for an orthogonal matrix $O$ such that $E_c\mdot O=O\mdot F_c$ holds for each colour $c\in C$, is preserved by all operations in $\MLANG$, including arbitrary pointwise functions on matrices.
We do not detail this further in this paper in order to keep the paper of reasonably length (the proof consists of a long case analysis in which all previous conjugation-preserving conditions need to be verified in the context of stable edge partitions). The crucial ingredient in all this is that one can verify that for any expression $e(X)$ in $\MLANG$, such that $e(A_G)$ returns a matrix,
we can write $e(A_G)=\sum_{c\in C} a_c\times E_c$ and $e(A_H)=\sum_{c\in C} a_c\times F_c$. This is generalization $\ML{\cal L}$-vectors being constant on equitable partitions, but now for  $\ML{\cal L}$-matrices being constant on stable edge partitions. The ability to rewrite $e(A_G)$ (and $e(A_H)$) in terms of the indicator matrices allows to show, e.g., that  $O$-conjugation is preserved by the Schur-Hadamard product and, more generally, by any pointwise function application on matrices.
\end{remark}

\vspace{-3ex}
\section{Conclusion} 
We have characterised $\ML{\cal L}$-equivalence for undirected graphs and identified what additional distinguishing power each of the operations in $\MLANG$ has. 
\change{Some of the results  generalise to directed graphs (with asymmetric adjacency matrices) or even arbitrary matrices. This is explored in an upcoming paper~\cite{Geerts20}.}
The extension to the case when queries can have multiple inputs is wide open.

Of interest may also be to connect $\ML{\cal L}$-equivalence to fragments of first-order logic (without counting). A possible line of attack could be to work over the boolean semiring instead of over the complex numbers (see Grohe and Otto~\cite{grohe_otto_2015} for a similar approach). More general semirings could open the way for modelling and querying labeled graphs using matrix query languages (see also~\cite{Brijder-abs-1904-03934}).

Another question is which additional linear algebra operations should be added to the matrix language $\MLANG$ such that $\CLK{k}$-equivalence can be captured, for $k\geq 4$. We refer to~\cite{AtseriasM13,grohe_otto_2015,MALKIN2014} characterisation of $\CLK{k}$-equivalence in terms of solutions of linear systems of equations, which may serve as inspiration.
Finally, connections between $\MLANG$ and rank logics, as studied in the context of the descriptive complexity of linear algebra~\cite{dghl_rank,Dawar2008,Dawar2017,GradelP15,GroheP17,holm_phd}, are worth exploring.

\medskip
{\bf Acknowledgement.}The author is grateful to Joeri Rammelaere (and his Python skills) for computing the numerical quantities of the example graphs used in this paper.

%

\appendix{}

\section*{Proof of Lemma~\ref{lem:multp-sim} } 
\setcounter{lemma}{0} 
\setcounter{section}{5}

\renewcommand{\thelemma}{\arabic{section}.\arabic{lemma}} 
\begin{lemma}
	Let $\ML{\cal L}$ be any matrix query language fragment and let $G$ and $H$ be graphs of the same order.
		Consider expressions  $e_1(X)$ and $e_2(X)$ in $\ML{\cal L}$. If $e_i(A_G)$ and $e_i(A_H)$ are $T$-similar, for $i=1,2$, for an \textit{arbitrary matrix} $T$, then $e_1(A_G)\mdot e_2(A_G)$ is also $T$-similar to $e_1(A_H)\mdot e_2(A_H)$.
\end{lemma}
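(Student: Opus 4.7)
The plan is to proceed by a direct case analysis on the shapes of the matrices $e_1(A_G)$ and $e_2(A_G)$ (and hence also $e_1(A_H)$ and $e_2(A_H)$, which must have the same shapes since they are obtained by evaluating the same expressions). For the product $e_1(X)\mdot e_2(X)$ to be well-defined, the column dimension of the first factor must match the row dimension of the second. Recalling Definition~\ref{def:sim}, the notion of $T$-conjugation takes four different forms depending on whether the objects are $n\times n$ matrices, $n\times 1$ column vectors, $1\times n$ row vectors, or scalars. So the case split is driven by which of these four types $e_1(A_G)$ and $e_2(A_G)$ inhabit, subject to the multiplication-compatibility constraint.

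In each case I would simply chain the two instances of $T$-conjugation together using associativity of matrix multiplication, sliding $T$ from one side to the other through the product. For instance, in the ``generic'' case where both $e_1(A_G)$ and $e_2(A_G)$ are $n\times n$ matrices, the hypotheses give $e_1(A_G)\mdot T = T\mdot e_1(A_H)$ and $e_2(A_G)\mdot T = T\mdot e_2(A_H)$, whence
\begin{linenomath*}
\[
(e_1(A_G)\mdot e_2(A_G))\mdot T \;=\; e_1(A_G)\mdot T\mdot e_2(A_H) \;=\; T\mdot (e_1(A_H)\mdot e_2(A_H)),
\]
\end{linenomath*}
exactly the desired $T$-conjugation of the product. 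The analogous manipulation handles the mixed matrix/vector cases: row-times-column (yielding a scalar), matrix-times-column (yielding a column), row-times-matrix (yielding a row), and column-times-row (yielding a matrix). When one of the factors is a scalar, the $T$-conjugation hypothesis degenerates to equality, and that scalar simply factors out of both sides.

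I do not expect a genuine obstacle here; the computation in each case is a one-line rearrangement. The only thing that needs care is enumerating the cases exhaustively and checking that the type of $T$-conjugation required on the product matches the one that the chain of equalities delivers (matrix, column, row, or scalar). To keep the bookkeeping manageable I would present a single table-style enumeration of the admissible shape pairs $(\mathrm{shape}(e_1), \mathrm{shape}(e_2))$ and carry out the three-line calculation for each; the scalar-times-anything cases are subsumed by treating a scalar as commuting freely with $T$.
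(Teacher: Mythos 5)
Your proposal is correct and follows essentially the same route as the paper's own proof: a case analysis on the admissible shape pairs of $e_1(A_G)$ and $e_2(A_G)$, in each case unfolding the definition of $T$-conjugation and sliding $T$ through the product by associativity. The paper simply writes out all eight cases explicitly rather than subsuming the scalar cases, but the content is identical.
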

\begin{proof}
	To show this lemma, we distinguish between the following cases, depending on the dimensions of $e_1(A_G)$ and $e_2(A_G)$ (or equivalently, the dimensions of $e_1(A_H)$ and $e_2(A_H)$). Let $e(X):=e_1(X)\mdot e_2(X)$. Let $n$ be the order of $G$ (and $H$). 
	\begin{itemize}
		\item ($\mathbf{n\times n, n\times n}$): $e_1(A_G)$ and $e_2(A_G)$ are of dimension $n\times n$. By assumption, $e_1(A_G)\mdot T=T\mdot e_1(A_H)$ and $e_2(A_G)\mdot T=T\mdot e_2(A_H)$. Hence, 
		\begin{linenomath}
			\[ e(A_G)\mdot T=e_1(A_G)\mdot e_2(A_G)\mdot T=e_1(A_G)\mdot T\mdot \allowbreak e_2(A_H)=T\mdot e_1(A_H)\mdot \allowbreak e_2(A_H)\allowbreak=T\mdot e(A_H).\]
		\end{linenomath}
		\item ($\mathbf{n\times n, n\times 1}$): $e_1(A_G)$ is of dimension $n\times n$ and $e_2(A_G)$ is of dimension $n\times 1$. By assumption, $e_1(A_G)\mdot T=T\mdot e_1(A_H)$ and $e_2(A_G)=T\mdot e_2(A_H)$. Hence, 
		\begin{linenomath}
			\[ e(A_G)=\allowbreak e_1(A_G)\mdot \allowbreak e_2(A_G)=\allowbreak e_1(A_G)\mdot T\mdot \allowbreak e_2(A_H)\allowbreak=T\mdot e_1(A_H)\cdot\allowbreak e_2(A_H)=T\mdot e(A_H).\]
		\end{linenomath}
		\item ($\mathbf{n\times 1, 1\times n}$): $e_1(A_G)$ is of dimension $n\times 1$ and $e_2(A_G)$ is of dimension $1\times n$. By assumption, $e_1(A_G)=T\mdot e_1(A_H)$ and $e_2(A_G)\mdot T=e_2(A_H)$. Hence, 
		\begin{linenomath}
			\[ e(A_G)\mdot T=e_1(A_G)\mdot e_2(A_G)\mdot T\allowbreak=e_1(A_G)\mdot e_2(A_H)\allowbreak= T\mdot e_1(A_H)\mdot e_2(A_H)=T\mdot e(A_H)).\]
		\end{linenomath}
		\item ($\mathbf{n\times 1, 1\times 1}$): $e_1(A_G)$ is of dimension $n\times 1$ and $e_2(A_G)$ is of dimension $1\times 1$. By assumption, $e_1(A_G)=T\mdot e_1(A_H)$ and $e_2(A_G)=e_2(A_H)$. Hence, 
		\begin{linenomath}
			\[ e(A_G)=e_1(A_G)\mdot e_2(A_G)=e_1(A_G)\mdot \allowbreak e_2(A_H)=T\mdot \allowbreak e_1(A_H)\mdot e_2(A_H)\allowbreak=T\mdot e(A_H)).\]
		\end{linenomath}
		\item ($\mathbf{1\times n, n\times n}$): $e_1(A_G)$ is of dimension $1\times n$ and $e_2(A_G)$ is of dimension $n\times n$. By assumption, $e_1(A_G)\mdot T=e_1(A_H)$ and $e_2(A_G)\mdot T=T\mdot e_2(A_H)$. Hence, 
		\begin{linenomath}
			\[ e(A_G)\mdot T=e_1(A_G)\mdot e_2(A_G)\mdot T=e_1(A_H)\cdot\allowbreak T \mdot e_2(A_H)=e_1(A_H)\mdot e_2(A_H)=e(A_H)).\]
		\end{linenomath}
		\item ($\mathbf{1\times n, n\times 1}$): $e_1(A_G)$ is of dimension $1\times n$ and $e_2(A_G)$ is of dimension $n\times 1$. By assumption, $e_1(A_G)\mdot T=e_1(A_H)$ and $e_2(A_G)=T\mdot e_2(A_H)$. Hence, 
		\begin{linenomath}
			\[ e(A_G)=e_1(A_G)\mdot e_2(A_G)=e_1(A_G)\mdot \allowbreak T\cdot\allowbreak e_2(A_H)= e_1(A_H)\mdot e_2(A_H)=e(A_H).\]
		\end{linenomath}
		
		\item ($\mathbf{1\times 1, 1\times n}$): $e_1(A_G)$ is of dimension $1\times 1$ and $e_2(A_G)$ is of dimension $1\times n$. By assumption, $e_1(A_G)=e_1(A_H)$ and $e_2(A_G)\mdot T=e_2(A_H)$. Hence, 
		\begin{linenomath}
			\[ e(A_G)\mdot T=e_1(A_G)\mdot e_2(A_G)\mdot T=e_1(A_G)\mdot e_2(A_H)=e_1(A_G)\mdot \allowbreak e_2(A_H)=e(A_H).\]
		\end{linenomath}
		
		\item ($\mathbf{1\times 1, 1\times 1}$): $e_1(A)$ and $e_2(A)$ are of dimension $1\times 1$. By assumption, $e_1(A_G)=e_1(A_H)$ and $e_2(A_G)=e_2(A_H)$. Hence, $ e(A_G)=e_1(A_G)\mdot e_2(A_G)=e_1(A_H)\mdot \allowbreak e_2(A_H)=e(A_H)$. 
	\end{itemize}
	This concludes the proof. 
\end{proof}

\section*{Continuation of the proofs of Proposition~\ref{prop:equipart} and Theorems~\ref{thm:pwtrpart} and~\ref{prop:equiWL}}

In all three proofs we relied on the presence of addition and scalar multiplication to compute either equitable partitions or stable edge partitions. Since addition and scalar multiplication are
used in the proofs before the proper conjugacy notion was identified, we cannot simply rely on Lemma~\ref{cor:ridoflinear}. We therefore show that when ${\cal L}$ is either $\{\mdot,{}^*,\ones,\diag\}$, $\{\mdot,\tr,{}^*,\diag\}$ (for Proposition~\ref{prop:equipart}), $\{\mdot,\tr,{}^*,\ones,\odot_v\}$ (Theorem~\ref{thm:pwtrpart}) or 
$\{\mdot,\tr,{}^*,\ones,\diag,\odot\}$ (Theorem~\ref{prop:equiWL}), 
 that $G\equiv_{\ML{\cal L}} H$ implies  $G\equiv_{\ML{{\cal L}^+}} H$, where
${\cal L}^+={\cal L}\cup \{+,\times\}$.

 We show this by verifying that any expression $e(X)$ in $\ML{{\cal L}^+}$ can be \textit{equivalently} written as a linear combination of expressions in $\ML{\cal L}$. We denote equivalence by $\equiv$ and 
 $e(X)\equiv e'(X)$ means that $e(A)=e'(A)$ for all matrices $A$.
 We verify our claim by induction on the structure of expressions.

\smallskip 
\noindent (\textbf{base case}) Let $e(X):=X$. This already has the desired form.

\smallskip 
\noindent (\textbf{matrix multiplication}) Let $e(X):=e_1(X)\cdot e_2(X)$. By induction we have that $e_1(X)\equiv \sum_{i=1}^p a_i\times e_i^1(X)$ and
$e_2(X)\equiv\sum_{i=1}^q b_i\times e_i^2(X)$. Hence, $e(X)\equiv \sum_{i=1}^p\sum_{j=1}^{q} (a_i\times b_i) \times (e_i^1(X)\mdot e_j^2(X)$). 

\smallskip 
\noindent (\textbf{complex conjugate transposition}) Let $e(X):=(e_1(X))^*$. By induction, $e_1(X)\equiv\sum_{i=1}^p a_i\times e_i^1(X)$. Then, $e(X)\equiv \sum_{i=1}^p \bar{a}_i\times (e_i(X))^*$.

\smallskip 
\noindent (\textbf{trace}) Let $e(X):=\tr(e_1(X))$. By induction,
$e_1(X)\equiv\sum_{i=1}^p a_i\times e_i^1(X)$. Then, $e(X)\equiv \sum_{i=1}^p a_i\times \tr(e_i(X))$.

\smallskip 
\noindent (\textbf{ones-vector}) Let $e(X):=\ones(e_1(X))$. By induction,
$e_1(X)\equiv\sum_{i=1}^p a_i\times e_i^1(X)$. Then, $e(X)\equiv \ones(e_1^1(X))$.

\smallskip 
\noindent (\textbf{diag}) Let $e(X):=\diag(e_1(X))$.
By induction,
$e_1(X)\equiv\sum_{i=1}^p a_i\times e_i^1(X)$. Then, $e(X)\equiv \sum_{i=1}^p a_i\times \diag(e_i^1(X))$.

\smallskip 
\noindent (\textbf{pointwise vector product}) Let $e(X):=e_1(X)\odot_v e_2(X)$. By induction we have that $e_1(X)\equiv \sum_{i=1}^p a_i\times e_i^1(X)$ and
$e_2(X)\equiv\sum_{i=1}^q b_i\times e_i^2(X)$. Hence, $e(X)\equiv \sum_{i=1}^p\sum_{j=1}^{q} (a_i\times b_i) \times (e_i^1(X)\odot_v e_j^2(X)$).

\smallskip 
\noindent (\textbf{Schur-Hadamard})
\smallskip 
\noindent (\textbf{pointwise vector product}) Let $e(X):=e_1(X)\odot e_2(X)$. By induction we have that $e_1(X)\equiv \sum_{i=1}^p a_i\times e_i^1(X)$ and
$e_2(X)\equiv\sum_{i=1}^q b_i\times e_i^2(X)$. Hence, $e(X)\equiv \sum_{i=1}^p\sum_{j=1}^{q} (a_i\times b_i) \times (e_i^1(X)\odot e_j^2(X)$). 

This concludes the proof.\qed

\section*{Proof of Proposition~\ref{prop:determined1}} 
\setcounter{proposition}{3} 
\setcounter{section}{7} 
\renewcommand{\theproposition}{\arabic{section}.\arabic{proposition}}
\begin{proposition}
	\ $\ML{\cdot,{}^*,\tr, \ones,\diag,+,\times,\Apply_{\mathsf{s}}[f],f\in\Omega}$-vectors are constant on equitable partitions. 
\end{proposition}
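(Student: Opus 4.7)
My plan is to prove a strengthened statement by structural induction on expressions $e(X)$ in the fragment. Fix a graph $G$ of order $n$ and an equitable partition ${\cal V}=\{V_1,\ldots,V_\ell\}$ of $G$. Let $P=[\ones_{V_1},\ldots,\ones_{V_\ell}]\in\R^{n\times\ell}$ be the characteristic matrix. The strengthened invariant, maintained inductively, tracks $e(A_G)$ according to its output type: (i)~if the output is an $n\times 1$ column vector $v$, then $v=Pa$ for some $a\in\C^\ell$; (ii)~if the output is a $1\times n$ row vector $v$, then $v=a^*P^*$ for some $a\in\C^\ell$; (iii)~if the output is an $n\times n$ matrix $M$, then there exist $\ell\times\ell$ matrices $Q,R$ such that $MP=PQ$ and $P^*M=RP^*$; (iv)~if the output is a scalar, no condition. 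The statement of the proposition is exactly case~(i).

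For the base case $e(X)=X$, the matrix invariant is precisely the linear-algebraic reformulation of ``${\cal V}$ is equitable'' given at the beginning of Section~\ref{subsec:diag_equit}: $A_GP=PQ_{A_G}$ with $(Q_{A_G})_{ij}=\deg(v,V_j)$ for any $v\in V_i$; symmetry of $A_G$ then yields $P^*A_G=Q_{A_G}^{\mathsf{t}}P^*$. The inductive step treats each operation of the fragment separately. Trace, scalar multiplication, addition, and pointwise scalar application are immediate, since they are linear or act only on scalars. For $\ones(\cdot)$, the produced vector is either $\ones=\sum_i\ones_{V_i}=P\ones_\ell$ or the scalar $1$. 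For conjugate transposition, $(Pa)^*=a^*P^*$ and vice versa, and matrix invariants $MP=PQ$ and $P^*M=RP^*$ dualize to $(M^*)P=PR^*$ and $P^*M^*=Q^*P^*$.

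The two more delicate cases are matrix multiplication and the $\diag(\cdot)$ operation. For matrix multiplication, I would carry out a case analysis on the dimensions of the two operands (analogous to the analysis in Lemma~\ref{lem:multp-sim}); each case uses the invariants of the operands together with associativity to produce the invariant of the product. For example, in the $(n\times n)(n\times 1)$ case, $e_1(A_G)\mdot e_2(A_G)=e_1(A_G)\mdot Pa=PQ_1a$ is of the form $Pa'$, and in the $(n\times 1)(1\times n)$ case, $e_1(A_G)\mdot e_2(A_G)=Pab^*P^*$ satisfies both matrix invariants via the $\ell\times\ell$ matrices $Q=ab^*P^*P$ and $R=P^*Pab^*$. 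For $\diag(\cdot)$ applied to a column vector $v=Pa=\sum_{i=1}^\ell a_i\ones_{V_i}$, the key identity is
\begin{linenomath}
\[\diag(\ones_{V_i})\mdot P=P\mdot E_{ii},\qquad P^*\mdot\diag(\ones_{V_i})=E_{ii}\mdot P^*,\]
\end{linenomath}
where $E_{ii}\in\C^{\ell\times\ell}$ has a $1$ at position $(i,i)$ and zeros elsewhere; these follow because $\diag(\ones_{V_i})\ones_{V_k}$ equals $\ones_{V_i}$ if $k=i$ and the zero vector otherwise. Summing over $i$ with coefficients $a_i$ yields $\diag(v)\mdot P=P\mdot\diag(a)$ and $P^*\mdot\diag(v)=\diag(a)\mdot P^*$, which gives the matrix invariant for $\diag(v)$.

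I do not anticipate a serious obstacle; the main work is bookkeeping in the matrix-multiplication case analysis and verifying the two identities for $\diag(\ones_{V_i})\cdot P$, after which every operation of the fragment is seen to preserve the invariant. Specializing the induction to the case where the output of $e(X)$ is a column vector then gives the proposition.
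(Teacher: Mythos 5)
Your plan is correct and follows essentially the same route as the paper's proof: a structural induction on expressions maintaining, for each output type, an invariant describing how the result interacts with the indicator vectors of the equitable partition. Your invariants $e(A_G)=P\mdot a$, $M\mdot P=P\mdot Q$ and $P^*\mdot M=R\mdot P^*$ are just the characteristic-matrix packaging of the paper's part-wise conditions $\diag(\ones_{V_i})\mdot e(A_G)=a_i\times\ones_{V_i}$ and $\diag(\ones_{V_i})\mdot e(A_G)\mdot\ones_{V_j}=a_{ij}\times\ones_{V_i}$, and the key steps (the linear-algebraic reformulation of equitability for the base case, the identity $\diag(\ones_{V_i})\mdot P=P\mdot E_{ii}$ for the $\diag(\cdot)$ operation, and the dimension case analysis for matrix multiplication) coincide with those in the paper.
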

\begin{proof}
	Let ${\cal L}^{\#}$ denote $\{\cdot,{}^*,\tr, \ones,\diag,+,\times,\Apply_{\mathsf{s}}[f],f\in\Omega\}$. Consider a graph $G$ of order $n$ with equitable partition ${\cal V}=\{V_1,\ldots,V_\ell\}$. As before, let $\ones_{V_1},\ldots,\ones_{V_\ell}$ be the corresponding indicator vectors. We will show that for any expression $e(X)\in \ML{{\cal L}^{\#}}$ such that $e(A_G)$ is an $n\times 1$-vector, $e(A_G)$ can be uniquely written in the form $\sum_{i=1}^\ell a_i\times \ones_{V_i}$ for scalars $a_i\in \C$.
	
	We show, by induction on the structure of expressions in $\ML{{\cal L}^{\#}}$, that the following properties hold; 
	\begin{enumerate}
		\item[(a)] if $e(A_G)$ returns an $n\times n$-matrix, then for any pair $i,j=1,\ldots,\ell$ there exists a scalars $a_{ij}, b_{ij}\in\C$ such that
		
		\begin{linenomath}
			\[ \diag(\ones_{V_i})\mdot e(A_G)\mdot \ones_{V_j}= a_{ij} \times \ones_{V_i} \text{ and }\, \tp{\ones_{V_j}}\mdot e(A_G)\mdot \diag(\ones_{V_i})= b_{ij} \times \tp{\ones_{V_i}}\]
		\end{linenomath}
		\item[(b)] if $e(A_G)$ returns an $n\times 1$-vector, then for any $i=1,\ldots,\ell$, there exists a scalar $a_i\in C$ such that
		
		\begin{linenomath}
			\[ \diag(\ones_{V_i})\mdot e(A_G)=a_i\times \ones_{V_i}.\]
		\end{linenomath}
		
	\end{enumerate}
	
	Clearly, if (b) holds for every $i=1,\ldots,\ell$, then, $e(A_G)=\sum_{i=1}^\ell a_i\times\ones_{V_i}$ because $I=\sum_{i=1}^\ell \diag(\ones_{V_i})$. We remark these properties can be seen as generalization of the known fact that the vector space spanned by indicator vectors of an equitable partition of $G$ is invariant under multiplication by $A_G$ (See e.g., Lemma 5.2 in~\cite{Chan1997}). That is, for any linear combination $v=\sum_{i=1}^\ell a_i\times \ones_{V_i}$ we have that $A\mdot v=\sum_{i=1}^\ell b_i\times \ones_{V_i}$. In our setting, (a) and (b) imply that $e(A_G)\mdot v$ is again a linear combination of indicator vectors, when $e(A_G)$ returns an $n\times n$-matrix. We next verify properties (a) and (b). We often use that $I=\sum_{i=1}^\ell \diag(\ones_{V_i})$ and $\ones=\sum_{i=1}^\ell \ones_{V_i}$.

	\smallskip 
	\noindent (\textbf{base case}) Let $e(X):=X$. The required property is simply a restatement of the being equitable. That is,
	
	\begin{linenomath}
		\[ \diag(\ones_{V_i})\mdot e(A_G)\mdot \ones_{V_j}=\degr(v,V_j) \times \ones_{V_i},\]
	\end{linenomath}
	for an arbitrary vertex $v\in V_i$. So, we can take $a_{ij}=\degr(v,V_j)$. Similarly, because we $A_G$ is a symmetric matrix, 
	\begin{linenomath}
		\[ \tp{\ones_{V_j}}\mdot e(A_G)\mdot \diag(\ones_{V_i})=\tp{(\diag(\ones_{V_i})\mdot e(A_G)\mdot \ones_{V_j})}=\degr(v,V_j) \times \tp{\ones_{V_i}},\]
	\end{linenomath}
	for an arbitrary vertex $v\in V_i$. So, we can take $a_{ij}=\degr(v,V_j)$.
	
Below, for condition (a) we only verify that $\diag(\ones_{V_i})\mdot e(A_G)\mdot \ones_{V_j}= a_{ij} \times \ones_{V_i}$ holds. The verification of $\tp{\ones_{V_j}}\mdot e(A_G)\mdot \diag(\ones_{V_i})= b_{ij} \times \tp{\ones_{V_i}}$ is entirely similar.

\smallskip 
\noindent (\textbf{multiplication}) Let $e(X):=e_1(X)\mdot e_2(X)$. We distinguish between a number of cases, depending on the dimensions of $e_1(A_G)$ and $e_2(A_G)$. We first check the cases when $e(A_G)$ returns an $n\times n$-matrix and need to show that property (a) holds. 
\begin{itemize}
	\item ($\mathbf{n\times n, n\times n}$): $e_1(A_G)$ and $e_2(A_G)$ are of dimension $n\times n$. By induction, $\diag(\ones_{V_i})\mdot e_1(A_G)\mdot \ones_{V_j}=a_{ij}\times \ones_{V_j}$ and 
	
	$\diag(\ones_{V_i})\mdot e_2(A_G)\mdot \ones_{V_j}=b_{ij}\times \ones_{V_i}$.
	Then, $\diag(\ones_{V_i})\mdot e(A_G)\cdot\ones_{V_j}$ is equal to \allowdisplaybreaks 
	\begin{linenomath}
		\postdisplaypenalty=0
		\begin{align*}
			\diag(\ones_{V_i})\mdot e_1(A_G)\mdot e_2(A_G)\mdot \ones_{V_j}&=\sum_{k=1}^\ell \diag(\ones_{V_i})\mdot e_1(A_G)\mdot \diag(\ones_{V_k})\mdot e_2(A_G)\mdot \ones_{V_j}\\
			&=\sum_{k=1}^\ell b_{kj}\times \bigl(\diag(\ones_{V_i})\mdot e_1(A_G)\mdot \ones_{V_k}\bigr)=\Bigl(\sum_{k=1}^\ell b_{kj} \times a_{ik}\Bigr)\times \ones_{V_i},
			\end{align*}
	\end{linenomath}
	as desired.
		\item ($\mathbf{n\times 1, 1\times n}$): $e_1(A_G)$ is of dimension $n\times 1$ and $e_2(A_G)$ is of dimension $1\times n$. By induction we have that $\diag(\ones_{V_i})\mdot e_1(A_G)=a_i \times \ones_{V_i}$ and $\diag(\ones_{V_i})\mdot \tp{(e_2(A_G))}=b_i \times \ones_{V_i}$. Hence, $\diag(\ones_{V_i})\mdot e(A_G)\cdot\ones_{V_j}$ is equal to 
	\begin{linenomath}
		\postdisplaypenalty=0
		\begin{align*}
			\diag(\ones_{V_i})\mdot e_1(A_G)\mdot e_2(A_G)\cdot\ones_{V_j}& =a_i \times( \ones_{V_i}\mdot e_2(A_G)\mdot \ones_{V_j})\\
			&=\sum_{k=1}^\ell a_i \times( \ones_{V_i}\mdot e_2(A_G)\mdot \diag(\ones_{V_k})\mdot \ones_{V_j})\\
			&=\sum_{k=1}^\ell a_i \times( \ones_{V_i}\mdot \tp{\bigl(\diag(\ones_{V_k})\mdot \tp{(e_2(A_G))}\bigr)}\mdot \ones_{V_j})\\
			&=\sum_{k=1}^\ell (a_i\times b_k) \times( \ones_{V_i}\mdot \tp{\ones_{V_k}}\mdot \ones_{V_j})\\
			&=(a_i\times b_j \times |V_i|)\times \ones_{V_i}
			\end{align*}
	\end{linenomath}
	as desired.
	Here we used that $\tp{\ones_{V_k}}\mdot \ones_{V_j}$ is either $0$, in case that $k\neq j$, or $|V_j|$ in case that $j=k$.
\end{itemize}

\noindent We next check that condition (b) holds when $e(A_G)$ returns an $n\times 1$-vector. 
\begin{itemize}
	\item ($\mathbf{n\times n, n\times 1}$): $e_1(A_G)$ is of dimension $n\times n$ and $e_2(A_G)$ is of dimension $n\times 1$. By induction, we have that $\diag(\ones_{V_i})\mdot e_1(A_G)\mdot \ones_{V_j}=a_{ij}\times \ones_{V_i}$ and
			$\diag(\ones_{V_i})\mdot e_2(A_G)=b_{i}\times \ones_{V_i}$. Hence, $\diag(\ones_{V_i})\mdot e(A_G)$ is equal to
			\allowdisplaybreaks 
	\begin{linenomath}
		\postdisplaypenalty=0
		\begin{align*}
			\diag(\ones_{V_i})\mdot e_1(A_G)\mdot e_2(A_G)&=\sum_{j=1}^\ell \diag(\ones_{V_i})\mdot e_1(A_G)\mdot \diag(\ones_{V_j})\mdot e_2(A_G)\\
			&= \sum_{j=1}^\ell b_j \times (\diag(\ones_{V_i})\mdot e_1(A_G)\cdot\ones_{V_j}) = \sum_{j=1}^\ell (a_{ij}\times b_j) \times \ones_{V_i},
		\end{align*}
	\end{linenomath}
	as desired.
	
	\item ($\mathbf{n\times 1, 1\times 1}$): $e_1(A_G)$ is of dimension $n\times 1$ and $e_2(A_G)$ is of dimension $1\times 1$. By induction we have that $\diag(\ones_{V_i})\mdot e_1(A_G)=a_i\times \ones_{V_i}$ and $e_2(A_G)=b\in\C$. Hence, 
	\begin{linenomath}
		\[ \diag(\ones_{V_i})\mdot e(A_G)=\diag(\ones_{V_i})\mdot e_1(A_G)\mdot e_2(A_G)=(a_i\times b) \times \ones_{V_i},\]
	\end{linenomath} 
	as desired. 
\end{itemize}

\smallskip 
\noindent (\textbf{ones vector}) $e(X):=\ones(e_1(X))$. We only need to consider the case when $e_1(A_G)$ is an $n\times n$-matrix or $n\times 1$-vector. In both cases, it suffices to observe that $\ones=\sum_{i=1}^\ell \ones_{V_ii}$. Indeed,
\[ \diag(\ones_{V_i})\mdot e(A_G)=\diag(\ones_{V_i})\mdot \ones=\ones_{V_i}. \]

\smallskip 
\noindent (\textbf{conjugate transpose}) $e(X):=(e_1(X))^*$. If $e_1(A_G)$ returns a $1\times n$-vector, then $\diag(\ones_{V_i})\mdot \tp{(e_1(A_G))}=a_i\times\ones_{V_i}$. Hence, $\diag(\ones_{V_i})\mdot e_1(A_G)=a_i^*\times\ones_{V_i}$. If $e_1(A_G)$ returns an $n\times n$-matrix, then by induction, $\tp{\ones_{V_j}}\mdot e_1(A_G)\cdot\diag(\ones_{V_i})=b_{ij}\times\tp{\ones_{V_i}}$. Hence,
\[ \diag(\ones_{V_i})\mdot e(A_G)\cdot\ones_{V_j}=(\tp{\ones_{V_j}}\mdot e_1(A_G)\cdot\diag(\ones_{V_i}))^*=b_{ij}^*\times \ones_{V_i}, \]
as desired.

\smallskip 
\noindent (\textbf{diag operation}) $e(X):=\diag(e_1(X))$ where $e_1(A_G)$ is an $n\times 1$-vector. By induction, $\diag(\ones_{V_i})\mdot e_1(A_G)=a_i \times \ones_{V_i}$. Hence, in view of the linearity of the diagonal operation,
\[ \diag(\ones_{V_i})\mdot e(A_G)\cdot\ones_{V_j}=\sum_{k=1}^\ell a_i\times \bigl(\diag(\ones_{V_i})\cdot\diag(\ones_{V_k}) \cdot\ones_{V_j}\bigr)=a_i\times \ones_{V_i},
\]
since $\diag(\ones_{V_k})\cdot\ones_{V_j}$ is $\ones_{V_j}$ when $k=j$ and the zero vector otherwise. 

\smallskip 
\noindent (\textbf{addition}) $e(X):=e_1(X)+e_2(X)$. Clearly, when condition (a) or (b) hold for $e_1(A_G)$ and $e_2(A_G)$, they remain to hold for $e(A_G)$.

\smallskip 
\noindent (\textbf{scalar multiplication}) $e(X):=a\times e_1(X)$. Clearly, when condition (a) or (b) hold for $e_1(A_G)$, they remain to hold for $e(A_G)$.

\smallskip 
\noindent (\textbf{trace}) $e(X):=\tr(e_1(X))$. Such sub-expressions do not return matrices or vectors.

\smallskip 
\noindent (\textbf{pointwise function applications}) $e(X):=\Apply_{\mathsf{s}}[f](e_1(X),\dots,e_p(X))$ where each $e_i(X)$ is a sentence. Again, such sub-expressions do not return matrices or vectors. 
\end{proof}

\section*{Proof of Proposition~\ref{prop:determined2}} 
\setcounter{proposition}{1} 
\setcounter{section}{8} 
\renewcommand{\theproposition}{\arabic{section}.\arabic{proposition}}
\begin{proposition}
$\ML{\cdot,{}^*,\tr, \ones,\odot_v,\diag,+,\times,\Apply_{\mathsf{s}}[f],f\in\Omega}$-vectors are constant on equitable partitions 
\end{proposition}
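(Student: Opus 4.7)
The plan is to extend the induction in the proof of Proposition~\ref{prop:determined1} by adding a single new inductive case for $\odot_v$. All the other cases (base case, matrix multiplication, conjugate transpose, ones-vector, $\diag(\cdot)$, addition, scalar multiplication, trace, and scalar pointwise function applications) are handled exactly as in the proof of Proposition~\ref{prop:determined1}, since the inductive hypothesis there is phrased relative to an arbitrary expression that returns an $n\times n$-matrix or an $n\times 1$-vector; the presence of $\odot_v$ in the language does not interfere with these cases. So the only thing I need to verify is that properties (a) and (b) from the proof of Proposition~\ref{prop:determined1} are preserved by $\odot_v$.

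Let $G$ be a graph with equitable partition ${\cal V}=\{V_1,\ldots,V_\ell\}$ and indicator vectors $\ones_{V_1},\ldots,\ones_{V_\ell}$. Consider $e(X):=e_1(X)\odot_v e_2(X)$. By the definition of $\odot_v$, both $e_1(A_G)$ and $e_2(A_G)$ must be $n\times 1$-vectors (or both scalars; the scalar case is trivial). By the induction hypothesis (property (b)), there exist scalars $a_i,b_i\in\C$ such that
\begin{linenomath}
\[ \diag(\ones_{V_i})\mdot e_1(A_G)=a_i\times\ones_{V_i} \quad\text{and}\quad \diag(\ones_{V_i})\mdot e_2(A_G)=b_i\times\ones_{V_i},\]
\end{linenomath}
for each $i=1,\ldots,\ell$. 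The key observation is that $\odot_v$ is a pointwise operation, so it commutes with $\diag(\ones_{V_i})$ in the sense that $\diag(\ones_{V_i})\mdot(u\odot_v v)=(\diag(\ones_{V_i})\mdot u)\odot_v(\diag(\ones_{V_i})\mdot v)$ for any two column vectors $u$ and $v$ of the right dimension (this is immediate from the coordinatewise definition, using that $(\ones_{V_i})_k\in\{0,1\}$). Applying this gives
\begin{linenomath}
\[ \diag(\ones_{V_i})\mdot e(A_G)=(a_i\times\ones_{V_i})\odot_v(b_i\times\ones_{V_i})=(a_i\times b_i)\times(\ones_{V_i}\odot_v\ones_{V_i})=(a_i\times b_i)\times\ones_{V_i},\]
\end{linenomath}
where the last equality uses $\ones_{V_i}\odot_v\ones_{V_i}=\ones_{V_i}$ since the entries of $\ones_{V_i}$ are in $\{0,1\}$. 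Hence property (b) is preserved, with the new scalar being $a_i\times b_i$.

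The rest of the argument follows verbatim from the proof of Proposition~\ref{prop:determined1}: once properties (a) and (b) are preserved by every operation in the language, one concludes that for any expression $e(X)$ evaluating to an $n\times 1$-vector, $e(A_G)=\sum_{i=1}^\ell a_i\times\ones_{V_i}$ for some scalars $a_i\in\C$, using $I=\sum_{i=1}^\ell\diag(\ones_{V_i})$. There is no real obstacle here since $\odot_v$ is the simplest possible operation from this perspective: its pointwise nature means it trivially respects partitions into level sets, and indicator vectors are idempotent under $\odot_v$.
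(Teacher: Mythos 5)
Your proof is correct and takes essentially the same route as the paper: both reduce the statement to adding a single inductive case for $\odot_v$ to the proof of Proposition~\ref{prop:determined1} and both exploit that indicator vectors of distinct parts are orthogonal/idempotent under pointwise multiplication, so that $\diag(\ones_{V_i})\mdot e(A_G)=(a_i\times b_i)\times\ones_{V_i}$. The only cosmetic difference is that you push $\diag(\ones_{V_i})$ onto both factors at once (using $(\ones_{V_i})_k^2=(\ones_{V_i})_k$), whereas the paper applies it to one factor and expands the other as $\sum_j b_j\times\ones_{V_j}$; the computations are equivalent.
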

\begin{proof}
Given that we verified this property of all operations except for $\odot_v$ in the proof of Proposition~\ref{prop:determined1}, we only need to verify that $\odot_v$ can be added to the list of supported operations. We use the same induction hypotheses as in the proof of Proposition~\ref{prop:determined1} and verify that these hypotheses remain to hold for $\odot_v$:

\smallskip 
\noindent (\textbf{pointwise vector multiplication}) $e(X):=e_1(X)\odot_v e_2(X)$ where $e_1(X)$ and $e_2(X)$ return vectors. By induction we have that $\diag(\ones_{V_i})\mdot e_1(A_G)=a_i\times \ones_{V_i}$ and $\diag(\ones_{V_i})\mdot e_2(A_G)=b_i\times \ones_{V_i}$. As a consequence, 
\begin{linenomath}
	\postdisplaypenalty=0 
	\begin{align*}
		\diag(\ones_{V_i})\mdot e(A_G)&=\diag(\ones_{V_i})\mdot e_1(A_G)\odot_v e_2(A_G)=a_i \times (\ones_{V_i}\odot_v e_2(A_G))\\
		&\sum_{j=1}^\ell a_i \times (\ones_{V_i}\odot_v (\diag(\ones_{V_j})\mdot e_2(A_G))=\sum_{j=1}^\ell (a_i\times b_j)\times (\ones_{V_i}\odot_v\ones_{V_j})\\
		& (a_i\times b_i)\times \ones_{V_i}, 
	\end{align*}
\end{linenomath}
because $\ones_{V_i}\odot_v\ones_{V_j})$ is either $\ones_{V_i}$ when $i=j$, or the zero vector when $i\neq j$. 
\end{proof}

\section*{Continuation of the proof of Theorem~\ref{thm:pwtrpart}} 
\setcounter{theorem}{0} 
\setcounter{section}{8} 
\renewcommand{\thetheorem}{\arabic{section}.\arabic{theorem}}

In the proof in the main body of the paper we left open the verification that $(\ones_{V_i}\cdot\tp{\ones}_{V_i})\mdot O=O\mdot (\ones_{W_i}\cdot\tp{\ones}_{W_i})$, for $i=1,\ldots,\ell$, implies that $O$ preserves the coarsest equitable partitions of $G$ and $H$. In particular, we need to verify that $\ones_{V_i}=O\cdot\ones_{W_i}$, for $i=1,\ldots,\ell$. This can be easily shown, just as in the proof of Theorem~\ref{thm:treqpart} (based on Lemma 4 in Th\"une~\cite{Thune2012}), in which we verified that $J\mdot O=O\mdot J$ implies that $\ones=O\cdot\ones$. 

First, we observe that $(\ones_{V_i}\cdot\tp{\ones}_{V_i})\mdot O\cdot\ones_{W_i}=\ones_{V_i}\mdot (\tp{\ones}_{V_i}\mdot O\mdot \ones_{W_i})=\alpha_i \times \ones_{V_i}$ with $\alpha_i=\tp{\ones}_{V_i}\mdot O\cdot\ones_{W_i}$ and $(\ones_{V_i}\cdot\tp{\ones}_{V_i})\mdot O\cdot\ones_{W_i}=O\mdot (\ones_{W_i}\cdot\tp{\ones}_{W_i})\cdot\ones_{W_i}=(\tp{\ones}_{W_i}\cdot\ones_{W_i})\times O\cdot\ones_{W_i}$. In other words, $O\mdot \ones_{W_i}=\frac{\alpha_i}{n_i}\times\ones_{V_i}$ where $\tp{\ones}_{W_i}\cdot\ones_{W_i}=|W_i|=n_i$. Furthermore, because $\tp{\ones}_{V_i}\mdot \tp{O}\cdot\ones_{W_i}$ is a scalar, $\tp{\ones}_{W_i}\mdot \tp{O}\cdot\ones_{V_i}=(\tp{\ones}_{V_i}\mdot O\cdot\ones_{W_i})^{\mathsf{t}}=\tp{\ones}_{V_i}\mdot O\cdot\ones_{W_i}=\alpha_i$. We next show that $\alpha=\pm n_i$. Indeed, since $O$ is an orthogonal matrix 
\begin{linenomath}
\[ n_i=\tp{\ones}_{V_i}\mdot I\cdot\ones_{W_i}=\tp{\ones}_{V_i}\mdot \tp{O}\mdot O\mdot \ones_{W_i}=\frac{\alpha_i}{n_i}\times (\tp{\ones}_{V_i}\mdot \tp{O}\mdot \ones_{V_i})=\frac{\alpha_i^2}{n_i}, \]
\end{linenomath}
and thus $\alpha_i^2=n_i^2$ or $\alpha_i=\pm n_i$. Hence, $O\cdot\ones_{W_i}=\pm \ones_{V_i}$. We note that $\ones=\sum_{i=1}^\ell \ones_{V_i}=\sum_{i=1}^\ell \ones_{W_i}$. We now argue that either $\ones_{V_i}=O\cdot\ones_{W_i}$ for all $i=1,\ldots,\ell$, or $-\ones_{V_i}=O \mdot \ones_{W_i}$ for all $i=1,\ldots,\ell$. Indeed, suppose that we have $\ones_{V_i}=O\cdot\ones_{W_i}$ for $i\in K\subset \{1,\ldots,\ell\}$ and $-\ones_{V_i}=O \mdot \ones_{W_i}$ for $i\in \bar K=\{1,\ldots,\ell\}\setminus K$, for some non-empty subset $K$ of $\{1,\ldots,\ell\}$. Then $\sum_{i\in K}\ones_{V_i}=O\mdot \bigl(\sum_{i\in K}\ones_{W_i}\bigr)$ and hence since $\sum_{i\in \bar K}\ones_{V_i}=\ones-\sum_{i\in K}\ones_{V_i}$ and $\sum_{i\in \bar K}\ones_{W_i}=\ones-\sum_{i\in K}\ones_{W_i}$, 
\begin{linenomath}
\[ \sum_{i\in\bar K}\ones_{V_i}=O\mdot \bigl(\sum_{i\in \bar K}\ones_{W_i}\bigr). \]
\end{linenomath}
This contradicts that $-\sum_{i\in\bar K}\ones_{V_i}=O\mdot \bigl(\sum_{i\in \bar K}\ones_{W_i}\bigr)$. Hence, when $\ones_{V_i}=O\cdot\ones_{W_i}$ for all $i=1,\ldots,\ell$, $O$ satisfies the desired property already. Otherwise, when $-\ones_{V_i}=O \mdot \ones_{W_i}$ for all $i=1,\ldots,\ell$, we simply replace $O$ by $(-1)\times O$ to obtain that $O\cdot\ones_{W_i}=\ones_{V_i}$. This rescaling does not impact that $A_G\mdot O=O\mdot A_H$ and we can thus indeed conclude that $O$ preserves the coarsest equitable partitions of $G$ and $H$. \hfill$\qed$

\end{document}